\newtheoremstyle{newdefinition}{}{}{\normalfont}{}{\bfseries}{}{\newline}
{\thmname{#1} \thmnumber{#2}\thmnote{ (#3)}}
\newtheoremstyle{newplain}{}{}{\itshape}{}{\bfseries}{}{1em}
{\thmname{#1} \thmnumber{#2}\thmnote{ (#3)}}
\newtheoremstyle{newremark}{}{}{\normalfont}{}{\bfseries}{}{1em}
{\thmname{#1}}
\theoremstyle{newdefinition}
\newtheorem{definition}{Definition}[section]
\theoremstyle{newplain}
\newtheorem{theorem}[definition]{Theorem}
\newtheorem{lemma}[definition]{Lemma}
\newtheorem{proposition}[definition]{Proposition}
\newtheorem{corollary}[definition]{Corollary}
\newtheorem{remark}[definition]{Remark}
\DeclareMathOperator{\R}{\mathbb{R}} 
\DeclareMathOperator{\N}{\mathbb{N}}
\DeclareMathOperator{\C}{\mathbb{C}}
\DeclareMathOperator{\D}{\mathbb{D}}
\DeclareMathOperator{\cH}{\mathcal{H}}
\DeclareMathOperator{\cB}{\mathcal{B}}
\DeclareMathOperator{\cS}{\mathcal{S}}
\DeclareMathOperator{\identity}{\mathds{1}}
\DeclareMathOperator{\rank}{rank}
\DeclareMathOperator{\supp}{supp}
\newcommand{\norm}[1]{\left\Vert#1\right\Vert}
\newcommand{\Cnorm}[1]{\left|#1\right|}
\newcommand{\twoline}[2]{\genfrac{}{}{0pt}{}{#1}{#2}}
\newcommand{\dyad}[1]{\left|#1\right\rangle\!\!\left\langle#1\right|}
\newcommand{\tr}[1][~]{
    \ifthenelse{\equal{#1}{~}}{\operatorname{Tr}}{\text{Tr}\left[#1\right]}
}
\definecolor{cyan}{RGB}{0,204,204}
\definecolor{lightgreen}{RGB}{0,255,128}
\definecolor{midgreen}{RGB}{178,255,102}
\definecolor{midblue}{RGB}{0,102,204}
\begin{document}

\title{Continuity of quantum entropic quantities via almost convexity\thanks{This paper was presented at Beyondiid10 2022 and is a long version of \cite{BluhmCapelGondolfPerezHernandez-ShortContinuityBounds-2023}.}}
\author[1,2]{Andreas Bluhm\thanks{andreas.bluhm@univ-grenoble-alpes.fr}}
\author[3]{{\'A}ngela Capel\thanks{angela.capel@uni-tuebingen.de}}
\author[3]{Paul Gondolf\thanks{paul.gondolf@student.uni-tuebingen.de}}
\author[4]{Antonio Pérez-Hernández\thanks{antperez@ind.uned.es}}
\affil[1]{QMATH, Department of Mathematical Sciences, University of Copenhagen, Universitetsparken 5, 2100 Copenhagen, Denmark}
\affil[2]{Univ. Grenoble Alpes, CNRS, Grenoble INP, LIG, 38000 Grenoble, France}
\affil[3]{Fachbereich Mathematik, Universit\"at T\"ubingen, 72076 T\"ubingen, Germany}
\affil[4]{Departamento de Matem\'{a}tica Aplicada I, Escuela T\'{e}cnica Superior de Ingenieros Industriales, Universidad Nacional de Educación a Distancia, calle Juan del Rosal 12, 28040 Madrid (Ciudad Universitaria), Spain}
\date{\today}

\maketitle

\begin{abstract}
    Based on the proofs of the continuity of the conditional entropy by Alicki, Fannes, and Winter,  we introduce in this work the almost locally affine (ALAFF) method. This method allows us to prove a great variety of continuity bounds for the derived entropic quantities. First, we apply the ALAFF method to the Umegaki relative entropy. This way, we recover known almost tight bounds, but also some new continuity bounds for the relative entropy. Subsequently, we apply our method to the Belavkin-Staszewski relative entropy (BS-entropy). This yields novel explicit bounds in particular for the BS-conditional entropy, the BS-mutual and BS-conditional mutual information. On the way, we prove almost concavity for the Umegaki relative entropy and the BS-entropy, which might be of independent interest. We conclude by showing some applications of these continuity bounds in various contexts within quantum information theory. 
\end{abstract}

\newpage

\tableofcontents

\section{Introduction}
Entropic quantities have proven essential in characterizing the information-processing capabilities both of classical and quantum systems. As the real world cannot be measured to infinite precision, such quantities need to be continuous to contain meaningful information about physical systems. Often, however, we do not only want to know whether an entropic quantity is continuous but also to quantify this continuity. That means we are interested in estimating for an entropic quantity $f$
\begin{equation}
    \sup\{|f(\rho) - f(\sigma)|: \rho, \sigma \in \mathcal S_0, d(\rho, \sigma) \leq \varepsilon\}.
\end{equation}
for some subset $\mathcal S_0$ of the quantum states and some appropriate distance measure $d$ such as the trace distance, for example. 

Already in 1973, Fannes \cite{Fannes-ContinuityEntropy-1973} proved that the von Neumann entropy is uniformly continuous and gave a concrete dimension-dependent bound, which was later improved to a sharp version in \cite{Audenaert-ContinuityEstimateEntropy-2007, Petz2008}. Similar results in the line of almost concavity for the von Neumann entropy were provided in \cite{Kim-ConvexityEstimates-2013},  \cite{CarlenLieb-EntropyInequalities-2014}, \cite{KimRuskai-ConcavityEntropy-2014} or \cite{audenaert2014quantum}, among others. Another example of a concrete continuity estimate is the Alicki-Fannes inequality for the conditional entropy \cite{AlickiFannes-2004}, which was subsequently improved to an almost tight version by Winter \cite{Winter-AlickiFannes-2016}. Applications of this kind of continuity bounds include, but are not limited to, entanglement measures \cite{Nielsen2000cont-bounds} and the capacities of quantum channels \cite{LeungSmith2009capacities, shirokov2017tight}. We refer the reader to textbooks such as \cite{WildeFromClassicalToQuantumInformation_2016} for a thorough discussion of continuity bounds and their applications.

The importance of the Alicki-Fannes result in \cite{AlickiFannes-2004} goes beyond its quantification of the continuity of the conditional entropy, but their method and its improved versions \cite{mosonyi2011quantum, synak2006asymptotic, Winter-AlickiFannes-2016} work quite generally for entropic quantities. Most clearly, this has been realized by Shirokov, who has named this approach the \textit{Alicki-Fannes-Winter method} \cite{Shirokov-AFWmethod-2020,Shirokov-ContinuityReview-2022}. We continue this line of work by generalising the Shirokov approach further to what we call the \textit{almost locally affine (ALAFF) method}. The aim of this generalization is to apply it to entropic quantities beyond those derived from the Umegaki relative entropy \cite{Umegaki-RelativeEntropy-1962}, such as the conditional entropy. In particular, we are interested in the Belavkin-Staszewski relative entropy (BS-entropy) \cite{BelavkinStaszewski-BSentropy-1982} and its derived entropic quantities. As the Umegaki relative entropy, it generalizes the Kullback-Leibler relative entropy of classical systems \cite{KullbackLeibler-KLD-1951}, but it is less well studied (see \cite{Bluhm2020, BluhmCapelPerezHernandez-WeakQFBSentropy-2021, Hiai2017,  matsumoto2010reverse,Matsumoto2018} for some recent results). The BS-entropy and the related family of geometric R{\'e}nyi divergences have recently found an application for estimating channel capacities \cite{FangFawzi-GeometricRenyiDivergences-2019}. Moreover, generalizations of the mutual information and other entropic quantities based on the BS-entropy have been defined \cite{Scalet2021, ZhaiYangXi-BSEntropy-2022}. The BS-mutual information has been instrumental in proving that the mutual information in one-dimensional quantum Gibbs states of finite-range, translation-invariant Hamiltonians decays exponentially fast \cite{bluhm2022exponential} and that Davies generators in one dimension which converge to those Gibbs states, in the commuting case, satisfy a positive modified logarithmic Sobolev inequality at every temperature, and thus rapid mixing \cite{bardet2021entropy, bardet2021rapid}.

A short version of the current manuscript, with new applications in the context of quantum entropic uncertainty relations, has been published in \cite{BluhmCapelGondolfPerezHernandez-ShortContinuityBounds-2023}.

\section{Main results}\label{sec:main-results}
This section summarizes the main results of this article. The focus of this work is not so much on the continuity bounds themselves, but more on the introduction of the method which allows deriving all of them in a systematic way (\cref{sec:ALAFF_method}). Our approach is summarized in \cref{fig:fig_flow_chart}. For a given divergence, in this paper either the Umegaki relative entropy \cite{Umegaki-RelativeEntropy-1962} or the BS-entropy \cite{BelavkinStaszewski-BSentropy-1982}, we need to prove two properties: its (joint) convexity and its almost (joint) concavity. Both of these properties, under certain conditions on the remainder function, then directly translate into almost local affinity (\cref{def:ALAFF-function}) of the entropic quantities derived from the divergence at hand on a suitably defined subset $\cS_0$ of $\cS(\cH)$. Serving as input to the ALAFF method, the remainder estimates get translated into continuity bounds for said quantities. The entropic quantities include, for example, versions of the conditional entropy and the (conditional) mutual information, as defined in \cref{fig:fig_flow_chart}. The necessity of $\cS_0$ as a restriction of $\cS(\cH)$ becomes obvious when trying to prove continuity bounds for the Umegaki relative entropy, for example. It is known not to be continuous on the set of all pairs of states $(\rho, \sigma)$, which makes a careful choice of $\cS_0$ inevitable. To this end, we define $s$-perturbed $\Delta$-invariant convex subsets of $\cS(\cH)$ (\cref{def:def_perturbed_delta_invariant_subset}) for which we can show that the ALAFF method works and which are general enough to capture all situations of interest. For the formal statement of the ALAFF method, we refer the reader to \cref{theo:theo_alaff_method}.

\begin{figure}[ht!]
    \centering
    \scalebox{0.8}{
    \begin{tikzpicture}[font=\small,thick,scale=1, every node/.style={scale=1}]
        \node[draw ,
            diamond,
            fill=midblue!30!white,
            minimum width=2.5cm,
            minimum height=1cm,
            inner sep=0.1cm,
            align=center] (block1) {Divergence\\$\D(\cdot\Vert\cdot)$};
            
        \node[draw,
            rounded rectangle,
            fill=cyan!30!white,
            below left=of block1,
            minimum width=2.5cm,
            inner sep=0.1cm,
            align=center] (block2) {Convexity\\{\tiny$p \D(\rho_1\Vert \sigma_1) + (1 - p) \D(\rho_2\Vert \sigma_2) \ge \D(\rho\Vert \sigma)$}};
            
        \node[draw,
            rounded rectangle,
            fill=cyan!30!white,
            below right=of block1,
            minimum width=2.5cm,
            inner sep=0.1cm,
            align=center] (block3) {Almost concavity\\{\tiny$\D(\rho\Vert\sigma) \ge p \D(\rho_1\Vert \sigma_1) + (1 - p) \D(\rho_2\Vert \sigma_2) - f(p)$}};
         
        \node[draw,
            trapezium, 
            fill=lightgreen!50!white,
            below=2cm of block1,
            trapezium left angle = 65,
            trapezium right angle = 115,
            trapezium stretches,
            minimum width=3.5cm,
            minimum height=1cm] (block4) {ALAFF method};
        
        \node[coordinate,below=0.5cm of block4] (block5) {};
        
        \node[draw,
            below=of block5,
            fill=midgreen!50!white,
            minimum width=2.5cm,
            minimum height=1cm,
            align=center] (block6) {Conditional\\mutual information\\{\tiny$ \mathbb I_\rho(A:C|B) :=  $}\\{\tiny$ \D_\rho(B|C) - \D_\rho(AB|C) $}};
        
        \node[draw,
            right=0.5cm of block6,
            fill=midgreen!50!white,
            minimum width=2.5cm,
            minimum height=1cm,
            align=center] (block7) {Mutual information\\{\tiny$ \mathbb I_\rho(A:B) :=  $}\\{\tiny$  \D(\rho_{AB} \| \rho_A \otimes \rho_B) $}};
        
        \node[draw,
            right=0.5cm of block7,
            fill=midgreen!50!white,
            minimum width=2.5cm,
            minimum height=1cm,
            align=center] (block8) {Conditional divergence\\{\tiny$\mathbb{H}_\rho(A|B) :=  $}\\{\tiny$ - \D(\rho_{AB} \| \identity_A \otimes \rho_B) $}};
            
        \node[draw,
            left=0.5cm of block6,
            fill=midgreen!50!white,
            minimum width=2.5cm,
            minimum height=1cm,
            align=center] (block9) {Divergence\\(fixed second argument)\\{\tiny$ \left| \D (\rho_1 \| \sigma ) -  \D (\rho_2 \| \sigma ) \right| $}\\{\tiny$  \leq f_{D,2}(\| \rho_1 - \rho_2 \|_1)$}};
        
        \node[draw,
            left=0.5cm of block9,
            fill=midgreen!50!white,
            minimum width=2.5cm,
            minimum height=1cm,
            align=center] (block10) {Divergence\\(fixed first argument)\\{\tiny$ \left| \D (\rho \| \sigma_1 ) -  \D (\rho \| \sigma_2 ) \right| $}\\{\tiny$  \leq f_{D,1}(\| \sigma_1 - \sigma_2 \|_1)$}};

        \node[circle, draw, fill,inner sep=1.5pt,below=0.4cm of block10] (block11) {};
        
        \node[draw,
            below=of block10,
            fill=midgreen!50!white,
            minimum width=2.5cm,
            minimum height=1cm,
            align=center] (block12) {Divergence\\{\tiny$ \left| \D (\rho_1 \| \sigma_1 ) -  \D (\rho_2 \| \sigma_2 ) \right| $}\\{\tiny$  \leq f_D(\| \rho_1 - \rho_2 \|_1,\| \sigma_1 - \sigma_2 \|_1)$}};

        \node[draw,
            below=of block9,
            fill=midgreen!50!white,
            minimum width=2.5cm,
            minimum height=1cm,
            align=center] (block13) {Divergence bound\\{\tiny$  \D (\rho \| \sigma )    \leq f_{DB}(\| \rho - \sigma \|_1)$}};
        
        \draw[-latex] (block1) -| (block2)
            node[pos=0.25,fill=white,inner sep=0]{Proof};
         
        \draw[-latex] (block1) -| (block3)
            node[pos=0.25,fill=white,inner sep=0]{Proof};

        \draw[-latex] (block2) -| (block4);
        \draw[-latex] (block3) -| (block4);
    
        \draw[-Turned Square] (block5) -| (block6);
        \draw[-Turned Square] (block5) -| (block7);
        \draw[-Turned Square] (block5) -| (block8);
        \draw[-Turned Square] (block5) -| (block9);
        \draw[-Turned Square] (block5) -| (block10);
        
        \draw[-latex] (block10) -- (block12);
        \draw[-latex] (block9) -- (block13);
        \draw (block9) |- (block11);
        
        \draw (block4) -- (block5) 
            node[pos=1,fill=white,inner sep=0.1cm] {Uniform continuity \& Continuity bounds};
    \end{tikzpicture}
    }
    \caption{A flow chart demonstrating how convexity and almost concavity of a divergence can be used to obtain uniform continuity and explicit continuity bounds on entropic quantities derived from that divergence. The subscripts of the functions $f_{D, 1/2}$ and $f_{DB}$ stand for divergence first, second argument and divergence bound respectively.}
    \label{fig:fig_flow_chart}
\end{figure}
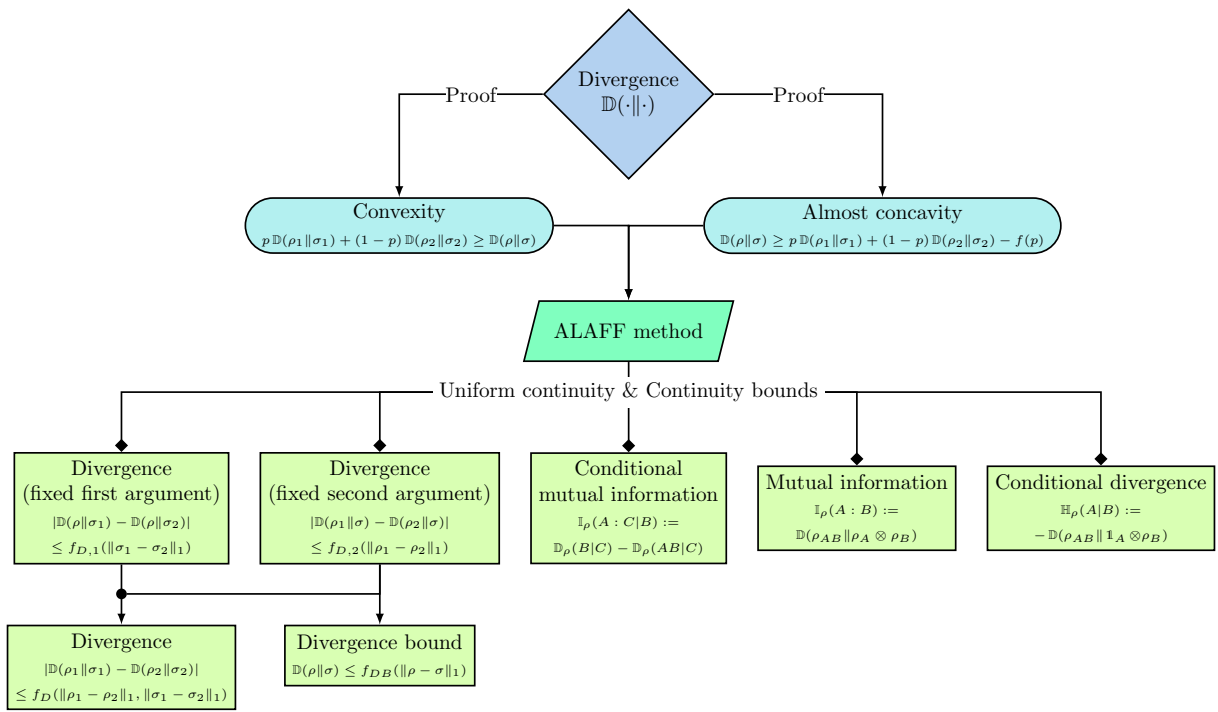

Thus, we are left with proving convexity and almost concavity for the divergences we are interested in, namely the Umegaki relative entropy (\cref{sec:umegaki}) and the Belavkin-Staszewski entropy  (\cref{sec:BS}), and deriving the precise continuity estimates. For the convexity, we can rely on well-known results from the literature both for the Umegaki relative entropy \cite{Lindblad-ConvexityRE-1974} and the BS-entropy \cite{Hiai2017, matsumoto2010reverse}. For the Umegaki relative entropy, given by
\begin{equation}
    D(\rho \Vert \sigma) :=
        \tr[\rho (\log\rho - \log \sigma)]  \;  \; \;  \text{ if } \ker \sigma \subseteq \ker \rho \, \; ,
\end{equation}
$\text{or } + \infty \; \text{ otherwise}$, we prove almost concavity in \cref{theo:theo_almost_concavity_relative_entropy} and find that it is tight. The application of the ALAFF method then allows us to recover in \cref{subsec:reduction_continuity_bounds_relative_entropy} the almost tight results for the conditional entropy \cite{Winter-AlickiFannes-2016} and the mutual and conditional mutual information (which can be derived from the conditional entropy \cite{WildeFromClassicalToQuantumInformation_2016}), but also to derive in \cref{subsec:new_continuity_bounds_relative_entropy} new versions of what we call divergence bounds \cite{AudenaertEisert_I_2005,AudenaertEisert_II_2011,BraRob81,Vershynina_2019}, i.e. bounds on $D(\rho||\sigma)$ in terms of $\frac{1}{2}\|\rho - \sigma\|_1$. Furthermore, our technique produces a new result, which is the uniform continuity of the relative entropy itself (in both arguments, on a suitable set $\mathcal S_0$), as well as an explicit continuity bound. 

For the BS-entropy, given by
\begin{equation}
    \widehat D(\rho \Vert \sigma) :=
        \tr[\rho \log(\rho^{1/2} \sigma^{-1} \rho^{1/2}  )]  \;  \; \;  \text{ if } \ker \sigma \subseteq \ker \rho \, \; ,
\end{equation}
$\text{or } + \infty \; \text{ otherwise}$, we prove the almost concavity in \cref{theo:theo_almost_concavity_bs_entropy}.

The ALAFF method yields novel explicit bounds in particular for the BS-conditional entropy, the BS-mutual and BS-conditional mutual information that we gather in \cref{subsec:subsec_continuity_bounds_bs_entropy}. We expect these new continuity bounds and those provided for quantities derived from the relative entropy to find applications in proving the continuity of various quantities in diverse fields related to quantum information theory. In particular, we provide here a number of applications of our results in the context of quantum hypothesis testing (\cref{sec:hyp-test}), to show that states that are hard to discriminate have almost the same performance in terms of hypothesis testing, as well as in quantum thermodynamics (\cref{sec:free_energy}), to show continuity of the distillable athermality. We also reprove that a state is an approximate quantum Markov chain if and only if it is close to being recovered by the Petz recovery map (\cref{sec:approxQMC}), and use our most general continuity bounds for the relative entropy to obtain bounds for the difference between the relative entropy and the BS-entropy of two quantum states (\cref{sec:difference_entropies}). Additionally, we show a new result of weak quasi-factorization for the relative entropy, i.e. with an additive error term and no multiplicative error term (\cref{sec:appl-qf-entropies}). Finally, we include continuity bounds for the relative entropy of entanglement as well as the analogously defined BS-entropy of entanglement (\cref{sec:min-dist-sep-states}), and subsequently lift these results to show continuity of the Rains information induced by the relative and the BS-entropy respectively (\cref{sec:Rains_info}).

\section{Preliminaries}\label{sec:prelim}
\subsection{Notation and basic concepts}\label{sec:basic_notation}

We denote a Hilbert space by $\cH$ which, throughout this paper is assumed to be finite. The dimension of such a Hilbert space will be called $d$ and for its elements, we use $\ket{\varphi}$, $\ket{\psi}$ and $\ket{i}$ for $i \in \N$, possibly with additional indices. If we are concerned with a bipartite or tripartite system, we will always use capital letters in the index to identify objects associated with the respective subsystems. If we have, for example, the bipartite space $\cH = \cH_A \otimes \cH_B$ and consider the dimension of $\cH_A$, we write $d_A$.\par
The set of (bounded) linear operators on a Hilbert space $\cH$ is $\cB(\cH)$ and the subspace of positive semi-definite operators with trace one, i.e., the quantum states or density matrices, is denoted by $\cS(\cH)$. If we want to restrict this set even further, we indicate this with a subindex. 
 Thus, the set of positive definite quantum states becomes $\cS_+(\cH)$, or if we want to restrict moreover to the set of quantum states that have minimal eigenvalue greater than $m$, we write $\cS_{\ge m}(\cH)$. On the set of quantum states as well as on the set of self-adjoint operators, the relation $\le$ is meant to be the partial order in the Löwner sense. That is, $\rho \geq \sigma$ if and only if $\rho - \sigma$ is positive semidefinite. \par 
We use $\tr[\,\cdot\,]$ for the usual matrix trace and $\norm{\,\cdot\,}_1 = \tr[\Cnorm{\,\cdot\,}]$ and $\norm{\,\cdot\,}_\infty$ to denote the trace norm and the spectral norm on $\cB(\cH)$, respectively. Quantum states in general are denoted by lower Greek letters such as $\rho, \sigma$ and $\tau$, for example. For Hermitian operators in $\cB(\cH)$ we usually use upper Latin letters such as $X, Y$. For any such $X$, we denote by $[X]_+$ and $[X]_-$ its positive and negative parts, respectively.

As we later want to formally control the dependence on the states $\rho$ and $\sigma$ that are given as arguments to the divergences, we further introduce $\cH \times \cH$ the cartesian product of the Hilbert space $\cH$ with itself. Moreover, on a bipartite system $\cH_{AB} = \cH_A \otimes \cH_B$, we set $\rho_A$ to be the state on $\cH_A$ that $\rho \in \cS(\cH_{AB})$ is mapped to under the partial trace with respect to the subsystem $B$ which is a completely positive trace-preserving (CPTP) map. Furthermore, we denote by $\identity_A$ the identity matrix on $A$ and, in a slight abuse of notation, we denote by $\tr_A[\cdot]$ both the partial trace with respect to $A$ as well as the complemented map  on $\cH_{AB}$ by tensorizing with $\identity_A$.

\subsection{Entropies and derived quantities}

The \textit{von Neumann entropy} of $\rho \in \cS(\cH)$ is given by 
\begin{equation}
    S(\rho) := -\tr[\rho \log(\rho)] \, .
\end{equation}
For two quantum states $\rho, \sigma \in \cS(\cH)$, their \textit{(Umegaki) relative entropy} \cite{Umegaki-RelativeEntropy-1962} is defined as 
\begin{equation}\label{eq:eq_relative_entropy}
    D(\rho \Vert \sigma) := \begin{cases}
        \tr[\rho \log\rho - \rho \log \sigma] & \text{if } \ker \sigma \subseteq \ker \rho \, ,\\
        + \infty & \text{otherwise} \, ,
    \end{cases}
\end{equation}
and their \textit{Belavkin-Staszewski (BS) entropy} \cite{BelavkinStaszewski-BSentropy-1982} by 
\begin{equation}
    \widehat{D}(\rho \Vert \sigma) := \begin{cases}
        \tr[\rho\log \rho^{1/2} \sigma^{-1} \rho^{1/2}] & \text{if } \ker \sigma \subseteq \ker \rho \, , \\
        + \infty & \text{otherwise} \, .
    \end{cases}
\end{equation}
In the event of $\rho$ and $\sigma$ commuting, the two entropies coincide. Otherwise, the BS-entropy is strictly larger than the relative entropy \cite{Hiai2017}. We further note that both can also be defined in terms of positive semi-definite operators $A, B$ (without normalisation), by just replacing $\rho$ with $A$ and $\sigma$ with $B$. We make use of this alternative definition when we define the conditional entropy and the BS-conditional entropy, for example. 
 Using this notation we can write the \textit{conditional entropy} of $\rho$ as
\begin{equation}\label{eq:eq_conditional_entropy}
    H_\rho(A|B) := S(\rho_{AB}) - S(\rho_{B}) = -D(\rho_{AB} \Vert \identity_A\otimes \rho_B) \, ,
\end{equation}
with the last equality being a straightforward calculation. The subscript $AB$ in $\rho_{AB} = \rho$ just emphasises the fact that $\rho$ stems from $\cS(\cH_{A} \otimes \cH_{B})$ and to distinguish it from its partial trace $\rho_B$, for example. It is noteworthy that the conditional entropy admits the following variational expression
\begin{equation}\label{eq:variational_expression_conditional_entropy}
    H_\rho(A|B) = \underset{\sigma_B \in \cS(\cH_B)}{\max} \,  -D(\rho_{AB} \Vert \identity_A\otimes \sigma_B) .
\end{equation}
Furthermore, in a similar manner as for the conditional entropy, one obtains the representation of the \textit{mutual information} in terms of the von Neumann entropy and the conditional entropy
\begin{equation}
    I_\rho(A:B) := S(\rho_A) + S(\rho_B) - S(\rho_{AB}) = S(\rho_A) - H_\rho(A|B)  = D(\rho_{AB} \Vert \rho_A \otimes \rho_B) \, .
    \label{eq:eq_mutual_information}
\end{equation}
Finally, on a tripartite system $\cH = \cH_A \otimes \cH_B \otimes \cH_C$ the \textit{conditional mutual information} of a state $\rho \in \cS(\cH)$ is given by 
\begin{equation}
\begin{aligned}
      I_\rho(A:B|C) & := S(\rho_{AC}) + S(\rho_{BC}) - S(\rho_C) - S(\rho_{ABC})\\
      & = H_\rho(A|C) - H_\rho(A|BC)\\
      &  = I_\rho(A:BC) - I_\rho(A:C) \, .
    \label{eq:eq_conditional_mutual_information}
\end{aligned}
\end{equation}
The last equalities are again straightforward. One easily checks that both the mutual information and the conditional mutual information are symmetric under the exchange of the $A$ and $B$ system.\par 
Let us proceed now to introduce the analogous quantities from the BS instead of the relative entropy. In this framework, we cannot construct them as sums and differences of von Neumann entropies, which, for every BS-entropic quantity, gives rise to a zoo of different possible definitions. Some of them have already appeared before in \cite{BluhmCapelPerezHernandez-WeakQFBSentropy-2021,Scalet2021,ZhaiYangXi-BSEntropy-2022}. For a bipartite state $\rho \in \cS(\cH_{A} \otimes \cH_B)$, inspired by the notion of conditional entropy, we define the \textit{BS-conditional entropy} as 
\begin{equation}
    \widehat{H}_\rho(A|B) := -\widehat{D}(\rho_{AB} \Vert \identity_A \otimes \rho_B) \, ,\label{eq:eq_BS_conditional_entropy}
\end{equation}
and building on the mutual information, we define the \textit{BS-mutual information} as 
\begin{equation}
    \widehat{I}_\rho(A:B) := \widehat{D}(\rho_{AB} \Vert \rho_A \otimes \rho_B) \, .
    \label{eq:eq_BS_mutual_information}
\end{equation}
Finally, the analogue of the conditional mutual information in this setting is a more subtle matter. Indeed, two natural ways to construct such a quantity would be either as a difference of BS-conditional entropies or of BS-mutual information, as shown in \cref{eq:eq_conditional_mutual_information}, which in general do not coincide. Given $\rho_{ABC} \in \mathcal{S}(\cH_A \otimes \cH_B \otimes \cH_C)$:
\begin{itemize}
    \item  We define the \textit{(one-sided) BS-conditional mutual information} (os BS-CMI in short) by 
    \begin{equation}\label{eq:eq_BS_conditional_mutual_information}
        \widehat{I}^{\text{os}}_\rho(A:B|C) :=\widehat{H}_\rho(A|C) - \widehat{H}_\rho(A|BC)  = \widehat{D}(\rho_{ABC} \Vert \identity_A \otimes \rho_{BC})  - \widehat{D}(\rho_{AC} \Vert \identity_A \otimes \rho_C) \, .
    \end{equation}
    \item We define the \textit{(two-sided) BS-conditional mutual information} (ts BS-CMI in short) by 
      \begin{equation}\label{eq:eq_BS_conditional_mutual_information_notuseful}
        \widehat{I}^{\text{ts}}_\rho(A:B|C) := \widehat{I}_\rho(A:BC) - \widehat{I}_\rho(A:C) = \widehat{D}(\rho_{ABC} \| \rho_{A} \otimes \rho_{BC} ) - \widehat{D}(\rho_{AC} \| \rho_{A} \otimes \rho_C ) \, .
    \end{equation}
\end{itemize}
Note that both notions are clearly non-negative, as a consequence of the data processing inequality for the BS-entropy. In this project, we focus for simplicity on the first definition, i.e.\ the one-sided one. We will therefore drop the ``os'' notation, as there is no possible confusion.

Let us emphasize at this stage that the difference between the aforementioned two definitions of BS-conditional mutual information is partly related to the pathological behaviour of the BS-entropy with respect to continuity in general, and more specifically to the fact that the BS-conditional entropy is discontinuous on the set of positive semi-definite quantum states (cf.\ \cref{prop:discontinuity_conditional_BS_entropy}). We suspect that as a consequence thereof, the variational definition of the BS-conditional entropy (generalizing \cref{eq:variational_expression_conditional_entropy}) does not agree with the one we have given in \cref{eq:eq_BS_conditional_entropy}, namely
\begin{equation} \label{eq:variational-cond-BS}
       \widehat{H}_\rho(A|B) \le \underset{\sigma_B \in \cS(\cH_B)}{\sup} -\widehat{D}(\rho_{AB} \Vert \identity_A \otimes \sigma_B)  \,.
\end{equation}
We have numerical results that suggest that the inequality in the \cref{eq:variational-cond-BS} is strict, at least in some cases.  A plot of those numerics can be found in \cref{sec:sec_breakdown_numerical_formula}. Moreover, we will indeed formally show that both quantities are different in general in \cref{rem:both_conditional_BS_entropies_are_different}.

\section{From almost convexity to continuity bounds}\label{sec:ALAFF_method}
As depicted in \cref{fig:fig_flow_chart}, our approach is based on the convexity and almost concavity of a divergence.
More precisely, it is based on its joint convexity and almost joint concavity, but for the sake of better readability, we will just speak of convexity and almost concavity.\par
It is immediately clear what is meant by convexity and this property is often even a defining property of a divergence \cite{HiaiMosonyi_2011} or a direct consequence thereof\footnote{Some authors define divergences as functions on two density operators fulfilling a data processing inequality; however, note that convexity for a divergence implies a data processing inequality and follows from it together with additional properties, as shown in \cite[Corollary 4.7]{HiaiMosonyi_2011}.} \cite[Proposition 4.2]{Tomamichel_Book_2016}. The almost (joint) concavity, however, 
needs yet to be defined. 

\begin{definition}[Almost (joint) concavity of a divergence]\label{def:almost_concavity_divergence}
    A divergence $\D(\cdot\Vert\cdot)$ is called \textit{almost (jointly) concave} on a convex set $\cS_0 \subseteq \cS(\cH) \times \cS(\cH)$ if, for $(\rho_1, \sigma_1),$ $ (\rho_2, \sigma_2) \in \cS_0$, there exists a continuous function $f:[0, 1] \to \R$ with $f(0) = f(1) = 0$  such that, for all $p\in [0,1]$,
    \begin{equation}
        \D(\rho \Vert \sigma) \ge p \D(\rho_1 \Vert \sigma_1) + (1 - p) \D(\rho_2 \Vert \sigma_2) - f(p)\label{eq:eq_almost_concavity}
    \end{equation}
    holds. Here, $\rho = p \rho_1 + (1 - p) \rho_2$ and $\sigma = p \sigma_1 + (1 - p) \sigma_2$. It is important to emphasise that $f$ in general depends on the states involved.
\end{definition}

\begin{remark}
    We note that the definition of almost concavity presented above is not itself a very strong property. For example, one could just choose $f$ to be the remainders that give equality in \cref{eq:eq_almost_concavity}. It is the behaviour of the remainder functions that is pivotal, i.e., it becomes independent of $\rho_i, \sigma_i$, $i = 1, 2$ under certain restrictions on the states, e.g.\ requiring that $\sigma_i$ is a marginal of $\rho_i$.
\end{remark}

Our approach, therefore, does not only need joint convexity but a well-behaved remainder function. If we find such a function and combine it with the boundedness of the divergence (or underlying entropic quantity), ALAFF directly gives uniform continuity through explicit continuity bounds.

As we already discussed in the introduction, the predecessor of ALAFF was developed and used by Alicki and Fannes \cite{AlickiFannes-2004}, as well as Winter \cite{Winter-AlickiFannes-2016}, to prove uniform continuity and give an explicit continuity bound for the conditional entropy. Shirokov then noticed the potential beyond this specific application and moulded a method that can be applied to functions defined on convex and $\Delta$-invariant subsets of $\cS(\cH)$ \cite{Shirokov-AFWmethod-2020, Shirokov-ContinuityReview-2022}. Independently, similar techniques were already used in \cite{mosonyi2011quantum}. In short, $\Delta$-invariance means that for two elements their normalised positive and negative part again lies in the set (see also \Cref{def:def_perturbed_delta_invariant_subset}). This definition of $\Delta$-invariance will, however, turn out to be a limitation when trying to prove the uniform continuity of the relative entropy, while in the case of the BS-entropy, it is unfitting even from the beginning, i.e., even for the BS-conditional entropy. The problem is due to  $\Delta$-invariance being a rather strong property that sets like $\cS_{\ge m}(\cH)$ or $\{(\rho, \sigma) \;:\; \ker \sigma \subseteq \ker \rho\}$ do not have. Yet, those sets, or modified versions thereof, are the relevant sets for the relative and, in particular, the BS-entropy.

In light of those problems and in an effort to make our approach as general as possible, we propose the almost locally affine (ALAFF) method, a generalisation of the Alicki-Fannes-Winter-Shirokov method that reduces to one implication of the former in a special case. First of all, we define a perturbed version of the $\Delta$-invariant subset, with the perturbation controlled by a parameter $s$.

\begin{definition}[Perturbed $\Delta$-invariant subset]\label{def:def_perturbed_delta_invariant_subset}
    Let $s \in [0, 1)$. A subset $\cS_0 \subseteq \cS(\cH)$ is called \textit{$s$-perturbed $\Delta$-invariant}, if for $\rho, \sigma \in \cS_0$ with $\rho \ne \sigma$ there exists $\tau \in \cS(\cH)$ such that the two states
    \begin{equation}
        \Delta^\pm(\rho, \sigma, \tau) = s \tau + (1 - s)\varepsilon^{-1}[\rho - \sigma]_\pm
        \label{eq:eq_delta_states}
    \end{equation}
    lie again in $\cS_0$. Here $\varepsilon := \frac{1}{2} \norm{\rho - \sigma}_1$ and $[A]_\pm$ denotes the negative and positive part of a self-adjoint operator, respectively. For $s=0$, we recover the definition of $\Delta$-invariant subset used in \cite{Shirokov-ContinuityReview-2022}.
\end{definition}

We want to give the reader some intuition about those $s$-perturbed $\Delta$-invariant sets.

\begin{remark}\label{rem:rem_s_perturbed_delta_invariance}
    \begin{enumerate}
        \item Let $\cS_{0} \subseteq \cS(\mathcal{H})$ be $s$-perturbed $\Delta$-invariant convex set. Then for $t \in [s, 1)$ it is $t$-perturbed $\Delta$-invariant as well. In particular, being $0$-perturbed is the strongest condition.
        \item If $\cS_0 \subseteq \cS(\cH)$ has non-empty interior with respect to the $1$-norm, then it is $s$-perturbed for some $s \in [0, 1)$.
        \item If $\cS_0 \subseteq \cS(\cH)$ is $s$-perturbed $\Delta$-invariant containing more than one state, then there exist $\rho, \sigma \in \cS_0$ with $\frac{1}{2}\norm{\rho - \sigma}_1 = 1 - s$. This follows directly from the definition.
    \end{enumerate}
\end{remark}

In order to get well-behaved remainder functions, we define a stronger property that we call ``almost local affinity".

\begin{definition}[Almost locally affine (ALAFF) function] \label{def:ALAFF-function}
    Let $f$ be a real-valued function on the convex set $\cS_0 \subseteq \cS(\cH)$, fulfilling
    \begin{equation}
        -a_f(p) \le f(p \rho + (1 - p) \sigma) - p f(\rho) - (1 - p) f(\sigma) \le b_f(p) \label{eq:eq_alaff_property}
    \end{equation}
    for all $p \in [0, 1]$ and $\rho, \sigma \in \cS_0$. The functions $a_f:[0, 1] \to \R$ and $b_f:[0, 1] \to \R$ are required to vanish as $p \to 0^+$, to be non-decreasing on $[0, \frac{1}{2}]$, continuous in $p$ and independent of $\rho, \sigma \in \cS_0$. We then call $f$ an \textit{almost locally affine (ALAFF)} function.
\end{definition}

The notion of almost locally affine functions has appeared previously in the literature, also under the name ``approximate affinity" (see e.g.\ \cite{brandao2013resource}). We can now formulate the following theorem, whose proof is inspired by Shirokov \cite{Shirokov-ContinuityReview-2022}.

\begin{theorem}[Almost locally affine (ALAFF) method]\label{theo:theo_alaff_method}~\\
    Let $s \in [0, 1)$ and $\cS_0 \subseteq \cS(\cH)$ be an $s$-perturbed $\Delta$-invariant convex subset of $\cS(\cH)$ containing more than one element. Let further $f$ be an ALAFF function. We then find that $f$ is uniformly continuous if
    \begin{equation}
        C_f^s := \sup\limits_{\twoline{\rho, \sigma \in \cS_0}{\frac{1}{2}\norm{\rho - \sigma}_1 = 1 - s}} |f(\rho) - f(\sigma)| < + \infty.
    \end{equation}
    In this case, we have for $\varepsilon \in (0, 1]$
    \begin{equation}
        \begin{aligned}
            \sup\limits_{\twoline{\rho, \sigma \in \cS_0}{\frac{1}{2}\norm{\rho - \sigma}_1 \le \varepsilon}} |f(\rho) - f(\sigma)| \le C_f^s \frac{\varepsilon}{1 - s} + \frac{1 - s + \varepsilon}{1 - s} E_f^{\max}\Big(\frac{\varepsilon}{1 - s + \varepsilon}\Big) \, , \label{eq:eq_continuity_bound}
        \end{aligned}
    \end{equation} 
    with 
    \begin{equation}
        E_f^{\max}:[0, 1) \to \R, \qquad p \mapsto E_f^{\max}(p) = (1 - p) \max\left\{\frac{E_f(t)}{1 - t} \;:\; 0 \le t \le p\right\} \, ,
    \end{equation}
    where $E_f = a_f + b_f$. Note that on $\varepsilon \in (0, 1 - s]$ $E_f$ and $E_f^{\max}$ coincide.
\end{theorem}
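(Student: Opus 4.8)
The plan is to adapt the Alicki--Fannes--Winter argument, whose essential idea is that two nearby states $\rho,\sigma$ can both be realised as convex combinations sharing one common state $\omega$, so that the almost affinity of $f$ transports a bound at the ``full'' distance $1-s$ down to a bound at distance $\varepsilon$. Concretely, I would fix $\rho,\sigma\in\cS_0$ with $0<\varepsilon':=\tfrac12\norm{\rho-\sigma}_1\le\varepsilon$ (the case $\rho=\sigma$ being trivial), invoke $s$-perturbed $\Delta$-invariance to obtain $\tau$ with $\Delta^\pm:=\Delta^\pm(\rho,\sigma,\tau)\in\cS_0$ as in \cref{eq:eq_delta_states}, and use these two states as the anchors sitting at distance $1-s$.

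First I would record the two facts that make the anchors useful. Subtracting the defining expressions gives $\Delta^+-\Delta^-=(1-s)\varepsilon'^{-1}(\rho-\sigma)$, whence $\tfrac12\norm{\Delta^+-\Delta^-}_1=1-s$, and therefore $|f(\Delta^+)-f(\Delta^-)|\le C_f^s$ directly from the definition of $C_f^s$. The same identity, rewritten as $\rho+\mu\Delta^-=\sigma+\mu\Delta^+$ with $\mu:=\varepsilon'/(1-s)$, shows that the normalised operator $\omega:=(\rho+\mu\Delta^-)/(1+\mu)=(\sigma+\mu\Delta^+)/(1+\mu)$ is a single state admitting two convex decompositions; since $\cS_0$ is convex and contains $\rho,\sigma,\Delta^\pm$, we have $\omega\in\cS_0$, so $f(\omega)$ is defined.

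Next I would apply the ALAFF property \cref{eq:eq_alaff_property} to each decomposition, placing the small weight $q:=\mu/(1+\mu)=\varepsilon'/(1-s+\varepsilon')$ on the anchors $\Delta^\mp$ so that the deviation from affinity is controlled by $a_f(q),b_f(q)$. Taking the lower estimate for $\omega=q\Delta^-+(1-q)\rho$ and the upper estimate for $\omega=q\Delta^++(1-q)\sigma$ and subtracting yields, after dividing by $1-q$,
\[
    f(\rho)-f(\sigma)\le \frac{q}{1-q}\bigl(f(\Delta^+)-f(\Delta^-)\bigr)+\frac{E_f(q)}{1-q},
\]
where $E_f=a_f+b_f$. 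Using $q/(1-q)=\varepsilon'/(1-s)$, $1/(1-q)=(1-s+\varepsilon')/(1-s)$ and the anchor bound gives $f(\rho)-f(\sigma)\le \tfrac{\varepsilon'}{1-s}C_f^s+\tfrac{E_f(q)}{1-q}$; running the identical argument for the pair $(\sigma,\rho)$, which merely exchanges $\Delta^+\leftrightarrow\Delta^-$, upgrades this to the same bound on $|f(\rho)-f(\sigma)|$.

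Finally I would take the supremum over all admissible pairs. Since $\varepsilon'\le\varepsilon$ and $C_f^s\ge0$, the first term is at most $\tfrac{\varepsilon}{1-s}C_f^s$, while $q'=\varepsilon'/(1-s+\varepsilon')$ is increasing in $\varepsilon'$ and hence ranges over $(0,q]$ with $q=\varepsilon/(1-s+\varepsilon)$; continuity of $E_f$ together with $E_f(0)=0$ then gives $\sup_{\varepsilon'}\tfrac{E_f(q')}{1-q'}=\max_{0\le t\le q}\tfrac{E_f(t)}{1-t}$, which is precisely $\tfrac{1-s+\varepsilon}{1-s}E_f^{\max}(q)$ since $1/(1-q)=(1-s+\varepsilon)/(1-s)$. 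This is exactly \cref{eq:eq_continuity_bound}. Uniform continuity then follows because, as $\varepsilon\to0^+$, one has $q\to0$ and both terms vanish (the first using $C_f^s<\infty$, the second because $E_f(t)/(1-t)\to0$). The step I expect to demand the most care is the bookkeeping that turns the per-pair estimate featuring $E_f(q')$ into the running maximum $E_f^{\max}$ after taking the supremum, keeping the substitution $q'=\varepsilon'/(1-s+\varepsilon')$ and the factor $1/(1-q')$ mutually consistent; a secondary point is to ensure the small weight is placed on the anchor states, so that $a_f,b_f$ are evaluated near $0$ where they are controlled.
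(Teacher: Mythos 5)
Your proposal is correct and follows essentially the same route as the paper's proof: the same anchor states $\Delta^\pm$ at distance $1-s$, the same common state $\omega$ with its two convex decompositions at weight $q=\varepsilon'/(1-s+\varepsilon')$, the same recombination of the two ALAFF inequalities, and the same passage to $E_f^{\max}$ to make the bound monotone in the distance. The only (immaterial) difference is that you work with the exact distance $\varepsilon'\le\varepsilon$ of each pair from the start, whereas the paper first derives the estimate at distance exactly $\varepsilon$ and then extends to $\le\varepsilon$ via the monotonicity of $p\mapsto E_f^{\max}(p)/(1-p)$.
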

\begin{proof}[Proof]
    Let $s \in [0,1)$ and $\varepsilon \in (0, 1]$. Let further $\rho, \sigma \in \cS_0$ with $\frac{1}{2}\norm{\rho - \sigma}_1 = \varepsilon$. Then by the property of $s$-perturbed $\Delta$-invariance there exists $\tau \in \cS(\cH)$ such that $\gamma_\pm := \Delta^\pm(\rho, \sigma, \tau) \in \cS_0$ defined as in \cref{eq:eq_delta_states}.  For every such $\gamma_\pm$ with a representation in terms of $\rho$, $\sigma$ $\in \cS_0$ and a $\tau \in \cS(\cH)$ we have that
    \begin{equation}
        \frac{1 - s}{1 - s + \varepsilon}\rho + \frac{\varepsilon}{1 - s + \varepsilon} \gamma_- = \omega^* = \frac{1 - s}{1 - s + \varepsilon} \sigma + \frac{\varepsilon}{1 - s + \varepsilon} \gamma_+ \, ,
    \end{equation}
    which can be easily checked by inserting the explicit form of $\gamma_\pm$ and using that $[\rho - \sigma]_+ - [\rho - \sigma]_- = \rho - \sigma$. Now $\omega^* \in \cS_0$ as $\cS_0$ is convex, which allows us to evaluate $f$ at $\omega^*$ and use \cref{eq:eq_alaff_property} for both of the representations we have for the state in question. This gives us
    \begin{equation}
        \begin{aligned}
            -a_f(p) &\le f(\omega^*) - (1 - p) f(\rho) - p f(\gamma_-) \le b_f(p) \, ,\\
            -a_f(p) &\le f(\omega^*) - (1 - p) f(\sigma) - p f(\gamma_+) \le b_f(p)\, ,
        \end{aligned}
    \end{equation}
    where we set $p = p(\varepsilon) =\frac{\varepsilon}{1 - s + \varepsilon}$ for better readability. Note that $p \in (0, \frac{1}{2 - s}] \subseteq [0, 1)$ as $\varepsilon \in (0, 1]$ and $s \in [0, 1)$ and further that $p(\varepsilon)$ is monotone with respect to $\varepsilon$. We recombine the above to get
    \begin{equation}
        \begin{aligned}
            (1 - p) (f(\rho) - f(\sigma)) &\le p (f(\gamma_+) - f(\gamma_-)) + a_f(p) + b_f(p) \, ,\\
            (1 - p) (f(\sigma) - f(\rho)) &\le p (f(\gamma_-) - f(\gamma_+)) + a_f(p) + b_f(p) \, .
        \end{aligned}
    \end{equation}
    Those two inequalities immediately give us
    \begin{equation}
        (1 - p) |f(\rho) - f(\sigma)| \le p |f(\gamma_+) - f(\gamma_-)| + (a_f + b_f)(p) \, .
    \end{equation}
    If we now insert $E_f = a_f + b_f$, we obtain
    \begin{equation}
        |f(\rho) - f(\sigma)| \le \frac{p}{1 - p} |f(\gamma_+) - f(\gamma_-)| + \frac{1}{1 - p} E_f(p)\, . \label{eq:eq_base_inequality}
    \end{equation}
    
    In the case that $C_f^s$ is finite, we can take the supremum over all $\rho$, $\sigma \in \cS_0$ with $\frac{1}{2}\norm{\rho - \sigma}_1 = \varepsilon$ of the last equation and even extend to $\frac{1}{2}\norm{\rho - \sigma}_1 \le \varepsilon$ in two steps. The first step is upper bounding $\frac{1}{1 - p}E_f(p)$ with $\frac{1}{1 - p}E_f^{\max}(p)$ and the second one using that $\frac{1}{1 - p} E_f^{\max}(p)$ is engineered to be non-decreasing on $[0, 1)$ and thereby for the specific $p = \frac{\varepsilon}{1 - s + \varepsilon} \in [0, \frac{1}{2 - s}] \subset [0, 1)$, is non-decreasing in $\varepsilon$ as well. Since the $\gamma_+$ and $\gamma_-$ created from $\rho$ and $\sigma$ obviously fulfill $\gamma_\pm \in \cS_0$ and $\frac{1}{2}\norm{\gamma_+ - \gamma_-}_1 = 1 - s$, we immediately get the upper bound in \cref{eq:eq_continuity_bound}. The reduction of $E_f^{\max}$ to $E_f$ on $\varepsilon \in (0, 1 - s]$ follows immediately from $E_f$ being non-decreasing on $[0, \frac{1}{2}]$ meaning further that $E_f^{\max}$ inherits the vanishing property of $E_f$ as $p \to 0^+$. This directly translates into $E_f^{\max}(p(\varepsilon)) \to 0$ if $\varepsilon \to 0^+$, hence concluding the proof of uniform continuity. 
\end{proof}

The method presented in \cref{theo:theo_alaff_method} is named the ``ALAFF method" to highlight the required ALAFF property necessary, for this technique to be applicable. We will refer to this theorem by that name in subsequent sections. 

\begin{remark}\label{rem:rem_s=0_alaff_method}
    For $s = 0$, one recovers one implication of the method by Shirokov, i.e., the definitions for perturbed $\Delta$-invariance and $\Delta$-invariance coincide, $E_f^{\max}$ reduces to $E_f$ on the relevant domain $\varepsilon \in [0, 1]$, and \cref{eq:eq_continuity_bound} becomes 
    \begin{equation}
        \sup\limits_{\twoline{\rho, \sigma \in \cS_0}{\frac{1}{2}\norm{\rho - \sigma}_1 \le \varepsilon}} |f(\rho) - f(\sigma)| \le  C^\perp_f \varepsilon + (1 + \varepsilon) E_f\Big(\frac{\varepsilon}{1 + \varepsilon}\Big)
    \end{equation}
    with
    \begin{equation}
        C^0_f = \sup\limits_{\twoline{\rho, \sigma \in \cS_0}{\frac{1}{2}\norm{\rho - \sigma}_1 = 1}} |f(\rho) - f(\sigma)| = \sup\limits_{\twoline{\rho, \sigma \in \cS_0}{\tr[\rho \sigma] = 0}} |f(\rho) - f(\sigma)| =: C^\perp_f,
    \end{equation}
    as states with maximal trace distance have orthogonal support.
\end{remark}

In the next sections, we will use \cref{theo:theo_alaff_method} together with the almost concavity of the relative entropy and the BS-entropy, respectively, to derive a plethora of results of uniform continuity and continuity bounds for entropic quantities defined through them. Depending on the case, we will sometimes have to employ the whole machinery devised in \cref{theo:theo_alaff_method}, whereas at other times the simplification provided in \cref{rem:rem_s=0_alaff_method} will be enough.

\section{Almost concavity and continuity bounds for the Umegaki relative entropy}\label{sec:umegaki}
In this section, we apply the ALAFF method introduced in \cref{sec:ALAFF_method} for the particular case of the relative entropy, as well as some other entropic quantities derived from it.

All the results provided in this section are summarized in \cref{fig:fig_flow_chart_RE}.

\begin{figure}[ht!]
    \centering
    \scalebox{0.8}{
    \begin{tikzpicture}[font=\small,thick,scale=1, every node/.style={scale=1}]
        \node[draw ,
            diamond,
            fill=midblue!30!white,
            minimum width=2.5cm,
            minimum height=1cm,
            inner sep=0.1cm,
            align=center] (block1) {Relative\\entropy\\[1mm]$D(\rho\Vert\sigma)$};
            
        \node[draw,
            rounded rectangle,
            fill=cyan!30!white,
            below left=of block1,
            minimum height=1.38cm,
            minimum width=2.5cm,
            inner sep=0.1cm,
            align=center] (block2) {Convexity\\{\tiny$p D(\rho_1\Vert \sigma_1) + (1 - p) D(\rho_2\Vert \sigma_2) \ge D(\rho\Vert \sigma)$}};
            
        \node[draw,
            rounded rectangle,
            fill=cyan!30!white,
            below right=of block1,
            minimum width=2.5cm,
            inner sep=0.1cm,
            align=center] (block3) {Almost concavity\\{\tiny$D(\rho\Vert\sigma) \ge p D(\rho_1\Vert \sigma_1) + (1 - p) D(\rho_2\Vert \sigma_2) - f(p)$}\\{\tiny with $f(p)= h(p)\frac{1}{2}\norm{\rho_1 - \rho_2}_1 + f_{c_1, c_2}(p)$}};
         
        \node[draw,
            trapezium, 
            fill=lightgreen!50!white,
            below=2cm of block1,
            trapezium left angle = 65,
            trapezium right angle = 115,
            trapezium stretches,
            minimum width=3.5cm,
            minimum height=1cm,
            inner sep=0.1cm,
            align=center] (block4) {ALAFF method\\[1mm]{\footnotesize \cref{theo:theo_alaff_method} and \cref{rem:rem_s=0_alaff_method}}};
        
        \node[coordinate,below=0.7cm of block4] (block5) {};
        
        \node[draw,
            below=of block5,
            fill=midgreen!50!white,
            minimum width=2.5cm,
            minimum height=1cm,
            align=center] (block6) {Conditional\\mutual information\\[1.2mm]{\tiny$ \mathbb  |I_\rho(A:B | C) - I_\sigma(A:B|C)|  $}\\{\tiny$ \le 2 \varepsilon \log\min\{ d_A, d_B\}$}\\{\tiny$+ 2(1 + \varepsilon) h \Big(\frac{\varepsilon}{1 + \varepsilon}\Big) $}\\[1mm]{\tiny with $\varepsilon \le \frac{1}{2}\| \rho - \sigma \|_1 $}};
         
        \node[draw,
            right=0.5cm of block6,
            fill=midgreen!50!white,
            minimum width=2.5cm,
            minimum height=1cm,
            align=center] (block7) {Mutual information\\[1.2mm]{\tiny$ | I_\rho(A:B) -  I_\sigma(A:B) |  $}\\{\tiny$ \leq  2 \varepsilon \log\min\{ d_A, d_B\}$}\\{\tiny $+ 2(1 + \varepsilon) h\Big(\frac{\varepsilon}{1 + \varepsilon}\Big) $}\\[1mm]{\tiny with $\varepsilon \le \frac{1}{2}\| \rho - \sigma \|_1 $}};
        
        \node[draw,
            right=0.5cm of block7,
            fill=midgreen!50!white,
            minimum width=2.5cm,
            minimum height=1cm,
            align=center] (block8) {Conditional entropy\\[1.2mm]{\tiny$| H_\rho(A|B) - H_\sigma(A|B) | \leq  $}\\{\tiny$  2 \varepsilon \log d_A + (1 + \varepsilon)h\Big(\frac{\varepsilon}{1 + \varepsilon}\Big) $}\\[1mm]{\tiny with $\varepsilon \le \frac{1}{2}\| \rho - \sigma \|_1 $}};
            
        \node[draw,
            left=0.5cm of block6,
            fill=midgreen!50!white,
            minimum width=2.5cm,
            minimum height=1cm,
            align=center] (block9) {Relative entropy\\(fixed second argument)\\[1.2mm]{\tiny$ \left| D (\rho_1 \| \sigma ) -  D (\rho_2 \| \sigma ) \right| $}\\{\tiny$  \leq \varepsilon \log \widetilde{m}_\sigma^{-1}  + (1 + \varepsilon) h \Big(\frac{\varepsilon}{1 + \varepsilon}\Big) $}\\[1mm]{\tiny with $\varepsilon \le \frac{1}{2}\| \rho_1 - \rho_2 \|_1 $}};
        
        \node[draw,
            left=0.5cm of block9,
            fill=midgreen!50!white,
            minimum width=2.5cm,
            minimum height=1cm,
            align=center] (block10) {Relative entropy\\(fixed first argument)\\[1.2mm]{\tiny$ \left| D (\rho \| \sigma_1 ) -  D (\rho \| \sigma_2 ) \right| $}\\{\tiny$  \leq f_{RE,1}(\| \sigma_1 - \sigma_2 \|_1)$}};

        \node[circle, draw, fill,inner sep=1.5pt,below=0.4cm of block10] (block11) {};
        
        \node[draw,
            below=1.52cm of block10,
            fill=midgreen!50!white,
            minimum width=2.5cm,
            minimum height=1cm,
            align=center] (block12) {Relative entropy\\[1.2mm]{\tiny$ \left| D (\rho_1 \| \sigma_1 ) -  D (\rho_2 \| \sigma_2 ) \right| $}\\{\tiny$  \leq f_{RE}(\| \rho_1 - \rho_2 \|_1,\| \sigma_1 - \sigma_2 \|_1)$}};

        \node[draw,
            below=1.3cm of block9,
            fill=midgreen!50!white,
            minimum width=2.5cm,
            minimum height=1cm,
            align=center] (block13) {Divergence bound\\[1.2mm]{\tiny$  D (\rho \| \sigma )$}\\{\tiny$  \leq \varepsilon \log \widetilde{m}_\sigma^{-1}  + (1 + \varepsilon) h \Big(\frac{\varepsilon}{1 + \varepsilon}\Big) $}\\[1mm]{\tiny with $\varepsilon \le \frac{1}{2}\| \rho - \sigma \|_1 $}};
        
        \draw[-latex] (block1) -| (block2)
            node[pos=0.25,fill=white,inner sep=0]{\cite{Lindblad-ConvexityRE-1974}};
         
        \draw[-latex] (block1) -| (block3)
            node[pos=0.25,fill=white,inner sep=0]{\cref{theo:theo_almost_concavity_relative_entropy}};

        \draw[-latex] (block2) -| (block4);
        \draw[-latex] (block3) -| (block4);
    
        \draw[-Turned Square] (block5) -| (block6)
        node[pos=0.77,fill=white,inner sep=0]{\cref{cor:continuity_bound_CMI}};
        \draw[-Turned Square] (block5) -| (block7)
         node[pos=0.74,fill=white,inner sep=0]{\cref{cor:continuity_bound_mutual_information}};
        \draw[-Turned Square] (block5) -| (block8)
          node[pos=0.74,fill=white,inner sep=0]{\cref{cor:continuity_bound_conditional_entropy}};
        \draw[-Turned Square] (block5) -| (block9)
         node[pos=0.73,fill=white,inner sep=0]{\cref{cor:cor_uniform_continuity_relative_entropy_first_argument}};
        \draw[-Turned Square] (block5) -| (block10)
         node[pos=0.73,fill=white,inner sep=0]{\cref{cor:cor_uniform_continuity_relative_entropy_second_argument}};
        
        \draw[-latex] (block10) -- (block12)
        node[pos=0.65,fill=white,inner sep=0]{\cref{theo:theo_uniform_continuity_relative_entropy}};
        \draw[-latex] (block9) -- (block13)
         node[pos=0.65,fill=white,inner sep=0]{\cref{cor:divergence_bound_relative_entropy}};
        \draw (block9) |- (block11);
        
        \draw (block4) -- (block5) 
            node[pos=1,fill=white,inner sep=0.1cm] {Uniform continuity \& Continuity bounds};
    \end{tikzpicture}
    }
    \caption{In this flow chart we collect the main results from this chapter, starting with the almost concavity of the relative entropy, which together with the ALAFF method outputs a collection of continuity bounds for related entropic quantities. For the convexity and almost concavity, we are setting $\rho= p \rho_1 + (1-p) \rho_2$ and $\sigma= p \sigma_1 + (1-p) \sigma_2$, with $p \in [0,1]$. We denote by $\widetilde{m}_\sigma$ the minimal non-zero eigenvalue of $\sigma$. The specific bounds obtained for the relative entropy fixing the first argument and in the general case (modifying both arguments) are omitted due to their technicality.}
    \label{fig:fig_flow_chart_RE}
\end{figure}

\subsection{Almost concavity for the relative entropy}\label{subsec:almost_concavity_relative_entropy}

The (joint) convexity of the relative entropy is a well-established result with proofs found for example in \cite{WildeFromClassicalToQuantumInformation_2016}. In this section, we complement this result with almost concavity and further prove that the bound we obtain is tight.

\begin{theorem}[Almost concavity of the relative entropy]\label{theo:theo_almost_concavity_relative_entropy}~\\
    Let $(\rho_1, \sigma_1), (\rho_2, \sigma_2) \in \cS_{\ker}$ with 
    \begin{equation}
        \cS_{\ker} := \{(\rho, \sigma) \in \cS(\cH) \times \cS(\cH) \;:\; \ker \sigma \subseteq \ker \rho\}
    \end{equation} and $p \in [0, 1]$. Then, for $\rho = p \rho_1 + (1 - p) \rho_2$ and $\sigma = p \sigma_1 + (1 - p) \sigma_2$,
    \begin{equation}\label{eq:almost_concavity_relative_entropy}
        D(\rho \Vert \sigma) \ge p D(\rho_1 \Vert \sigma_1) + (1 - p) D(\rho_2 \Vert \sigma_2) - h(p)\frac{1}{2}\norm{\rho_1 - \rho_2}_1 - f_{c_1, c_2}(p) \, . 
    \end{equation}
    Here, 
    \begin{equation}
        \begin{aligned}
            h(p) &= - p \log(p) - (1 - p) \log(1 - p) \, ,\\
            f_{c_1, c_2}(p) &= p \log(p + (1 - p) c_1) + (1 - p) \log((1 - p) + p c_2) \, ,
        \end{aligned}
    \end{equation}
    with the first one being the binary entropy. The constants in $f_{c_1, c_2}$ are non-negative real numbers and are given by
    \begin{equation}
        \begin{aligned}
            c_1 &:= \int\limits_{-\infty}^\infty dt \beta_0(t) \tr\left[\rho_1 \sigma_1^{\frac{it - 1}{2}} \sigma_2  \sigma_1^{\frac{-it-1}{2}}\right] < \infty \, , \\
            c_2 &:=  \int\limits_{-\infty}^\infty dt \beta_0(t) \tr\left[\rho_2 \sigma_2^{\frac{it - 1}{2}} \sigma_1 \sigma_2^{\frac{-it-1}{2}}\right] < \infty \, .
        \end{aligned}
    \end{equation}
    Here, $\beta_0$ is a probability density on $\mathbb R$ (see  \cref{eq:beta0} for a concrete expression). It is noteworthy that $f_{1, 1}(\cdot) = 0$ and $f_{c_1, c_2}(0) = f_{c_1, c_2}(1) = 0$. 
\end{theorem}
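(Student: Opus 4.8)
The plan is to split the relative entropy into an entropy part and a cross term, matching the two summands $h(p)\frac{1}{2}\norm{\rho_1-\rho_2}_1$ and $f_{c_1,c_2}(p)$ of the claimed remainder. Writing $D(\mu\Vert\nu)=-S(\mu)-\tr[\mu\log\nu]$ and using $\rho=p\rho_1+(1-p)\rho_2$ to replace $\tr[\rho\log\sigma]$ by $p\tr[\rho_1\log\sigma]+(1-p)\tr[\rho_2\log\sigma]$, the quantity $D(\rho\Vert\sigma)-pD(\rho_1\Vert\sigma_1)-(1-p)D(\rho_2\Vert\sigma_2)$ equals $\mathcal A-\mathcal B$ with
\begin{equation*}
\mathcal A=pS(\rho_1)+(1-p)S(\rho_2)-S(\rho),\qquad \mathcal B=p\tr[\rho_1(\log\sigma-\log\sigma_1)]+(1-p)\tr[\rho_2(\log\sigma-\log\sigma_2)].
\end{equation*}
It then suffices to prove $\mathcal A\ge-h(p)\frac{1}{2}\norm{\rho_1-\rho_2}_1$ and $\mathcal B\le f_{c_1,c_2}(p)$. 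The hypothesis $\ker\sigma_i\subseteq\ker\rho_i$ keeps every term finite and, as noted at the end, forces $c_1,c_2<\infty$.

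For the entropy part, $\mathcal A$ is minus the Holevo quantity of the ensemble $\{(p,\rho_1),(1-p,\rho_2)\}$, so the bound $\mathcal A\ge-h(p)\frac{1}{2}\norm{\rho_1-\rho_2}_1$ is the trace-distance-refined almost concavity of $S$. I would invoke the known almost-concavity estimates for the von Neumann entropy referenced in the introduction. For a self-contained argument, note that both $p\mapsto S(\rho)-pS(\rho_1)-(1-p)S(\rho_2)$ and $p\mapsto h(p)\frac{1}{2}\norm{\rho_1-\rho_2}_1$ are concave and vanish at $p\in\{0,1\}$; hence it suffices to compare second derivatives. Writing $\Delta:=\rho_1-\rho_2$ and $\Phi_\rho(Y):=\frac{d}{ds}\log(\rho+sY)\big|_{s=0}$, one has $\frac{d^2}{dp^2}S(\rho)=-\tr[\Delta\,\Phi_\rho(\Delta)]$, and the comparison reduces to the operator inequality
\begin{equation*}
\tr[\Delta\,\Phi_\rho(\Delta)]\le\frac{\tfrac{1}{2}\norm{\Delta}_1}{p(1-p)},
\end{equation*}
which is saturated precisely when $\rho_1\perp\rho_2$, matching the tightness asserted in the theorem.

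The cross term is handled through the integral representation of the logarithm. For positive definite $\sigma_1$ one has $\Phi_{\sigma_1}(Y)=\int_{-\infty}^\infty\beta_0(t)\,\sigma_1^{(it-1)/2}Y\sigma_1^{(-it-1)/2}\,dt$, which is exactly where $\beta_0$ (a probability density, $\int\beta_0=1$) enters. Since $\nu\mapsto\tr[\rho_1\log\nu]$ is concave, its tangent inequality at $\sigma_1$ gives $\tr[\rho_1(\log\sigma-\log\sigma_1)]\le\tr[\rho_1\,\Phi_{\sigma_1}(\sigma)]-1$, and using $\Phi_{\sigma_1}(\sigma_1)=\identity$ together with $\int\beta_0=1$ one computes $\tr[\rho_1\,\Phi_{\sigma_1}(\sigma)]=p+(1-p)c_1=:Q$, with $c_1=\int_{-\infty}^\infty\beta_0(t)\tr[\rho_1\sigma_1^{(it-1)/2}\sigma_2\sigma_1^{(-it-1)/2}]\,dt$ exactly as stated. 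The tangent estimate only yields $Q-1$; to sharpen $Q-1$ to $\log Q$ I apply the same tangent inequality at the rescaled base point $\lambda\sigma_1$. Since $\log(\lambda\sigma_1)=\log\lambda+\log\sigma_1$ and $\Phi_{\lambda\sigma_1}=\lambda^{-1}\Phi_{\sigma_1}$, this gives $\tr[\rho_1(\log\sigma-\log\sigma_1)]\le\log\lambda+\frac{Q}{\lambda}-1$ for every $\lambda>0$; minimizing over $\lambda$ at $\lambda=Q$ produces $\log Q=\log(p+(1-p)c_1)$. The analogous bound with $1$ and $2$ interchanged gives $\tr[\rho_2(\log\sigma-\log\sigma_2)]\le\log((1-p)+pc_2)$, and summing yields $\mathcal B\le f_{c_1,c_2}(p)$. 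The identities $f_{1,1}=0$ and $f_{c_1,c_2}(0)=f_{c_1,c_2}(1)=0$ are then read off directly.

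I expect the main obstacle to be the entropy part rather than the cross term: the scaling argument makes the $f_{c_1,c_2}$ bound essentially automatic once the Fréchet-derivative representation is in place, whereas obtaining the sharp weight $\frac{1}{2}\norm{\rho_1-\rho_2}_1$ (as opposed to the easy $\mathcal A\ge-h(p)$, which follows from $\rho\ge p\rho_1$) is what makes the whole theorem tight and is the genuinely delicate estimate. A secondary, purely technical obstacle is rigour at the boundary when $\sigma_1$ is singular: there the divided-difference weights $(\log\lambda_j-\log\lambda_k)/(\lambda_j-\lambda_k)$ diverge for zero eigenvalues $\lambda_j,\lambda_k$ of $\sigma_1$, but the hypothesis $\ker\sigma_1\subseteq\ker\rho_1$ multiplies every such term by a vanishing matrix element of $\rho_1$, which is precisely what keeps $c_1$ finite and the tangent inequality valid on $\supp\rho_1$.
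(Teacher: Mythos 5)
Your decomposition into the entropy deficit $\mathcal A$ and the cross term $\mathcal B$ is exactly the paper's, and for $\mathcal A$ you do what the paper does: invoke the sharp almost-concavity of the von Neumann entropy with weight $\frac{1}{2}\norm{\rho_1-\rho_2}_1h(p)$ (the paper cites Audenaert et al., Theorem 14). Where you genuinely diverge is the cross term. The paper bounds $p\tr[\rho_1(\log\sigma-\log\sigma_1)]$ by Peierls--Bogoliubov followed by the Sutter--Berta--Tomamichel rotated Golden--Thompson inequality, which is what produces the integral $\int\beta_0(t)\tr[\rho_1\sigma_1^{(it-1)/2}\sigma\,\sigma_1^{(-it-1)/2}]\,dt$. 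You instead use the first-order (tangent) inequality for the concave map $\nu\mapsto\tr[\rho_1\log\nu]$, with $\beta_0$ entering through the integral representation of the Fr\'echet derivative $D\log_{\sigma_1}$, and then sharpen the linear bound $Q-1$ to $\log Q$ by re-basing the tangent at $\lambda\sigma_1$ and optimizing $\lambda=Q$. I checked the algebra: $\Phi_{\lambda\sigma_1}=\lambda^{-1}\Phi_{\sigma_1}$, $\tr[\rho_1\Phi_{\sigma_1}(\sigma_1)]=\tr[\rho_1 P_{\sigma_1}]=1$, and the minimization gives exactly $\log(p+(1-p)c_1)$ with the same constant $c_1$, because the same kernel $\beta_0$ underlies both the multivariate trace inequality and the derivative of $\log$. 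Your route buys a more elementary proof — operator concavity of $\log$ plus one-variable calculus, no Peierls--Bogoliubov and no multivariate Golden--Thompson — at the price of needing differentiability of $\nu\mapsto\tr[\rho_1\log\nu]$ at the base point; the paper's route generalizes more readily to the BS-entropy case (Theorem 6.4), where the analogous quantity is not a tangent inequality of anything obviously concave.

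Two caveats. First, your ``self-contained'' argument for $\mathcal A$ rests on the asserted inequality $\tr[\Delta\,\Phi_\rho(\Delta)]\le\frac{1}{2}\norm{\Delta}_1/(p(1-p))$ for the Kubo--Mori form; this is plausible (and saturates for orthogonal $\rho_1,\rho_2$) but is precisely where the difficulty of the sharp entropy bound lives, so as written it is a claim, not a proof — your fallback of citing the known result is what actually closes this step. Second, the singular case is slightly more delicate than your closing remark suggests: the real issue is not only divergent divided differences on $\ker\sigma_1$ killed by $\rho_1$, but that $\sigma$ may have support strictly larger than $\sigma_1$, so the tangent inequality must be justified as a one-sided directional derivative at a boundary point of the domain (the paper's Appendix B does the corresponding limiting argument $\sigma_{1,\delta}\to\sigma_1$ explicitly); your argument survives this, but it needs to be written out.
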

\begin{proof}[Proof.]
    It is clear that $\cS_{\ker}$ is a convex set and that the bound holds trivially for $p = 0$ and $p = 1$. Hence let $p \in (0, 1)$ and $(\rho_1, \sigma_1), (\rho_2, \sigma_2) \in \cS_{\ker}$ in the following. We find that
    \begin{equation}
        \begin{aligned}
            p D(\rho_1 \Vert \sigma_1) + (1 - p) D(\rho_2 \Vert \sigma_2) - D(\rho \Vert \sigma) &= - p S(\rho_1) - (1 - p) S(\rho_2) + S(\rho)\\
            &\hspace{1cm} + (1 - p) \tr[\rho_2(\log\sigma -
                \log\sigma_2)]\\
            &\hspace{2cm} + p \tr[\rho_1(\log\sigma - \log\sigma_1)]\\
            &\le h(p)\frac{1}{2}\norm{\rho_1 - \rho_2}_1 + f_{c_1, c_2}(p) \, ,
        \end{aligned}
    \end{equation}
    where we split the relative entropies and used that the von Neumann entropy fulfils \cite[Theorem 14]{audenaert2014quantum}
    \begin{equation}\label{eq:eq_almost_convexity_von_neumann_entropy}
        S(\rho) \le \frac{1}{2}\norm{\rho_1 - \rho_2}_1 h(p) + p S(\rho_1) + (1 - p) S(\rho_2) \, . 
    \end{equation} 
    Furthermore, we upper bound the remaining terms by $f_{c_1, c_2}(p)$, estimating the two separately. We will only demonstrate the derivation for the second term, as it is completely analogous to the first one. We have 
    \begin{equation}
        \begin{aligned}\label{eq:eq_int_ineq}
            p\tr[\rho_1 (\log(\sigma) - \log(\sigma_1))] &= p\tr\left[\exp(\log(\rho_1))(\log(\sigma) - \log(\sigma_1))\right]\\
            &\le p \log\tr\left[\exp\left(\log(\rho_1) + \log(\sigma) - \log(\sigma_1)\right)\right]\\
            &\le p \log\int\limits_{-\infty}^\infty dt\,\beta_0(t)\,\tr\left[\rho_1 \sigma_1^{\frac{it - 1}{2}} \sigma \sigma_1^{\frac{-it -1}{2}}\right] \, .
        \end{aligned}  
    \end{equation}
    The first estimate follows immediately using the well-known Peierls-Bogoliubov inequality \cite{OhyaPetz-Entropy-1993}. The second one involves a generalisation of the Araki-Lieb-Thirring inequality \cite{Araki_LiebThirring_1990, LiebThirring_1976} by Sutter et al. \cite[Corollary 3.3]{SutterBertaTomamichel-Multivariate-2017} with
    \begin{equation}\label{eq:beta0}
        \beta_0(t) = \frac{\pi}{2}\frac{1}{\cosh(\pi t) + 1}
    \end{equation}
    a probability density on $\R$. In the above steps, i.e. \cref{eq:eq_int_ineq}, we relied on $\rho_1, \sigma_1$ and $\sigma$ to be full rank. If this is not the case one obtains the same result, however, the procedure is more involved. A thorough discussion can be found in \cref{sec:sec_supplements_to_proof_almost_concavity_relative_entropy}. Note here that in the most general case, $\cdot^{-1}$ in the RHS of \cref{eq:eq_int_ineq} is the Moore-Penrose pseudoinverse. The trace in the integral can now be estimated for each $t$ by
    \begin{equation}
        \begin{aligned}\label{eq:eq_splitting_of_trace}
            \tr[\rho_1 \sigma_1^{\frac{it - 1}{2}} \sigma \sigma_1^{\frac{-it - 1}{2}}] &= p + (1 - p)\tr[\rho_1 \sigma_1^{\frac{it - 1}{2}} \sigma_2 \sigma_1^{\frac{-it - 1}{2}}] \, . 
        \end{aligned}
    \end{equation}
    Here, we just split $\sigma$ and used that $\tr[\rho_1] = 1$. To see that $c_1 < \infty$, we upper bound $\sigma_2$ by $\identity$ and $\sigma_1^{-1}$ by $\widetilde{m}_{\sigma_1}^{-1}\identity$ where $\widetilde{m}_{\sigma_1}$ is the smallest non-zero eigenvalue of $\sigma_1$. This can be done, since $\ker \sigma_1 \subseteq \ker \rho_1$. We end up with $c_1 \le \widetilde{m}_{\sigma_1}^{-1} < \infty$. Inserting \cref{eq:eq_splitting_of_trace} into \cref{eq:eq_int_ineq}, we obtain the first part of $f_{c_1, c_2}(p)$ and repeating the steps for $(1 - p)\tr[\rho_2(\log(\sigma) - \log(\sigma_2))]$ the second one as well. This concludes the proof.
\end{proof}

We remark that \cref{eq:almost_concavity_relative_entropy} provides a result of almost concavity for the relative entropy in the sense of \cref{def:almost_concavity_divergence}. Indeed, the additive ``correction'' term obtained for such an inequality to hold behaves well enough, in the sense that it reduces to the previously known bounds for quantities derived from the relative entropy, e.g.\ the von Neumann entropy or the conditional entropy, and it is almost tight in general. To illustrate that, we provide now two propositions that put the almost concavity of the relative entropy into perspective.

\begin{proposition}[Almost concavity estimate of the relative entropy is well behaved]\label{prop:prop_almost_concave_estimate_relative_entropy_well_behaved}~\\
    The function $f_{c_1, c_2} + h\frac{1}{2}\norm{\rho_1 - \rho_2}_1$ obtained in \cref{theo:theo_almost_concavity_relative_entropy} is well behaved in the following sense: For $j = 1, 2$ and $(\rho_j, \sigma_j) \in \cS_{\ker}$, we have the following:
    \begin{enumerate}
        \item If $\sigma_1 = \sigma_2$, then $c_1=c_2 = 1$, resulting in $f_{c_1, c_2} + \frac{1}{2} \norm{\rho_1 - \rho_2}_1 h \le h$.
        \item If for $\widetilde m > 0$ we have $\widetilde m \rho_j \le \sigma_j$, then $f_{c_1, c_2} + h \frac{1}{2} \norm{\rho_1 - \rho_2}_1 \le f_{\widetilde{m}^{-1}, \widetilde{m}^{-1}} + h$.
        \item If $\cH = \cH_A\otimes \cH_B$ is a bipartite space and furthermore $\sigma_j =d_A^{-1} \identity_A \otimes \rho_{j,B}$, then $f_{c_1, c_2} + h\frac{1}{2}\norm{\rho_1 - \rho_2}_1 \le h$.
        \item For $m_1, m_2 \ge 1$ we find that both $p \mapsto \frac{1}{1 - p} f_{m_1, m_2}(p)$ and $p \mapsto \frac{1}{1 - p} h(p)$ are non-decreasing on $[0, 1)$. 
    \end{enumerate}
\end{proposition}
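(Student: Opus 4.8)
The plan is to reduce all four items to two independent ingredients: the monotonicity of $f_{c_1,c_2}$ in the constants, and control of the sizes of $c_1,c_2$. First I would record the elementary fact that $f_{c_1,c_2}$ is non-decreasing in each of $c_1,c_2$ on $c_i\ge 0$: differentiating gives $\partial_{c_1}f_{c_1,c_2}(p)=p(1-p)/(p+(1-p)c_1)\ge 0$, and symmetrically for $c_2$. In particular $c_i\le 1$ forces $f_{c_1,c_2}\le f_{1,1}=0$, while $c_i\le\widetilde m^{-1}$ forces $f_{c_1,c_2}\le f_{\widetilde m^{-1},\widetilde m^{-1}}$. Since for two states one always has $\tfrac12\norm{\rho_1-\rho_2}_1\le 1$ and $h\ge 0$, the trace-distance term satisfies $h\,\tfrac12\norm{\rho_1-\rho_2}_1\le h$. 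Thus items 1--3 reduce entirely to bounding the constants $c_1,c_2$.

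For item 1, with $\sigma_1=\sigma_2$ the inner operator telescopes, $\sigma_1^{(it-1)/2}\sigma_1\sigma_1^{(-it-1)/2}=\sigma_1^{0}=P_{\sigma_1}$ (the support projection), so each integrand equals $\tr[\rho_1 P_{\sigma_1}]=\tr[\rho_1]=1$ using $\ker\sigma_1\subseteq\ker\rho_1$; integrating against the probability density $\beta_0$ gives $c_1=c_2=1$ and the claim follows. For item 3 the scalar powers of $d_A^{-1}$ multiply to $(d_A^{-1})^{(it-1)/2+1+(-it-1)/2}=1$, and after tracing out $A$ the $B$-factor telescopes by cyclicity of $\tr_B$ to $\tr[P_{\rho_{1,B}}\rho_{2,B}]\le\tr[\rho_{2,B}]=1$, so $c_1\le 1$ and, symmetrically, $c_2\le 1$. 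For item 2 I would avoid the lossy bound $\sigma_2\le\identity$ and instead rewrite the integrand by cyclicity as $\tr[A_t\,\sigma_2]$ with $A_t:=\sigma_1^{(-it-1)/2}\rho_1\sigma_1^{(it-1)/2}$. Writing $\sigma_1^{(it-1)/2}=U_t\sigma_1^{-1/2}$ with $U_t=\sigma_1^{it/2}$ unitary on $\supp\sigma_1$, one sees $A_t=U_t^\dagger(\sigma_1^{-1/2}\rho_1\sigma_1^{-1/2})U_t$; conjugating the hypothesis $\widetilde m\rho_1\le\sigma_1$ by $\sigma_1^{-1/2}$ gives $\sigma_1^{-1/2}\rho_1\sigma_1^{-1/2}\le\widetilde m^{-1}P_{\sigma_1}$, hence $A_t\le\widetilde m^{-1}\identity$ and $\tr[A_t\sigma_2]\le\widetilde m^{-1}\tr[\sigma_2]=\widetilde m^{-1}$. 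Integrating yields $c_1\le\widetilde m^{-1}$, and symmetrically $c_2\le\widetilde m^{-1}$.

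For item 4 I would differentiate directly. For $\phi(p)=h(p)/(1-p)$, using $h'(p)=\log\tfrac{1-p}{p}$ the numerator of $\phi'$ collapses, $h'(p)(1-p)+h(p)=-\log p$, so $\phi'(p)=-\log p/(1-p)^2\ge 0$ on $(0,1)$. For $\psi(p)=f_{m_1,m_2}(p)/(1-p)$, set $u(p)=p+(1-p)m_1$ and $v(p)=(1-p)+pm_2$, so that $\psi(p)=\tfrac{p}{1-p}\log u(p)+\log v(p)$. Since $m_2\ge 1$ the term $\log v$ is non-decreasing; differentiating the first term reduces non-negativity of $\psi'$ to $u\log u\ge pa(1-p)$, where $a:=m_1-1\ge 0$. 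Substituting $x:=a(1-p)=u-1\ge 0$ turns this into $(1+x)\log(1+x)\ge px$, which follows from $px\le x$ together with $(1+x)\log(1+x)\ge x$, the latter holding since $(1+x)\log(1+x)-x$ vanishes at $x=0$ and has non-negative derivative $\log(1+x)$. Hence $\psi'\ge 0$ on $(0,1)$.

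The main obstacle is item 2, where the temptation to bound the middle factor $\sigma_2\le\identity$ leads to $\tr[\rho_1\sigma_1^{-1}]$, which can be of order $\dim\cH$ and is far too weak; the correct move is the cyclic rewriting that lets the hypothesis on $\sigma_1$ (rather than on $\sigma_2$) produce the sharp bound $c_i\le\widetilde m^{-1}$. The only other nontrivial point is the elementary inequality $(1+x)\log(1+x)\ge x$ underlying item 4.
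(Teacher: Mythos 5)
Your proposal is correct and follows essentially the same route as the paper's proof: telescoping the complex powers of $\sigma_j$ (resp.\ $\rho_{j,B}$) to evaluate or bound $c_1,c_2$, using the conjugated operator inequality from $\widetilde m\rho_j\le\sigma_j$ for item~2, and direct differentiation for item~4. The only differences are cosmetic — you make the monotonicity of $f_{c_1,c_2}$ in its constants explicit (and in item~3 obtain $c_j=\tr[P_{\rho_{j,B}}\rho_{k,B}]\le 1$ rather than the paper's claimed equality, which is the slightly more careful statement), and in item~4 you use $(1+x)\log(1+x)\ge x$ where the paper uses the equivalent $\log x\ge (x-1)/x$.
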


We hence find that in the cases listed above the bound becomes independent of the states and further that the remainder functions have a desirable non-decreasing property. The proof is straightforward and can be found in \cref{sec:sec_proof_almost_concave_estimate_relative_entropy_well_behaved}.

\begin{remark}\label{rem:reduction_previous_continuity_bounds_relative_entropy}
    The different cases discussed in \cref{prop:prop_almost_concave_estimate_relative_entropy_well_behaved}
    are used in the following to find almost concavity results with a function that does not depend on the specifics of the states involved, as necessary for applying the ALAFF method.
    \begin{itemize}
        \item If $\sigma_1 = \sigma_2$, we are reducing \cref{eq:almost_concavity_relative_entropy} to a result of almost concavity only in the first input. This case was addressed in \cite{brandao2013resource}, where they obtained $h(p)$ as a correction for almost concavity, a bound we are tightening here. Moreover, this case will yield a continuity bound for the relative entropy with fixed second  input as shown in \cref{cor:cor_uniform_continuity_relative_entropy_first_argument}.
        \item Point 3 of \cref{prop:prop_almost_concave_estimate_relative_entropy_well_behaved} can be interpreted as a result of almost convexity for the conditional entropy. Moreover, it will yield a continuity bound for the conditional entropy in \cref{cor:continuity_bound_conditional_entropy}. Since the latter result is almost tight, this shows the good behaviour of the bound obtained in \cref{theo:theo_almost_concavity_relative_entropy}.
        \item Point 2  of \cref{prop:prop_almost_concave_estimate_relative_entropy_well_behaved} is the most general setting for full-rank states $\sigma_j$, with $j=1,2$, and will be essential for deriving the most general continuity bounds for the relative entropy in \cref{theo:theo_uniform_continuity_relative_entropy}.
    \end{itemize}
\end{remark}

Finally,  we conclude this subsection with some discussion of our almost concave bound. 

\begin{proposition}[Almost concavity estimate of the relative entropy is tight]~\\
    The bound presented in \cref{theo:theo_almost_concavity_relative_entropy} is tight. More specifically, there are some density operators $\rho_1, \rho_2, \sigma_1, \sigma_2$ on $\mathcal{S}(\cH)$ which saturate the inequality in \cref{eq:almost_concavity_relative_entropy}. 
\end{proposition}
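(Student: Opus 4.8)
The plan is to produce an explicit family of states saturating \cref{eq:almost_concavity_relative_entropy}, and the key simplification is to take $\sigma_1 = \sigma_2 =: \sigma$. By Point~1 of \cref{prop:prop_almost_concave_estimate_relative_entropy_well_behaved} this forces $c_1 = c_2 = 1$, and since $f_{1,1} \equiv 0$ the entire $f_{c_1,c_2}$ contribution to the correction term in \cref{eq:almost_concavity_relative_entropy} vanishes, collapsing it to just $h(p)\frac{1}{2}\norm{\rho_1 - \rho_2}_1$. Reading off the computation in the proof of \cref{theo:theo_almost_concavity_relative_entropy}, with $\sigma_1 = \sigma_2$ the two trace terms $\tr[\rho_j(\log\sigma - \log\sigma_j)]$ vanish identically, so the ``concavity defect'' reduces to
\begin{equation}
    p D(\rho_1 \Vert \sigma) + (1 - p) D(\rho_2 \Vert \sigma) - D(\rho \Vert \sigma) = S(\rho) - p S(\rho_1) - (1 - p) S(\rho_2) \, .
\end{equation}
Hence, on this slice, saturating \cref{eq:almost_concavity_relative_entropy} is exactly equivalent to saturating the almost-convexity bound \cref{eq:eq_almost_convexity_von_neumann_entropy} for the von Neumann entropy.

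Second, I would recall that \cref{eq:eq_almost_convexity_von_neumann_entropy} is tight precisely when $\rho_1$ and $\rho_2$ have orthogonal supports. In that case $\frac{1}{2}\norm{\rho_1 - \rho_2}_1 = 1$, and the entropy of a mixture of states living on orthogonal subspaces splits exactly as $S(p\rho_1 + (1-p)\rho_2) = p S(\rho_1) + (1-p) S(\rho_2) + h(p)$, so that the defect above equals $h(p) = h(p)\frac{1}{2}\norm{\rho_1 - \rho_2}_1$, giving equality in \cref{eq:almost_concavity_relative_entropy}.

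Finally, I would exhibit the concrete example: on $\cH = \mathbb{C}^2$, take $\rho_1 = \dyad{0}$, $\rho_2 = \dyad{1}$ and $\sigma_1 = \sigma_2 = \frac{1}{2}\identity$. Since $\sigma$ is full rank, $\ker\sigma = \{0\} \subseteq \ker\rho_j$, so $(\rho_j, \sigma_j) \in \cS_{\ker}$ as required. Here $S(\rho_1) = S(\rho_2) = 0$, $S(\rho) = h(p)$, and $\frac{1}{2}\norm{\rho_1 - \rho_2}_1 = 1$, whence both sides of \cref{eq:almost_concavity_relative_entropy} coincide for every $p \in [0,1]$.

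The main obstacle, conceptually, is that the proof of \cref{theo:theo_almost_concavity_relative_entropy} chains together two independent estimates --- the almost-convexity of the von Neumann entropy and the Peierls-Bogoliubov/Sutter multivariate trace inequality producing $f_{c_1,c_2}$ --- and it is not a priori clear that both can be made tight simultaneously. The specialization $\sigma_1 = \sigma_2$ resolves this cleanly: it turns the second chain into an exact identity ($f_{1,1} = 0$) and leaves only the first estimate, whose tightness on orthogonally supported states is classical. The only point requiring care is that the orthogonality of $\rho_1, \rho_2$ be compatible with the kernel condition defining $\cS_{\ker}$, which is exactly why choosing $\sigma$ full rank (e.g.\ maximally mixed) is convenient.
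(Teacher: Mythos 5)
Your construction is correct and proves the proposition as stated: with $\sigma_1=\sigma_2=\tfrac12\identity$ and $\rho_1,\rho_2$ orthogonal pure states, the concavity defect reduces exactly to $S(\rho)-pS(\rho_1)-(1-p)S(\rho_2)=h(p)$, which matches the correction term $h(p)\cdot 1+f_{1,1}(p)=h(p)$, and the kernel condition is trivially satisfied. The paper's proof uses the same basic idea (commuting qubit states with orthogonal $\rho_1,\rho_2$) but chooses $\sigma_1=t\dyad{0}+(1-t)\dyad{1}$ and $\sigma_2=(1-t)\dyad{0}+t\dyad{1}$ for arbitrary $t\in(0,1)$; your example is precisely the $t=\tfrac12$ member of that family. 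The difference is not correctness but strength of the conclusion: by specializing to $\sigma_1=\sigma_2$ you force $c_1=c_2=1$ and $f_{c_1,c_2}\equiv 0$, so your example only certifies that the binary-entropy part $h(p)\tfrac12\norm{\rho_1-\rho_2}_1$ of the remainder is tight, and says nothing about whether the second term $f_{c_1,c_2}$ is an overestimate. The paper's family has $c_1=c_2=\tfrac{1-t}{t}\neq 1$, and the defect evaluates exactly to $h(p)+f_{c_1,c_2}(p)$, showing that both pieces of the correction are simultaneously saturated and hence that neither can be improved. Since the proposition literally only asks for \emph{some} saturating states, your argument suffices; but if the intent is to justify the claim that the almost-concavity estimate of \cref{theo:theo_almost_concavity_relative_entropy} is tight as a whole (including the $f_{c_1,c_2}$ term arising from the Peierls--Bogoliubov/multivariate-trace chain), the one-parameter family is the right witness and your example alone would leave that part untested. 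One minor caveat: your claim that \cref{eq:eq_almost_convexity_von_neumann_entropy} is tight \emph{precisely} when the supports are orthogonal overstates what you need and what is true (it is also trivially tight for $\rho_1=\rho_2$); only the sufficiency direction is used, and that is verified directly by your example.
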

\begin{proof}[Proof.]
     We can assume that the dimension of the underlying Hilbert space is $d_{\cH} \ge 2$. We then find two orthonormal states $\ket{0}, \ket{1} \in \cH$ that we use to create
    \begin{equation}
        \begin{aligned}
            \rho_1 &:= \dyad{0},\\
            \rho_2 &:= \dyad{1},\\
            \sigma_1 &:= t \dyad{0} + (1 - t) \dyad{1},\\
            \sigma_2 &:= (1 - t) \dyad{0} + t \dyad{1},
        \end{aligned}
    \end{equation}
    for $t \in (0, 1)$. We find, as of the orthonormality, that for $p \in [0, 1]$ and
    \begin{equation}
        \begin{aligned}
            \rho &:= p \rho_1 + (1 - p) \rho_2 \, ,\\
            \sigma &:= p \sigma_1 + (1 - p) \sigma_2 \, ,
        \end{aligned}
    \end{equation}
    the relative entropy between the states given by the convex combinations takes the value
    \begin{equation}
        \begin{aligned}
             D(\rho \Vert \sigma) &= \tr[\rho \log(\rho) - \rho \log(\sigma)]\\ 
             &= - h(p) - p \log(p t + (1 - p)(1 - t)) - (1 - p) \log((1 - p) t + p (1 - t)) \, ,
        \end{aligned}
    \end{equation}
    and 
    \begin{equation}
        \begin{aligned}
            D(\rho_1 \Vert \sigma_1) &= - \log(t) \, ,\\
            D(\rho_2 \Vert \sigma_2) &= - \log(t) \, .
        \end{aligned}
    \end{equation}
    This gives us 
    \begin{equation}
        \begin{aligned}\label{eq:eq_tight_bound}
             p D(\rho_1 \Vert \sigma_1) &+ (1 - p) D(\rho_2 \Vert \sigma_2) - D(\rho \Vert \sigma) \\
            &= h(p) + p \log\Big(p + (1 - p) \frac{1 - t}{t}\Big) + (1 - p) \log\Big((1 - p) + p \frac{1 - t}{t}\Big) \, .
        \end{aligned}
    \end{equation}
    As $[\rho_{i}, \sigma_{j}] = 0$ for $i, j = 1,2$ and further $[\rho_i \sigma_j, \sigma_i] = 0$ we find that the constants in \cref{theo:theo_almost_concavity_relative_entropy} are given by
    \begin{equation}
        c_{i} = \tr[\rho_{i} \sigma_{i}^{-1} \sigma_{j}] = \frac{1 - t}{t} \, ,
    \end{equation}
    for $i, j = 1, 2$, $i \ne j$. Since $\rho_1$ and $\rho_2$ orthogonal we get $\frac{1}{2}\norm{\rho_1 - \rho_2}_1 = 1$. We hence obtain the RHS of \cref{eq:eq_tight_bound} from the almost concavity estimate in \cref{eq:almost_concavity_relative_entropy}. This concludes the claim.
\end{proof}

\subsection{Reduction to almost tight previously-known continuity bounds for the relative entropy}\label{subsec:reduction_continuity_bounds_relative_entropy}

In this section, we will show that a number of almost tight previously-known continuity bounds for quantities derived from the relative entropy can be obtained as corollaries of the results of almost concavity in \cref{theo:theo_almost_concavity_relative_entropy} and \cref{prop:prop_almost_concave_estimate_relative_entropy_well_behaved} in combination with the results concerning the ALAFF method, i.e. \cref{theo:theo_alaff_method} and \cref{rem:rem_s=0_alaff_method}.

\subsubsection{Uniform continuity for the conditional entropy}\label{subsec:subsec_uniform_continuity_conditional_entropy}

Let us first consider a bipartite space and the conditional entropy of a density matrix with respect to one of the subsystems. Note that, in this case, we are able to prove a result of uniform continuity for any positive semidefinite matrix (with trace one), but we do not require positive definiteness. The following coincides with the result of Winter \cite{Winter-AlickiFannes-2016}, which he proved to be almost tight.

\begin{corollary}[Uniform continuity of the conditional entropy]\label{cor:continuity_bound_conditional_entropy}~\\
    The conditional entropy over the bipartite Hilbert space $\cH= \cH_A \otimes \cH_B$ is uniformly continuous on $\cS_0 = \cS(\cH)$ and for $\rho, \sigma \in \cS_0$ with $\frac{1}{2}\norm{\rho - \sigma}_1 \le \varepsilon \le 1$, it holds that 
    \begin{equation}
        |H_\rho(A|B) - H_\sigma(A|B)| \le 2 \varepsilon \log d_A + (1 + \varepsilon)h\Big(\frac{\varepsilon}{1 + \varepsilon}\Big) \, .
    \end{equation}
\end{corollary}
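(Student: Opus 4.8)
The plan is to apply the ALAFF method (\cref{theo:theo_alaff_method}) in its $s=0$ form (\cref{rem:rem_s=0_alaff_method}) to the conditional entropy, after rewriting it as a relative entropy with a \emph{normalised} second argument so that the almost concavity estimate of \cref{theo:theo_almost_concavity_relative_entropy} applies verbatim. Concretely, on $\cS_0 = \cS(\cH_A \otimes \cH_B)$ I would introduce
\begin{equation}
    f(\rho) := D\big(\rho \,\Vert\, d_A^{-1}\identity_A \otimes \rho_B\big) = -S(\rho) + S(\rho_B) + \log d_A = \log d_A - H_\rho(A|B) \, ,
\end{equation}
where the middle equality comes from expanding $\log(d_A^{-1}\identity_A \otimes \rho_B) = -\log d_A\,\identity + \identity_A \otimes \log \rho_B$. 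Since the additive constant $\log d_A$ cancels in differences, one has $|H_\rho(A|B) - H_\sigma(A|B)| = |f(\rho) - f(\sigma)|$, so it suffices to bound the right-hand side via ALAFF applied to $f$.

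The point of this particular normalisation is twofold: $d_A^{-1}\identity_A \otimes \rho_B$ has trace one (so \cref{theo:theo_almost_concavity_relative_entropy}, stated for states, genuinely applies), and it is \emph{affine} in $\rho$, because $\rho_B$ is linear in $\rho$ under the partial trace. Thus for $\rho = p\rho_1 + (1-p)\rho_2$ one gets $d_A^{-1}\identity_A \otimes \rho_B = p\sigma_1 + (1-p)\sigma_2$ with $\sigma_j := d_A^{-1}\identity_A \otimes \rho_{j,B}$, which is exactly the configuration needed to feed both convexity and almost concavity of $D$ into the two-sided ALAFF bound of \cref{def:ALAFF-function}.

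Next I would establish the ALAFF property for $f$. The (joint) convexity of the relative entropy gives the upper bound $f(p\rho_1 + (1-p)\rho_2) \le p f(\rho_1) + (1-p) f(\rho_2)$, i.e.\ $b_f \equiv 0$. For the matching lower bound I would invoke \cref{eq:almost_concavity_relative_entropy} on the pairs $(\rho_j, \sigma_j)$ above; because here $\sigma_j = d_A^{-1}\identity_A \otimes \rho_{j,B}$, point 3 of \cref{prop:prop_almost_concave_estimate_relative_entropy_well_behaved} collapses the state-dependent remainder $h(p)\tfrac12\norm{\rho_1-\rho_2}_1 + f_{c_1,c_2}(p)$ to at most $h(p)$, so that $f(p\rho_1 + (1-p)\rho_2) \ge p f(\rho_1) + (1-p)f(\rho_2) - h(p)$. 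Hence $f$ is ALAFF with $a_f = h$, $b_f \equiv 0$, and $E_f = a_f + b_f = h$; the binary entropy $h$ is continuous, vanishes as $p \to 0^+$, is non-decreasing on $[0,\tfrac12]$, and is manifestly independent of the states, as \cref{def:ALAFF-function} demands.

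Finally I would verify the remaining hypotheses and read off the constant. The full set $\cS_0 = \cS(\cH)$ is $0$-perturbed $\Delta$-invariant since $\tr[\rho - \sigma] = 0$ forces $\varepsilon^{-1}[\rho-\sigma]_\pm$ (with $\varepsilon = \tfrac12\norm{\rho-\sigma}_1$) to be density matrices; and from $-\log d_A \le H_\rho(A|B) \le \log d_A$ I get $C_f^0 \le 2\log d_A < \infty$. \Cref{rem:rem_s=0_alaff_method} then yields
\begin{equation}
    |H_\rho(A|B) - H_\sigma(A|B)| = |f(\rho) - f(\sigma)| \le C_f^0\,\varepsilon + (1+\varepsilon) h\Big(\frac{\varepsilon}{1+\varepsilon}\Big) \le 2\varepsilon \log d_A + (1+\varepsilon) h\Big(\frac{\varepsilon}{1+\varepsilon}\Big) \, ,
\end{equation}
which is the claimed bound. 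The only genuinely delicate point is the reformulation itself: one must pick the normalisation $d_A^{-1}\identity_A \otimes \rho_B$ (rather than the unnormalised $\identity_A \otimes \rho_B$) precisely so that the second argument is both a state and affine in the mixing parameter, and then recognise that this is exactly the regime where point 3 of \cref{prop:prop_almost_concave_estimate_relative_entropy_well_behaved} eliminates the state dependence of the remainder; once that is set up, the ALAFF machinery runs mechanically.
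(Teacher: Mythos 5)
Your proposal is correct and follows essentially the same route as the paper: the paper applies the ALAFF method with $s=0$ directly to $f(\cdot)=H_\cdot(A|B)$, obtaining $a_f=0$ from concavity of the conditional entropy and $b_f=h$ from \cref{theo:theo_almost_concavity_relative_entropy} together with point 3 of \cref{prop:prop_almost_concave_estimate_relative_entropy_well_behaved}, and bounds $C_f^0\le 2\log d_A$ exactly as you do. Your reformulation via $f(\rho)=D(\rho\Vert d_A^{-1}\identity_A\otimes\rho_B)=\log d_A-H_\rho(A|B)$ merely swaps the roles of $a_f$ and $b_f$ (and makes explicit the normalisation that point 3 of \cref{prop:prop_almost_concave_estimate_relative_entropy_well_behaved} already assumes), yielding the same $E_f=h$ and the same final bound.
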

\begin{proof}[Proof.]
    First of all, $\cS_0$ is clearly $0$-perturbed $\Delta$-invariant. Setting $f(\cdot) = H_\cdot(A|B)$, we find that it is ALAFF with $a_{H_\cdot(A|B)} = 0$ as $H_\cdot(A|B)$ is concave, and $b_{H_\cdot(A|B)} = h$ since the result in \cref{theo:theo_almost_concavity_relative_entropy} becomes independent of the states as we go to $H_\cdot(A|B)$ using point 3 of \cref{prop:prop_almost_concave_estimate_relative_entropy_well_behaved}. Finally, we find that 
    \begin{equation}
        C^\perp_f = \sup\limits_{\twoline{\rho, \sigma \in \cS_0}{\tr[\rho \sigma] = 0}} |H_\rho(A|B) - H_\sigma(A|B)| \le 2 \log d_A \, ,
    \end{equation}
    where we used $-\log d_X \le H_\cdot(X|Y) \le \log d_X$ shown, for example, in \cite{WildeFromClassicalToQuantumInformation_2016}. Using \cref{theo:theo_alaff_method} in the form of \cref{rem:rem_s=0_alaff_method}, we can infer the claimed continuity bound.
\end{proof}

\subsubsection{Uniform continuity for the mutual information}

For the mutual information, it is straightforward to derive a continuity bound for such a quantity just by combining the bounds of \cite{Audenaert-ContinuityEstimateEntropy-2007} and \cite{Winter-AlickiFannes-2016} for the von Neumann and conditional entropy, respectively:
   \begin{equation}\label{eq:continuity_bound_mutual_information}
        |I_\rho(A:B) - I_\sigma(A:B)| \le 3 \varepsilon \log\min\{ d_A,  d_B\} + 2(1 + \varepsilon) h\Big(\frac{\varepsilon}{1 + \varepsilon}\Big) \, ,
    \end{equation}
where $\varepsilon:= \frac{1}{2}\norm{\rho-\sigma}_1$.  For an early version, see \cite[Exercise 5.40]{hayashi2006quantum}. The multiplicative factor in the first term of the right-hand side was subsequently improved  to $2\sqrt{2}$ in \cite{Shirokov-AdaptationAFWold-2018} and to $2$ in \cite{Shirokov-AFWmethod-2020}. Moreover, we can adapt \cref{cor:continuity_bound_conditional_entropy} to obtain the following bound on the mutual information, which coincides with the tightest previously-known continuity bound for the mutual information (see e.g.\ \cite{Shirokov-AFWmethod-2020}).

\begin{corollary}[Continuity bound for the mutual information]\label{cor:continuity_bound_mutual_information}~\\
    The mutual information on a bipartite Hilbert space $\cH = \cH_A \otimes \cH_B$ is uniformly continuous on $\cS_0 = \cS(\cH)$ and for $\rho, \sigma \in \cS_0$ with $\frac{1}{2}\norm{\rho - \sigma}_1 \le \varepsilon \le 1$, we find that
    \begin{equation}
        |I_\rho(A:B) - I_\sigma(A:B)| \le 2 \varepsilon \log\min\{d_A,  d_B\} + 2(1 + \varepsilon) h\Big(\frac{\varepsilon}{1 + \varepsilon}\Big) \, .
    \end{equation}
\end{corollary}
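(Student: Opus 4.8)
The plan is to apply the ALAFF method (\cref{theo:theo_alaff_method}, in the $s=0$ form of \cref{rem:rem_s=0_alaff_method}) directly to the function $f(\cdot) = I_\cdot(A:B)$ on $\cS_0 = \cS(\cH)$, exactly as was done for the conditional entropy in \cref{cor:continuity_bound_conditional_entropy}. The set $\cS_0 = \cS(\cH)$ is $0$-perturbed $\Delta$-invariant, so the only two ingredients I need are an ALAFF estimate for $I_\cdot(A:B)$ with state-independent remainder functions $a_f, b_f$, and a bound on the constant $C^\perp_f$.

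For the ALAFF property I would exploit the decomposition $I_\rho(A:B) = S(\rho_A) - H_\rho(A|B)$ from \cref{eq:eq_mutual_information}. Writing $\omega = p\rho + (1-p)\sigma$, the approximate-affinity defect of $I$ splits as
\[
    I_\omega(A:B) - p I_\rho(A:B) - (1-p) I_\sigma(A:B) = \Delta_A - \Delta_H,
\]
where $\Delta_A := S(\omega_A) - pS(\rho_A) - (1-p)S(\sigma_A)$ and $\Delta_H := H_\omega(A|B) - pH_\rho(A|B) - (1-p)H_\sigma(A|B)$. Concavity of the von Neumann entropy under the (linear) partial trace gives $\Delta_A \ge 0$, while the almost convexity of the von Neumann entropy \cref{eq:eq_almost_convexity_von_neumann_entropy} applied to the marginals, together with $\frac{1}{2}\norm{\rho_A - \sigma_A}_1 \le 1$, gives $\Delta_A \le h(p)$. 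For $\Delta_H$ the same bounds $0 \le \Delta_H \le h(p)$ hold: these are precisely the ALAFF bounds $a_H = 0$, $b_H = h$ established in the proof of \cref{cor:continuity_bound_conditional_entropy} via point 3 of \cref{prop:prop_almost_concave_estimate_relative_entropy_well_behaved}. Combining, $-h(p) \le \Delta_A - \Delta_H \le h(p)$, so $I_\cdot(A:B)$ is ALAFF with $a_f = b_f = h$ and hence $E_f = a_f + b_f = 2h$, which satisfies all the monotonicity and vanishing requirements of \cref{def:ALAFF-function}.

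For the constant I would use the standard dimension bound $0 \le I_\rho(A:B) \le 2\log\min\{d_A, d_B\}$ for the quantum mutual information, which immediately yields $C^\perp_f = \sup_{\tr[\rho\sigma]=0}|I_\rho(A:B) - I_\sigma(A:B)| \le 2\log\min\{d_A, d_B\}$. Feeding $C^\perp_f$ and $E_f = 2h$ into the bound of \cref{rem:rem_s=0_alaff_method} produces exactly
\[
    |I_\rho(A:B) - I_\sigma(A:B)| \le 2\varepsilon \log\min\{d_A,d_B\} + 2(1+\varepsilon) h\Big(\frac{\varepsilon}{1+\varepsilon}\Big),
\]
as claimed.

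The main obstacle is obtaining the tight remainder $E_f = 2h$ rather than $3h$. Expanding $I = S(\cdot_A) + S(\cdot_B) - S(\cdot_{AB})$ into three separate entropies and bounding each concavity defect individually would give $b_f = 2h$ and $a_f = h$, i.e. $E_f = 3h$; this is the source of the weaker factor appearing in \cref{eq:continuity_bound_mutual_information}. The gain comes from \emph{not} splitting $S(\cdot_B) - S(\cdot_{AB})$: the conditional entropy $H_\cdot(A|B)$ already satisfies the tighter two-sided bound $0 \le \Delta_H \le h(p)$ rather than the $[-h(p), h(p)]$ one obtains naively, so grouping it as a single ALAFF function is what keeps the remainder at $2h$. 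The only point requiring care is that the state-dependent factor $\frac{1}{2}\norm{\rho_A - \sigma_A}_1$ must be discarded in favour of the uniform bound $1$, so that $a_f, b_f$ are genuinely independent of the states as \cref{def:ALAFF-function} demands.
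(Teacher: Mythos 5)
Your proposal is correct and follows essentially the same route as the paper: decompose $I_\cdot(A:B) = S(\cdot_A) - H_\cdot(A|B)$, use concavity plus the almost-convexity bound \cref{eq:eq_almost_convexity_von_neumann_entropy} for the marginal entropy and the ALAFF bounds for the conditional entropy from \cref{cor:continuity_bound_conditional_entropy} to get $a_f = b_f = h$, bound $C^\perp_f$ by $2\log\min\{d_A,d_B\}$, and apply \cref{rem:rem_s=0_alaff_method}. Your explicit bookkeeping of the two defects $\Delta_A \in [0,h]$ and $\Delta_H \in [0,h]$ is in fact slightly cleaner than the paper's own labeling of $a_{-H_\cdot(A|B)}$ and $b_{-H_\cdot(A|B)}$, and your remark on why not splitting into three entropies saves a factor of $h$ matches the paper's comparison with \cref{eq:continuity_bound_mutual_information}.
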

\begin{proof}[Proof.]
    First of all, $\cS_0$ is clearly $0$-perturbed $\Delta$-invariant.
    With $f(\cdot) = I_\cdot(A:B) = S(\cdot_A) - H_\cdot(A|B)$ one can immediately conclude almost local affinity of $I_\cdot(A:B)$ as $S(\cdot_A)$ is concave and fulfills \cref{eq:eq_almost_convexity_von_neumann_entropy} and $-H_\cdot(A|B)$ is almost locally affine with $a_{-H_\cdot(A|B)} = 0$ and $b_{-H_\cdot(A|B)} = h$. Combined we get $a_f = h$ and $b_f = h$. We further have that 
    \begin{equation}
        C^\perp_f = \sup\limits_{\twoline{\rho, \sigma \in \cS_0}{\tr[\rho \sigma] = 0}} |I_\rho(A:B) - I_\sigma(A:B)| \le \sup\limits_{\rho \in \cS_0} I_\rho(A:B) \le 2 \log\min\{d_A, d_B\} \, ,
    \end{equation}
    where we used that $0 \le I_\cdot(A:B)$ and $I_\cdot(A:B) \le 2 \log\min\{d_A, d_B\}$ \cite{WildeFromClassicalToQuantumInformation_2016}. Applying \cref{theo:theo_alaff_method} in the form of \cref{rem:rem_s=0_alaff_method}, we can conclude the claim and obtain the given continuity bound.
\end{proof}

\subsubsection{Uniform continuity for the conditional mutual information}

We can also provide a continuity bound for the conditional mutual information of two tripartite states $\rho_{ABC},\sigma_{ABC} \in \cS(\cH_A \otimes \cH_B \otimes \cH_C)$ from \cref{cor:continuity_bound_conditional_entropy}, by viewing it as  the difference between two conditional entropies. The following result coincides with the best previously-known bound for the named quantity and appeared explicitly in \cite[Lemma 4]{Shirokov-ContinuityBounds-2019}, and with a worsening of a factor $2$ previously in  \cite{SutterRenner-ApproximateRecoverability-2018} and \cite[Exercise 5.41]{hayashi2006quantum}.

\begin{corollary}[Uniform continuity of the conditional mutual information]\label{cor:continuity_bound_CMI}~\\
    The conditional mutual information with respect to $\cH = \cH_A \otimes \cH_B \otimes \cH_C$ is uniformly continuous on $\cS_0 = \cS(\cH)$ and for $\rho, \sigma \in \cS_0$ with $\frac{1}{2}\norm{\rho - \sigma}_1 \le \varepsilon \le 1$, we find that 
    \begin{equation}
        |I_\rho(A:B | C) - I_\sigma(A:B|C)| \le 2 \varepsilon \log\min\{ d_A, d_B\} + 2(1 + \varepsilon) h \Big(\frac{\varepsilon}{1 + \varepsilon}\Big) \, .
    \end{equation}
\end{corollary}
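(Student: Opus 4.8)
The plan is to follow exactly the strategy used for the mutual information in \cref{cor:continuity_bound_mutual_information}, exploiting the decomposition of the conditional mutual information into a difference of two conditional entropies. Using the second line of \cref{eq:eq_conditional_mutual_information}, I would write
\begin{equation}
    I_\cdot(A:B|C) = H_\cdot(A|C) - H_\cdot(A|BC) \, ,
\end{equation}
and set $f(\cdot) = I_\cdot(A:B|C)$. As in the previous corollaries, the first observation is that $\cS_0 = \cS(\cH)$ is trivially $0$-perturbed $\Delta$-invariant, so that \cref{theo:theo_alaff_method} in the simplified form of \cref{rem:rem_s=0_alaff_method} becomes applicable once $f$ is shown to be ALAFF and $C^\perp_f$ is controlled.

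The key step is to assemble the ALAFF property of $f$ from those of its two summands. Both $H_\cdot(A|C)$ and $H_\cdot(A|BC)$ depend on the global state only through a marginal ($\rho_{AC}$ and $\rho_{ABC}$, respectively), and since marginalisation is linear it sends a convex combination of states to the convex combination of their marginals; hence each summand inherits the concavity of the conditional entropy and the state-independent almost-concavity constant $b_H = h$ that was established in \cref{cor:continuity_bound_conditional_entropy} via point 3 of \cref{prop:prop_almost_concave_estimate_relative_entropy_well_behaved}. Concretely, $H_\cdot(A|C)$ is ALAFF with $a_{H_\cdot(A|C)} = 0$ (concavity) and $b_{H_\cdot(A|C)} = h$ (almost concavity), whereas $-H_\cdot(A|BC)$ is ALAFF with the roles of the two bounds interchanged, namely $a_{-H_\cdot(A|BC)} = h$ and $b_{-H_\cdot(A|BC)} = 0$. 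Adding two ALAFF functions adds the respective bounds, so $f$ is ALAFF with $a_f = b_f = h$, and therefore $E_f = a_f + b_f = 2h$.

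It then only remains to bound the endpoint constant. Using the nonnegativity of the conditional mutual information (strong subadditivity) together with the standard estimate $0 \le I_\cdot(A:B|C) \le 2\log\min\{d_A, d_B\}$ from \cite{WildeFromClassicalToQuantumInformation_2016} — the $\min$ being obtained by invoking the symmetry of $I_\cdot(A:B|C)$ under the exchange of $A$ and $B$ — I would estimate
\begin{equation}
    C^\perp_f = \sup\limits_{\twoline{\rho, \sigma \in \cS_0}{\tr[\rho \sigma] = 0}} |I_\rho(A:B|C) - I_\sigma(A:B|C)| \le \sup\limits_{\rho \in \cS_0} I_\rho(A:B|C) \le 2\log\min\{d_A, d_B\} \, ,
\end{equation}
where the first inequality uses $|a-b| \le \max\{a,b\}$ for nonnegative $a,b$. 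Feeding $C^\perp_f$ and $E_f = 2h$ into the bound of \cref{rem:rem_s=0_alaff_method} reproduces the claimed estimate verbatim. I do not anticipate a genuine obstacle: the entire substance sits in the already-proven almost concavity of the relative entropy, and the corollary is a routine assembly. The only point demanding a little care is the bookkeeping of $a_f$ and $b_f$ for a \emph{difference} of conditional entropies — one must track which inequality (concavity versus almost concavity) supplies which half of the two-sided ALAFF bound — together with the remark that convex combinations are compatible with marginalisation, which is what makes the constant $h$ genuinely state-independent for each summand.
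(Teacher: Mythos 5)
Your proposal is correct and follows essentially the same route as the paper's own proof: the decomposition $I_\cdot(A:B|C) = H_\cdot(A|C) - H_\cdot(A|BC)$, the ALAFF property with $a_f = b_f = h$ inherited from \cref{cor:continuity_bound_conditional_entropy}, the bound $C^\perp_f \le 2\log\min\{d_A,d_B\}$ via nonnegativity of the CMI, and the application of \cref{rem:rem_s=0_alaff_method}. Your extra remarks on the compatibility of marginalisation with convex combinations and on which of the two one-sided bounds each summand contributes are just a more explicit version of what the paper leaves implicit.
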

\begin{proof}[Proof.]
    The procedure is now familiar. We first note that $\cS_0$ is 0-perturbed $\Delta$-invariant. Without loss of generality, we can assume that $d_A \le d_B$ and rewrite $f(\cdot) = I_\cdot(A:B|C) = H_\cdot(A|C) - H_\cdot(A|BC)$. With this representation, we can immediately conclude that $I_\cdot(A:B|C)$ is ALAFF with $a_f = h$ and $b_f = h$. Finally, we have that 
    \begin{equation}
        \begin{aligned}
            C^\perp_f &= \sup\limits_{\twoline{\rho, \sigma \in \cS_0}{\tr[\rho \sigma] = 0}} |I_\rho(A:B|C) - I_\sigma(A:B|C)|\\
            &\le \sup\limits_{\rho \in \cS_0} I_\rho(A:B|C)\\
            &= \sup\limits_{\rho \in \cS_0} H_\rho(A|BC) - H_\rho(A|C)\\
            &\le 2 \log d_A = 2\log\min\{d_A, d_B\} \, ,
        \end{aligned}
    \end{equation}
    as the conditional mutual information is non-negative and again $-\log d_X \le H_\cdot(X|Y) \le \log d_X$. Using \cref{theo:theo_alaff_method} in the form of \cref{rem:rem_s=0_alaff_method}, we can conclude the claim and obtain the given continuity bound.
\end{proof}

\subsection{New continuity bounds for the relative entropy}\label{subsec:new_continuity_bounds_relative_entropy}

Now, we prove some new continuity bounds for further quantities derived from the relative entropy as a consequence of the results of almost concavity in \cref{theo:theo_almost_concavity_relative_entropy} and \cref{prop:prop_almost_concave_estimate_relative_entropy_well_behaved} in combination with the results concerning the ALAFF method, i.e. \cref{theo:theo_alaff_method} and \cref{rem:rem_s=0_alaff_method}. All bounds in this section can be simplified using the following lemma:

\begin{lemma}\label{lemma:nice_bounds}
    Using the notations introduced in \cref{theo:theo_almost_concavity_relative_entropy} and \cref{rem:reduction_previous_continuity_bounds_relative_entropy}, we have the following estimates for the error bounds obtained in all results of this section:
    \begin{equation}\label{eq:bound_binary_entropy_CB}
        (1 + \varepsilon)h\Big(\frac{\varepsilon}{1 + \varepsilon}\Big) \leq \sqrt{2\varepsilon} \, ,
    \end{equation}
    \begin{equation}\label{eq:bound_fmtilde_entropy_CB}
        \frac{l_{\widetilde{m}} + \varepsilon}{l_{\widetilde{m}}} f_{\widetilde{m}^{-1} \, , \widetilde{m}^{-1}}\Big(\frac{\varepsilon}{l_{\widetilde{m}} + \varepsilon} \Big) \leq \frac{\varepsilon}{l_{\widetilde m}} \log\widetilde m^{-1} + \log\left(1 + \frac{\varepsilon}{l_{\widetilde m} + \varepsilon} \frac{1}{\widetilde m}\right) \, .
    \end{equation}
\end{lemma}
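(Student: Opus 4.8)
The plan is to handle the two estimates of \cref{lemma:nice_bounds} separately; \cref{eq:bound_fmtilde_entropy_CB} is routine, whereas \cref{eq:bound_binary_entropy_CB} carries all the difficulty. Recall from \cref{theo:theo_almost_concavity_relative_entropy} that $h(p)=-p\log p-(1-p)\log(1-p)$ and $f_{c,c}(p)=p\log\big(p+(1-p)c\big)+(1-p)\log\big((1-p)+pc\big)$. For \cref{eq:bound_fmtilde_entropy_CB} I would abbreviate $l:=l_{\widetilde m}$ and set $p:=\tfrac{\varepsilon}{l+\varepsilon}$, so that $1-p=\tfrac{l}{l+\varepsilon}$, $\tfrac{l+\varepsilon}{l}=\tfrac{1}{1-p}$ and $\tfrac{p}{1-p}=\tfrac{\varepsilon}{l}$. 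Substituting $c=\widetilde m^{-1}$ and dividing $f_{\widetilde m^{-1},\widetilde m^{-1}}(p)$ by $1-p$, the left-hand side of \cref{eq:bound_fmtilde_entropy_CB} equals
\[
\frac{\varepsilon}{l}\,\log\!\big(p+(1-p)\widetilde m^{-1}\big)+\log\!\big((1-p)+p\,\widetilde m^{-1}\big).
\]
Since $\widetilde m\le 1$ gives $\widetilde m^{-1}\ge 1$, one has $p+(1-p)\widetilde m^{-1}\le (p+(1-p))\widetilde m^{-1}=\widetilde m^{-1}$ and $(1-p)+p\,\widetilde m^{-1}\le 1+p\,\widetilde m^{-1}$; applying monotonicity of $\log$ to each summand (and $\tfrac\varepsilon l\ge 0$) produces exactly the right-hand side of \cref{eq:bound_fmtilde_entropy_CB}.

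For \cref{eq:bound_binary_entropy_CB} the first step is to clear the prefactor by the substitution $p:=\tfrac{\varepsilon}{1+\varepsilon}\in(0,\tfrac12]$, valid for $\varepsilon\in(0,1]$ (the case $\varepsilon=0$ being trivial). Using $1+\varepsilon=(1-p)^{-1}$ and $\varepsilon=\tfrac{p}{1-p}$, the inequality $(1+\varepsilon)h\big(\tfrac{\varepsilon}{1+\varepsilon}\big)\le\sqrt{2\varepsilon}$ becomes, after multiplication by $1-p>0$, the equivalent statement
\[
h(p)\le\sqrt{2p(1-p)},\qquad p\in(0,\tfrac12].
\]
I would prove this by comparing the two sides through the sign of the second derivative of $D(p):=\sqrt{2p(1-p)}-h(p)$, which is continuous on $[0,\tfrac12]$ with $D(0)=0$.

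The main obstacle is that $D$ is not monotone and the inequality is essentially tight at $p=\tfrac12$, where $D(\tfrac12)=\sqrt{1/2}-\log 2\approx 0.015>0$ (with $\log$ the natural logarithm, as is needed for the bound to hold at all); hence no lossy pointwise estimate can survive near $\tfrac12$, and a structural argument is required. The key computation is $h''(p)=-\tfrac{1}{p(1-p)}$ and $\tfrac{d^2}{dp^2}\sqrt{2p(1-p)}=-\big(2p(1-p)\big)^{-3/2}$, so that with $s:=\sqrt{p(1-p)}$ one obtains $D''(p)=s^{-2}\big(1-2^{-3/2}s^{-1}\big)$. As $s$ is strictly increasing on $(0,\tfrac12]$ and crosses the value $2^{-3/2}<\tfrac12$ exactly once, there is a single $p_0\in(0,\tfrac12)$ with $D$ concave on $(0,p_0]$ and convex on $[p_0,\tfrac12]$. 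Since $D'(\tfrac12)=0$ (both $h'$ and the derivative of $\sqrt{2p(1-p)}$ vanish at $\tfrac12$), convexity on $[p_0,\tfrac12]$ forces $D'\le0$ there, so $D$ is nonincreasing and $D(p)\ge D(\tfrac12)>0$ on $[p_0,\tfrac12]$; in particular $D(p_0)>0$. On the concave piece $(0,p_0]$, $D$ lies above the chord joining $(0,0)$ and $(p_0,D(p_0))$, which is nonnegative because both endpoint values are $\ge 0$. Together these give $D\ge0$ on $(0,\tfrac12]$, and tracing the substitution back yields \cref{eq:bound_binary_entropy_CB}.
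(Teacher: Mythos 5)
Your treatment of \cref{eq:bound_fmtilde_entropy_CB} coincides with the paper's: after the substitution $p=\varepsilon/(l_{\widetilde m}+\varepsilon)$ the two estimates you apply ($p+(1-p)\widetilde m^{-1}\le \widetilde m^{-1}$ and $(1-p)+p\,\widetilde m^{-1}\le 1+p\,\widetilde m^{-1}$) are exactly the bounds $1\le\widetilde m^{-1}$ and $l_{\widetilde m}/(l_{\widetilde m}+\varepsilon)\le 1$ used in the paper, term by term. For \cref{eq:bound_binary_entropy_CB} you diverge in an interesting way: the paper does not prove this inequality at all, but simply cites Sutter's earlier work and remarks that it "follows from some elementary calculus." You supply a complete, self-contained argument: the substitution $p=\varepsilon/(1+\varepsilon)$ reduces the claim to $h(p)\le\sqrt{2p(1-p)}$ on $(0,\tfrac12]$, and your computation $D''(p)=s^{-2}\bigl(1-2^{-3/2}s^{-1}\bigr)$ with $s=\sqrt{p(1-p)}$ is correct (using $4p(1-p)+(1-2p)^2=1$), so $D=\sqrt{2p(1-p)}-h$ has a single inflection point $p_0$ with $p_0(1-p_0)=1/8$; convexity plus $D'(\tfrac12)=0$ gives $D\ge D(\tfrac12)=\sqrt{1/2}-\log 2>0$ on $[p_0,\tfrac12]$, and concavity with $D(0)=0$, $D(p_0)>0$ gives $D\ge 0$ on $(0,p_0]$. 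This is sound and handles correctly the near-tightness at $p=\tfrac12$, which rules out cruder pointwise bounds; what it buys over the paper's version is that the lemma no longer leans on an external reference, at the cost of a page of calculus. One could alternatively have invoked the classical sharp bound $h(p)\le 2\log(2)\sqrt{p(1-p)}$ and the numerical fact $2\log 2<\sqrt 2$, but your argument is equally valid.
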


\begin{proof}
    The first inequality appeared in \cite{Sutter-ApproximateQMC-2018} before and its proof follows from some elementary calculus. For the second inequality note that $\varepsilon \in [0, 1]$ and $\widetilde m \in (0, 1)$, it holds that $l_{\widetilde m} = 1 - \widetilde m \in (0, 1)$, allowing us to estimate $1 \le \frac{1}{\widetilde m}$ and $\frac{l_{\widetilde m}}{l_{\widetilde m} + \varepsilon} \le 1$. This results in:
    \begin{equation}
        \begin{aligned}
            \frac{l_{\widetilde{m}} + \varepsilon}{l_{\widetilde{m}}} f_{\widetilde{m}^{-1} \, , \widetilde{m}^{-1}} \Big(\frac{\varepsilon}{l_{\widetilde{m}} + \varepsilon} \Big) &= \frac{\varepsilon}{l_{\widetilde m}}\log\left(\frac{\varepsilon}{l_{\widetilde m} + \varepsilon} + \frac{l_{\widetilde m}}{l_{\widetilde m} + \varepsilon} \frac{1}{\widetilde m}\right) + \log\left(\frac{\varepsilon}{l_{\widetilde m} + \varepsilon} \frac{1}{\widetilde m} + \frac{l_{\widetilde m}}{l_{\widetilde m} + \varepsilon}\right)\\
            &\le \frac{\varepsilon}{l_{\widetilde m}} \log\widetilde m^{-1} + \log\left(1 + \frac{\varepsilon}{l_{\widetilde m} + \varepsilon} \frac{1}{\widetilde m}\right) \, .
        \end{aligned}
    \end{equation}
\end{proof}

\subsubsection{Divergence bounds for the relative entropy}

In this section, we prove an upper bound on the relative entropy $D(\rho \Vert \sigma)$ which involves the 
trace norm distance of $\rho$ and $\sigma$. The literature calls these bounds upper continuity bounds \cite{AudenaertDatta-Renyi-entropies-2015, Rastegin_2011, Vershynina_2019}, for which we would expect an upper bound of $|D(\rho_1\Vert \sigma_1) - D(\rho_2 \Vert \sigma_2)|$ in terms of the norm distance of $\rho_1$ and $\rho_2$, and $\sigma_1$ and $\sigma_2$, respectively. We hence propose the name ``divergence bound" for this kind of bound, a fitting name, since we are relating the strength of divergence (between $\rho$ and $\sigma$) to a fixed distance measure (the trace norm).\par
We now give the divergence bound we obtain when using the convexity and almost concavity of $D(\rho \Vert \sigma)$ together with \cref{theo:theo_alaff_method} by going through uniform continuity of the relative entropy in its first argument.

\begin{corollary}[Uniform continuity of the relative entropy in the first argument]\label{cor:cor_uniform_continuity_relative_entropy_first_argument}~\\
    Let $\sigma \in \cS(\cH)$ be fixed. Then $D(\cdot \Vert \sigma)$ is uniformly continuous on $\cS_0 = \{\rho \in \cS(\cH) \;:\; \ker \sigma \subseteq \ker \rho\}$ and, for $\rho_1, \rho_2 \in \cS_0$ with $\frac{1}{2}\norm{\rho_1 - \rho_2}_1 \le \varepsilon \le 1$, it holds that
    \begin{equation}\label{eq:uniform_continuity_rel_ent_first_argument}
        |D(\rho_1 \Vert \sigma) - D(\rho_2 \Vert \sigma)| \le \varepsilon \log \widetilde{m}_\sigma^{-1} + (1 + \varepsilon) h \Big(\frac{\varepsilon}{1 + \varepsilon}\Big) \, ,
    \end{equation}
    with $\widetilde{m}_\sigma$ the minimal non-zero eigenvalue of $\sigma$.
\end{corollary}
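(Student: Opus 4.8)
The plan is to apply the ALAFF method in its $s = 0$ form (\cref{theo:theo_alaff_method} and \cref{rem:rem_s=0_alaff_method}) to the function $f(\cdot) = D(\cdot \Vert \sigma)$ on $\cS_0 = \{\rho \in \cS(\cH) : \ker\sigma \subseteq \ker\rho\}$, mirroring the structure of the proofs of \cref{cor:continuity_bound_conditional_entropy} and \cref{cor:continuity_bound_mutual_information}. First I would record that $\cS_0$ is convex, being the set of states supported on the fixed subspace $(\ker\sigma)^\perp$, and check that it is $0$-perturbed $\Delta$-invariant: for $\rho_1 \neq \rho_2$ in $\cS_0$ the self-adjoint operator $\rho_1 - \rho_2$ is supported on $(\ker\sigma)^\perp$, hence so are its positive and negative parts, so the normalised states $\varepsilon^{-1}[\rho_1 - \rho_2]_\pm$ again satisfy $\ker\sigma \subseteq \ker(\cdot)$ and therefore lie in $\cS_0$. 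If $\dim(\ker\sigma)^\perp = 1$ then $\cS_0$ is a single point and the inequality is trivial, so I assume $\cS_0$ contains more than one element, as required by the theorem.

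Next I would establish that $f$ is ALAFF with $a_f = h$ and $b_f = 0$. The upper bound $b_f = 0$ is immediate from joint convexity of the relative entropy, which with the second argument held fixed at $\sigma$ gives $D(p\rho_1 + (1-p)\rho_2 \Vert \sigma) \le pD(\rho_1\Vert\sigma) + (1-p)D(\rho_2\Vert\sigma)$, so the central quantity in \cref{eq:eq_alaff_property} is non-positive. For the lower bound I would invoke \cref{theo:theo_almost_concavity_relative_entropy} specialised to $\sigma_1 = \sigma_2 = \sigma$: by point 1 of \cref{prop:prop_almost_concave_estimate_relative_entropy_well_behaved} the constants become $c_1 = c_2 = 1$, so $f_{c_1,c_2} = f_{1,1} = 0$ and the remainder reduces to $h(p)\tfrac12\norm{\rho_1-\rho_2}_1 \le h(p)$, yielding $D(p\rho_1+(1-p)\rho_2\Vert\sigma) \ge pD(\rho_1\Vert\sigma)+(1-p)D(\rho_2\Vert\sigma) - h(p)$. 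Hence $a_f = h$ and $E_f = a_f + b_f = h$.

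It then remains to bound the constant $C_f^\perp$ from \cref{rem:rem_s=0_alaff_method}. For this I would use the two-sided estimate $0 \le D(\rho\Vert\sigma) \le \log\widetilde{m}_\sigma^{-1}$, valid for every $\rho \in \cS_0$: non-negativity is standard (Klein's inequality), and since $\rho$ is supported on $(\ker\sigma)^\perp$, where $\log\sigma \ge (\log\widetilde{m}_\sigma)\identity$, one has $D(\rho\Vert\sigma) = -S(\rho) - \tr[\rho\log\sigma] \le -\tr[\rho\log\sigma] \le \log\widetilde{m}_\sigma^{-1}$. Consequently $C_f^\perp = \sup_{\tr[\rho_1\rho_2]=0}|D(\rho_1\Vert\sigma)-D(\rho_2\Vert\sigma)| \le \log\widetilde{m}_\sigma^{-1}$. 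Feeding $C_f^\perp \le \log\widetilde{m}_\sigma^{-1}$ and $E_f = h$ into the continuity bound of \cref{rem:rem_s=0_alaff_method} produces exactly $\varepsilon\log\widetilde{m}_\sigma^{-1} + (1+\varepsilon)h(\tfrac{\varepsilon}{1+\varepsilon})$, as claimed.

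The construction is entirely routine once the almost concavity of \cref{theo:theo_almost_concavity_relative_entropy} is available; the only genuine checks are the $\Delta$-invariance of $\cS_0$, which hinges on the positive and negative parts preserving the support condition, and the uniform boundedness of $D(\cdot\Vert\sigma)$ on $\cS_0$ by $\log\widetilde{m}_\sigma^{-1}$. I do not expect any step to be a serious obstacle, the main subtlety being simply to verify that the constraint $\ker\sigma \subseteq \ker\rho$ is compatible with the $\Delta$-construction, which it is precisely because all states in $\cS_0$ live on the support of $\sigma$.
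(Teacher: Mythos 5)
Your proposal is correct and follows essentially the same route as the paper's proof: establish $0$-perturbed $\Delta$-invariance of $\cS_0$ via the support condition on $[\rho_1-\rho_2]_\pm$, obtain the ALAFF property with $a_f=h$, $b_f=0$ from joint convexity together with \cref{theo:theo_almost_concavity_relative_entropy} and point 1 of \cref{prop:prop_almost_concave_estimate_relative_entropy_well_behaved}, bound $C_f^\perp$ by $\log\widetilde{m}_\sigma^{-1}$, and apply \cref{rem:rem_s=0_alaff_method}. The only cosmetic difference is your justification of $D(\rho\Vert\sigma)\le\log\widetilde{m}_\sigma^{-1}$ via $-S(\rho)\le 0$ rather than via $\widetilde{m}_\sigma\rho\le\sigma$; both are valid.
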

\begin{proof}
    $\cS_0$ is clearly convex and $0$-perturbed $\Delta$-invariant as for two operators $A, B$, $\ker A \cap \ker B  \subseteq \ker (A - B)$ and $[A - B]_{\pm}$ are orthogonal. We set $f(\cdot) = D(\cdot \Vert \sigma)$. Using \cref{theo:theo_almost_concavity_relative_entropy} and point 1 of \cref{prop:prop_almost_concave_estimate_relative_entropy_well_behaved}, we find that $D(\cdot \Vert \sigma)$ is ALAFF with $a_f = h$ and $b_f = 0$. At last, we have that 
    \begin{equation}
        C^\perp_f = \sup\limits_{\twoline{\rho_1, \rho_2 \in \cS_0}{\frac{1}{2}\|\rho_1- \rho_2\| = 1}} |D(\rho_1 \Vert \sigma) - D(\rho_2 \Vert \sigma)| \le \sup\limits_{\rho \in \cS(\cH)} D(\rho\Vert \sigma) \le \log \widetilde{m}_\sigma^{-1}.
    \end{equation}
    In the first inequality, we used that $D(\rho\Vert \sigma) \ge 0$, and in the second one that $\widetilde{m}_\sigma \rho \le \sigma$ hence $D(\rho \Vert \sigma) \le \log \widetilde{m}_\sigma^{-1}$. Using \cref{theo:theo_alaff_method} in the form of \cref{rem:rem_s=0_alaff_method} concludes the claim.
\end{proof}

We can compare \cref{eq:uniform_continuity_rel_ent_first_argument} with the findings of \cite[Eq. (43) and (44)]{GourTomamichel-ResourceMeasures-2020}, based on the previous \cite{GourTomamichel-RelativeEntropy-2021}, where it was shown that
\begin{equation}\label{eq:uniform_continuity_rel_ent_first_input_tomamichel_gour}
    |D(\rho_1 \Vert \sigma) - D(\rho_2 \Vert \sigma)| \le \max\limits_{i = 1, 2}\log \left( 1 + \frac{\norm{\rho_1 - \rho_2}_\infty}{m_{\rho_i}m_{\sigma}} \right) \, ,
\end{equation}
whenever $\rho_i > 0$ and $\min\limits_{i = 1, 2} m_{\rho_i} > \norm{\rho_1 - \rho_2}_\infty$. Here $m_{\rho_i}$ is the minimal eigenvalue of $\rho_i$ for $i=1, 2$ and correspondingly $m_\sigma$ the one of $\sigma$. This expression presents the advantage with respect to ours of depending on the operator norm of the difference of $\rho_1$ and $\rho_2$, instead of the trace norm. However, when $\rho_1 \approx \rho_2$, the upper bound in \cref{eq:uniform_continuity_rel_ent_first_input_tomamichel_gour} can be approximated by $\frac{\norm{\rho_1 - \rho_2}_\infty}{m_{\rho_i}m_{\sigma}}$, and thus the dependence with $m_{\sigma}^{-1}$ is linear, instead of logarithmic as in \cref{eq:uniform_continuity_rel_ent_first_argument}. Further in \cref{eq:uniform_continuity_rel_ent_first_input_tomamichel_gour} one needs $\rho_1$ and $\rho_2$ to be full rank and has a condition on their minimal eigenvalues.

We can subsequently use the \cref{cor:cor_uniform_continuity_relative_entropy_first_argument} to prove a divergence bound for the relative entropy.

\begin{corollary}[Divergence bound for the relative entropy]\label{cor:divergence_bound_relative_entropy}~\\
    Let $\rho, \sigma \in \cS(\cH)$ with $\ker \sigma \subseteq \ker \rho$ and $\frac{1}{2}\norm{\rho - \sigma}_1 \le \varepsilon \le 1$, we have 
    \begin{equation}
        D(\rho \Vert \sigma) \le \varepsilon \log \widetilde{m}_\sigma^{-1} + (1 + \varepsilon)h\Big(\frac{\varepsilon}{1 + \varepsilon}\Big)  \leq \Big(1 + \frac{\log \widetilde{m}_\sigma^{-1}}{\sqrt{2}} \Big) \varepsilon^{1/2} \, .
    \end{equation}
    with $\widetilde{m}_\sigma$ the minimal non-zero eigenvalue of $\sigma$. The second inequality follows from \eqref{eq:bound_binary_entropy_CB} and the fact that $\varepsilon \leq \sqrt{\varepsilon}$ for any $\varepsilon \in [0,1]$.
\end{corollary}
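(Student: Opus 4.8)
The plan is to read the divergence bound off \cref{cor:cor_uniform_continuity_relative_entropy_first_argument} by exploiting that the relative entropy vanishes on the diagonal and is non-negative, so that essentially no new analytic work is required. Concretely, I would fix the second argument to be the given $\sigma$ and apply the continuity estimate of that corollary to the pair of first arguments $\rho_1 = \rho$ and $\rho_2 = \sigma$. This application is legitimate: the hypothesis $\ker\sigma\subseteq\ker\rho$ places $\rho$ in the admissible set $\cS_0 = \{\omega\in\cS(\cH):\ker\sigma\subseteq\ker\omega\}$, while $\sigma\in\cS_0$ holds trivially since $\ker\sigma\subseteq\ker\sigma$, and the standing assumption $\frac12\norm{\rho-\sigma}_1\le\varepsilon\le 1$ is exactly the constraint required there.

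First I would record the two elementary facts $D(\sigma\Vert\sigma)=0$ and $D(\rho\Vert\sigma)\ge 0$, which together allow me to write
\begin{equation}
    D(\rho\Vert\sigma) = D(\rho\Vert\sigma) - D(\sigma\Vert\sigma) = \left| D(\rho\Vert\sigma) - D(\sigma\Vert\sigma) \right| \, .
\end{equation}
Substituting this into \cref{cor:cor_uniform_continuity_relative_entropy_first_argument} immediately produces the first claimed inequality
\begin{equation}
    D(\rho\Vert\sigma) \le \varepsilon\log\widetilde m_\sigma^{-1} + (1+\varepsilon)h\Big(\frac{\varepsilon}{1+\varepsilon}\Big) \, ,
\end{equation}
with $\widetilde m_\sigma$ the minimal non-zero eigenvalue of $\sigma$, as before.

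For the second inequality I would invoke \cref{lemma:nice_bounds}: the estimate \eqref{eq:bound_binary_entropy_CB} controls the binary-entropy contribution by $(1+\varepsilon)h\big(\varepsilon/(1+\varepsilon)\big)\le\sqrt{2\varepsilon}$, while the elementary bound $\varepsilon\le\sqrt\varepsilon$ valid on $[0,1]$ converts the linear term $\varepsilon\log\widetilde m_\sigma^{-1}$ into $\sqrt\varepsilon\,\log\widetilde m_\sigma^{-1}$. Collecting both contributions over the common factor $\varepsilon^{1/2}$ then yields a bound of the announced form $\big(\mathrm{const}+\mathrm{const}\cdot\log\widetilde m_\sigma^{-1}\big)\varepsilon^{1/2}$.

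I do not anticipate a genuine obstacle, since \cref{cor:cor_uniform_continuity_relative_entropy_first_argument} carries all the weight; the reduction $D(\rho\Vert\sigma)=|D(\rho\Vert\sigma)-D(\sigma\Vert\sigma)|$ is the single conceptual step. The only point that deserves care is the bookkeeping of the numerical constant in the last line: one must track how the $\sqrt2$ arising from $\sqrt{2\varepsilon}$ is absorbed when the two terms are rewritten over $\varepsilon^{1/2}$, to ensure the final coefficient is stated correctly.
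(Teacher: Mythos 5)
Your proposal coincides with the paper's own proof: set $\rho_1=\rho$, $\rho_2=\sigma$ in \cref{cor:cor_uniform_continuity_relative_entropy_first_argument}, use $D(\sigma\Vert\sigma)=0$ together with $D(\cdot\Vert\cdot)\ge 0$ to drop the absolute value, and then invoke \eqref{eq:bound_binary_entropy_CB} and $\varepsilon\le\sqrt{\varepsilon}$ for the second inequality. On the one point you flagged, your bookkeeping is actually the correct one: collecting the two terms gives the coefficient $\bigl(\sqrt{2}+\log\widetilde{m}_\sigma^{-1}\bigr)\varepsilon^{1/2}$, matching the simplification in \cref{theo:theo_uniform_continuity_relative_entropy}, whereas the coefficient $\bigl(1+\log\widetilde{m}_\sigma^{-1}/\sqrt{2}\bigr)$ printed in the corollary is smaller by a factor $\sqrt{2}$ and is not implied by this argument (and indeed fails, e.g., at $\varepsilon=1$ for small $\widetilde{m}_\sigma$).
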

\begin{proof}[Proof.]
    In the context of \cref{cor:cor_uniform_continuity_relative_entropy_first_argument}, we just set $\rho_1 = \rho$ and $\rho_2 = \sigma$, giving us that $\frac{1}{2}\norm{\rho_1 - \rho_2}_1 = \frac{1}{2}\norm{\rho - \sigma}_1 \le \varepsilon \le 1$. Furthermore, $D(\rho_2\Vert \sigma) = D(\sigma \Vert \sigma) = 0$ and $|D(\rho_1\Vert \sigma)|$ loses the absolute value, as $D(\cdot \Vert \cdot) \ge 0$. The bound follows immediately.
\end{proof}

\begin{figure}[ht!]
    \centering
    \begin{subfigure}[t]{0.49\textwidth}
        \centering
        \includegraphics[width=\textwidth]{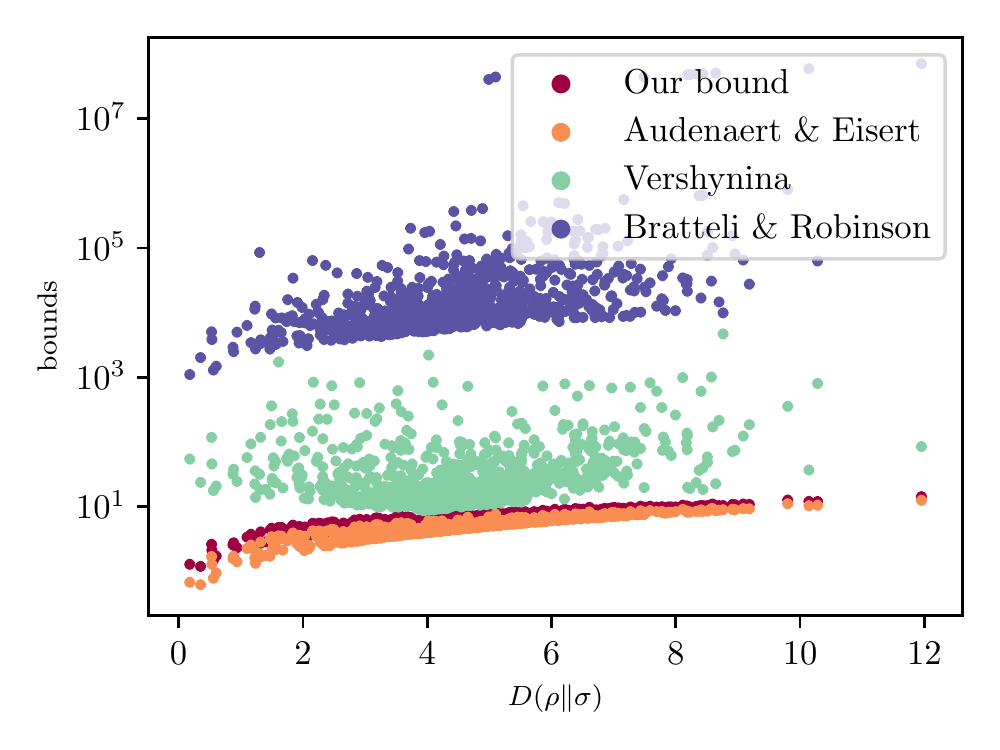}
        \caption{The magnitude of the different bounds plotted over the relative entropy. We sampled thousand different pairs of qubits and controlled the minimal eigenvalue of $\sigma$ in a range from $10^{-4}$ to $10^{-8}$. The explicit bounds can be found in \cref{tab:divergence_bounds_comparison}.}
        \label{subfig:subfig_pointcloud}
    \end{subfigure}
    \begin{subfigure}[t]{0.49\textwidth}
        \centering
        \includegraphics[width=\textwidth]{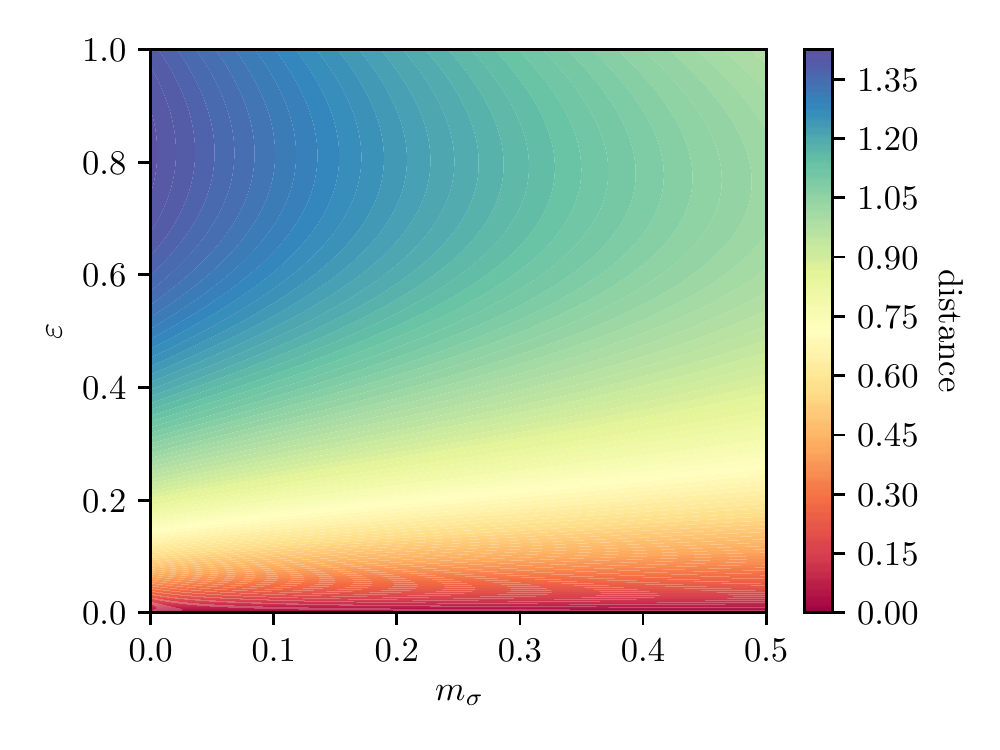}
        \caption{The difference between the bound from \cref{cor:divergence_bound_relative_entropy} and the one of Audenaert \& Eisert \cite[Theorem 1]{AudenaertEisert_II_2011}. On the x-axis we plot the minimal eigenvalue of $\sigma$ and on the y-axis $\varepsilon = \frac{1}{2} \norm{\rho - \sigma}_1$. The minimal eigenvalue of $\rho$ is set to the minimal eigenvalue of $\sigma$, thereby strengthening the bound of Audenaert \& Eisert. Both were varied between $10^{-20}$ and $\frac{1}{2}$.}
        \label{subfig:subfig_heatmap}
    \end{subfigure}
    \caption{Two plots comparing the divergence bounds from \cref{tab:divergence_bounds_comparison}.}
    \label{fig:comparison_of_divergence_bounds}
\end{figure}

\begin{table}[ht!]
    \centering
    \scalebox{0.97}{
    \begin{tabular}{c|ccc}
        Bound by & not  & not & Bound on $D(\rho \Vert \sigma)$\\
        & full rank $\rho$ &  full rank $\sigma$ & \\
        \hline\hline
        
        \cref{cor:divergence_bound_relative_entropy} & $\checkmark$ & $\checkmark$ & 
         \footnotesize $\varepsilon \log \widetilde{m}_\sigma^{-1} + (1 + \varepsilon) h\Big(\frac{\varepsilon}{1 + \varepsilon}\Big)$ \\
        
        \begin{tabular}{c}
            Audenaert \& Eisert\\
            \cite[Theorem 1]{AudenaertEisert_II_2011}
         \end{tabular} & $\checkmark$ & x & 
         \footnotesize $(m_\sigma + \varepsilon) \log\Big(\frac{m_\sigma + \varepsilon}{m_\sigma}\Big)- m_\rho \log\Big(\frac{m_\rho + \varepsilon}{m_\rho}\Big)$ \\
         
         \begin{tabular}{c}
            Vershynina\\
            \cite{Vershynina_2019}
         \end{tabular} & x & x & 
          \footnotesize $2 \varepsilon \lambda_\rho \frac{\log m_\rho - \log m_\sigma}{m_\rho - m_\sigma}$ \\

         \begin{tabular}{c}
            Bratteli \&\\
            Robinson\\
            \cite{BraRob81}
         \end{tabular} & x & x & 
         \footnotesize $m_\sigma^{-1} \norm{\rho - \sigma}_\infty$
         
    \end{tabular}
    }
    \caption{A comparison of different divergence bounds. Here $\varepsilon = \frac{1}{2} \norm{\rho - \sigma}_1$ and $m_\cdot$ and $\widetilde{m}_\cdot$ are the minimal and the minimal non-zero eigenvalue of the quantum state in the index, respectively. Further $\lambda_\rho$ is the maximal eigenvalue of $\rho$. The bound of Audenaert $\&$ Eisert in the case $m_\rho = 0$ has to be understood as the limit $m_\rho \to +0$.}
    \label{tab:divergence_bounds_comparison}
\end{table}

There exist results on divergence bounds in the literature which predate ours. In \cite{AudenaertEisert_I_2005, Vershynina_2019}, the authors present some linear bounds for the relative entropy in terms of the trace norm difference between those states, with some multiplicative factors depending on the eigenvalues of the states involved, whereas in \cite{BraRob81} a similar bound is provided in terms of the operator norm of the difference between the states. One of the bounds in \cite{AudenaertEisert_I_2005} is further generalised in \cite{AudenaertEisert_II_2011} and is closely related to our bound as both of them are non-linear in the trace norm (resp. operator norm) difference between the involved states, and show a dependence on the inverse of the minimal eigenvalue of $\sigma$ only logarithmically. This is partly an advantage over the bounds in \cite{BraRob81, Vershynina_2019}. There further exists a bound in \cite[Proposition 5.81]{Hanson-ThesisEntropyBounds-2020} which has an explicit dependence on the dimension in addition to the dependence on the minimal eigenvalue of $\sigma$ and therefore was not investigated. The bound might have an advantage in low-dimensional settings. In \cref{tab:divergence_bounds_comparison} and \cref{fig:comparison_of_divergence_bounds} we compare the aforementioned bounds from \cite{AudenaertEisert_II_2011, BraRob81, Vershynina_2019}. From \cref{subfig:subfig_pointcloud} it is clear that our bound, in the majority of the cases, outperforms the bound by Vershynina and the one by Bratteli \& Robinson. This is because of the logarithmic scaling with the inverse minimal eigenvalue of $\sigma$ of our bound versus the linear scaling with the inverse minimal eigenvalue of $\sigma$ of theirs. We hence reduce the discussion to a comparison between Audenaert \& Eisert's and our bound. From the first \cref{subfig:subfig_pointcloud} and second plot \cref{subfig:subfig_heatmap} we conclude a slight advantage of theirs. The numerical experiments suggest, however, that the difference between both bounds is bounded by two, hence as the minimal eigenvalue decreases both bounds should converge asymptotically. Furthermore, our bound has the advantage that it does not need $\sigma$ nor $\rho$ to be full rank. This fact and its simple representation might give some advantages in applications.

\subsubsection{Continuity bounds for the relative entropy}\label{subsec:subsec_continuity_bounds_relative_entropy}

We conclude our section on continuity bounds with the most involved continuity bound until now. It concerns the relative entropy and regards it in all its power as a function of two variables, i.e., it constitutes a continuity bound both for the first and the second input simultaneously. This presents some challenges that need to be dealt with, as the relative entropy exhibits discontinuity whenever the kernel of the second input is not contained in that of the first one. To overcome these issues, we need to employ the ALAFF method in its full generality.

In the first step, we fix the first input of the relative entropy and provide a continuity bound for the relative entropy in the second argument. 

\begin{corollary}[Uniform continuity of the relative entropy in the second argument]\label{cor:cor_uniform_continuity_relative_entropy_second_argument}~\\
    Let $\rho \in \cS(\cH)$ be fixed and $1 > \widetilde{m} > 0$. Then, $D(\rho \Vert \cdot)$ is uniformly continuous on 
    \begin{equation}
        \cS_0 := \{\sigma \in \cS(\cH) \;:\;  \widetilde m \rho \le \sigma \} \, .
    \end{equation}
    We further get that, for $\sigma_1, \sigma_2 \in \cS_0$ with $\frac{1}{2}\norm{\sigma_1 - \sigma_2}_1 \le \varepsilon$,
    \begin{equation}\label{eq:continuity_bound_relative_entropy_second_argument}
        \begin{aligned}
            |D(\rho\Vert \sigma_1) - D(\rho \Vert \sigma_2)| &\le \frac{\varepsilon}{l_{\widetilde{m}}} \log(\widetilde{m}^{-1}) + \frac{l_{\widetilde{m}} + \varepsilon}{l_{\widetilde{m}}} f_{\widetilde{m}^{-1} \, , \widetilde{m}^{-1}}\Big(\frac{\varepsilon}{l_{\widetilde{m}} + \varepsilon} \Big) \\
            &\le 2 \frac{\varepsilon}{l_{\widetilde m}} \log \widetilde m^{-1} + \log\left(1 + \frac{\varepsilon}{l_{\widetilde m}+ \varepsilon} \frac{1}{\widetilde m}\right)\, ,
        \end{aligned}
    \end{equation}
    where $l_{\widetilde{m}} = 1 - \widetilde{m}$. The second inequality follows from \eqref{eq:bound_fmtilde_entropy_CB}.
\end{corollary}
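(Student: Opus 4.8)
The plan is to apply the ALAFF method (\cref{theo:theo_alaff_method}) to the function $f(\cdot) = D(\rho\Vert\cdot)$ on $\cS_0$, this time with a genuinely positive perturbation parameter $s=\widetilde m$ rather than the $s=0$ simplification of \cref{rem:rem_s=0_alaff_method}. Three inputs are required: that $\cS_0$ is an $s$-perturbed $\Delta$-invariant convex set containing more than one element, that $f$ is ALAFF, and that $C_f^{s}$ is finite. I will then match the two error terms of \cref{eq:eq_continuity_bound} (with $1-s=l_{\widetilde m}$) against the two terms in the claimed bound: the prefactor $\frac{1-s+\varepsilon}{1-s}=\frac{l_{\widetilde m}+\varepsilon}{l_{\widetilde m}}$ and the argument $\frac{\varepsilon}{1-s+\varepsilon}=\frac{\varepsilon}{l_{\widetilde m}+\varepsilon}$ already have the desired shape.

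First I would establish the $s$-perturbed $\Delta$-invariance. Convexity of $\cS_0$ is immediate, since $\widetilde m\rho\le\sigma$ together with $\tr[\sigma]=1$ and $\sigma\ge 0$ are convex constraints; moreover $\cS_0$ contains more than one element whenever $d\ge 2$, because $\widetilde m\rho + (1-\widetilde m)\omega\in\cS_0$ for every state $\omega$. For the perturbation I would take $s=\widetilde m$ and the simplest possible witness $\tau=\rho$. Then, for $\sigma_1\ne\sigma_2$ in $\cS_0$ with $\varepsilon=\frac12\norm{\sigma_1-\sigma_2}_1$,
\[
    \Delta^\pm(\sigma_1,\sigma_2,\rho) = \widetilde m\,\rho + (1-\widetilde m)\,\varepsilon^{-1}[\sigma_1-\sigma_2]_\pm \ge \widetilde m\,\rho
\]
is a convex combination of the states $\rho$ and $\varepsilon^{-1}[\sigma_1-\sigma_2]_\pm$, hence itself a state, and it dominates $\widetilde m\rho$; thus it lies in $\cS_0$. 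This shows $\cS_0$ is $\widetilde m$-perturbed $\Delta$-invariant.

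Next comes the ALAFF property of $f$. The upper bound is free: joint convexity of the relative entropy yields convexity in the second argument, so $f(p\sigma_1+(1-p)\sigma_2)\le pf(\sigma_1)+(1-p)f(\sigma_2)$ and we may take $b_f=0$. For the lower bound I would specialise the almost concavity of \cref{theo:theo_almost_concavity_relative_entropy} to $\rho_1=\rho_2=\rho$: then $\frac12\norm{\rho_1-\rho_2}_1=0$ annihilates the binary-entropy term, leaving $f(p\sigma_1+(1-p)\sigma_2)-pf(\sigma_1)-(1-p)f(\sigma_2)\ge -f_{c_1,c_2}(p)$. Since $\widetilde m\rho\le\sigma_i$, the constants obey $c_1,c_2\le\widetilde m^{-1}$ (this is exactly the estimate behind point 2 of \cref{prop:prop_almost_concave_estimate_relative_entropy_well_behaved}), and as $f_{c_1,c_2}$ is non-decreasing in each argument we get $f_{c_1,c_2}\le f_{\widetilde m^{-1},\widetilde m^{-1}}$. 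Hence $a_f=f_{\widetilde m^{-1},\widetilde m^{-1}}$, which is state-independent, continuous, vanishes as $p\to 0^+$, and has the monotonicity on $[0,\tfrac12]$ demanded by \cref{def:ALAFF-function}.

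Finally I would control $C_f^{\widetilde m}$. Because $D(\rho\Vert\sigma)\ge 0$ and, by operator monotonicity of the logarithm, $\widetilde m\rho\le\sigma$ forces $D(\rho\Vert\sigma)\le\log\widetilde m^{-1}$ (the same step as in the proof of \cref{cor:cor_uniform_continuity_relative_entropy_first_argument}), the values of $f$ on $\cS_0$ lie in $[0,\log\widetilde m^{-1}]$, so $C_f^{\widetilde m}\le\log\widetilde m^{-1}$. Inserting $s=\widetilde m$, $E_f=a_f+b_f=f_{\widetilde m^{-1},\widetilde m^{-1}}$, and $C_f^{\widetilde m}\le\log\widetilde m^{-1}$ into \cref{eq:eq_continuity_bound} gives the first claimed inequality, once $E_f^{\max}$ collapses to $E_f$ on the relevant domain; this collapse is guaranteed by point 4 of \cref{prop:prop_almost_concave_estimate_relative_entropy_well_behaved}, which states that $p\mapsto (1-p)^{-1}f_{\widetilde m^{-1},\widetilde m^{-1}}(p)$ is non-decreasing, so the maximum defining $E_f^{\max}(p)$ is attained at $t=p$. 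The second inequality is then immediate from \cref{eq:bound_fmtilde_entropy_CB}. The main obstacle is not any single estimate but the bookkeeping of the ALAFF machinery: choosing $s=\widetilde m$ and $\tau=\rho$ so that $\cS_0$ is precisely $s$-perturbed $\Delta$-invariant, and verifying that the remainder $a_f$ becomes genuinely state-independent, which is what produces the clean closed form.
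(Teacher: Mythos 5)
Your proposal is correct and follows essentially the same route as the paper's own proof: the choice $s=\widetilde m$ with witness $\tau=\rho$ for the perturbed $\Delta$-invariance, $b_f=0$ from joint convexity, $a_f=f_{\widetilde m^{-1},\widetilde m^{-1}}$ from \cref{theo:theo_almost_concavity_relative_entropy} specialised to $\rho_1=\rho_2=\rho$ together with point 2 of \cref{prop:prop_almost_concave_estimate_relative_entropy_well_behaved}, the bound $C_f^{\widetilde m}\le\log\widetilde m^{-1}$, and the collapse of $E_f^{\max}$ to $E_f$ via point 4. The only differences are presentational (you spell out why $\Delta^\pm(\sigma_1,\sigma_2,\rho)$ is a state and that $\cS_0$ has more than one element), so nothing is missing.
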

\begin{proof}
    We have that $\cS_0$ is clearly convex as, for $\sigma_1, \sigma_2 \in \cS_0$ and $\lambda \in [0, 1]$, 
    \begin{equation}
        \lambda \sigma_1 + (1 - \lambda) \sigma_2 \ge \lambda\widetilde{m}\rho + (1 - \lambda) \widetilde{m} \rho  = \widetilde{m} \rho  \, ,
    \end{equation}
    giving the kernel inclusion as well as the condition for the smallest eigenvalue on the support of $\rho$. Furthermore, $\cS_0$ is $s$-perturbed $\Delta$-invariant with $s = \widetilde{m}$. This is because one can perturb with $\tau = \rho$ and get subminorization by $\widetilde{m} \rho$. Employing point 2 of \cref{prop:prop_almost_concave_estimate_relative_entropy_well_behaved} we further find that $f(\cdot) = D(\rho\Vert \cdot)$ satisfies \cref{eq:eq_alaff_property} with $b_f = 0$ and $a_f = f_{\widetilde{m}^{-1}, \widetilde{m}^{-1}}$, hence $E_f =  f_{\widetilde{m}^{-1}, \widetilde{m}^{-1}}$. Using again \cref{prop:prop_almost_concave_estimate_relative_entropy_well_behaved} (point 4, since $\tilde m \leq 1$) we conclude $E_f^{\max} = f_{\widetilde{m}^{-1}, \widetilde{m}^{-1}}$. At last, we have that 
    \begin{equation}
        \begin{aligned}
            C^{\widetilde{m}}_f &= \sup\limits_{\twoline{\sigma_1, \sigma_2 \in \cS_0}{\frac{1}{2}\norm{\sigma_1 - \sigma_2}_1 = 1 - \widetilde{m}}}|D(\rho \Vert \sigma_1) - D(\rho \Vert \sigma_2)|\\
            &\le \sup\limits_{\sigma \in \cS_0} D(\rho \Vert \sigma)\\
            &\le \log(\widetilde{m}^{-1}) \, ,
        \end{aligned}
    \end{equation}
    where we used that $D(\rho\Vert \cdot) \ge 0$ and for the last inequality that $\widetilde{m}\rho \le \sigma$ for all $\sigma \in \cS_0$. Employing now \cref{theo:theo_alaff_method} we obtain uniform continuity and the claimed continuity bound.
\end{proof}
As in the case of the continuity bound for the relative entropy in the first input, we can compare \cref{eq:continuity_bound_relative_entropy_second_argument} with \cite[Eq. (39) and (40)]{GourTomamichel-ResourceMeasures-2020}, as an extension of the previous \cite{GourTomamichel-RelativeEntropy-2021}, in which it was shown that
\begin{equation}\label{eq:uniform_continuity_rel_ent_second_input_tomamichel_gour}
    |D(\rho \Vert \sigma_1) - D(\rho \Vert \sigma_2)| \le \max\limits_{i = 1, 2} - \log \left( 1 - \frac{\norm{\sigma_1 - \sigma_2}_\infty}{m_{\sigma_i}} \right) \, ,
\end{equation}
whenever $\min\limits_{i = 1, 2}m_{\sigma_i}> \norm{\sigma_1 - \sigma_2}_\infty$, where $m_{\sigma_i}$ is the minimal eigenvalue of $\sigma_i$ for $i = 1, 2$. In the low $\varepsilon$ regime the bound in \cref{eq:uniform_continuity_rel_ent_second_input_tomamichel_gour} as well as the bound in \cref{eq:continuity_bound_relative_entropy_second_argument} scale linearly in $m_{\sigma}^{-1}$.

In the above corollary, two choices need some more justification. The first choice is $1 > \widetilde{m}$ and the second one is $s = \widetilde{m}$. We want to put them into context by the following lemma, demonstrating that these assumptions are necessary to obtain a non-trivial $\mathcal S_0$.

\begin{lemma}\label{lem:delta_invariant_subset}~\\
    Let $\rho \in \cS(\cH)$ and $s \in [0, 1)$ with $\rank \rho \ge 2$, further $\widetilde{m} \in (0, \infty)$ and
    \begin{equation}
        \cS_0 := \{\sigma \in \cS(\cH) \;:\; \ker \sigma \subseteq \ker \rho, \,\widetilde{m} \rho \le \sigma \} \, .
    \end{equation}
    Then, the following is true:
    \begin{enumerate}
        \item If $1 > \widetilde{m}$, then $\cS_0$ is $s$-perturbed $\Delta$-invariant if and only if $s \ge \widetilde{m}$.
        \item If $1 = \widetilde{m}$, then $\cS_0 = \{\rho\}$.
        \item If $1 < \widetilde{m}$, $\cS_0 = \emptyset$.
    \end{enumerate}
\end{lemma}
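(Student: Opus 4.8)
The plan is to dispatch cases 2 and 3 by elementary trace positivity and to concentrate on the equivalence in case 1, whose converse is the only genuinely delicate point.

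For case 3 ($1<\widetilde m$), any $\sigma\in\cS_0$ would satisfy $\widetilde m\rho\le\sigma$, so taking traces gives $\widetilde m=\widetilde m\tr[\rho]\le\tr[\sigma]=1$, contradicting $\widetilde m>1$; hence $\cS_0=\emptyset$. For case 2 ($\widetilde m=1$) the same computation forces $\tr[\sigma-\rho]=0$ while $\sigma-\rho=\sigma-\widetilde m\rho\ge 0$, so $\sigma=\rho$; since $\rho$ itself lies in $\cS_0$ (indeed $\widetilde m\rho=\rho\le\rho$), we get $\cS_0=\{\rho\}$. Neither argument needs $\rank\rho\ge 2$.

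For the ``if'' implication of case 1, suppose $s\ge\widetilde m$ (so in particular $s>0$, and $s<1$ by hypothesis). Given distinct $\sigma_1,\sigma_2\in\cS_0$ with $\varepsilon=\tfrac12\norm{\sigma_1-\sigma_2}_1$, I would take $\tau=\rho$ in \cref{def:def_perturbed_delta_invariant_subset}, exactly as in the proof of \cref{cor:cor_uniform_continuity_relative_entropy_second_argument}. Then $\Delta^\pm=s\rho+(1-s)\varepsilon^{-1}[\sigma_1-\sigma_2]_\pm$ is positive semidefinite, has unit trace (the positive and negative parts of $\sigma_1-\sigma_2$ each have trace $\varepsilon$), and obeys $\Delta^\pm-\widetilde m\rho=(s-\widetilde m)\rho+(1-s)\varepsilon^{-1}[\sigma_1-\sigma_2]_\pm\ge 0$; since $\widetilde m>0$ this last inequality also gives $\ker\Delta^\pm\subseteq\ker\rho$. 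Hence $\Delta^\pm\in\cS_0$ and $\cS_0$ is $s$-perturbed $\Delta$-invariant. (Alternatively, this establishes $\widetilde m$-perturbed $\Delta$-invariance and one concludes by \cref{rem:rem_s_perturbed_delta_invariance}(1).)

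The crux is the converse of case 1: $s$-perturbed $\Delta$-invariance forces $s\ge\widetilde m$. Here I would exhibit a single badly behaved pair. Using $\rank\rho\ge 2$, pick orthonormal eigenvectors $\ket{0},\ket{1}$ of $\rho$ in its support, with eigenvalues $\lambda_0,\lambda_1>0$, and set $\sigma_{1,2}:=\rho\pm t(\dyad{0}-\dyad{1})$ with $0<t\le(1-\widetilde m)\min\{\lambda_0,\lambda_1\}$. One checks that $\sigma_1,\sigma_2\in\cS_0$, $\sigma_1\ne\sigma_2$, and that the normalised parts collapse to rank-one projectors, $\varepsilon^{-1}[\sigma_1-\sigma_2]_+=\dyad{0}$ and $\varepsilon^{-1}[\sigma_1-\sigma_2]_-=\dyad{1}$. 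Invariance then yields a $\tau\in\cS(\cH)$ with $s\tau+(1-s)\dyad{0}\ge\widetilde m\rho$ and $s\tau+(1-s)\dyad{1}\ge\widetilde m\rho$. Testing the first inequality against each eigenvector $\ket{i}\ne\ket{0}$ of $\rho$ kills the $\dyad{0}$ term and gives $s\bra{i}\tau\ket{i}\ge\widetilde m\lambda_i$; testing the second against $\ket{0}$ kills the $\dyad{1}$ term and gives $s\bra{0}\tau\ket{0}\ge\widetilde m\lambda_0$. Summing these diagonal estimates over the whole support of $\rho$ produces $s\,\tr[\tau P_\rho]\ge\widetilde m\sum_i\lambda_i=\widetilde m$, where $P_\rho$ projects onto $\supp\rho$, and $\tr[\tau P_\rho]\le\tr[\tau]=1$ finishes $s\ge\widetilde m$.

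I expect the only real obstacle to lie in engineering this converse so that it yields the \emph{sharp} constant $\widetilde m$ rather than a weaker threshold such as $\widetilde m(\lambda_0+\lambda_1)$. Testing a single direction is too lossy, because the free state $\tau$ may concentrate all its weight there; the resolution is to choose $\sigma_1-\sigma_2$ of rank one so that $\Delta^+$ is blind to every eigendirection of $\rho$ except $\ket{0}$ while $\Delta^-$ is blind precisely to $\ket{0}$, and then to add the resulting per-direction constraints so that the right-hand side reassembles $\tr\rho=1$ while the left-hand side stays capped at $s$ through $\tr[\tau P_\rho]\le 1$. The remaining verifications (that $\sigma_1,\sigma_2\in\cS_0$ and the trace bookkeeping for $\Delta^\pm$) are routine.
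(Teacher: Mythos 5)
Your proof is correct and follows essentially the same route as the paper's: the same choice $\tau=\rho$ for sufficiency, and for necessity the same construction of a pair $\sigma_{1,2}=\rho\pm t(\dyad{0}-\dyad{1})$ whose normalised positive and negative parts are orthogonal rank-one projectors, followed by a trace argument showing no unit-trace $\tau$ can dominate $\widetilde m\rho$ on the complementary pieces unless $s\ge\widetilde m$. The only cosmetic differences are that you argue directly rather than by contrapositive, organise the final estimate as a sum of diagonal matrix elements rather than the paper's projector-sandwiching computation, and supply the proofs of cases 2 and 3, which the paper leaves to the reader.
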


We will only give proof for the first one in \cref{sec:sec_proof_delta_invariant_subset} and leave the last two for the reader. Next, we proceed to state and prove the main result of this subsection on continuity bounds, namely the uniform continuity bound for the relative entropy in both arguments on a suitable subspace. Since we have already explored the cases in which we either fix the second (\cref{cor:cor_uniform_continuity_relative_entropy_first_argument}) or first (\cref{cor:cor_uniform_continuity_relative_entropy_second_argument}) density operator, we now combine both results in the proof of the next theorem.

\begin{theorem}[Uniform continuity of the relative entropy]\label{theo:theo_uniform_continuity_relative_entropy}~\\
    Let $1 > 2\widetilde{m} > 0$ and 
    \begin{equation}
        \cS_0 = \{(\rho, \sigma) \;:\; \rho, \sigma \in \cS(\cH), \, \ker \sigma \subseteq \ker \rho, \, 2\widetilde{m}  \le \widetilde{m}_\sigma \} \, ,
    \end{equation}
    with $\widetilde{m}_\sigma$ the minimal non-zero eigenvalue of $\sigma$. Then, $D(\cdot \Vert \cdot)$ is uniformly continuous on $\cS_0$ and we find that for $(\rho_1, \sigma_1), (\rho_2, \sigma_2) \in \cS_0$ with $\frac{1}{2}\norm{\rho_1 - \rho_2} \le \varepsilon\le 1$ and $\frac{1}{2}\norm{\sigma_1 - \sigma_2}_1 \le \delta \le 1$
    \begin{equation}
        \begin{aligned}
              |D(\rho_1 \Vert \sigma_1) - D(\rho_2 \Vert \sigma_2)| & \le \Big(\varepsilon + \frac{\delta}{l_{\widetilde{m}}}\Big) \log(\widetilde{m}^{-1}) + (1 + \varepsilon)h\Big(\frac{\varepsilon}{1 + \varepsilon}\Big) + 2\frac{l_{\widetilde{m}} + \delta}{l_{\widetilde{m}}} f_{\widetilde{m}^{-1}, \widetilde{m}^{-1}}\Big(\frac{\delta}{l_{\widetilde{m}} + \delta} \Big)\\
              & \leq \Big(\sqrt{2} + \log \widetilde{m}^{-1}\Big) \varepsilon^{1/2} + 3 \frac{\delta}{l_{\widetilde m}} \log \widetilde m^{-1} + 2 \log\left(1 + \frac{\delta}{l_{\widetilde m} + \delta} \frac{1}{\widetilde m}\right)  \,  , \label{eq:eq_continuity_bound_relative_entropy}  
        \end{aligned}
    \end{equation}
    with $l_{\widetilde{m}} = 1 - \widetilde{m}$. The second inequality follows from \eqref{eq:bound_fmtilde_entropy_CB} and $\varepsilon\leq \sqrt{\varepsilon}$ for $\varepsilon \in [0,1]$.
\end{theorem}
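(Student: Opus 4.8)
The plan is to prove the two-variable bound by the triangle inequality, splitting the variation of the two arguments so that each piece is governed by one of the one-variable corollaries already at our disposal. Concretely, I would interpolate through the pair $(\rho_1, \sigma_2)$ and estimate
\begin{equation}
    |D(\rho_1 \Vert \sigma_1) - D(\rho_2 \Vert \sigma_2)| \le |D(\rho_1 \Vert \sigma_1) - D(\rho_1 \Vert \sigma_2)| + |D(\rho_1 \Vert \sigma_2) - D(\rho_2 \Vert \sigma_2)| \, .
\end{equation}
In the second summand only the first argument changes while $\sigma_2$ is fixed, so \cref{cor:cor_uniform_continuity_relative_entropy_first_argument} applies and produces $\varepsilon \log \widetilde{m}_{\sigma_2}^{-1} + (1 + \varepsilon) h(\frac{\varepsilon}{1 + \varepsilon})$. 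Because $(\rho_2, \sigma_2) \in \cS_0$ forces $\widetilde{m}_{\sigma_2} \ge 2\widetilde{m} > \widetilde{m}$, this is bounded by $\varepsilon \log \widetilde{m}^{-1} + (1 + \varepsilon) h(\frac{\varepsilon}{1 + \varepsilon})$, matching the first block of \cref{eq:eq_continuity_bound_relative_entropy}. In the first summand only the second argument changes while $\rho_1$ is fixed, so I would invoke \cref{cor:cor_uniform_continuity_relative_entropy_second_argument}, which supplies the $\frac{\delta}{l_{\widetilde{m}}} \log \widetilde{m}^{-1}$ and $f_{\widetilde{m}^{-1}, \widetilde{m}^{-1}}$ contributions.

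To feed the corollaries I must verify their admissibility hypotheses, and the crucial point is that membership in $\cS_0$ already encodes the operator inequality required by \cref{cor:cor_uniform_continuity_relative_entropy_second_argument}: for $(\rho, \sigma) \in \cS_0$ the kernel inclusion $\ker \sigma \subseteq \ker \rho$ yields $\rho \le P_\rho \le P_\sigma$ for the support projections $P_\rho, P_\sigma$, while $\widetilde{m}_\sigma \ge 2\widetilde{m}$ yields $\sigma \ge 2\widetilde{m}\, P_\sigma$; together these give $2\widetilde{m}\, \rho \le \sigma$, so $\sigma$ lies in the set $\{\tau : \widetilde{m} \rho \le \tau\}$ on which $D(\rho \Vert \cdot)$ was shown to be uniformly continuous. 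In particular, when $\rho_1$ and $\sigma_1, \sigma_2$ are of full rank, the endpoints satisfy $\widetilde{m} \rho_1 \le \sigma_j$ with room to spare and the cross kernel inclusion $\ker \sigma_2 \subseteq \ker \rho_1$ is automatic, so in that regime both summands are finite and the corollaries apply verbatim.

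The main obstacle is exactly the finiteness of the intermediate quantity $D(\rho_1 \Vert \sigma_2)$ when the states are rank-deficient: this requires the \emph{cross} inclusion $\supp \rho_1 \subseteq \supp \sigma_2$, which is genuinely \emph{not} forced by $(\rho_1, \sigma_1), (\rho_2, \sigma_2) \in \cS_0$ together with $\frac{1}{2}\norm{\sigma_1 - \sigma_2}_1 \le \delta$, since proximity in trace norm does not align supports (one can exhibit pairs in $\cS_0$ with arbitrarily small $\delta$ for which \emph{every} interpolating relative entropy is infinite). I would resolve this by a regularisation-and-limit argument: establish the inequality first in the full-rank case treated above, then recover the general case by approximating $(\rho_i, \sigma_i)$ within $\cS_0$ by full-rank pairs, using that the factor-$2$ buffer in $\widetilde{m}_\sigma \ge 2\widetilde{m}$ keeps the approximants admissible and that the relative entropy transfers the bound under these limits. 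I expect this support bookkeeping, rather than any of the analytic estimates, to be the delicate part. The extra factor $2$ in front of $f_{\widetilde{m}^{-1}, \widetilde{m}^{-1}}$ in \cref{eq:eq_continuity_bound_relative_entropy}, as opposed to the single factor coming from one application of \cref{cor:cor_uniform_continuity_relative_entropy_second_argument} (which already suffices in the full-rank regime), I would expect to be the price paid by this rank-deficient treatment, where the second argument must be controlled uniformly against both $\sigma_1$ and $\sigma_2$ with the single constant $\widetilde{m}$.

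Finally, I would read off the first inequality in \cref{eq:eq_continuity_bound_relative_entropy} by summing the two blocks, and obtain the simplified second inequality by inserting \cref{lemma:nice_bounds}: the estimate \eqref{eq:bound_fmtilde_entropy_CB} turns the doubled $f_{\widetilde{m}^{-1}, \widetilde{m}^{-1}}$ term into $2\frac{\delta}{l_{\widetilde{m}}} \log \widetilde{m}^{-1} + 2\log(1 + \frac{\delta}{l_{\widetilde{m}} + \delta}\frac{1}{\widetilde{m}})$, which together with the standalone $\frac{\delta}{l_{\widetilde{m}}} \log \widetilde{m}^{-1}$ produces the coefficient $3$, while \eqref{eq:bound_binary_entropy_CB} and the elementary bound $\varepsilon \le \sqrt{\varepsilon}$ on $[0,1]$ collapse the first block into $(\sqrt{2} + \log \widetilde{m}^{-1})\varepsilon^{1/2}$.
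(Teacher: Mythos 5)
Your decomposition through the corner $(\rho_1,\sigma_2)$ works in the full-rank case exactly as you describe (there it even yields a single $f_{\widetilde m^{-1},\widetilde m^{-1}}$ term rather than the factor $2$), but the regularisation-and-limit step you propose for rank-deficient states cannot be carried out. A full-rank second component of a pair in $\cS_0$ satisfies $\sigma \ge 2\widetilde m\,\identity$, so if $\sigma_0$ is a rank-deficient second component with a $k$-dimensional kernel, then $\tfrac12\norm{\sigma-\sigma_0}_1 \ge \tfrac12\tr[P_{\ker\sigma_0}(\sigma-\sigma_0)] = \tfrac12\tr[P_{\ker\sigma_0}\,\sigma]\ge \widetilde m\,k\ge\widetilde m$. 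Hence full-rank pairs in $\cS_0$ are uniformly bounded away from the rank-deficient ones by the fixed constant $\widetilde m$, and no sequence of admissible full-rank approximants converges to a rank-deficient $(\rho_i,\sigma_i)$; letting the admissibility parameter shrink with $n$ instead makes the bound you are transporting blow up. The ``factor-$2$ buffer'' only protects the eigenvalues on the original support of $\sigma$, not the new eigenvalues you must create on the former kernel to reach full rank. So the general case is not recovered from the full-rank case by your limit, and the cross term $D(\rho_1\Vert\sigma_2)$ remains genuinely unusable.

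The paper's proof avoids the corner altogether: it interpolates through the midpoint $\overline\sigma=\tfrac12(\sigma_1+\sigma_2)$ and uses a three-term triangle inequality. Then $\ker\overline\sigma=\ker\sigma_1\cap\ker\sigma_2\subseteq\ker\rho_i$, so both $D(\rho_i\Vert\overline\sigma)$ are finite, and $\widetilde m\rho_i\le\tfrac12\sigma_i\le\overline\sigma$ together with $2\widetilde m\rho_i\le\sigma_i$ places both $\sigma_i$ and $\overline\sigma$ in the set $\{\tau:\widetilde m\rho_i\le\tau\}$ on which \cref{cor:cor_uniform_continuity_relative_entropy_second_argument} applies, while the middle term $|D(\rho_1\Vert\overline\sigma)-D(\rho_2\Vert\overline\sigma)|$ is handled by \cref{cor:cor_uniform_continuity_relative_entropy_first_argument} using $\log\widetilde m_{\overline\sigma}^{-1}\le\log\widetilde m^{-1}$. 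The factor $2$ in front of $f_{\widetilde m^{-1},\widetilde m^{-1}}$ is thus the price of two applications of \cref{cor:cor_uniform_continuity_relative_entropy_second_argument} (each at distance $\delta/2$, then upper-bounded by the $\delta$-term via monotonicity), not of any rank-deficient bookkeeping. If you replace your corner interpolant by this midpoint, the rest of your argument --- including the observation that membership in $\cS_0$ gives $2\widetilde m\rho\le\sigma$, and the final simplification via \cref{lemma:nice_bounds} --- goes through verbatim.
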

\begin{proof}
    We will prove the uniform continuity by proving that the bound \cref{eq:eq_continuity_bound_relative_entropy} holds. Therefore, let $(\rho_1, \sigma_1), (\rho_2, \sigma_2) \in \cS_0$ with $\frac{1}{2}\norm{\rho_1 - \rho_2} \le \varepsilon \le 1$ and $\frac{1}{2}\norm{\sigma_1 - \sigma_2}\le \delta \le 1$. We define
    \begin{equation}\label{eq:interpolation_sigmas_RE}
        \overline{\sigma} = \frac{1}{2}\sigma_1 + \frac{1}{2}\sigma_2 \, ,
    \end{equation}
    and obtain
    \begin{equation}
        \begin{aligned}
            \frac{1}{2}\norm{\overline{\sigma} - \sigma_1}_1 &= \frac{1}{4}\norm{\sigma_1 - \sigma_2}_1 \le \frac{\delta}{2} \le 1 \, ,\\
            \frac{1}{2}\norm{\overline{\sigma} - \sigma_2}_1 &= \frac{1}{4}\norm{\sigma_1 - \sigma_2}_1 \le  \frac{\delta}{2} \le 1 \, .
        \end{aligned}
    \end{equation}
    Using this, we get 
    \begin{equation}
        \begin{aligned}
            |D(\rho_1 \Vert \sigma_1) - D(\rho_2 \Vert \sigma_2)| &\le |D(\rho_1\Vert \sigma_1) - D(\rho_1 \Vert \overline{\sigma})| + |D(\rho_1 \Vert \overline{\sigma}) - D(\rho_2 \Vert \overline{\sigma})| + |D(\rho_2 \Vert \overline{\sigma}) - D(\rho_2 \Vert \sigma_2)| \, .
        \end{aligned}
    \end{equation}
    The middle term can be bounded using \cref{cor:cor_uniform_continuity_relative_entropy_first_argument} and the fact that
    \begin{equation}
        \log\widetilde{m}_{\overline{\sigma}}^{-1} \le \log (2 \max\{\widetilde{m}_{\sigma_1}^{-1}, \widetilde{m}_{\sigma_2}^{-1}\}) \le \log \widetilde{m}^{-1}.
    \end{equation}
    One obtains
    \begin{equation}
        \begin{aligned}
            |D(\rho_1 \Vert \overline{\sigma}) - D(\rho_2 \Vert \overline{\sigma})| \le \varepsilon \log\widetilde{m}^{-1} + (1 + \varepsilon) h\Big(\frac{\varepsilon}{1 + \varepsilon}\Big).
        \end{aligned}
    \end{equation}
    The other two terms are bound using \cref{cor:cor_uniform_continuity_relative_entropy_second_argument} and the fact that $\widetilde{m} \rho_1 \le \frac{1}{2} \sigma_1 \le \overline{\sigma}$ and $\widetilde{m} \rho_2 \le \frac{1}{2} \sigma_2 \le \overline{\sigma}$ by construction of $\cS_0$ and the definition of $\overline{\sigma}$, respectively. We therefore obtain
    \begin{equation}
        \begin{aligned}
             |D(\rho_1\Vert \sigma_1) - D(\rho_1 \Vert \overline{\sigma})| &\le \frac{\delta}{2l_{\widetilde{m}}} \log(\widetilde{m}^{-1}) + \frac{l_{\widetilde{m}} + 2^{-1} \delta}{l_{\widetilde{m}}}f_{\widetilde{m}^{-1}, \widetilde{m}^{-1}}\Big(\frac{2^{-1}\delta}{l_{\widetilde{m}} + 2^{-1} \delta}\Big) \, ,\\
             |D(\rho_2 \Vert \overline{\sigma}) - D(\rho_2 \Vert \sigma_2)| &\le \frac{\delta}{2l_{\widetilde{m}}} \log(\widetilde{m}^{-1}) + \frac{l_{\widetilde{m}} + 2^{-1} \delta}{l_{\widetilde{m}}}f_{\widetilde{m}^{-1}, \widetilde{m}^{-1}}\Big(\frac{2^{-1}\delta}{l_{\widetilde{m}} + 2^{-1} \delta}\Big) \, .
        \end{aligned}
    \end{equation}
    Combining the bounds and further using that 
    \begin{equation}
        \frac{l_{\widetilde{m}} + 2^{-1} \delta}{l_{\widetilde{m}}}f_{\widetilde{m}^{-1}, \widetilde{m}^{-1}}\Big(\frac{2^{-1}\delta}{l_{\widetilde{m}} + 2^{-1}\delta}\Big) \le  \frac{l_{\widetilde{m}} + \delta}{l_{\widetilde{m}}}f_{\widetilde{m}^{-1}, \widetilde{m}^{-1}}\Big(\frac{\delta}{l_{\widetilde{m}} + \delta}\Big) \, ,
    \end{equation}
    we obtain the claimed bound, and thereby also uniform continuity.
\end{proof}

Let us conclude this section by emphasizing that there might be some room for improvement in the previous result. For instance, it should be possible to improve the interpolation between $\sigma_1$ and $\sigma_2$ considered in \cref{eq:interpolation_sigmas_RE} by optimizing over the interpolation parameter instead of setting it to $1/2$. However, we believe this would not change the appearance of the bound drastically and thus the reason for not performing this optimization.

\section[Almost concavity and continuity bounds for the Belavkin-Staszewski entropy]{Almost concavity and continuity bounds for the Belavkin\\-Staszewski entropy}\label{sec:BS}
Following the same lines as in the previous section, now we apply the ALAFF method introduced in  \cref{sec:ALAFF_method} for the particular case of the BS-entropy. For that, we need to prove a result of almost concavity for the BS-entropy, which we do in \cref{subsec:almost_concavity_BS_entropy}. However, in contrast to the case of the relative entropy, our result for the BS-entropy is not tight. We leave the discussion on the almost concavity bound and the difficulties that appear in the BS-entropy case to the next subsection.

Subsequently, we combine our result of almost concavity for the BS-entropy with the ALAFF method to provide certain results of uniform continuity and explicit continuity bounds for entropic quantities derived from the BS-entropy in \cref{subsec:subsec_continuity_bounds_bs_entropy}. All the results provided in this section are summarized in \cref{fig:fig_flow_chart_BS}.

\begin{figure}[ht!]
    \centering
    \scalebox{0.8}{
    \begin{tikzpicture}[font=\small,thick,scale=1, every node/.style={scale=1}]
        \node[draw ,
            diamond,
            fill=midblue!30!white,
            minimum width=2.5cm,
            minimum height=1cm,
            inner sep=0.1cm,
            align=center] (block1) {BS\\entropy\\[1mm]$\widehat D(\rho\Vert\sigma)$};
            
        \node[draw,
            rounded rectangle,
            fill=cyan!30!white,
            below left=of block1,
            minimum height=1.38cm,
            minimum width=2.5cm,
            inner sep=0.1cm,
            align=center] (block2) {Convexity\\{\tiny$p \widehat D(\rho_1\Vert \sigma_1) + (1 - p) \widehat D(\rho_2\Vert \sigma_2) \ge \widehat D(\rho\Vert \sigma)$}};
            
        \node[draw,
            rounded rectangle,
            fill=cyan!30!white,
            below right=of block1,
            minimum width=2.5cm,
            inner sep=0.1cm,
            align=center] (block3) {Almost concavity\\{\tiny$\widehat D(\rho\Vert\sigma) \ge p \widehat D(\rho_1\Vert \sigma_1) + (1 - p) \widehat D(\rho_2\Vert \sigma_2) - f(p)$}\\{\tiny with $f(p)= \hat{c}_0(1 - \delta_{\rho_1 \rho_2})h(p) + f_{\hat c_1, \hat c_2}(p)$}};
         
        \node[draw,
            trapezium, 
            fill=lightgreen!50!white,
            below=2cm of block1,
            trapezium left angle = 65,
            trapezium right angle = 115,
            trapezium stretches,
            minimum width=3.5cm,
            minimum height=1cm,
            inner sep=0.1cm,
            align=center] (block4) {ALAFF method\\[1mm]{\footnotesize \cref{theo:theo_alaff_method} and \cref{rem:rem_s=0_alaff_method}}};
        
        \node[coordinate,below=0.7cm of block4] (block5) {};
        
        \node[coordinate,below=2.6cm of block5] (block10) {};
        
        \node[draw,
            left=0.7cm of block10,
            fill=midgreen!50!white,
            minimum width=2.5cm,
            minimum height=1cm,
            align=center] (block6) {BS-conditional\\mutual information\\[1.2mm]{\tiny$ \mathbb  |\widehat I_\rho(A:B | C) - \widehat I_\sigma(A:B|C)|  $}\\{\tiny$ \le  2 \, \varepsilon\, l_m^{-1} \log\min\{d_A, \sqrt{d_{ABC}}\} $}\\{\tiny$+ 2\frac{l_m + \varepsilon}{l_m}(f_{m^{-1}, m^{-1}} + m^{-1}h)\Big(\frac{\varepsilon}{l_m + \varepsilon}\Big)$}\\[1.1mm]{\tiny with $l_m = 1 - d_{\cH} m$, \, $\varepsilon \le \frac{1}{2}\| \rho - \sigma \|_1 $}};
        
        \node[draw,
            right=0.7cm of block6,
            fill=midgreen!50!white,
            minimum width=2.5cm,
            minimum height=1cm,
            align=center] (block7) {BS-mutual information\\[1.2mm]{\tiny$ | \widehat I_\rho(A:B) -  \widehat I_\sigma(A:B) |  $}\\{\tiny$ \leq  2 l_m^{-1} \varepsilon( \log\min\{d_A, d_B\} + \log m^{-1}) $}\\{\tiny $+ 2 \frac{l_m + \varepsilon}{l_m} (f_{m^{-1}, m^{-1}} + (m^{-1} + 1) h )\Big(\frac{\varepsilon}{l_m + \varepsilon}\Big) $}\\[1.1mm]{\tiny with $l_m = 1 - d_{\cH} m$, \, $\varepsilon \le \frac{1}{2}\| \rho - \sigma \|_1 $}};
         
        \node[draw,
            right=0.7cm of block7,
            fill=midgreen!50!white,
            minimum width=2.5cm,
            minimum height=1cm,
            align=center] (block8) { BS-conditional entropy\\[1.2mm]{\tiny$| \widehat H_\rho(A|B) - \widehat H_\sigma(A|B) | \leq  $}\\{\tiny$  2 l_{m}^{-1} \varepsilon \log d_A + \frac{l_m + \varepsilon}{l_m}(f_{m^{-1}, m^{-1}}$}\\{\tiny$  + m^{-1}h)\Big(\frac{\varepsilon}{l_m + \varepsilon}\Big) $}\\[1.1mm]{\tiny with  $l_m = 1 - d_{\cH} m$, \, $\varepsilon \le \frac{1}{2}\| \rho - \sigma \|_1 $}};
            
        \node[draw,
            left=0.7cm of block6,
            fill=midgreen!50!white,
            minimum width=2.5cm,
            minimum height=1cm,
            align=center] (block9) {BS-entropy\\(fixed second argument)\\[1.2mm]{\tiny$ \left| \widehat D (\rho_1 \| \sigma ) - \widehat D (\rho_2 \| \sigma ) \right| $}\\{\tiny$  \leq \varepsilon \log(m_\sigma^{-1}) + (1 + \varepsilon) m_\sigma^{-1} h\Big(\frac{\varepsilon}{1 + \varepsilon}\Big) $}\\[1.1mm]{\tiny with  $\varepsilon \le \frac{1}{2}\| \rho_1 - \rho_2 \|_1 $}};
        
        \node[draw,
            below=1.1cm of block9,
            fill=midgreen!50!white,
            minimum width=2.5cm,
            minimum height=1cm,
            align=center] (block13) {Divergence bound\\[1.2mm]{\tiny$ \widehat D (\rho \| \sigma )$}\\{\tiny$  \leq \varepsilon \log(m_\sigma^{-1}) + (1 + \varepsilon) m_\sigma^{-1} h\Big(\frac{\varepsilon}{1 + \varepsilon}\Big) $}\\[1.1mm]{\tiny with $\varepsilon \le \frac{1}{2}\| \rho - \sigma \|_1 $}};
        
        \draw[-latex] (block1) -| (block2)
            node[pos=0.25,fill=white,inner sep=0]{\cite{matsumoto2010reverse, HiaiMosonyi_2011}};
         
        \draw[-latex] (block1) -| (block3)
            node[pos=0.25,fill=white,inner sep=0]{\cref{theo:theo_almost_concavity_bs_entropy}};

        \draw[-latex] (block2) -| (block4);
        \draw[-latex] (block3) -| (block4);
    
        \draw[-Turned Square] (block5) -| (block6)
        node[pos=0.77,fill=white,inner sep=0]{\cref{cor:cor_uniform_continuity_BS_conditional_mutual_information}};
        \draw[-Turned Square] (block5) -| (block7)
        node[pos=0.74,fill=white,inner sep=0]{\cref{cor:cor_uniform_continuity_BS_mutual_information}};
        \draw[-Turned Square] (block5) -| (block8)
        node[pos=0.74,fill=white,inner sep=0]{\cref{cor:cor_uniform_continuity_BS_conditional_entropy}};
        \draw[-Turned Square] (block5) -| (block9)
        node[pos=0.73,fill=white,inner sep=0]{\cref{cor:cor_uniform_continuity_BS_entropy_first_argument}};
        \draw[-latex] (block9) -- (block13)
        node[pos=0.6,fill=white,inner sep=0]{\cref{cor:cor_divergence_bound_bs_entropy}};
        
        \draw (block4) -- (block5) 
            node[pos=1,fill=white,inner sep=0.1cm] {Uniform continuity \& Continuity bounds};
    \end{tikzpicture}
    }
    \caption{In this flow chart we collect the main results from this section, starting with the almost concavity for the BS-entropy, which together with the ALAFF method outputs a plethora of continuity bounds for related entropic quantities. For the convexity and almost concavity of the BS-entropy we are setting $\rho = p \rho_1 + (1-p) \rho_2$ and $\sigma = p \sigma_1 + (1-p) \sigma_2$, with $p \in [0,1]$.  We denote by $m_\sigma$ the minimal eigenvalue of $\sigma$.  In the almost concavity bound, $\hat{c}_0$ is the maximum of $\norm{\sigma_1^{-1}}_\infty$ and $\norm{\sigma_2^{-1}}_\infty$. Additionally, we assume in all the continuity bounds that $m \leq \norm{\eta^{-1}}_\infty$, for $\eta= \sigma, \rho$.}
    \label{fig:fig_flow_chart_BS}
\end{figure}

\subsection{Almost concavity for the BS-entropy}\label{subsec:almost_concavity_BS_entropy}

In this section we prove the almost concavity of the BS-entropy and thereby complement the established result of convexity \cite[Theorem 4.4]{matsumoto2010reverse}, \cite[Corollary 4.7]{HiaiMosonyi_2011}. We first want to give some auxiliary results that will be needed later. The first of those concerns an operator inequality for the term inside the trace in the definition of the BS-entropy. 

\begin{lemma}\label{lem:almost_concavity_xlogx}~\\
    Let $A_1, A_2 \in \cB(\cH)$ positive semi-definite, $p \in [0, 1]$ and 
    \begin{equation}
        A = p A_1 + (1 - p) A_2.
    \end{equation}
    Then
    \begin{equation}
        -A\log(A) \le -p A_1 \log(A_1) - (1 - p) A_2 \log(A_2) + h_{A_1, A_2}(p)\identity
    \end{equation}
    with $h_{A_1, A_2}(p) = -p \log(p)\tr[A_1] - (1 - p) \log(1 - p)\tr[A_2]$ a distorted binary entropy.
\end{lemma}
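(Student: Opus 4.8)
The plan is to reduce the operator inequality to a scalar inequality by exploiting the operator concavity (or a suitable perturbation estimate) for the function $x \mapsto -x \log x$. The key observation is that the claimed statement closely mirrors the \emph{scalar} almost-concavity of the entropy function, distorted by the traces $\tr[A_1]$ and $\tr[A_2]$ that appear because $A_1, A_2$ are not normalised. First I would try to prove this via a variational/spectral argument, testing against an arbitrary unit vector $\ket{\psi} \in \cH$ and reducing everything to scalars.

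\textbf{Step 1: Reduction to a vector inequality.} It suffices to show that for every unit vector $\ket{\psi}$,
\begin{equation}
    -\braket{\psi | A \log(A) | \psi} \le -p \braket{\psi | A_1 \log(A_1) | \psi} - (1 - p)\braket{\psi | A_2 \log(A_2) | \psi} + h_{A_1, A_2}(p) \, .
\end{equation}
The function $g(x) = -x \log x$ is operator concave on $[0, \infty)$, so $g(A) \ge p\, g(A_1) + (1-p)\, g(A_2)$ already holds, but that is the \emph{wrong} direction and moreover has no remainder; the content of the lemma is the reverse estimate with the entropy-type correction. So operator concavity alone will not do it directly; I expect instead to need a joint-convexity or perturbation argument tailored to the remainder.

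\textbf{Step 2: The key inequality.} The natural route is to invoke a known operator inequality for $-x\log x$ of the form
\begin{equation}
    -A \log A \le -p A_1 \log A_1 - (1-p) A_2 \log A_2 - p A_1 \log p - (1-p) A_2 \log(1-p) \, ,
\end{equation}
which is the operator version of the grouping/chain rule for entropy: writing $A$ as a mixture and using that $\log$ is operator monotone so that $A_1 \le p^{-1} A$ gives $\log A_1 \le \log(p^{-1} A) = \log A - \log p\, \identity$, hence $-A_1 \log A_1 \le -A_1 \log A + A_1 \log p\,\identity$, and similarly for $A_2$; multiplying by $p$, $(1-p)$ and summing recovers exactly $-A\log A$ on the mixed terms. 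This is the cleanest path and I would pursue it first. Then bounding $A_1 \le \tr[A_1]\,\identity$ (since $A_1$ is positive semidefinite with operator norm at most its trace) converts $-p A_1 \log p$ into $-p\log(p)\,\tr[A_1]\,\identity$ and likewise for the second term, yielding exactly the distorted binary entropy $h_{A_1, A_2}(p)\identity$.

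\textbf{Main obstacle.} The delicate point is the passage $\log A_1 \le \log A - \log p\,\identity$. This uses operator monotonicity of the logarithm applied to $p A_1 \le A$, which is valid, but one must be careful when $A_1$ (or $A$) is only positive \emph{semi}definite and hence not invertible: the logarithm must be interpreted on the support, and the inequality $A_1 \le p^{-1} A$ only controls $\log$ on the common support. I would handle this either by a continuity/approximation argument (replacing $A_i$ by $A_i + \eta\,\identity$, proving the inequality for the regularised operators where all logarithms are genuine, and letting $\eta \to 0^+$ using continuity of $x\log x$ and of the trace), or by restricting throughout to the support of $A$, on which $\supp A_1, \supp A_2 \subseteq \supp A$ guarantees the relevant orderings. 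The regularisation route is the safer one and is the step I expect to require the most care to write out rigorously.
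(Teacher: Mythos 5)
There is a genuine gap: the key operator inequality you propose in Step 2,
\begin{equation}
    -A \log A \;\le\; -p A_1 \log A_1 - (1-p) A_2 \log A_2 - p \log(p)\, A_1 - (1-p)\log(1-p)\, A_2 \, ,
\end{equation}
is false in general. Take $\cH = \C^2$, $p = 1/2$, $A_1 = \dyad{0}$ and $A_2 = \dyad{+}$ with $\ket{+} = (\ket{0} + \ket{1})/\sqrt{2}$. Then $A_i \log A_i = 0$ for $i=1,2$, so your bound reads $-A\log A \le \log(2)\, A$, i.e.\ $-\log \lambda \le \log 2$ for every eigenvalue $\lambda$ of $A$; but $A = \frac{1}{2}\dyad{0} + \frac{1}{2}\dyad{+}$ has the eigenvalue $\lambda = (1 - 1/\sqrt{2})/2 \approx 0.146 < 1/2$, so the inequality fails (while the lemma's actual bound $-A\log A \le \log(2)\,\identity$ of course holds). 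The derivation breaks at ``hence $-A_1\log A_1 \le -A_1\log A + A_1 \log p\,\identity$'': an operator inequality cannot be multiplied by a non-commuting positive operator, and the legitimate move is conjugation, which gives $-A_1^{1/2}\log(A)A_1^{1/2} \le -A_1\log A_1 - \log(p)A_1$. These conjugated terms do \emph{not} sum to $-A\log A$ when weighted by $p$ and $1-p$; they only do so after taking the trace. So your grouping argument is valid only at the trace level, where it yields $-\tr[A\log A] \le -p\tr[A_1\log A_1] - (1-p)\tr[A_2\log A_2] + h_{A_1,A_2}(p)$, and your Step 1 reduction to expectation values $\langle\psi|\cdot|\psi\rangle$ does not rescue it, since $\langle\psi| A\log A|\psi\rangle$ does not decompose over the mixture either.

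The missing idea is how to upgrade the trace bound to the operator bound, and it is precisely the operator concavity of $x \mapsto -x\log x$ that you dismissed in Step 1 as pointing ``the wrong way''. The paper's proof sets $X := -A\log A + pA_1\log A_1 + (1-p)A_2\log A_2$, uses operator concavity to conclude $X \ge 0$, hence $X \le \norm{X}_\infty\,\identity \le \norm{X}_1\,\identity = \tr[X]\,\identity$, and only then bounds $\tr[X] \le h_{A_1,A_2}(p)$ by exactly the trace-level grouping argument you describe (operator monotonicity of $\log$ applied to $pA_1 \le A$ and $(1-p)A_2 \le A$, traced against the positive operators $A_1$, $A_2$). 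Your closing remarks on regularising non-invertible $A_i$ address a real but secondary issue; the argument as proposed does not establish the operator inequality even for strictly positive $A_1, A_2$.
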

\begin{proof}
    It holds that 
    \begin{equation}
        \begin{aligned}
             -A\log(A) &  + p A_1 \log(A_1) +  (1 - p) A_2 \log(A_2) \\
             &\le \norm{ -A\log(A)  + p A_1 \log(A_1) +  (1 - p) A_2 \log(A_2) }_1 \identity \, . \label{eq:eq_norm_upper_bound}
        \end{aligned}
    \end{equation}
    Now, since $x \mapsto -x\log(x)$ is operator concave \cite[Theorem 2.6]{Carlen-TraceInequalities-2009}, we have 
    \begin{equation}
        -A\log(A) \ge -p A_1 \log(A_1) - (1 - p) A_2 \log(A_2) \, ,
    \end{equation}
    giving us that 
    \begin{equation}
        -A\log(A) + p A_1 \log(A_1) +  (1 - p) A_2 \log(A_2) \ge 0 \, ,
    \end{equation}
    and hence
    \begin{equation}
        \begin{aligned}
            &\norm{-A\log(A) + p A_1 \log(A_1) +  (1 - p) A_2 \log(A_2)}_1\\
            &\hspace{1cm}= \tr[-A\log(A) + p A_1 \log(A_1) +  (1 - p) A_2 \log(A_2)] \, .
        \end{aligned}
        \label{eq:eq_almost_convexity_xlogx}
    \end{equation}
    We now use the operator monotonicity of the logarithm to find
    \begin{equation}
        \begin{aligned}
             -\tr[A \log(A)] &= -p\tr[A_1 \log(A)] - (1 - p) \tr[A_2 \log(A)]\\
             &\le -p \tr[A_1 \log(p A_1)] - (1 - p) \tr[A_2 \log((1 - p) A_2)]\\
             &= -p \tr[A_1 \log(A_1)] - (1 - p) \tr[A_2 \log(A_2)] + h_{A_1, A_2}(p)
        \end{aligned}
    \end{equation}
    Inserting this into \cref{eq:eq_almost_convexity_xlogx} and then into \cref{eq:eq_norm_upper_bound} yields the claimed result.
\end{proof}

The next auxiliary result concerns an equivalent formulation for the BS-entropy constructed from the function $x \mapsto x \log x$ and has already appeared in the literature (see e.g.\ \cite[Eq. (7.35)]{OhyaPetz-Entropy-1993}). We include here a short proof of this result for completeness. 

\begin{lemma}\label{lem:alternative_representation_bs_entropy}~\\
    Let $\rho \in \cS(\cH)$ and $\sigma \in \cS_+(\cH)$, then
    \begin{equation}
        \widehat{D}(\rho \Vert \sigma) = \tr[\sigma(\sigma^{-1/2} \rho \sigma^{-1/2}) \log(\sigma^{-1/2} \rho \sigma^{-1/2})] \, . 
    \end{equation}
\end{lemma}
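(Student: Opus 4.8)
The plan is to introduce a single operator that simultaneously produces both operators appearing inside the logarithms, and then to transport the logarithm across it using the intertwining relation between $AA^*$ and $A^*A$. Concretely, I would set $A := \rho^{1/2}\sigma^{-1/2}$, which is well-defined since $\sigma \in \cS_+(\cH)$ is invertible. A direct computation then gives $AA^* = \rho^{1/2}\sigma^{-1}\rho^{1/2}$ and $A^*A = \sigma^{-1/2}\rho\sigma^{-1/2}$, so the two operators under the logarithm in the definition of $\widehat D$ and in the claimed identity are precisely $X := AA^*$ and $Y := A^*A$. I would also record the identity $A\sigma A^* = \rho$, which lets me rewrite $\widehat D(\rho \Vert \sigma) = \tr[\rho\log X] = \tr[A\sigma A^*\log X]$.

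The key algebraic input is the intertwining relation $A^*\,\phi(AA^*) = \phi(A^*A)\,A^*$, which holds for every polynomial $\phi$ by the elementary identity $A^*(AA^*)^k = (A^*A)^k A^*$ together with induction. Assuming first that $\rho$ has full rank, $A$ is invertible, the spectra of $X$ and $Y$ lie in a compact subset of $(0,\infty)$, and $\log$ is uniformly approximable there by polynomials, so the relation extends to $\phi = \log$. Using this together with cyclicity of the trace I would compute
\begin{equation}
    \tr[\rho\log X] = \tr[A\sigma A^*\log(AA^*)] = \tr[A\sigma\log(A^*A)A^*] = \tr[(A^*A)\,\sigma\log Y] = \tr[Y\sigma\log Y] \, .
\end{equation}
Since $\log Y$ commutes with $Y$, one final application of cyclicity gives $\tr[Y\sigma\log Y] = \tr[\sigma Y\log Y]$, which is exactly the right-hand side of the asserted formula.

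Finally, I would remove the full-rank hypothesis by a density argument: the full-rank states are dense in $\cS(\cH)$, and for fixed $\sigma \in \cS_+(\cH)$ both sides of the identity are continuous in $\rho$ (the right-hand side manifestly so, since $t \mapsto t\log t$ extends continuously to $t = 0$), so the equality propagates from the dense full-rank set to all $\rho \in \cS(\cH)$. I expect the main obstacle to be precisely this passage to singular $\rho$: there $A$ fails to be invertible, the value $0$ enters the spectra of $X$ and $Y$, and the intertwining relation can no longer be extended to $\log$ by naive polynomial approximation. The cleanest remedy is the continuity argument just described; alternatively one restricts the functional calculus to the supports of $X$ and $Y$, on which these operators are positive definite and $A$ restricts to an isomorphism.
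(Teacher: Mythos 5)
Your argument is, in substance, the paper's own proof: the paper also runs the computation through the intertwining identity $f(L^*L)L^*=L^*f(LL^*)$ with $L=\sigma^{-1/2}\rho^{1/2}$ (your $A$ is its $L^*$) and the same applications of cyclicity, so for full-rank $\rho$ the two derivations coincide line by line. The only point that needs care is your treatment of singular $\rho$, and there your \emph{primary} remedy is the weaker of the two you offer: continuity of the left-hand side $\rho\mapsto\tr[\rho\log(\rho^{1/2}\sigma^{-1}\rho^{1/2})]$ at rank-deficient $\rho$ is not manifest, and in this paper it is a \emph{downstream} consequence of the present lemma (via almost concavity and the ALAFF method, cf.\ \cref{cor:cor_uniform_continuity_BS_entropy_first_argument}), so invoking it here without an independent proof risks circularity. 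Your fallback is the right move and is exactly what the paper does: extend $\log$ to $[0,\infty)$ by assigning it an arbitrary value (say $0$) at $0$, equivalently work on the supports. With that convention the intertwining identity survives — the singular value decomposition of $A$ gives $A^*\phi(AA^*)=\phi(A^*A)A^*$ for \emph{any} function $\phi$ on the joint spectrum, irrespective of $\phi(0)$, since $A^*P_{\ker(AA^*)}=0=P_{\ker(A^*A)}A^*$ — and neither side of the trace identity depends on the value assigned at $0$, because $\ker(\rho^{1/2}\sigma^{-1}\rho^{1/2})=\ker\rho$ on the left and $t\log t\to 0$ as $t\to 0^+$ on the right. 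That closes the lemma for all $\rho\in\cS(\cH)$ with no limiting argument needed.
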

\begin{proof}[Proof.] Slightly misusing notation, we can replace the regular $\log$ with a $\log$ that evaluates to $0$ at $0$ and thereby artificially add $0$ to the domain. This changes neither the RHS nor the LHS but allows us to derive
    \begin{equation}
        \begin{aligned}
             \widehat{D}(\rho \Vert \sigma) &= \tr[\rho \log(\rho^{1/2} \sigma^{-1} \rho^{1/2})] = \tr[\log(\rho^{1/2} \sigma^{-1/2} \sigma^{-1/2} \rho^{1/2}) \rho^{1/2} \sigma^{-1/2} \sigma^{1/2} \rho^{1/2}]\\
             &=\tr[\rho^{1/2} \sigma^{-1/2} \log(\sigma^{-1/2} \rho \sigma^{-1/2}) \sigma^{1/2} \rho^{1/2}]\\
             &= \tr[\sigma(\sigma^{-1/2} \rho \sigma^{-1/2}) \log(\sigma^{-1/2} \rho \sigma^{-1/2})] \, .
        \end{aligned}
    \end{equation}
    We used the cyclicity of the trace several times, and the well-known fact that we have $f(L^* L)L^* = L^* f(L L^*)$ in case the spectrum of $L^*L$ and $LL^*$ lie in the domain of $f$ \cite[Lemma 61.]{Wilde_Katariya}.
\end{proof}

Building on the previous results from this section, we proceed to prove now the main result, namely the almost concavity for the BS-entropy. This falls in the line of results of almost concavity discussed in \cref{def:almost_concavity_divergence}. 

\begin{theorem}[Almost concavity of the BS-entropy] \label{theo:theo_almost_concavity_bs_entropy}~\\
    Let $(\rho_1, \sigma_1), (\rho_2, \sigma_2) \in \cS_{\ker, +}$ with 
    \begin{equation}
        \cS_{\ker, +} := \{(\rho, \sigma) \in \cS(\cH) \times \cS(\cH) \;:\; \sigma \in \cS_+(\cH)\}
    \end{equation} 
    and $p \in [0, 1]$. Then, for $\rho = p \rho_1 + (1 - p) \rho_2$, $\sigma = p \sigma_1 + (1 - p) \sigma_2$, we have
    \begin{equation}
        \widehat{D}(\rho \Vert \sigma) \ge p \widehat{D}(\rho_1 \Vert \sigma_1) + (1 - p) \widehat{D}(\rho_2 \Vert \sigma_2) - \hat{c}_0 (1 - \delta_{\rho_1 \rho_2}) h(p) - f_{\hat{c}_1, \hat{c}_2}(p) \, ,
    \end{equation}
    with
    \begin{equation}
        \begin{aligned}
            h(p) &= - p \log(p) - (1 - p) \log(1 - p) \, , \\[1mm]
            f_{\hat{c}_1, \hat{c}_2}(p) &= p \log(p + \hat{c}_1 (1 - p)) + (1 - p) \log((1 - p) + \hat{c}_2p)\, , \\[1mm]
            \delta_{\rho_1 \rho_2} &= \begin{cases}
                1 & \text{if } \rho_1 = \rho_2 \\
                0 & \text{otherwise}
            \end{cases} \, ,
        \end{aligned}
    \end{equation}
    and the constants
    \begin{equation}
        \begin{aligned}
            \hat{c}_0 &:= \max\{\norm{\sigma_1^{-1}}_\infty, \norm{\sigma_2^{-1}}_\infty\} \, , \\
            \hat{c}_1 &:= \int\limits_{-\infty}^\infty dt \beta_0(t) \tr[\rho_1(\rho_1^{1/2} \sigma_1^{-1} \rho_1^{1/2})^{\frac{it + 1}{2}} \rho_1^{-1/2}\sigma_2 \rho_1^{-1/2}(\rho_1^{1/2} \sigma_1^{-1} \rho_1^{1/2})^{\frac{-it + 1}{2}}] \, ,\\
            \hat{c}_2 &:= \int\limits_{-\infty}^\infty dt \beta_0(t) \tr[\rho_2(\rho_2^{1/2} \sigma_2^{-1} \rho_2^{1/2})^{\frac{it + 1}{2}} \rho_2^{-1/2}\sigma_1 \rho_2^{-1/2}(\rho_{2}^{1/2} \sigma_2^{-1} \rho_2^{1/2})^{\frac{-it + 1}{2}}] \, , \label{eq:eq_bs_almost_concave_bound_constants}
        \end{aligned}
    \end{equation}
    with the probability density $\beta_0$ defined as in \cref{eq:beta0}.
\end{theorem}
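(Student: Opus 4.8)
The plan is to chain the two auxiliary lemmas into a two-step comparison through the intermediate quantities $\widehat D(\rho_i\Vert\sigma)$ (same first arguments $\rho_i$, but common second argument $\sigma = p\sigma_1+(1-p)\sigma_2$). Since $\cS_{\ker,+}$ is plainly convex and the inequality is trivial at $p\in\{0,1\}$, I fix $p\in(0,1)$. One step compares $\widehat D(\rho\Vert\sigma)$ with $p\widehat D(\rho_1\Vert\sigma)+(1-p)\widehat D(\rho_2\Vert\sigma)$ by operator concavity of $-x\log x$ and produces the term $\hat c_0(1-\delta_{\rho_1\rho_2})h(p)$; the other compares $\widehat D(\rho_i\Vert\sigma)$ with $\widehat D(\rho_i\Vert\sigma_i)$ and produces $f_{\hat c_1,\hat c_2}(p)$. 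Chaining the two gives the claim.

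For the concavity step I would use \cref{lem:alternative_representation_bs_entropy} to write $\widehat D(\rho\Vert\sigma)=\tr[\sigma\, g(T)]$ with $g(x)=x\log x$ and $T=\sigma^{-1/2}\rho\sigma^{-1/2}$. The key observation is that $T=pT_1'+(1-p)T_2'$ with $T_i'=\sigma^{-1/2}\rho_i\sigma^{-1/2}$ a genuine convex combination (same conjugating $\sigma$), so \cref{lem:almost_concavity_xlogx} applies to give $T\log T\ge pT_1'\log T_1'+(1-p)T_2'\log T_2'-h_{T_1',T_2'}(p)\identity$. Multiplying by $\sigma\ge0$ and taking the trace (which preserves the order) yields $\widehat D(\rho\Vert\sigma)\ge p\widehat D(\rho_1\Vert\sigma)+(1-p)\widehat D(\rho_2\Vert\sigma)-h_{T_1',T_2'}(p)$, where I again use \cref{lem:alternative_representation_bs_entropy} to recognize $\tr[\sigma\, g(T_i')]=\widehat D(\rho_i\Vert\sigma)$. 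To turn $h_{T_1',T_2'}$ into $\hat c_0 h$ I bound $\tr[T_i']=\tr[\sigma^{-1}\rho_i]\le\norm{\sigma^{-1}}_\infty\le\hat c_0$, using $\lambda_{\min}(\sigma)\ge\min_i\lambda_{\min}(\sigma_i)$; when $\rho_1=\rho_2$ the lemma is unnecessary (the combination is trivial), which accounts for the factor $(1-\delta_{\rho_1\rho_2})$.

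For the change-of-argument step I would show, for each $i$, that $\widehat D(\rho_i\Vert\sigma_i)-\widehat D(\rho_i\Vert\sigma)\le\log\bigl(p+(1-p)\hat c_1\bigr)$ (and the symmetric bound for $i=2$), then weight by $p$ and $1-p$ and add to obtain exactly $f_{\hat c_1,\hat c_2}(p)$. The crucial algebraic fact is that, writing $\tilde B_1=\rho_1^{1/2}\sigma^{-1}\rho_1^{1/2}$ and $B_1=\rho_1^{1/2}\sigma_1^{-1}\rho_1^{1/2}$, the inverse splits \emph{linearly}: $\tilde B_1^{-1}=\rho_1^{-1/2}\sigma\rho_1^{-1/2}=pB_1^{-1}+(1-p)\,\rho_1^{-1/2}\sigma_2\rho_1^{-1/2}$. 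Hence the difference equals $\tr[\rho_1(\log\Psi-\log\Phi)]$ with $\Phi=B_1^{-1}$ and $\Psi=pB_1^{-1}+(1-p)\rho_1^{-1/2}\sigma_2\rho_1^{-1/2}$; the Peierls-Bogoliubov inequality bounds this by $\log\tr[\exp(\log\rho_1+\log\Psi-\log\Phi)]$, and the multivariate trace inequality of Sutter et al.\ (used as in \cref{eq:eq_int_ineq}) gives $\int dt\,\beta_0(t)\,\tr[\rho_1 B_1^{(1-it)/2}\Psi B_1^{(1+it)/2}]$. Splitting $\Psi$, the $pB_1^{-1}$ piece telescopes (the $B_1$ powers collapse to $\identity$) to $p\,\tr[\rho_1]=p$, while the remaining piece reproduces $(1-p)\hat c_1$ after using that $\beta_0$ is even to match the exponents in \cref{eq:eq_bs_almost_concave_bound_constants}.

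The main obstacle I anticipate is not the algebra above but the rank condition: the hypotheses only require $\sigma\in\cS_+(\cH)$, so $\rho_i$ may be singular and the objects $\rho_i^{-1/2}$, $\log\rho_i$ (needed for Peierls-Bogoliubov) and the integrands defining $\hat c_1,\hat c_2$ are a priori ill-defined. I would first carry out both steps under the extra assumption that $\rho_1,\rho_2$ are full rank, and then remove it exactly as in the relative-entropy case (\cref{theo:theo_almost_concavity_relative_entropy}), interpreting $\rho_i^{-1/2}$ as a Moore--Penrose pseudoinverse and passing to the limit; one checks that the $B_1$-factors suppress the $\rho_1^{-1/2}$ inside $\rho_1^{-1/2}\sigma_2\rho_1^{-1/2}$, so that $\hat c_1,\hat c_2<\infty$ and the bounds survive the limit. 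A secondary point worth verifying carefully is the orientation of the Sutter inequality, namely that $\Phi=B_1^{-1}$ is the conjugated operator and $\Psi$ sits in the middle, so that the output indeed matches the definition of $\hat c_1$.
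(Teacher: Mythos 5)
Your proposal is correct and follows essentially the same route as the paper's proof: the same two-step decomposition through the intermediate quantities $\widehat D(\rho_j\Vert\sigma)$, with \cref{lem:alternative_representation_bs_entropy} and \cref{lem:almost_concavity_xlogx} plus the Hölder bound $\tr[\sigma^{-1}\rho_j]\le\hat c_0$ producing the $\hat c_0(1-\delta_{\rho_1\rho_2})h(p)$ term, and Peierls--Bogoliubov combined with the Sutter et al.\ multivariate trace inequality producing $f_{\hat c_1,\hat c_2}(p)$ after splitting $\sigma$ and telescoping the diagonal piece. The only difference is in handling the inversion step: where you use the exact identity $-\log(\rho_j^{1/2}\sigma^{-1}\rho_j^{1/2})=\log(\rho_j^{-1/2}\sigma\rho_j^{-1/2})$ for full-rank $\rho_j$ and defer the singular case to a pseudoinverse/limiting argument, the paper obtains the needed operator inequality directly via Sherman--Davis and then simply restricts to $\supp\rho_j$ (all operators involved commute with that projection), which avoids the limit entirely.
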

\begin{proof}[Proof.]
    The formula for $p = 0, 1$ is trivial, hence let $p \in (0, 1)$. We find that
    \begin{equation}
        \begin{aligned}
            p \widehat{D}(\rho_1 \Vert \sigma_1)& + (1 - p) \widehat{D}(\rho_2 \Vert \sigma_2) - \widehat{D}(\rho \Vert \sigma)\\ &\le p (\widehat{D}(\rho_1 \Vert \sigma_1) - \widehat{D}(\rho_1 \Vert \sigma)) + (1 - p)(\widehat{D}(\rho_2 \Vert \sigma_2) - \widehat{D}(\rho_2 \Vert \sigma)) + \hat{c}_0 h(p) \, .
        \end{aligned}
    \end{equation}
    Indeed, as of \cref{lem:alternative_representation_bs_entropy} and then \cref{lem:almost_concavity_xlogx} with $A_1 = \sigma^{-1/2} \rho_1 \sigma^{-1/2}$, $A_2 = \sigma^{-1/2} \rho_2 \sigma^{-1/2}$ respectively, we can prove
    \begin{equation}
        \begin{aligned}
            -\widehat{D}(\rho \Vert \sigma) &= \tr[\sigma \left(-\sigma^{-1/2} \rho \sigma^{-1/2} \log(\sigma^{-1/2} \rho \sigma^{-1/2}) \right)]\\
            &\le p \tr[\sigma \left(-\sigma^{-1/2} \rho_1 \sigma^{-1/2} \log(\sigma^{-1/2} \rho_1 \sigma^{-1/2}) \right)]\\
            &\hspace{1cm} + (1 - p) \tr[\sigma \left(-\sigma^{-1/2} \rho_2 \sigma^{-1/2} \log(\sigma^{-1/2} \rho_2 \sigma^{-1/2})\right)] + h_{A_1, A_2}(p)\\
            &= -p \widehat{D}(\rho_1 \Vert \sigma) - (1 - p) \widehat{D}(\rho_2 \Vert \sigma) + h_{A_1, A_2}(p) \, .
        \end{aligned}
    \end{equation}
    At last we can estimate $\tr[A_j] = \tr[\sigma^{-1}\rho_j] \le \norm{\sigma^{-1}}_\infty \le \hat{c}_0$ for $j = 1, 2$ using Hölder's inequality, giving us $h_{A_1, A_2}(p) \le \hat{c}_0 h(p)$.
    
    We now have to estimate terms of the form $\widehat{D}(\rho_j \Vert \sigma_j) - \widehat{D}(\rho_j \Vert \sigma)$ for $j = 1, 2$. This is done using the Peierls-Bogoliubov inequality \cite{OhyaPetz-Entropy-1993} and the multivariate trace inequalities of Sutter et al. \cite{SutterBertaTomamichel-Multivariate-2017}:
    \begin{equation}
        \begin{aligned}
             \widehat{D}(\rho_j \Vert \sigma_j) - \widehat{D} (\rho_j \Vert \sigma) &= \tr[\rho_j\left(\log(\rho_j^{1/2} \sigma_j^{-1} \rho_j^{1/2}) - \log(\rho_j^{1/2} \sigma^{-1} \rho_j^{1/2})\right)]\\[1.3mm]
             &\le \tr[\exp\left(\log(\rho_j) + \log(\rho_j^{1/2}\sigma_j^{-1}\rho_j^{1/2}) - \log(\rho_j^{1/2}\sigma^{-1}\rho_j^{1/2})\right)]\\[1.3mm]
             &\le \tr[\exp\left(\log(\rho_j) + \log(\rho_j^{1/2}\sigma_j^{-1}\rho_j^{1/2}) + \log(\rho_j^{-1/2}\sigma\rho_j^{-1/2})\right)]\\
             & \le \log\Big(\int\limits_{-\infty}^\infty  dt \beta_0(t) \tr[\rho_j(\rho_j^{1/2} \sigma_j^{-1} \rho_j^{1/2})^{\frac{it + 1}{2}} \rho_j^{-1/2}\sigma \rho_j^{-1/2}(\rho_j^{1/2} \sigma_j^{-1} \rho_j^{1/2})^{\frac{-it + 1}{2}}]\Big)\\
             &= \begin{cases}
                \log(p + (1 - p) \hat{c}_1) & j = 1\\
                \log((1 - p) + p \hat{c}_2) & j = 2
             \end{cases}.
        \end{aligned}\label{eq:eq_int_ineq_bs_entropy}
    \end{equation}
    In the third line, we use that 
    \begin{equation}
        -\log(\rho_j^{1/2} \sigma^{-1} \rho_j^{1/2}) \le \log(\rho_j^{-1/2} \sigma \rho_j^{-1/2})
    \end{equation}
    which is true since for $P_{\rho_{j}}$ the projection on the support of $\rho_{j}$, we have 
    \begin{equation}
       P_{\rho_{j}} (P_{ \rho} \sigma P_{\rho_{j}})^{-1}P_{\rho_{j}} \le P_{\rho_{j}} \sigma^{-1}P_{\rho_{j}} \,,
    \end{equation}
    as $x \to x^{-1}$ is operator convex and hence fulfills the Sherman-Davis inequality \cite[Theorem 4.19]{Carlen-TraceInequalities-2009}. Note that $\sigma$ is invertible and that by $(P_{\rho_{j}} \sigma P_{\rho_{j}})^{-1}$ we mean the Moore-Penrose pseudoinverse. We find
    \begin{equation}
        \begin{aligned}
             -\log(\rho_j^{1/2} \sigma^{-1} \rho_j^{1/2}) &= -\log(\rho_j^{1/2}P_{\rho_{j}} \sigma^{-1}P_{\rho_{j}} \rho_j^{1/2}) \\
             &\le -\log(\rho_j^{1/2}P_{\rho_{j}} (P_{\rho_{j}} \sigma P_{\rho_{j}})^{-1}P_{\rho_{j}} \rho_j^{1/2})\\
             &= \log(\rho_j^{-1/2}P_{\rho_{j}}\sigma P_{\rho_{j}}\rho_j^{-1/2})\\
             &= \log(\rho_j^{-1/2} \sigma \rho_j^{-1/2}) \, .
        \end{aligned} 
    \end{equation}
    The argument why the inequalities in \cref{eq:eq_int_ineq_bs_entropy} hold in the case of $\rho_j$ not being full rank is simpler than in the case of the corresponding inequality for the relative entropy (cf.\ \cref{theo:theo_almost_concavity_relative_entropy} and \cref{sec:sec_supplements_to_proof_almost_concavity_relative_entropy}). For the BS-entropy, we can already restrict \cref{eq:eq_int_ineq_bs_entropy} to the support of $\rho_j$ as all operators involved, $\rho_j$, $\rho_j^{1/2} \sigma_j^{-1} \rho_j^{1/2}$ and $\rho_j^{1/2} \sigma^{-1} \rho_j^{1/2}$, commute with the projection onto this support. In the last step we split $\sigma$ and evaluated the first term to $p$ in case $j = 1$ or the second term in case $j = 2$ to $(1 - p)$ and left the other one untouched, respectively. This concludes the proof.
\end{proof}

\begin{remark}
    We strongly suspect that \cref{theo:theo_almost_concavity_bs_entropy} can be improved because of two reasons. The first one is that we would expect the results of almost concavity of the relative and the BS-entropy to coincide in the case that the involved states commute. The reason is that in this case, both quantities reduce to the classical relative entropy. A straightforward calculation shows that then $\hat{c}_1 = c_1$ and $\hat{c}_2 = c_2$, hence $f_{c_1, c_2} = f_{\hat{c}_1, \hat{c}_2}$ , but $h \le  \hat{c}_0 h$ with equality if, and only if, $\sigma_1$ and $\sigma_2$ are pure, which in addition to $\sigma_1$ and $\sigma_2$ being full rank means $\cH = \C$.\par
    The other reason is given by the continuity bound we obtain for the BS-conditional entropy in \cref{cor:cor_uniform_continuity_BS_conditional_entropy}. Numerics suggest an almost convex bound that is independent of the minimal eigenvalue (cf. \cref{fig:bs_conditional_entropy_remainder}) if the inputs are full rank\footnote{The full rank requirement is necessary, as we will show in \cref{prop:discontinuity_conditional_BS_entropy} that the BS-conditional entropy is discontinuous in the presence of vanishing eigenvalues}. Hence we would also suspect that an optimal almost concave remainder of the BS-entropy reduces to an eigenvalue independent bound in the case of the BS-conditional entropy.
\end{remark}

Analogous to the case of the relative entropy we provide an additional proposition to give context to the above result, i.e. to provide simpler expressions if the involved states satisfy specific conditions.

\begin{proposition}[Almost concavity estimate of the BS-entropy is well behaved]
    \label{prop:prop_almost_concave_estimate_bs_entropy_well_behaved}~\\
    The function $\hat{c}_0h + f_{\hat{c}_1, \hat{c}_2}$ obtained in \cref{theo:theo_almost_concavity_bs_entropy} is well behaved in the following sense: Let $j = 1, 2$ and $(\rho_j, \sigma_j) \in \cS_{\ker, +}$. We have the following:
    \begin{enumerate}
        \item If $\sigma_1 = \sigma_2$, then $\hat{c}_j = 1$, resulting in $f_{\hat{c}_1, \hat{c}_2} + \hat{c}_0h = \hat{c}_0h$.
        \item If the $\sigma_j$ have a minimal eigenvalue that is bounded from below by $m > 0$ respectively, then $f_{\hat{c}_1, \hat{c}_2} + \hat{c}_0h \le f_{m^{-1}, m^{-1}} + m^{-1}h$.
        \item If $\cH = \cH_A\otimes \cH_B$ is a bipartite space, $\rho_j$ has a minimal eigenvalue bounded from below by $m > 0$, and further $\sigma_j =d_A^{-1} \identity_A \otimes \rho_{j,B}$, then $f_{\hat{c}_1, \hat{c}_2} + \hat{c}_0h \le f_{m^{-1}, m^{-1}} + m^{-1}h$.
        \item We find for $m_1, m_2 \ge 1$, $p \mapsto \frac{1}{1 - p}f_{m_1, m_2}(p)$ and $p \mapsto \frac{1}{1 - p}\hat{c}_0 h(p)$ are non-decreasing on $[0, 1)$.
    \end{enumerate}
\end{proposition}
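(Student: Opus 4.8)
The plan is to establish the four points one at a time, exploiting the fact that the remainder functions $h$ and $f_{\hat c_1,\hat c_2}$ here have exactly the same algebraic form as those in the relative-entropy setting, so that \textbf{Point 4} is literally identical to point 4 of \cref{prop:prop_almost_concave_estimate_relative_entropy_well_behaved}, and the whole argument reduces to two elementary ingredients: a pointwise-in-$t$ bound on the integrands defining $\hat c_1,\hat c_2$, and monotonicity of $f_{c_1,c_2}$ in its two parameters together with an operator-norm estimate for $\hat c_0$. Throughout I would write $X_j = \rho_j^{1/2}\sigma_j^{-1}\rho_j^{1/2}$ and $B_j(t) = X_j^{(it+1)/2}$, so that $B_j(t)^\dagger = X_j^{(-it+1)/2}$ and $\hat c_1 = \int \beta_0(t)\,g_1(t)\,dt$ with $g_1(t) = \tr[\rho_1 B_1(t)(\rho_1^{-1/2}\sigma_2\rho_1^{-1/2})B_1(t)^\dagger]$, and analogously for $\hat c_2$.

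For \textbf{Point 1} I would set $\sigma_1=\sigma_2=\sigma$. The computation already carried out in the proof of \cref{theo:theo_almost_concavity_bs_entropy} (where the ``matching'' term was evaluated to the coefficient $p$) shows that $\rho_1^{-1/2}\sigma\rho_1^{-1/2}$ is the (Moore--Penrose) inverse of $X_1$ on the support of $\rho_1$, all operators commuting with the support projection $P_{\rho_1}$. Hence the semigroup property of the powers of $X_1$ collapses the integrand to $B_1(t)X_1^{-1}B_1(t)^\dagger = X_1^{(it+1)/2-1+(-it+1)/2} = X_1^{0} = P_{\rho_1}$, whence $g_1(t)=\tr[\rho_1 P_{\rho_1}] = 1$ for every $t$. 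Integrating against the probability density $\beta_0$ gives $\hat c_1 = 1$, and symmetrically $\hat c_2 = 1$. Since $f_{1,1}(p) = p\log 1 + (1-p)\log 1 = 0$, I conclude $f_{\hat c_1,\hat c_2} + \hat c_0 h = \hat c_0 h$.

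For \textbf{Points 2 and 3} the key observation is that $g_1(t)\ge 0$ and is monotone in its middle argument: writing $g_1(t) = \tr[(B_1(t)^\dagger\rho_1 B_1(t))\,\rho_1^{-1/2}\sigma_2\rho_1^{-1/2}]$ with $B_1(t)^\dagger\rho_1 B_1(t)\ge 0$, the map $Y\mapsto \tr[(B_1(t)^\dagger\rho_1 B_1(t))Y]$ is non-decreasing on positive $Y$. Under the hypothesis of \textbf{Point 2} each $\sigma_j$ has minimal eigenvalue at least $m$, so $\sigma_1\ge m\identity$ and $\sigma_2\le\norm{\sigma_2}_\infty\identity\le\identity\le m^{-1}\sigma_1$; conjugating by $\rho_1^{-1/2}$ and comparing with the matching integrand (which equals $1$ by \textbf{Point 1}) yields $g_1(t)\le m^{-1}$ for all $t$, hence $\hat c_1\le m^{-1}$, and symmetrically $\hat c_2\le m^{-1}$. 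The constant $\hat c_0 = \max_j\norm{\sigma_j^{-1}}_\infty\le m^{-1}$ for the same reason, so $\hat c_0 h\le m^{-1}h$. Finally $\partial_{c_1}f_{c_1,c_2}(p) = p(1-p)/(p+c_1(1-p))\ge 0$ and likewise for $c_2$, so $f_{\hat c_1,\hat c_2}\le f_{m^{-1},m^{-1}}$, giving the claim. \textbf{Point 3} then follows immediately from \textbf{Point 2}: if $\rho_j\ge m\identity$ then $\rho_{j,B}=\tr_A[\rho_j]\ge m\,d_A\,\identity_B$, so $\sigma_j = d_A^{-1}\identity_A\otimes\rho_{j,B}$ has minimal eigenvalue at least $m$, which is exactly the hypothesis of \textbf{Point 2}.

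\textbf{Point 4} is identical to point 4 of \cref{prop:prop_almost_concave_estimate_relative_entropy_well_behaved}, since the functions $f_{m_1,m_2}$ and $h$ coincide with those appearing there and $\hat c_0\ge 0$ is a constant; the proof given in \cref{sec:sec_proof_almost_concave_estimate_relative_entropy_well_behaved} applies verbatim. For $h$ one checks directly that $\frac{d}{dp}\frac{h(p)}{1-p} = \frac{-\log p}{(1-p)^2}\ge 0$ on $(0,1)$, and the monotonicity of $\frac{1}{1-p}f_{m_1,m_2}(p)$ for $m_1,m_2\ge 1$ follows from a similar, if lengthier, derivative computation. I expect the only genuine obstacle to be the uniform bound on $\hat c_1,\hat c_2$ in \textbf{Points 2--3}: because of the non-commuting complex powers $X_j^{(it\pm1)/2}$ the integrand cannot be bounded naively, and the decisive step is precisely to compare it pointwise in $t$ against the matching integrand of \textbf{Point 1}, which collapses to $1$.
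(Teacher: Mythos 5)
Your proof is correct and, for points 1, 2 and 4, follows essentially the same route as the paper's own proof in \cref{sec:sec_proof_almost_concave_estimate_BS_entropy_well_behaved}: the collapse of the integrand to $\tr[\rho_j]=1$ when $\sigma_1=\sigma_2$, the pointwise operator comparison $\sigma_k\le\identity\le m^{-1}\sigma_j$ fed back into the point-1 computation (together with $\hat c_0\le m^{-1}$ and monotonicity of $f_{c_1,c_2}$ in its parameters) for point 2, and the verbatim reduction of point 4 to point 4 of \cref{prop:prop_almost_concave_estimate_relative_entropy_well_behaved}. The one place where you genuinely diverge is point 3. The paper establishes $\norm{\sigma_j^{-1}}_\infty\le m^{-1}$ by viewing $\tau\mapsto d_A^{-1}\identity_A\otimes\tau_B$ as a conditional expectation $T_A$ and invoking operator convexity of the inverse, $T_A(\rho_j)^{-1}\le T_A(\rho_j^{-1})$, to conclude $\norm{\sigma_j^{-1}}_\infty\le\norm{\rho_j^{-1}}_\infty\le m^{-1}$; you instead simply take the partial trace of the operator inequality $\rho_j\ge m\identity_{AB}$ to get $\rho_{j,B}\ge m\,d_A\,\identity_B$ and hence $\sigma_j\ge m\identity_{AB}$ directly. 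Both arguments yield the same constant, but yours is more elementary, bypassing the operator-Jensen machinery entirely. One shared caveat: the identification of $\rho_j^{-1/2}\sigma\rho_j^{-1/2}$ with the (pseudo)inverse of $X_j=\rho_j^{1/2}\sigma^{-1}\rho_j^{1/2}$ that drives the point-1 collapse is exact only for full-rank $\rho_j$ (otherwise Sherman--Davis only gives $\hat c_j\ge 1$); this implicit step is present in the paper's proof as well, so it is not a gap of your argument relative to theirs.
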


This result should be compared to \cref{prop:prop_almost_concave_estimate_relative_entropy_well_behaved}, its analogue for the relative entropy. The proof can be found in \cref{sec:sec_proof_almost_concave_estimate_BS_entropy_well_behaved}. We will use the reductions from \cref{prop:prop_almost_concave_estimate_bs_entropy_well_behaved} to simplify the terms in \cref{theo:theo_almost_concavity_bs_entropy} for the various applications presented in the subsequent section. 

\subsection{Continuity bounds for the BS-entropy}\label{subsec:subsec_continuity_bounds_bs_entropy}

In this section, we will use the almost concavity for the BS-entropy from \cref{theo:theo_almost_concavity_bs_entropy} together with the ALAFF method in its full generality.

Before we dive into the continuity and divergence bounds, we want to collect some lower and upper estimates of entropic quantities derived from the BS-entropy (see \cref{sec:basic_notation} for the specific definitions).

\begin{proposition}[Bounds on BS-entropic quantities]\label{prop:prop_bound_BS_entropic_quantities}~\\
    For $\rho \in \cS(\cH_A \otimes \cH_B)$, we find:
    \begin{enumerate}
        \item For the BS-conditional entropy:
    \begin{equation} \label{eq:bounds-BS-cond-ent}
        -\log\min\{d_A, d_B\} \le \widehat{H}_\rho(A|B) \le \log d_A \, .
    \end{equation}
    \item For the BS-mutual information:
    \begin{equation}
        0 \le \widehat{I}_\rho(A:B) \le \log\min\{d_A, d_B\} + \log\min\{\norm{\rho_A^{-1}}_\infty, \norm{\rho_B^{-1}}_\infty\} \, , \label{eq:eq_bound_BS_mutual_information}
    \end{equation}
    with $\cdot^{-1}$ the Moore-Penrose pseudoinverse. 
    \item For $\rho \in \cS(\cH_A \otimes \cH_B \otimes \cH_C)$, we find that the BS-conditional mutual information satisfies:
       \begin{equation}
        0 \le \widehat{I}_\rho(A:B |C) \le \log\min\{ d_A^2, d_{ABC}\}.
    \end{equation}
    \end{enumerate}
    The first two bounds are shown to be tight. For the third one, we expect that similar reasoning should also show its tightness.
\end{proposition}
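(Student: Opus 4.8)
The plan is to derive every bound from three elementary facts about the BS-entropy: (i) for two genuine states it is non-negative, $\widehat{D}(\omega\Vert\tau)\ge 0$, since $\widehat{D}\ge D\ge 0$; (ii) it rescales as $\widehat{D}(\rho\Vert c\sigma)=\widehat{D}(\rho\Vert\sigma)-\log c$ for $c>0$; and (iii) by operator monotonicity of the logarithm together with $\tr\rho=1$, one has $\widehat{D}(\rho\Vert\sigma)=\tr[\rho\log(\rho^{1/2}\sigma^{-1}\rho^{1/2})]\le\log\norm{\rho^{1/2}\sigma^{-1}\rho^{1/2}}_\infty$. Fact (i) immediately yields the lower bounds $0\le\widehat{I}_\rho(A:B)$, and combined with (ii) applied to $\identity_A\otimes\rho_B=d_A(\tfrac1{d_A}\identity_A\otimes\rho_B)$ it gives the upper bound $\widehat{H}_\rho(A|B)=\log d_A-\widehat{D}(\rho_{AB}\Vert\tfrac1{d_A}\identity_A\otimes\rho_B)\le\log d_A$.

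The heart of the argument is a spectral-norm estimate. Writing $X=(\identity_A\otimes\rho_B^{-1/2})\rho_{AB}(\identity_A\otimes\rho_B^{-1/2})$, fact (iii) gives $\widehat{D}(\rho_{AB}\Vert\identity_A\otimes\rho_B)\le\log\norm{X}_\infty$, so the lower bound on $\widehat{H}_\rho(A|B)$ follows once I show $\norm{X}_\infty\le\min\{d_A,d_B\}$. I will prove this using only $X\ge 0$ and $\tr_A[X]=\identity_B$ (which holds on the support of $\rho_B$, as $\tr_A[\rho_{AB}]=\rho_B$). Let $\ket\psi$ be a top eigenvector with Schmidt decomposition $\ket\psi=\sum_{k=1}^{r}\sqrt{\mu_k}\,\ket{a_k}\ket{b_k}$ across $A\!:\!B$, where $r\le\min\{d_A,d_B\}$. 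Applying the Cauchy–Schwarz inequality to the positive form $X$ gives $\norm{X}_\infty=\langle\psi|X|\psi\rangle\le\sum_{k=1}^{r}\langle a_k b_k|X|a_k b_k\rangle$, and since the diagonal blocks $X_{a_k}:=\bra{a_k}X\ket{a_k}$ satisfy $0\le X_{a_k}\le\sum_i X_{a_i}=\tr_A[X]=\identity_B$, each summand is at most $1$; hence $\norm{X}_\infty\le r\le\min\{d_A,d_B\}$. The non-full-rank case is handled by restricting to the relevant supports, on which the kernel inclusion $\ker(\identity_A\otimes\rho_B)\subseteq\ker\rho_{AB}$ holds automatically.

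For the BS-mutual information I would use (iii) once more: with $W=(\rho_A^{-1/2}\otimes\rho_B^{-1/2})\rho_{AB}(\rho_A^{-1/2}\otimes\rho_B^{-1/2})$ one has $\widehat{I}_\rho(A:B)\le\log\norm{W}_\infty$. Factoring $W=(\rho_A^{-1/2}\otimes\identity_B)\,X\,(\rho_A^{-1/2}\otimes\identity_B)$ and using submultiplicativity of the spectral norm gives $\norm{W}_\infty\le\norm{\rho_A^{-1}}_\infty\norm{X}_\infty\le\norm{\rho_A^{-1}}_\infty\min\{d_A,d_B\}$; the symmetric factorisation through $X'=(\rho_A^{-1/2}\otimes\identity_B)\rho_{AB}(\rho_A^{-1/2}\otimes\identity_B)$, which satisfies $\tr_B[X']=\identity_A$ and hence $\norm{X'}_\infty\le\min\{d_A,d_B\}$ by the same Schmidt argument, produces the factor $\norm{\rho_B^{-1}}_\infty$ instead, and taking the smaller of the two gives the claimed bound. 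For the conditional mutual information I write $\widehat{I}_\rho(A:B|C)=\widehat{H}_\rho(A|C)-\widehat{H}_\rho(A|BC)$: the lower bound $\ge 0$ is the data-processing inequality under $\tr_B[\,\cdot\,]$ applied to $\widehat{D}(\rho_{ABC}\Vert\identity_A\otimes\rho_{BC})\ge\widehat{D}(\rho_{AC}\Vert\identity_A\otimes\rho_C)$, while combining the upper bound $\widehat{H}_\rho(A|C)\le\log d_A$ with the lower bound $\widehat{H}_\rho(A|BC)\ge-\log\min\{d_A,d_{BC}\}$ from the first part gives $\widehat{I}_\rho(A:B|C)\le\log d_A+\log\min\{d_A,d_{BC}\}=\log\min\{d_A^2,d_{ABC}\}$.

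The main obstacle is the spectral-norm bound $\norm{X}_\infty\le\min\{d_A,d_B\}$; everything else is bookkeeping with facts (i)–(iii). Tightness of the first two families is then exhibited by explicit examples: the product state $\rho_{AB}=\tfrac1{d_A}\identity_A\otimes\rho_B$ saturates $\widehat{H}_\rho(A|B)=\log d_A$ and gives $\widehat{I}_\rho(A:B)=0$, whereas a maximally entangled state saturates the lower bound $\widehat{H}_\rho(A|B)=-\log d$ and, after inserting the appropriate marginals, the mutual-information upper bound; for the conditional mutual information the analogous saturation is expected from the same constructions.
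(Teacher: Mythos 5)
Your proposal is correct, and on the two non-trivial steps it takes a genuinely different route from the paper's proof (\cref{sec:sec_proof_bound_BS_entropic_quantities}). For the lower bound in \cref{eq:bounds-BS-cond-ent}, the paper first reduces to pure states via the joint convexity of $\widehat D$ together with linearity of the marginal map, and then computes $\widehat H_{\dyad{\psi}}(A|B)=-\log(\text{Schmidt rank})$ exactly; you instead use the operator-monotonicity estimate $\widehat D(\rho\Vert\sigma)\le\log\Cnorm{\sigma^{-1/2}\rho\sigma^{-1/2}}_\infty$ (valid with pseudoinverses under the kernel inclusion, and with $\norm{\rho^{1/2}\sigma^{-1}\rho^{1/2}}_\infty=\norm{\sigma^{-1/2}\rho\sigma^{-1/2}}_\infty$) and prove the clean norm lemma that any $X\ge0$ with $\tr_A[X]\le\identity_B$ satisfies $\norm{X}_\infty\le\min\{d_A,d_B\}$; your Cauchy--Schwarz argument on the Schmidt decomposition of the top eigenvector is sound. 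Similarly, for \cref{eq:eq_bound_BS_mutual_information} the paper lower-bounds $\rho_A\otimes\rho_B\ge\norm{\rho_A^{-1}}_\infty^{-1}P_{\rho_A}\otimes\rho_B$ and falls back on the conditional-entropy bound, whereas you factor the spectral norm of $W=(\rho_A^{-1/2}\otimes\rho_B^{-1/2})\rho_{AB}(\rho_A^{-1/2}\otimes\rho_B^{-1/2})$ through $X$; both work. What your route buys is self-containedness: it avoids invoking joint convexity of the BS-entropy, and the norm lemma is of independent interest. What you lose is the exact pure-state formula $\widehat I_{\dyad{\psi}}(A:B)=\log\sum_i\lambda_i^{-1}$, which the paper exploits to show not only that the mutual-information bound is attained (your maximally entangled example does establish tightness in the sense claimed in the proposition) but also that the additive term $\log\min\{\norm{\rho_A^{-1}}_\infty,\norm{\rho_B^{-1}}_\infty\}$ has the correct scaling for marginals far from maximally mixed, an additional point your tightness discussion does not reach. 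The remaining items (rescaling identity for the upper bound $\log d_A$, non-negativity of $\widehat I_\rho(A:B)$, and the conditional-mutual-information bounds via the decomposition into conditional entropies and data processing) coincide with the paper's argument.
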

The proof can be found in \cref{sec:sec_proof_bound_BS_entropic_quantities}. We further want to remark that the scaling of the bound with respect to $\norm{\rho_A^{-1}}_{\infty}$ or $\norm{\rho_B^{-1}}_\infty$ is justified. The reasoning can be found in \cref{sec:sec_proof_bound_BS_entropic_quantities} as well.

\subsubsection{Uniform continuity for the BS-conditional entropy}

We encounter a slight complication when it comes to the uniform continuity of the BS-conditional entropy compared to the uniform continuity of the conditional entropy that we have covered in \cref{cor:continuity_bound_conditional_entropy}. This is because the almost concave bound of the BS-entropy depends on the minimal eigenvalue of the second argument (see \cref{eq:eq_bs_almost_concave_bound_constants}), i.e. it has to be full rank. This means the input to the BS-conditional entropy has to be full rank as well. Although we think that the result of almost concavity for the BS-entropy can be improved, we know that there is no extension of uniform continuity nor continuity for the BS-conditional entropy to positive semi-definite states, as this quantity is not continuous on those. This is the content of the next proposition. We also refer the reader to \cite[Remark 3.3]{FawziFawzi-SDPhierarchies-2021} for a similar behaviour of the sharp quantum Rényi divergences.

\begin{proposition}[Discontinuity of the BS-conditional entropy]\label{prop:discontinuity_conditional_BS_entropy}~\\
    The BS-conditional entropy is discontinuous on the set of positive semi-definite operators over $\cH_A \otimes \cH_B$ if $d_A, d_B \ge 2$.
\end{proposition}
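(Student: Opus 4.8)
The plan is to prove discontinuity by exhibiting an explicit one-parameter family of states that converges in trace norm to a limit while the associated BS-conditional entropies fail to converge to the entropy of that limit. Since $d_A, d_B \ge 2$, I fix orthonormal vectors $\ket{0}, \ket{1}$ in both $\cH_A$ and $\cH_B$ and work inside the induced $\C^2 \otimes \C^2$ corner, treating every state as a positive semi-definite operator on the full space supported there. As I will note at the end, the embedding into arbitrary dimensions is immediate because all operators that appear act diagonally on, or are supported in, this corner.

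Concretely, I would set $\ket{\psi_\theta} = \cos\theta\,\ket{00} + \sin\theta\,\ket{11}$ and $\rho_\theta = \dyad{\psi_\theta}$ for $\theta \in [0, \tfrac{\pi}{2})$. For $\theta \in (0, \tfrac{\pi}{2})$ the marginal $(\rho_\theta)_B = \cos^2\theta\,\dyad{0} + \sin^2\theta\,\dyad{1}$ is invertible on $\C^2_B$, so $\sigma := \identity_A \otimes (\rho_\theta)_B$ is diagonal in the computational basis with $\sigma\ket{00} = \cos^2\theta\,\ket{00}$ and $\sigma\ket{11} = \sin^2\theta\,\ket{11}$. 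Because $\rho_\theta$ is a rank-one projector, $\rho_\theta^{1/2} = \rho_\theta$, whence $\rho_\theta^{1/2}\sigma^{-1}\rho_\theta^{1/2} = \langle \psi_\theta | \sigma^{-1} | \psi_\theta \rangle\,\rho_\theta$ with $\langle \psi_\theta | \sigma^{-1} | \psi_\theta \rangle = \tfrac{\cos^2\theta}{\cos^2\theta} + \tfrac{\sin^2\theta}{\sin^2\theta} = 2$. Substituting into $\widehat{D}(\rho_\theta \Vert \sigma) = \tr[\rho_\theta \log(2\rho_\theta)] = \log 2$ yields $\widehat{H}_{\rho_\theta}(A|B) = -\log 2$ for every $\theta \in (0, \tfrac{\pi}{2})$. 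The crux is that the divergence of $\sigma^{-1}$ on the $\sin^2\theta$ eigenspace is exactly cancelled by the vanishing weight of $\ket{11}$ in $\ket{\psi_\theta}$, leaving a value independent of $\theta$.

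At the endpoint $\theta = 0$ the situation changes discontinuously: $\rho_0 = \dyad{00}$, its marginal $\dyad{0}$ is rank-deficient, $\sigma_0 = \identity_A \otimes \dyad{0}$, and now $\ket{00}$ lies in the eigenvalue-one subspace of $\sigma_0$, so $\langle 00 | \sigma_0^{-1} | 00 \rangle = 1$ rather than $2$. Hence $\rho_0^{1/2}\sigma_0^{-1}\rho_0^{1/2} = \dyad{00}$ and $\widehat{D}(\rho_0 \Vert \sigma_0) = \tr[\rho_0 \log \dyad{00}] = 0$, giving $\widehat{H}_{\rho_0}(A|B) = 0$. Since $\rho_\theta \to \rho_0$ in trace norm as $\theta \to 0^+$ while $\widehat{H}_{\rho_\theta}(A|B) \to -\log 2 \ne 0 = \widehat{H}_{\rho_0}(A|B)$, the BS-conditional entropy is discontinuous at $\rho_0$. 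For general $d_A, d_B \ge 2$ the same states (regarded as operators on the full space) give the same values, because $\ket{\psi_\theta}$ involves only $\ket{0}_A, \ket{1}_A, \ket{0}_B, \ket{1}_B$ and $\sigma$ acts on its support exactly as in the two-qubit case.

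I expect the main obstacle to be bookkeeping with the Moore-Penrose pseudoinverse rather than anything conceptual: one must verify throughout that $\ker(\identity_A \otimes (\rho_\theta)_B) \subseteq \ker \rho_\theta$ so that every BS-entropy is finite, and confirm that the value at the rank-deficient limit $\theta = 0$ is genuinely $0$ and not an artefact of how $\log$ is extended over the kernel. Once these support computations are carried out carefully, the jump from the constant interior value $-\log 2$ to the endpoint value $0$ establishes the claim.
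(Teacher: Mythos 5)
Your proposal is correct and follows essentially the same strategy as the paper: exhibit an explicit one-parameter family converging in trace norm whose BS-conditional entropies stay bounded away from the value at the limit, with the discontinuity driven by the rank drop of the $B$-marginal (and hence the jump in the Moore--Penrose pseudoinverse) at the endpoint. The only difference is the concrete family — you use pure states $\cos\theta\,\ket{00}+\sin\theta\,\ket{11}$ of Schmidt rank two, for which $\widehat{H}_{\rho_\theta}(A|B)=-\log 2$ by the rank-one identity $\widehat{D}(\dyad{\psi}\Vert\sigma)=\log\braket{\psi|\sigma^{-1}|\psi}$ (consistent with the paper's own computation in the proof of the bounds on BS-entropic quantities), whereas the paper uses a mixture of product states with $\widehat{H}_{\rho_\varepsilon}(A|B)=0$ jumping to $\log 2$ at the limit; both are equally valid and your support/kernel checks go through as you describe.
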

\begin{proof}
    Since $d_A \ge 2$ as well as $d_B \ge 2$, we find orthogonal $\ket{i_A} \in \cH_A$, $\ket{i_B} \in \cH_B$, $i = 0, 1$. For $\varepsilon \in (0, 1)$ we then define
    \begin{equation}
        \ket{\varepsilon_B} = \sqrt{1 - \varepsilon}\ket{0_B} + \sqrt{\varepsilon}\ket{1_B} \, ,
    \end{equation}
    which is clearly normalised. Furthermore,
    \begin{equation}
        \begin{aligned}
            \rho_0 &:= \frac{1}{2} (\dyad{0_A} + \dyad{1_A}) \otimes \dyad{0_B} \, , \\
            \rho_{\varepsilon} &:=\frac{1}{2} \dyad{0_A} \otimes \dyad{0_B}  + \frac{1}{2} \dyad{1_A} \otimes \dyad{\varepsilon_B} \, ,
        \end{aligned}
    \end{equation}
    The above are states and fulfil
    \begin{equation}
        \begin{aligned}
            \norm{\rho_0 - \rho_{\varepsilon}}_1 &= \frac{1}{2}\norm{\dyad{1_A} \otimes (\dyad{0_B} - \dyad{\varepsilon_B})}_1 \\
            &= \frac{1}{2}\norm{\dyad{0_B} - \dyad{\varepsilon_B}}_1 = \sqrt{\varepsilon} \, .
        \end{aligned} \label{eq:eq_discontinuity_BS_conditional_entropy_norm_distance}
    \end{equation}
    To see the last equality, we can identify the subspace spanned by $\ket{0_B}$ and $\ket{1_B}$ with $\C^2$ and then get that 
    \begin{equation}
        \dyad{0_B} \to \begin{pmatrix}
            1 & 0 \\
            0 & 0
        \end{pmatrix} \quad \text{and} \quad 
        \dyad{\varepsilon_B} \to \begin{pmatrix}
            1 - \varepsilon & \sqrt{\varepsilon}\sqrt{1 - \varepsilon}\\
            \sqrt{\varepsilon}\sqrt{1 - \varepsilon} & \varepsilon
        \end{pmatrix}.
        \label{eq:eq_matrix_representation_discontinuity_BS_conditional_entropy}
    \end{equation}
    Calculating the eigenvalues of the difference and taking the sum of their absolute value gives $2 \sqrt{\varepsilon}$ and thereby \cref{eq:eq_discontinuity_BS_conditional_entropy_norm_distance}. Since clearly $[\rho_{0}, \identity \otimes \tr_A[\rho_{0}]] = 0$, the BS and conditional entropy coincide and we find 
    \begin{equation}
        \begin{aligned}
            \hat{H}_{\rho_0}(A|B) &= \tr[\dyad{0_B} \log \dyad{0_B}]\\
            &\hspace{1cm}- \tr[\frac{1}{2} (\dyad{0_A} + \dyad{1_A}) \otimes \dyad{0_B} \log\frac{1}{2} (\dyad{0_A} + \dyad{1_A}) \otimes \dyad{0_B}]\\
            &= 0 - \log\frac{1}{2} = \log 2 \, .
        \end{aligned}
    \end{equation}
    The result for $\rho_{\varepsilon}$ cannot be calculated so easily. We have that 
    \begin{equation}
        \begin{aligned}
            \widehat{H}_{\rho_{\varepsilon}}(A|B) &= -\frac{1}{2}\tr[\dyad{0_B} \log(\dyad{0_B}^{1/2} (\dyad{\varepsilon_B} + \dyad{0_B})^{-1} \dyad{0_B}^{1/2})]\\
            &\hspace{1cm} -\frac{1}{2}\tr[\dyad{\varepsilon_B} \log(\dyad{\varepsilon_B}^{1/2} (\dyad{\varepsilon_B} + \dyad{0_B})^{-1} \dyad{\varepsilon_B}^{1/2})]\\
            &= -\frac{1}{2} \log \tr[\dyad{0_B}(\dyad{\varepsilon_B} + \dyad{0_B})^{-1}]\\
            &\hspace{1cm} -\frac{1}{2}\log\tr[\dyad{\varepsilon_B}(\dyad{\varepsilon_B} + \dyad{0_B})^{-1}] \, ,
        \end{aligned}
        \label{eq:eq_BS_conditional_entropy_malicious_state}
    \end{equation}
    where in the first equality we used that $\dyad{0_B} \dyad{1_B} = \dyad{1_B} \dyad{0_B} = 0$ and in the second equality that $\dyad{\varepsilon_B}$ and $\dyad{0_B}$ are rank-one projections. We find, using again the matrix representation in \cref{eq:eq_matrix_representation_discontinuity_BS_conditional_entropy}, that
    \begin{equation}
        (\dyad{\varepsilon_B} + \dyad{0_B})^{-1} \to \begin{pmatrix}
            1 & \frac{\varepsilon - 1}{\sqrt{\varepsilon}\sqrt{1 - \varepsilon}}\\
            \frac{\varepsilon - 1}{\sqrt{\varepsilon}\sqrt{1 - \varepsilon}} & \frac{2}{\varepsilon} - 1
        \end{pmatrix} \, .
    \end{equation}
    By forming matrix products and calculating the trace, we can immediately conclude that
    \begin{equation}
        \begin{aligned}
            \tr[\dyad{\varepsilon_B}(\dyad{\varepsilon_B} + \dyad{0_B})^{-1}] &= 1 \, , \\
            \tr[\dyad{0_B}(\dyad{\varepsilon_B} + \dyad{0_B})^{-1}] &= 1 \, .
        \end{aligned}
    \end{equation}
    If we insert this into \cref{eq:eq_BS_conditional_entropy_malicious_state}, we get $\widehat{H}_{\rho_{\varepsilon}}(A|B) = 0$.
\end{proof}

This previous result shows in particular that we could only expect continuity and uniform continuity for the BS-conditional entropy on the set of full-rank states. The presence of the minimal eigenvalue of the states in the continuity bound provided below is thus not surprising.  

\begin{corollary}[Uniform continuity of the BS-conditional entropy]\label{cor:cor_uniform_continuity_BS_conditional_entropy}~\\
    The BS-conditional entropy over the bipartite Hilbert space $\cH= \cH_A \otimes \cH_B$ is for $ d_{\cH}^{-1} > m > 0$ uniformly continuous on $\cS_0= \cS_{\ge m}(\cH)$ and for $\rho, \sigma \in \cS_0$ with $\frac{1}{2}\norm{\rho - \sigma}_1 \le \varepsilon \le 1$ it holds that 
    \begin{equation}
        |\widehat{H}_\rho(A|B) - \widehat{H}_\sigma(A|B)| \le 2 l_{m}^{-1} \varepsilon \log d_A + \frac{l_m + \varepsilon}{l_m}(f_{m^{-1}, m^{-1}} + m^{-1}h)\Big(\frac{\varepsilon}{l_m + \varepsilon}\Big) \, ,
    \end{equation}
    with $l_m = 1 - d_{\cH} m$.
\end{corollary}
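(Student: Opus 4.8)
The plan is to apply the ALAFF method (\cref{theo:theo_alaff_method}) to $f(\cdot) = \widehat{H}_\cdot(A|B)$ on $\cS_0 = \cS_{\ge m}(\cH)$, with the identification $s = d_{\cH} m$, so that $1 - s = l_m$. The strategy mirrors the proof of \cref{cor:continuity_bound_conditional_entropy}, but now the perturbation parameter $s$ is genuinely nonzero, so the full machinery of \cref{theo:theo_alaff_method} is needed rather than the $s = 0$ simplification of \cref{rem:rem_s=0_alaff_method}.

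First I would verify that $\cS_{\ge m}(\cH)$ is $d_{\cH} m$-perturbed $\Delta$-invariant. Given $\rho \neq \sigma$ in $\cS_0$ with $\varepsilon = \frac{1}{2}\norm{\rho - \sigma}_1$, I choose the maximally mixed state $\tau = d_{\cH}^{-1}\identity$. Then
\begin{equation}
    \Delta^\pm(\rho, \sigma, \tau) = d_{\cH} m \, d_{\cH}^{-1}\identity + (1 - d_{\cH} m)\varepsilon^{-1}[\rho - \sigma]_\pm \ge m \identity \, ,
\end{equation}
since the second summand is positive semi-definite; and since $\tr[[\rho - \sigma]_\pm] = \varepsilon$, a short trace computation gives $\tr[\Delta^\pm] = 1$. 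Hence $\Delta^\pm \in \cS_{\ge m}(\cH)$, and as $m < d_{\cH}^{-1}$ forces $s = d_{\cH} m \in [0, 1)$, the invariance holds.

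Next I would show that $f$ is ALAFF. For the lower defect, joint convexity of the BS-entropy (\cite{matsumoto2010reverse, HiaiMosonyi_2011}) applied to $\widehat{D}(\rho \Vert \identity_A \otimes \rho_B)$ --- noting that both slots decompose into the matching convex combinations when $\rho = p\rho_1 + (1-p)\rho_2$ --- yields concavity of $\widehat{H}_\cdot(A|B)$, and thus $a_f = 0$. For the upper defect I would invoke point 3 of \cref{prop:prop_almost_concave_estimate_bs_entropy_well_behaved} with the \emph{normalized} states $\sigma_j = d_A^{-1}\identity_A \otimes \rho_{j,B}$. Pulling the scalar out of the logarithm yields the affine identity $\widehat{D}(\rho_j \Vert d_A^{-1}\identity_A \otimes \rho_{j,B}) = \log d_A - \widehat{H}_{\rho_j}(A|B)$, so the $\log d_A$ terms cancel in the almost-concavity estimate and it translates into the ALAFF upper bound $b_f = f_{m^{-1}, m^{-1}} + m^{-1}h$. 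Consequently $E_f = f_{m^{-1}, m^{-1}} + m^{-1}h$, and point 4 of the same proposition (with $m_1 = m_2 = m^{-1} \ge 1$) shows $p \mapsto (1-p)^{-1}E_f(p)$ is non-decreasing, whence $E_f^{\max} = E_f$ on the relevant domain.

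Finally, for the constant I would use the bounds $-\log\min\{d_A, d_B\} \le \widehat{H}_\rho(A|B) \le \log d_A$ from \cref{prop:prop_bound_BS_entropic_quantities} to get $C_f^{d_{\cH} m} \le \log d_A + \log\min\{d_A, d_B\} \le 2\log d_A$. Substituting $1 - s = l_m$, $C_f^s \le 2\log d_A$, and $E_f^{\max} = f_{m^{-1}, m^{-1}} + m^{-1}h$ into \cref{eq:eq_continuity_bound} reproduces the stated bound (and uniform continuity follows from the finiteness of $C_f^s$). The main obstacle I anticipate is the bookkeeping in the penultimate step: one must simultaneously exploit joint convexity (to obtain $a_f = 0$) and almost concavity (to obtain $b_f$), while carefully tracking the affine normalization shift by $\log d_A$ relating the conditional-entropy second argument $\identity_A \otimes \rho_B$ to the state $d_A^{-1}\identity_A \otimes \rho_B$ required by \cref{prop:prop_almost_concave_estimate_bs_entropy_well_behaved}. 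Checking that this shift is invisible to the ALAFF defect is what makes the two ingredients combine cleanly.
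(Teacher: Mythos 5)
Your proposal is correct and follows essentially the same route as the paper's proof: $s$-perturbed $\Delta$-invariance of $\cS_{\ge m}(\cH)$ with $s = d_{\cH}m$ via the maximally mixed state, $a_f = 0$ from concavity, $b_f = f_{m^{-1},m^{-1}} + m^{-1}h$ from point 3 of \cref{prop:prop_almost_concave_estimate_bs_entropy_well_behaved}, $C_f^s \le 2\log d_A$ from \cref{prop:prop_bound_BS_entropic_quantities}, and $E_f^{\max} = E_f$ from point 4. Your explicit verification of the $\Delta$-invariance and of the $\log d_A$ normalization shift merely spells out details the paper delegates to \cref{lem:delta_invariant_subset} and to the phrase ``becomes independent of the states''.
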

\begin{proof}
    We find that $\cS_0$ is $s$-perturbed $\Delta$-invariant with $s = m d_{\cH}$. The justification of this choice is completely analogous to the reasoning in \cref{lem:delta_invariant_subset} with $\rho = d_{\cH}^{-1} \identity$, i.e. the maximally mixed state. Furthermore, $f(\cdot) = \widehat{H}_\cdot(A|B)$ is ALAFF with $a_f = 0$ as $\widehat{H}_\cdot(A|B)$ is concave, and $b_f = m^{-1}h + f_{m^{-1}, m^{-1}}$ since the result in \cref{subsec:almost_concavity_BS_entropy} becomes independent of the states as we restrict to $\widehat{H}_\cdot(A|B)$ using point 3 of \cref{prop:prop_almost_concave_estimate_bs_entropy_well_behaved}. We further find that 
    \begin{equation}
        C^s_f \le \sup\limits_{\rho_1, \rho_2 \in \cS(\cH)} |\widehat{H}_{\rho_1}(A|B) - \widehat{H}_{\rho_2}(A|B)| \le 2 \log d_A \, ,
    \end{equation}
    using \cref{prop:prop_bound_BS_entropic_quantities}. This allows us to apply \cref{theo:theo_alaff_method} where $E_f^{\max}$ coincides with $E_f$ as of point 4 in \cref{prop:prop_almost_concave_estimate_bs_entropy_well_behaved}. This concludes the claim. 
\end{proof}

Even though a continuity bound for the BS-conditional entropy can only be proven for positive definite states, numerical simulations show us that we could expect a tighter bound on the previous proposition coinciding with that of \cref{cor:continuity_bound_conditional_entropy}, i.e., without the dependence on the minimal eigenvalues of the states involved. One can find a visualisation of those numeric simulations that underlie the conjecture in \cref{fig:bs_conditional_entropy_remainder}. The possibility of obtaining such a tighter bound is left for future work. 

\begin{figure}[ht!]
    \centering\includegraphics{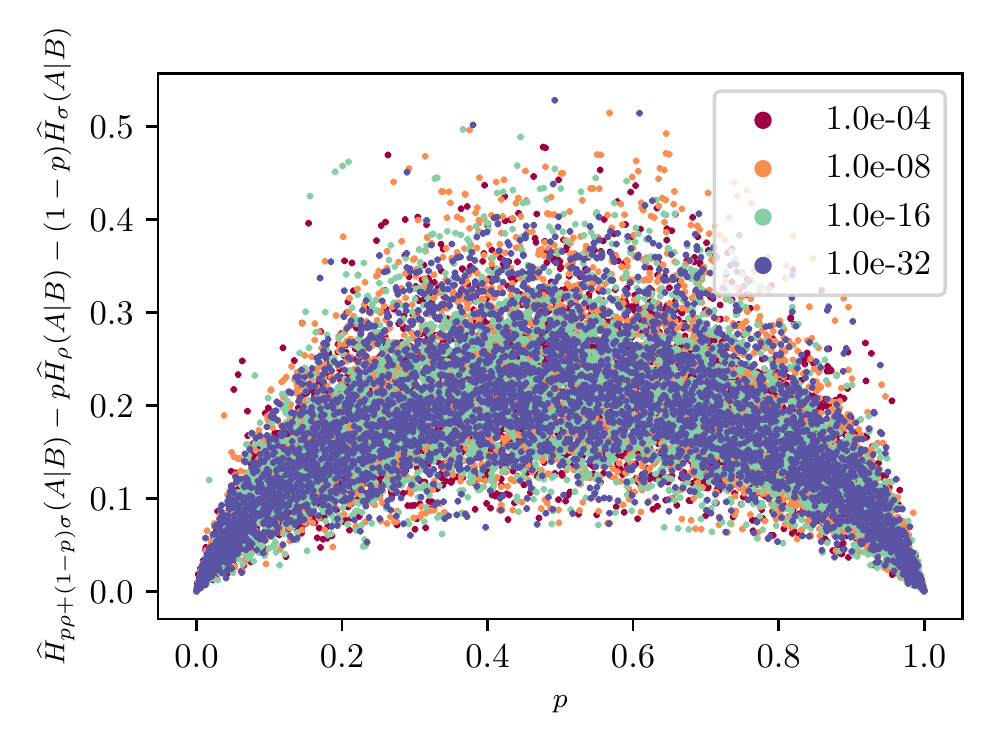}
    \caption{We investigate the dependence of the almost convex remainder term of the BS-conditional entropy on the minimal eigenvalue of the involved states. For the minimal eigenvalues $10^{-4}, 10^{-8}, 10^{-16}, 10^{-32}$ we sampled five hundred pairs of qubits $(\rho, \sigma)$ both of them with controlled eigenvalues. We then sampled for every state pair ten values of $p$, the convex interpolation parameter, and plotted the remainder. As can be seen from the plot, the remainder appears to be independent of the minimal eigenvalue and the shape suggests a binary entropy or Gini impurity. The result shows a similar pattern if the dimension is increased.}
    \label{fig:bs_conditional_entropy_remainder}
\end{figure}

\subsubsection{Uniform continuity for the BS-mutual information}

Let us address now the case of the BS-mutual information. Since the BS-conditional entropy is a particular case of the latter (by assuming that one of the reduced states of $\rho_{AB}$ is maximally mixed), the discontinuity issues presented in the previous subsection are expected to arise in the current one as well. More specifically, the example of discontinuity of the BS-conditional entropy presented in \cref{prop:discontinuity_conditional_BS_entropy} also constitutes an example of discontinuity of the BS-mutual information. Thus, we can only expect to prove uniform continuity for the BS-mutual information for full-rank states

However, there is a subtle difference between the settings of the BS-conditional entropy and the BS-mutual information. As shown in \cref{prop:prop_bound_BS_entropic_quantities}, the former is bounded between the same values as the (usual) conditional entropy, whereas the latter presents some pathological behaviour. Pathological in the sense that its (tight) upper bound depends on the minimal eigenvalues of the reduced state, as shown in \cref{eq:eq_bound_BS_mutual_information}. For this reason, a continuity bound for the BS-mutual information will necessarily depend on the minimal eigenvalues of the states involved.

\begin{corollary}[Uniform continuity for the BS-mutual information]\label{cor:cor_uniform_continuity_BS_mutual_information}~\\
    The BS-mutual information on a bipartite Hilbert space $\cH = \cH_A \otimes \cH_B$ is for $d_{\cH}^{-1} > m > 0$ uniformly continuous on $\cS_0 = \cS_{\ge m}$ and for $\rho, \sigma \in \cS_0$ with $\frac{1}{2}\norm{\rho - \sigma}_1 \le \varepsilon \le 1$ we find that 
    \begin{equation}
    \begin{aligned}
        |\widehat{I}_\rho(A:B) - \widehat{I}_\sigma(A:B)| & \le 2 l_m^{-1} \varepsilon( \log\min\{d_A, d_B\} + \log m^{-1}) + \frac{l_m + \varepsilon}{l_m}z_m \Big(\frac{\varepsilon}{l_m + \varepsilon}\Big) \\
        & \leq \frac{ 2\log\min\{d_A, d_B \} + 4\log m^{-1} + (\sqrt{2} + 2)m^{-1} + \sqrt{2}}{l_{m}} \sqrt{\varepsilon}\, ,
        \end{aligned}
    \end{equation}
    with $l_m = 1 - m d_{\cH}$ and 
    \begin{equation}
        z_m(p) = 2 f_{m^{-1}, m^{-1}}(p) + (m^{-1} + 1) h(p) \, .
    \end{equation}
    For the second inequality, we used \cref{lemma:nice_bounds}, $\log(1 + x) \le x$ for $0 \le x$, $\varepsilon \leq \sqrt{\varepsilon}$ for $\varepsilon \in [0, 1]$ and $l_{m} \le l_{m} + \varepsilon$.
\end{corollary}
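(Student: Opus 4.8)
The plan is to follow the template of the preceding continuity corollaries: realise $\cS_0=\cS_{\ge m}(\cH)$ as an $s$-perturbed $\Delta$-invariant convex set, show that $f(\cdot)=\widehat I_\cdot(A:B)$ is an ALAFF function with a \emph{state-independent} remainder equal to $z_m$, bound the constant $C_f^s$, and feed everything into \cref{theo:theo_alaff_method}. For the $\Delta$-invariance I would reuse the argument of \cref{cor:cor_uniform_continuity_BS_conditional_entropy}: perturbing with the maximally mixed state $\tau=d_{\cH}^{-1}\identity$ and invoking the reasoning behind \cref{lem:delta_invariant_subset} shows that $\cS_{\ge m}$ is $s$-perturbed $\Delta$-invariant with $s=m\,d_{\cH}$, so that $1-s=l_m=1-m\,d_{\cH}$ is exactly the normalisation appearing in the statement.

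The heart of the argument is establishing the ALAFF property for $\widehat I_\cdot(A:B)=\widehat D(\rho\,\Vert\,\rho_A\otimes\rho_B)$. Writing $\rho=p\rho_1+(1-p)\rho_2$ and $\sigma_j=\rho_{j,A}\otimes\rho_{j,B}$, I would split the quantity $\widehat I_\rho(A:B)-p\,\widehat I_{\rho_1}(A:B)-(1-p)\widehat I_{\rho_2}(A:B)$ into two pieces: first, the genuine convexity/almost-concavity gap of $\widehat D$ at the \emph{affine} reference $\bar\sigma=p\sigma_1+(1-p)\sigma_2$, controlled from above by the convexity of the BS-entropy and from below by \cref{theo:theo_almost_concavity_bs_entropy}; and second, a reference-mismatch term $\widehat D(\rho\Vert\rho_A\otimes\rho_B)-\widehat D(\rho\Vert\bar\sigma)$ that accounts for $\rho\mapsto\rho_A\otimes\rho_B$ not being affine. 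The remainder of the first piece is made independent of the states through \cref{prop:prop_almost_concave_estimate_bs_entropy_well_behaved} (using that the marginals inherit a minimal-eigenvalue bound from $\rho\ge m\identity$), producing the $f_{m^{-1},m^{-1}}$ and $m^{-1}h$ contributions, while the mismatch term contributes the remaining $f_{m^{-1},m^{-1}}+h$; together they assemble into $E_f=a_f+b_f=z_m$. Point 4 of the same proposition then guarantees $E_f^{\max}=E_f$, so the maximisation in \cref{theo:theo_alaff_method} costs nothing.

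For the constant, since $\widehat I_\cdot(A:B)\ge0$ I would bound $C_f^s\le 2\sup_{\rho\in\cS_0}\widehat I_\rho(A:B)$ and apply \cref{prop:prop_bound_BS_entropic_quantities}; the hypothesis $\rho\ge m\identity$ forces $\rho_A\ge m\,d_B\,\identity_A$ and $\rho_B\ge m\,d_A\,\identity_B$, so $\min\{\norm{\rho_A^{-1}}_\infty,\norm{\rho_B^{-1}}_\infty\}\le m^{-1}$, giving $C_f^s\le 2(\log\min\{d_A,d_B\}+\log m^{-1})$. Substituting $C_f^s$, $1-s=l_m$ and $E_f^{\max}=z_m$ into \cref{eq:eq_continuity_bound} yields the first displayed inequality, and the second follows from \cref{lemma:nice_bounds} together with the elementary estimates $\log(1+x)\le x$ for $x\ge0$ and $\varepsilon\le\sqrt\varepsilon$ on $[0,1]$, exactly as announced.

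The main obstacle is precisely the quadratic dependence of the reference state $\rho_A\otimes\rho_B$ on $\rho$: unlike the relative-entropy mutual information (which splits as $S(\rho_A)-H_\rho(A|B)$), the BS-mutual information cannot be written as a combination of quantities that are individually almost affine, so joint convexity and almost concavity of $\widehat D$ do not transfer to $\widehat I$ directly. The key quantitative input is the trace-distance control $\tfrac12\norm{\rho_A\otimes\rho_B-\bar\sigma}_1=\tfrac12\,p(1-p)\,\norm{\rho_{1,A}-\rho_{2,A}}_1\norm{\rho_{1,B}-\rho_{2,B}}_1$, which must be converted into a state-independent function of $p$ and passed through a second-argument continuity estimate for $\widehat D$; keeping this mismatch contribution within the claimed $f_{m^{-1},m^{-1}}+h$ — and in particular retaining the $m^{-1}$ (rather than $m^{-2}$) scaling after the marginal eigenvalue bounds are inserted — is the delicate bookkeeping step on which the sharpness of the stated bound hinges.
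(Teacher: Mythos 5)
Your scaffolding — the $s=m\,d_{\cH}$ perturbed $\Delta$-invariance via the maximally mixed state, the bound $C_f^s\le 2\sup_\rho\widehat{I}_\rho(A:B)\le 2(\log\min\{d_A,d_B\}+\log m^{-1})$ using \cref{prop:prop_bound_BS_entropic_quantities} and $\norm{\rho_A^{-1}}_\infty\le\norm{\rho_{AB}^{-1}}_\infty$, and the final assembly through \cref{theo:theo_alaff_method} — matches the paper. The gap is in the ALAFF step. You propose to compare the true reference $\rho_A\otimes\rho_B$ with the affine combination $\bar\sigma=p\,\rho_{1,A}\otimes\rho_{1,B}+(1-p)\,\rho_{2,A}\otimes\rho_{2,B}$ via the trace-norm identity $\tfrac12\norm{\rho_A\otimes\rho_B-\bar\sigma}_1=\tfrac12 p(1-p)\norm{\rho_{1,A}-\rho_{2,A}}_1\norm{\rho_{1,B}-\rho_{2,B}}_1$ and a second-argument continuity estimate for $\widehat{D}$. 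That estimate is nowhere established in the paper (the authors explicitly omit the BS-analogue of \cref{cor:cor_uniform_continuity_relative_entropy_second_argument}), so you would first have to prove it; and even granting it, the remainder it yields has the shape $\tfrac{\delta(p)}{l}\log(\cdot)+\tfrac{l+\delta(p)}{l}f_{\cdot,\cdot}\bigl(\tfrac{\delta(p)}{l+\delta(p)}\bigr)$ with $\delta(p)\le 2p(1-p)$, which does not reduce to the $f_{m^{-1},m^{-1}}(p)+h(p)$ you assert it contributes. You flag this as the ``delicate bookkeeping step'' but do not resolve it, so the claimed identity $E_f=z_m$ — and with it the stated bound — does not follow from your argument as written.

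The paper avoids the mismatch entirely by exploiting that $\rho_A\otimes\rho_B$ \emph{is} an exact convex combination when only one marginal is decomposed: $\rho_A\otimes\rho_B=p\,(\rho_{1,A}\otimes\rho_B)+(1-p)(\rho_{2,A}\otimes\rho_B)$. Joint convexity (for $b_f$) and \cref{theo:theo_almost_concavity_bs_entropy} with point 2 of \cref{prop:prop_almost_concave_estimate_bs_entropy_well_behaved} (for $a_f$) are applied to the pairs $\bigl(\rho_j,\rho_{j,A}\otimes\rho_B\bigr)$, and the remaining reference swaps to $\rho_{j,A}\otimes\rho_{j,B}$ are handled by the operator anti-monotonicity of $\widehat{D}(\rho\Vert\cdot)$ together with the operator inequalities $p\,\rho_{1,B}\le\rho_B$, $(1-p)\rho_{2,B}\le\rho_B$ and $\rho_{AB}\le(p+(1-p)m^{-1})\rho_{1,AB}$ (and its counterpart for $\rho_2$); each swap costs exactly an $h(p)$ or an $f_{m^{-1},m^{-1}}(p)$, yielding $b_f=h$ and $a_f=m^{-1}h+2f_{m^{-1},m^{-1}}$, hence $E_f=z_m$. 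If you want to salvage your route, replace the trace-norm comparison of the references by these operator inequalities; a norm-based detour through a (yet unproven) second-argument continuity bound will not reproduce the claimed remainder.
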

\begin{proof}
    As in the case of the BS-conditional entropy, we find that $\cS_0$ is $s$-perturbed $\Delta$-invariant with $s = m d_{\cH}$. To conclude that $\widehat{I}_\cdot(A:B)$ is ALAFF we first note that because of the convexity of $\widehat{D}(\cdot \Vert \cdot)$,
    \begin{equation}
        \begin{aligned}
            \widehat{I}_{p \rho_1 + (1 - p) \rho_2}(A:B) &\le p \widehat{D}(\rho_1 \Vert \rho_{1, A} \otimes(p \rho_{1, B}+ (1 - p) \rho_{2, B}))\\
            &\hspace{1cm} + (1 - p) \widehat{D}(\rho_2 \Vert \rho_{2, A} \otimes(p \rho_{1, B} + (1 - p)  \rho_{2, B}))\\
            &\le p \widehat{I}_{\rho_1}(A:B) + (1 - p) \widehat{I}_{\rho_2}(A:B) + h(p) \, .
        \end{aligned}
    \end{equation}
    In the last step, we used that $\widehat{D}(\cdot \Vert \cdot)$ is monotone decreasing in its second argument, and $p \rho_{1, B} \le p \rho_{1, B} + (1 - p) \rho_{2, B}$, $(1 - p) \rho_{2, B} \le p \rho_{1, B} + (1 - p) \rho_{2, B}$, respectively. Hence $b_f = h$. We follow similar lines to obtain $a_f$. Starting with \cref{theo:theo_almost_concavity_bs_entropy} and point 2 in \cref{prop:prop_almost_concave_estimate_bs_entropy_well_behaved} using that $\norm{\rho_A^{-1}}_\infty \le \norm{\rho_{AB}^{-1}}_\infty$, and analogously for $\rho_{B}$, we find 
    \begin{equation}
        \begin{aligned}
            \widehat{I}_{p \rho_1 + (1 - p) \rho_2}(A:B) &\ge p \widehat{D}(\rho_1 \Vert \rho_{1, A} \otimes(p \rho_{1, B}+ (1 - p) \rho_{2, B}))\\
            &\hspace{1cm} + (1 - p) \widehat{D}(\rho_2 \Vert \rho_{2, A} \otimes(p \rho_{1, B} + (1 - p)  \rho_{2, B})) - m^{-1}h(p) - f_{m^{-1}, m^{-1}}(p)\\
            &\ge p\widehat{I}_{\rho_1}(A:B) + (1 - p) \widehat{I}_{\rho_2}(A:B) - m^{-1}h(p) - 2 f_{m^{-1}, m^{-1}}(p) \, .
        \end{aligned}
    \end{equation}
    In the last step we used again that $\widehat{D}(\cdot \Vert \cdot)$ is monotone decreasing in its second argument and that $p \rho_{1, AB} + (1 - p) \rho_{2, AB} \le (p + (1 - p) m^{-1}) \rho_{1, AB}$ and $p \rho_{1, AB} + (1 - p) \rho_{2, AB} \le (m^{-1}p + (1 - p)) \rho_{2, AB}$, giving us another $f_{m^{-1}, m^{-1}}(p)$. Hence $a_f = m^{-1} h + 2 f_{m^{-1}, m^{-1}}$. We conclude the proof by noticing again that $\norm{\rho_A^{-1}}_\infty \le \norm{\rho_{AB}^{-1}}_\infty \le m^{-1}$, yielding the upper bound
    \begin{equation}
        C^s_f \le \sup\limits_{\rho \in \cS_0} \widehat{I}_\rho(A:B) \le \log\min\{ d_A, d_B\} + \log m^{-1} \, .
    \end{equation}
    Finally, we apply \cref{theo:theo_alaff_method} and get the claimed bounds as $E_f$ coincides with $E_f^{\max}$, due to point 4 in \cref{prop:prop_almost_concave_estimate_bs_entropy_well_behaved}.
\end{proof}

\subsubsection{Uniform continuity for the BS-conditional mutual information}

Next, we provide a result of uniform continuity for the BS-conditional mutual information, defined in \cref{eq:eq_BS_conditional_mutual_information}. As a difference between two BS-conditional entropies, it will not present the pathological behaviour from the BS-mutual information, as the BS-conditional entropies are bounded between the same limits as the (usual) conditional entropies. See \cref{prop:prop_bound_BS_entropic_quantities} for the specific bounds on all these BS-entropic quantities.

Nevertheless, the continuity bound we obtain below for the BS-conditional mutual information also depends on the minimal eigenvalues of the states involved, as happened in the case of the BS-conditional entropies.

\begin{corollary}[Uniform continuity of the BS-conditional mutual information]\label{cor:cor_uniform_continuity_BS_conditional_mutual_information}~\\
    The BS-conditional mutual information over $\cH = \cH_A \otimes \cH_B \otimes \cH_C$ is for $d_{\cH}^{-1} > m > 0$ uniformly continuous on $\cS_0 = \cS_{\ge m}(\cH)$ and for $\rho, \sigma \in \cS_0$ with $\frac{1}{2}\norm{\rho - \sigma}_1 \le \varepsilon \le 1$  we find that 
    \begin{equation}
    \begin{aligned}
        |\widehat{I}_\rho(A:B|C) - \widehat{I}_\sigma(A:B|C)| &\le 2 \, \varepsilon\, l_m^{-1} \log\min\{ d_A, \sqrt{d_{ABC}}\} + 2g_{m}(\varepsilon)  \\ 
        & \le \frac{ 2\log\min\{d_A, \sqrt{d_{ABC}} \} + 2 \log m^{-1} + 2 (\sqrt{2} + 1) m^{-1}}{l_m} \sqrt{\varepsilon} \, ,
        \end{aligned}
    \end{equation}
    with $l_m = 1 - m d_{\cH}$ and 
    \begin{equation}
        g_{m}(\varepsilon) = \frac{l_m + \varepsilon}{l_m}(f_{m^{-1}, m^{-1}} + m^{-1}h)\Big(\frac{\varepsilon}{l_m + \varepsilon}\Big) \, .
    \end{equation}
     For the second inequality, we used \cref{lemma:nice_bounds}, $\log(1 + x) \le x$ for $0 \le x$, $\varepsilon \leq \sqrt{\varepsilon}$ for $\varepsilon \in [0, 1]$ and $l_{m} \le l_{m} + \varepsilon$.
\end{corollary}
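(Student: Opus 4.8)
The plan is to mirror the treatment of the (relative-entropy) conditional mutual information in \cref{cor:continuity_bound_CMI} and of the BS-conditional entropy in \cref{cor:cor_uniform_continuity_BS_conditional_entropy}, applying the ALAFF method (\cref{theo:theo_alaff_method}) to the function $f(\cdot) = \widehat{I}_\cdot(A:B|C)$ on $\cS_0 = \cS_{\ge m}(\cH)$. First I would record that $\cS_0$ is $s$-perturbed $\Delta$-invariant with $s = m\, d_{\cH}$: this is exactly the computation of \cref{lem:delta_invariant_subset} carried out with the maximally mixed state $\rho = d_{\cH}^{-1}\identity$, and it needs $m < d_{\cH}^{-1}$ so that $s \in [0,1)$, which is the standing hypothesis. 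Setting $l_m = 1 - s = 1 - m\,d_{\cH}$ then fixes the denominators appearing in the claimed bound.

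Next I would establish that $f$ is ALAFF by writing it as a difference of two BS-conditional entropies, $\widehat{I}_\rho(A:B|C) = \widehat{H}_\rho(A|C) - \widehat{H}_\rho(A|BC)$, as in \cref{eq:eq_BS_conditional_mutual_information}. Any $\rho \in \cS_{\ge m}(\cH)$ has reduced states $\rho_{AC}$ and $\rho_{BC}$ whose minimal eigenvalue is at least $m$ (partial tracing only increases it), so each of the two BS-conditional entropies is concave and satisfies \cref{eq:eq_alaff_property} with lower slack $0$ and upper slack $m^{-1}h + f_{m^{-1},m^{-1}}$, exactly as in the proof of \cref{cor:cor_uniform_continuity_BS_conditional_entropy} through point 3 of \cref{prop:prop_almost_concave_estimate_bs_entropy_well_behaved}. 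Combining the two contributions, where the upper slack of each summand must be paired with the vanishing lower slack of the other, gives $a_f = b_f = m^{-1}h + f_{m^{-1},m^{-1}}$ and hence $E_f = a_f + b_f = 2\bigl(f_{m^{-1},m^{-1}} + m^{-1}h\bigr)$, precisely the combination producing $g_m$. Point 4 of \cref{prop:prop_almost_concave_estimate_bs_entropy_well_behaved}, valid since $m^{-1} \ge 1$, then ensures that $p \mapsto E_f(p)/(1-p)$ is non-decreasing, whence $E_f^{\max} = E_f$ on the relevant range.

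It remains to control $C_f^s$. Using non-negativity of the BS-conditional mutual information together with the upper estimate $\widehat{I}_\rho(A:B|C) \le \log\min\{d_A^2, d_{ABC}\} = 2\log\min\{d_A, \sqrt{d_{ABC}}\}$ from point 3 of \cref{prop:prop_bound_BS_entropic_quantities}, I obtain $C_f^s \le 2\log\min\{d_A, \sqrt{d_{ABC}}\}$. Inserting $C_f^s$ and $E_f$ into \cref{eq:eq_continuity_bound} with $1 - s = l_m$ reproduces the first displayed inequality: the leading term is $C_f^s\,\varepsilon/l_m$ and the remainder is $2\,g_m(\varepsilon)$. The second, simplified inequality then follows mechanically from \cref{lemma:nice_bounds}, the elementary estimates $\log(1+x)\le x$ and $\varepsilon \le \sqrt{\varepsilon}$ for $\varepsilon \in [0,1]$, and $l_m \le l_m + \varepsilon$.

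I expect the only genuinely delicate point to be the bookkeeping of the two slacks when writing $\widehat{I}$ as a difference of concave quantities: one must pair the upper slack of each BS-conditional entropy with the zero lower slack of the other, so that neither $a_f$ nor $b_f$ inadvertently doubles. The accompanying verification that every relevant reduced state retains minimal eigenvalue $\ge m$, so that the state-independent remainder $f_{m^{-1},m^{-1}} + m^{-1}h$ of \cref{prop:prop_almost_concave_estimate_bs_entropy_well_behaved} applies uniformly, is routine but worth stating explicitly.
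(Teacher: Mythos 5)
Your proposal is correct and follows essentially the same route as the paper: the same $s$-perturbed $\Delta$-invariance with $s = m d_{\cH}$, the same decomposition $\widehat{I}_\cdot(A:B|C) = \widehat{H}_\cdot(A|C) - \widehat{H}_\cdot(A|BC)$ yielding $a_f = b_f = f_{m^{-1},m^{-1}} + m^{-1}h$, the same bound $C_f^s \le 2\log\min\{d_A,\sqrt{d_{ABC}}\}$ from \cref{prop:prop_bound_BS_entropic_quantities}, and the same appeal to point 4 of \cref{prop:prop_almost_concave_estimate_bs_entropy_well_behaved} to identify $E_f^{\max}$ with $E_f$. Your explicit remarks on pairing the slacks of the two concave summands and on the reduced states retaining minimal eigenvalue at least $m$ are details the paper leaves implicit ("arguing along the same lines as" \cref{cor:cor_uniform_continuity_BS_conditional_entropy}), and both are accurate.
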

\begin{proof}
    We have that $\cS_0$ is $s$-perturbed $\Delta$-invariant using the same reasoning as in the proof of \cref{cor:cor_uniform_continuity_BS_conditional_entropy}. Because of the representation $\widehat{I}_\cdot(A:B|C) = \widehat{H}_\cdot(A|C) - \widehat{H}_\cdot(A|BC)$ we can immediately conclude that $\widehat{I}_\cdot(A:B|C)$ is ALAFF with $a_f = f_{m^{-1}, m^{-1}} + m^{-1}h$ and $b_f = f_{m^{-1}, m^{-1}} + m^{-1}h$ arguing along the same lines as in \cref{cor:cor_uniform_continuity_BS_conditional_entropy}. Using \cref{prop:prop_bound_BS_entropic_quantities} we can conclude 
    \begin{equation}
        C^s_f \le \sup\limits_{\rho \in \cS(\cH)} \widehat{I}_\rho(A:B|C) \le 2 \log\min\{ d_A, \sqrt{d_{ABC}}\} \, .
    \end{equation}
    Applying \cref{theo:theo_alaff_method} and using point 4 of \cref{prop:prop_almost_concave_estimate_bs_entropy_well_behaved} we get that $E_f = E_f^{\max}$ and thereby conclude the assertion.
\end{proof}

\subsubsection{Divergence bound for the BS-entropy}

We conclude this section by following the same lines as in the case of the relative entropy to provide a divergence bound for the BS-entropy. Firstly, we will prove the uniform continuity of the BS-entropy in the first argument and subsequently derive from that result the divergence bound. These results should be compared to their relative entropy analogues, namely \cref{cor:cor_uniform_continuity_relative_entropy_first_argument} and \cref{cor:divergence_bound_relative_entropy}, respectively.

\begin{corollary}[Uniform continuity of the BS-entropy in the first argument]\label{cor:cor_uniform_continuity_BS_entropy_first_argument}~\\
    Let $\sigma \in \cS_+(\cH)$ be fixed. Then $\widehat{D}(\cdot \Vert \sigma)$ is uniformly continuous on $\cS_0 = \cS(\cH)$, and for $\rho_1, \rho_2 \in \cS_0$ with $\frac{1}{2}\norm{\rho_1 - \rho_2} \le \varepsilon \le 1$ we find that
    \begin{equation}
        |\widehat{D}(\rho_1 \Vert \sigma) - \widehat{D}(\rho_2 \Vert \sigma)| \le \varepsilon \log(m_\sigma^{-1}) + (1 + \varepsilon) m_\sigma^{-1} h\Big(\frac{\varepsilon}{1 + \varepsilon}\Big) \, ,
    \end{equation}
    with $m_\sigma$ the minimal eigenvalue of $\sigma$.
\end{corollary}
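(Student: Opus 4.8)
The plan is to mirror the argument for the relative entropy in \cref{cor:cor_uniform_continuity_relative_entropy_first_argument}, applying the ALAFF method in its $s=0$ form (\cref{rem:rem_s=0_alaff_method}) to the single-variable function $f(\cdot) := \widehat{D}(\cdot \Vert \sigma)$ on $\cS_0 = \cS(\cH)$. First I would observe that $\cS_0 = \cS(\cH)$ is convex and, since for any two states the normalised positive and negative parts of their difference are again states with orthogonal support, that it is $0$-perturbed $\Delta$-invariant. The remaining work is then to verify that $f$ is ALAFF with the correct functions $a_f, b_f$ and to bound the constant $C^\perp_f$.

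For the ALAFF property \cref{eq:eq_alaff_property} I would set $b_f = 0$, which comes directly from the joint convexity of the BS-entropy (hence convexity in the first argument) already available in the literature, giving $\widehat{D}(p\rho_1 + (1-p)\rho_2 \Vert \sigma) \le p\widehat{D}(\rho_1\Vert\sigma) + (1-p)\widehat{D}(\rho_2\Vert\sigma)$. For the lower bound I would invoke the almost concavity of \cref{theo:theo_almost_concavity_bs_entropy} specialised to $\sigma_1 = \sigma_2 = \sigma$: by point 1 of \cref{prop:prop_almost_concave_estimate_bs_entropy_well_behaved} this forces $\hat{c}_1 = \hat{c}_2 = 1$ so that $f_{\hat{c}_1,\hat{c}_2} = f_{1,1} = 0$, while $\hat{c}_0 = \max\{\norm{\sigma_1^{-1}}_\infty, \norm{\sigma_2^{-1}}_\infty\} = \norm{\sigma^{-1}}_\infty = m_\sigma^{-1}$ because $\sigma$ is positive definite with minimal eigenvalue $m_\sigma$. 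The remainder thus collapses to $m_\sigma^{-1}h(p)$, yielding $a_f = m_\sigma^{-1}h$ and hence $E_f = a_f + b_f = m_\sigma^{-1}h$.

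It remains to control $C^\perp_f = \sup\{|\widehat{D}(\rho_1\Vert\sigma) - \widehat{D}(\rho_2\Vert\sigma)| : \tr[\rho_1\rho_2] = 0\}$. Using $\widehat{D}(\cdot\Vert\sigma)\ge 0$ I would bound this by $\sup_{\rho\in\cS(\cH)} \widehat{D}(\rho\Vert\sigma)$, and evaluating this supremum is the step I expect to require the most care. Here I would use that $\rho \le \identity$ and $\sigma \ge m_\sigma\identity$ give $m_\sigma\rho \le \sigma$, combined with the fact that the BS-entropy is monotone decreasing in its second argument (as exploited elsewhere in this section), to obtain $\widehat{D}(\rho\Vert\sigma) \le \widehat{D}(\rho\Vert m_\sigma\rho)$; a short computation shows $\rho^{1/2}(m_\sigma\rho)^{-1}\rho^{1/2} = m_\sigma^{-1}P_\rho$ on the support of $\rho$, whence $\widehat{D}(\rho\Vert m_\sigma\rho) = \log m_\sigma^{-1}$. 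This gives $C^\perp_f \le \log m_\sigma^{-1}$. Finally I would apply \cref{theo:theo_alaff_method} in the form of \cref{rem:rem_s=0_alaff_method} with $E_f = m_\sigma^{-1}h$ (point 4 of \cref{prop:prop_almost_concave_estimate_bs_entropy_well_behaved} guarantees $E_f^{\max} = E_f$ on the relevant range), obtaining the bound $\varepsilon\log(m_\sigma^{-1}) + (1+\varepsilon)m_\sigma^{-1}h(\varepsilon/(1+\varepsilon))$, exactly as claimed.
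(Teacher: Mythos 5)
Your proposal is correct and follows essentially the same route as the paper: $0$-perturbed $\Delta$-invariance of $\cS(\cH)$, the ALAFF property with $b_f = 0$ from convexity and $a_f = m_\sigma^{-1}h$ from \cref{theo:theo_almost_concavity_bs_entropy} together with point 1 of \cref{prop:prop_almost_concave_estimate_bs_entropy_well_behaved}, the bound $C^\perp_f \le \log m_\sigma^{-1}$, and an application of \cref{rem:rem_s=0_alaff_method}. The only (immaterial) difference is that the paper obtains $\sup_\rho \widehat D(\rho\Vert\sigma) \le \log m_\sigma^{-1}$ directly from the operator inequality $\rho^{1/2}\sigma^{-1}\rho^{1/2} \le m_\sigma^{-1}\identity$, whereas you pass through monotonicity in the second argument and the evaluation $\widehat D(\rho\Vert m_\sigma\rho) = \log m_\sigma^{-1}$; both are valid.
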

\begin{proof}
    The procedure is familiar. First, $\cS_0$ is 0-perturbed $\Delta$-invariant. Second $f(\cdot) = \widehat{D}(\cdot \Vert \sigma)$ is ALAFF with $a_f = m_\sigma^{-1} h$ and $b_f = 0$ employing \cref{theo:theo_almost_concavity_bs_entropy} and point 1 of \cref{prop:prop_almost_concave_estimate_bs_entropy_well_behaved}. Further 
    \begin{equation}
        C_f^\perp \le \sup\limits_{\rho \in \cS(\cH)} \widehat{D}(\rho \Vert \sigma) \le \log m_{\sigma}^{-1}
    \end{equation}
    since $\rho^{1/2} \sigma^{-1} \rho^{1/2} \leq \identity m_\sigma^{-1}$. Applying now \cref{rem:rem_s=0_alaff_method} gives the claimed result.
\end{proof}

Utilizing the above result we obtain a divergence bound for the BS-entropy which constitutes the analogue to the one of the relative entropy in \cref{cor:divergence_bound_relative_entropy}. Note that even the divergence bounds obtained in both cases are similar, except for the presence of a factor $m_\sigma^{-1}$ in the second term of the bound.

\begin{corollary}[Divergence bound for the BS-entropy]\label{cor:cor_divergence_bound_bs_entropy}~\\
    Let $\rho \in \cS(\cH)$ and $\sigma \in \cS_+(\cH)$, then for $\frac{1}{2}\norm{\rho - \sigma}_1 \le \varepsilon \le 1$, we have
    \begin{equation}
        \widehat{D}(\rho \Vert \sigma)  \le \varepsilon \log m_{\sigma}^{-1} + (1 + \varepsilon) m_\sigma^{-1} h \Big(\frac{\varepsilon}{1 + \varepsilon}\Big) \, ,
    \end{equation}
    with $m_\sigma$ the minimal eigenvalue of $\sigma$.
\end{corollary}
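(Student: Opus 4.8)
The plan is to obtain this divergence bound as an immediate specialization of the uniform continuity of the BS-entropy in its first argument, \cref{cor:cor_uniform_continuity_BS_entropy_first_argument}, exactly mirroring the derivation of \cref{cor:divergence_bound_relative_entropy} from \cref{cor:cor_uniform_continuity_relative_entropy_first_argument} in the relative-entropy case. Since $\sigma \in \cS_+(\cH)$ by hypothesis, the fixed second argument satisfies the requirement of \cref{cor:cor_uniform_continuity_BS_entropy_first_argument}, so that corollary applies with this $\sigma$.

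Concretely, I would instantiate the two first-argument states in \cref{cor:cor_uniform_continuity_BS_entropy_first_argument} as $\rho_1 = \rho$ and $\rho_2 = \sigma$, both of which lie in $\cS_0 = \cS(\cH)$. Then the hypothesis $\frac{1}{2}\norm{\rho_1 - \rho_2}_1 = \frac{1}{2}\norm{\rho - \sigma}_1 \le \varepsilon \le 1$ is precisely the assumption of the present statement. The corollary yields
\begin{equation}
    |\widehat{D}(\rho \Vert \sigma) - \widehat{D}(\sigma \Vert \sigma)| \le \varepsilon \log(m_\sigma^{-1}) + (1 + \varepsilon) m_\sigma^{-1} h\Big(\frac{\varepsilon}{1 + \varepsilon}\Big) \, .
\end{equation}

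To finish, I would simplify the left-hand side using two elementary facts. First, $\widehat{D}(\sigma \Vert \sigma) = \tr[\sigma \log(\sigma^{1/2}\sigma^{-1}\sigma^{1/2})] = \tr[\sigma \log \identity] = 0$, so the second term inside the absolute value vanishes. Second, the BS-entropy is non-negative on pairs with $\ker\sigma \subseteq \ker\rho$ (which holds automatically here since $\sigma$ is invertible), as it is a divergence; hence $\widehat{D}(\rho\Vert\sigma) \ge 0$ and the absolute value may be dropped to leave $\widehat{D}(\rho\Vert\sigma)$ itself. Substituting these two observations gives exactly the claimed bound.

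I do not anticipate any genuine obstacle: the whole argument is a one-line specialization once \cref{cor:cor_uniform_continuity_BS_entropy_first_argument} is in hand, and the only points needing a brief justification are the vanishing of $\widehat{D}(\sigma\Vert\sigma)$ and the non-negativity of the BS-entropy (so that the modulus disappears). The single structural requirement worth checking explicitly is that $\sigma \in \cS_+(\cH)$, which guarantees both $m_\sigma > 0$ and the applicability of the fixed-second-argument corollary.
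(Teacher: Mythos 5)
Your proposal is correct and coincides with the paper's own proof: both specialize \cref{cor:cor_uniform_continuity_BS_entropy_first_argument} to $\rho_1 = \rho$, $\rho_2 = \sigma$, use $\widehat{D}(\sigma\Vert\sigma)=0$, and drop the absolute value by non-negativity of the BS-entropy. No gaps.
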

\begin{proof}
    In the context of \cref{cor:cor_uniform_continuity_BS_entropy_first_argument}, we just set $\rho_1 = \rho$ and $\rho_2 = \sigma$, giving us that $\frac{1}{2}\norm{\rho_1 - \rho_2}_1 = \frac{1}{2}\norm{\rho - \sigma}_1 \le \varepsilon \le 1$. Further $\widehat{D}(\rho_2 \Vert \sigma) = \widehat{D}(\sigma \Vert \sigma) = 0$ and $|\widehat{D}(\rho_1 \Vert \sigma)|$ loses the absolute value, as $\widehat{D}(\cdot \Vert \cdot) \ge 0$. The bound follows immediately.
\end{proof}

With this, we conclude our section on continuity bounds for entropic quantities derived from the BS-entropy. We have deliberately omitted the analogues of \cref{cor:cor_uniform_continuity_relative_entropy_second_argument} and \cref{theo:theo_uniform_continuity_relative_entropy} for the BS-entropy, due to their high technicality and the complexity of the continuity bounds that we would obtain with our method. However, the same procedure as for the relative entropy would give analogous continuity bounds also in this setting.

\section{Applications}\label{sec:applications}
In this section, we use some of the previously derived bounds to provide applications in various contexts within the field of quantum information.

\subsection{Quantum hypothesis testing} \label{sec:hyp-test}

In this section, we interpret our bounds in terms of hypothesis testing. Quantum state discrimination and quantum hypothesis testing are both well-studied tasks in quantum information theory. 

In quantum state discrimination, you are given a source which prepares quantum states $\rho_1$ and $\rho_2$ with equal probability. The task is to perform a measurement in order to identify whether the state prepared by the source is $\rho_1$ or $\rho_2$. In this setting, the optimal probability of successfully identifying the state is given in terms of the trace distance as
\begin{equation} \label{eq:succ-prop-helstrom}
    p_{\mathrm{succ}} = \frac{1}{2}\left(1+\frac{1}{2}\|\rho_1-\rho_2\|_1 \right)
\end{equation}
using the Helstrom measurement (see textbooks such as \cite{Nielsen2010}).

In quantum hypothesis testing, we consider an asymmetric setting with $n$ copies and we are interested in the asymptotic performance. Again, the task is to discriminate between $\rho$ and $\sigma$, using a measurement $\{E, \identity-E\}$ where $0 \leq E \leq \identity$. Upon the first outcome, the guess is $\rho$, and upon the second $\sigma$. Therefore, we define the errors of the first and second kind as
\begin{equation}
    \alpha(E)_n = \mathrm{Tr}[\rho^{\otimes n}(\mathds{1}-E)]
\end{equation}
and 
\begin{equation}
    \beta(E)_n = \mathrm{Tr}[\sigma^{\otimes n}E] \, . 
\end{equation}
We now want to fix the error of the first kind to be at most $\varepsilon$ and define
\begin{equation}
    \beta_\varepsilon(\rho^{\otimes n} || \sigma^{\otimes n}) := \min \{\beta(E)_n: \alpha(E)_n \leq \varepsilon\},
\end{equation}
where the minimum runs over $0 \leq E \leq \mathds{1}$. Then, the quantum Stein's lemma \cite{hiai1991proper, Petz2008} states that 
\begin{equation}
    \lim_{n \to \infty} \frac{1}{n} \log[  \beta_\varepsilon(\rho^{\otimes n} || \sigma^{\otimes n})] = - D(\rho||\sigma).
\end{equation}
Therefore, we can interpret the continuity bound in the way that two states that are hard to discriminate have almost the same performance in terms of hypothesis testing. We can illustrate this with \cref{cor:cor_uniform_continuity_relative_entropy_first_argument}, just by taking $1+\varepsilon$ there to be $2 p_{\mathrm{succ}}$ following \cref{eq:succ-prop-helstrom}. 

\begin{corollary}
    Let $\sigma \in \cS(\cH)$ be fixed, $0 < \varepsilon < 1$ and let us consider a source which produces $\rho_1$, $\rho_2$ with equal probability. Moreover, let $p$ be an upper bound on the probability $p_{\mathrm{succ}}$ of successfully identifying the state. Then, the difference in the asymptotic error exponent in hypothesis testing is bounded by
    \begin{equation}
        \Big|\lim_{n \to \infty} \frac{1}{n} \log[  \beta_\varepsilon(\rho_1^{\otimes n} || \sigma^{\otimes n})] - \lim_{n \to \infty} \frac{1}{n} \log[  \beta_\varepsilon(\rho_2^{\otimes n} || \sigma^{\otimes n})] \Big| \le (2p-1) \log \widetilde{m}_\sigma^{-1} + 2p \, h \Big(\frac{2p-1}{2p}\Big) \, ,
    \end{equation}
    with $\widetilde{m}_\sigma^{-1}$ the minimal non-zero eigenvalue of $\sigma$.
\end{corollary}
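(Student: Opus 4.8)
The plan is to combine quantum Stein's lemma with the continuity bound for the relative entropy in its first argument, \cref{cor:cor_uniform_continuity_relative_entropy_first_argument}, and the Helstrom formula \cref{eq:succ-prop-helstrom}. First I would use quantum Stein's lemma to rewrite each of the two asymptotic error exponents: since
\begin{equation}
    \lim_{n \to \infty} \frac{1}{n} \log[\beta_\varepsilon(\rho_i^{\otimes n} \| \sigma^{\otimes n})] = - D(\rho_i \| \sigma)
\end{equation}
for $i = 1, 2$, the quantity on the left-hand side is exactly $|D(\rho_1 \| \sigma) - D(\rho_2 \| \sigma)|$. Here I tacitly use $\ker \sigma \subseteq \ker \rho_i$, so that both relative entropies are finite and Stein's lemma applies; otherwise the statement is vacuous.

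Next I would feed this difference into \cref{cor:cor_uniform_continuity_relative_entropy_first_argument}, applied with the trace-distance parameter $\tfrac{1}{2}\norm{\rho_1 - \rho_2}_1$. This yields
\begin{equation}
    |D(\rho_1 \| \sigma) - D(\rho_2 \| \sigma)| \le \tfrac{1}{2}\norm{\rho_1 - \rho_2}_1 \log \widetilde{m}_\sigma^{-1} + \Big(1 + \tfrac{1}{2}\norm{\rho_1 - \rho_2}_1\Big) h\Big(\frac{\tfrac{1}{2}\norm{\rho_1 - \rho_2}_1}{1 + \tfrac{1}{2}\norm{\rho_1 - \rho_2}_1}\Big) \, .
\end{equation}
Invoking the Helstrom identity \cref{eq:succ-prop-helstrom} in the form $\tfrac{1}{2}\norm{\rho_1 - \rho_2}_1 = 2 p_{\mathrm{succ}} - 1$, so that $1 + \tfrac{1}{2}\norm{\rho_1 - \rho_2}_1 = 2 p_{\mathrm{succ}}$, turns this into $(2 p_{\mathrm{succ}} - 1) \log \widetilde{m}_\sigma^{-1} + 2 p_{\mathrm{succ}}\, h\big(\frac{2 p_{\mathrm{succ}} - 1}{2 p_{\mathrm{succ}}}\big)$, which is the asserted bound with $p_{\mathrm{succ}}$ in place of $p$.

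The only step that needs more than routine manipulation is passing from $p_{\mathrm{succ}}$ to the upper bound $p$, which requires the right-hand side above to be non-decreasing in $p_{\mathrm{succ}}$ (equivalently, in the trace distance), and I expect this to be the main point to check. It follows because the first summand is linear in the trace distance with non-negative slope $\log \widetilde{m}_\sigma^{-1} \ge 0$, while the second summand, writing $x = \tfrac{1}{2}\norm{\rho_1 - \rho_2}_1$, equals $(1 + x)\log(1 + x) - x \log x$, whose derivative $\log(1 + x^{-1})$ is positive on $(0, 1]$. Alternatively, this monotonicity is already built into the ALAFF construction, since the supremum defining the continuity bound over $\tfrac{1}{2}\norm{\,\cdot\,}_1 \le \varepsilon'$ is non-decreasing in $\varepsilon'$. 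Combining this monotonicity with $p_{\mathrm{succ}} \le p$ gives the claimed estimate.
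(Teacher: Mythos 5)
Your proof is correct and follows essentially the same route as the paper: identify the error exponents with $-D(\rho_i\Vert\sigma)$ via Stein's lemma and apply \cref{cor:cor_uniform_continuity_relative_entropy_first_argument} with $1+\varepsilon = 2p_{\mathrm{succ}}$ from the Helstrom formula. Your explicit verification that the bound is non-decreasing in the trace distance (needed to replace $p_{\mathrm{succ}}$ by the upper bound $p$) is a detail the paper leaves implicit, and it checks out.
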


\subsection{Free energy}\label{sec:free_energy}

In \cref{sec:hyp-test}, we already saw one interpretation of our results in terms of hypothesis testing. This section gives another interpretation using the language of quantum thermodynamics. 

A ubiquitous quantity in quantum thermodynamics is free energy. To define it, we need to fix a Hamiltonian $H \in \mathcal B(\mathcal H)$, $H = H^\ast$, and some inverse temperature $\beta > 0$. The Gibbs state of this system, describing a quantum system in thermal equilibrium, is 
\begin{equation}
    \rho_\beta(H) = \frac{e^{-\beta H}}{\mathrm{tr}[e^{-\beta H}]} \, .
\end{equation}
Now, we can define the free energy as
\begin{equation}
    F(\rho) = \mathrm{tr}[H\rho] - \beta^{-1} S(\rho) \, .
\end{equation}
It can be related to the relative entropy as
\begin{equation} \label{eq:rel-ent-free-energy}
    D(\rho||\rho_\beta(H)) = \beta (F(\rho) - F(\rho_\beta(H))) \, ,
\end{equation}
which can easily be verified by direct computation.

Inspired by quantum information theory, in particular entanglement theory, during the last years various descriptions of quantum thermodynamics as a resource theory have emerged. Resource theories are described in terms of free states and free operations. In quantum thermodynamics, the free state is $\rho_\beta(H)$, whereas the choices of free operations can differ. Possible choices include the thermal operations (TO), their closure (CTO), and the Gibbs preserving covariant operations (GPC). Instead of giving a formal definition here, we refer the reader to \cite[Section II.C]{gour2022role}. In entanglement theory, we are interested in the distillation of EPR pairs from other states, possibly taking many copies. In the same spirit, in quantum thermodynamics, the corresponding task is the distillation of athermality. The asymptotic distillable athermality is quantified by the free energy difference in \cref{eq:rel-ent-free-energy} \cite{brandao2013resource, gour2022role}. Theorem V.1 of \cite{gour2022role} states that for the asymptotic distillation rate of athermality
\begin{equation}
    \mathrm{Distill}_{\mathfrak F}(\rho, \rho_\beta(H)) = D(\rho||\rho_\beta(H)) \, , 
\end{equation}
where $\mathfrak F \in \{\mathrm{TO}, \mathrm{CTO}, \mathrm{GPC}\}$. Again, we refer the reader to \cite{gour2022role} for the formal definitions. Thus, we can interpret \cref{cor:cor_uniform_continuity_relative_entropy_first_argument} as quantifying the continuity of distillable athermality. 

\begin{corollary}
    Let $H$ be a fixed Hamiltonian with maximal eigenvalue $\lambda_{\mathrm{max}}$, minimal eigenvalue $\lambda_{\mathrm{min}}$, and $\beta >0$ an inverse temperature. Then, for $\rho_1$, $\rho_2 \in \mathcal S(\mathcal H)$ such that $\frac{1}{2} \|\rho_1-\rho_2\|_1 \leq \epsilon \leq 1$, it holds that
    \begin{align}
        |\mathrm{Distill}_{\mathfrak F}(\rho_1, \rho_\beta(H))- \mathrm{Distill}_{\mathfrak F}(\rho_2, \rho_\beta(H))| &\le \varepsilon \left(\beta \lambda_{\mathrm{max}}+\log(\tr[e^{-\beta H}])\right) + (1 + \varepsilon) h \Big(\frac{\varepsilon}{1 + \varepsilon}\Big) \\
        &\le \varepsilon \beta (\lambda_{\mathrm{max}}-\lambda_{\mathrm{min}})+\varepsilon\log(d) + (1 + \varepsilon) h \Big(\frac{\varepsilon}{1 + \varepsilon}\Big)\, ,
    \end{align}
    where $\mathfrak F \in \{\mathrm{TO}, \mathrm{CTO}, \mathrm{GPC}\}$.
\end{corollary}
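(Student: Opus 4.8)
The plan is to reduce the statement to a pure relative-entropy continuity estimate and then feed it into \cref{cor:cor_uniform_continuity_relative_entropy_first_argument}. First I would invoke the thermodynamic identity $\mathrm{Distill}_{\mathfrak F}(\rho, \rho_\beta(H)) = D(\rho \Vert \rho_\beta(H))$ quoted above from \cite{gour2022role}, which holds uniformly for every $\mathfrak F \in \{\mathrm{TO}, \mathrm{CTO}, \mathrm{GPC}\}$. This turns the left-hand side into $|D(\rho_1 \Vert \rho_\beta(H)) - D(\rho_2 \Vert \rho_\beta(H))|$, i.e.\ a difference of relative entropies sharing the same fixed second argument $\sigma = \rho_\beta(H)$.

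Next I would apply \cref{cor:cor_uniform_continuity_relative_entropy_first_argument} with this fixed $\sigma$. The Gibbs state $\rho_\beta(H) = e^{-\beta H}/\tr[e^{-\beta H}]$ is strictly positive definite, since $e^{-\beta H} > 0$ for any Hermitian $H$; hence $\ker \rho_\beta(H) = \{0\}$ and the hypothesis $\ker \sigma \subseteq \ker \rho_i$ is automatic, so $\rho_1, \rho_2$ lie in the relevant set $\cS_0 = \cS(\cH)$. The corollary then gives, for $\frac{1}{2}\norm{\rho_1 - \rho_2}_1 \le \varepsilon \le 1$,
\begin{equation}
    |D(\rho_1 \Vert \rho_\beta(H)) - D(\rho_2 \Vert \rho_\beta(H))| \le \varepsilon \log \widetilde{m}_{\rho_\beta(H)}^{-1} + (1 + \varepsilon) h\Big(\frac{\varepsilon}{1 + \varepsilon}\Big) \, .
\end{equation}

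It remains to identify the eigenvalue factor. Because $\rho_\beta(H)$ is full rank, its minimal non-zero eigenvalue $\widetilde{m}_{\rho_\beta(H)}$ equals its minimal eigenvalue. Diagonalising $H$ with eigenvalues $\lambda_i$, the eigenvalues of $\rho_\beta(H)$ are $e^{-\beta \lambda_i}/\tr[e^{-\beta H}]$, and since $\beta > 0$ the smallest one is attained at $\lambda_i = \lambda_{\mathrm{max}}$. Thus $\widetilde{m}_{\rho_\beta(H)} = e^{-\beta \lambda_{\mathrm{max}}}/\tr[e^{-\beta H}]$, so that $\log \widetilde{m}_{\rho_\beta(H)}^{-1} = \beta \lambda_{\mathrm{max}} + \log \tr[e^{-\beta H}]$, which substituted into the displayed bound produces the first asserted inequality.

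For the second inequality I would merely bound the partition function: since $\lambda_i \ge \lambda_{\mathrm{min}}$ and $\beta > 0$, each term satisfies $e^{-\beta \lambda_i} \le e^{-\beta \lambda_{\mathrm{min}}}$, whence $\tr[e^{-\beta H}] \le d\, e^{-\beta \lambda_{\mathrm{min}}}$ and $\log \tr[e^{-\beta H}] \le \log d - \beta \lambda_{\mathrm{min}}$. Inserting this into the coefficient of $\varepsilon$ turns $\beta \lambda_{\mathrm{max}} + \log \tr[e^{-\beta H}]$ into $\beta(\lambda_{\mathrm{max}} - \lambda_{\mathrm{min}}) + \log d$, yielding the final bound. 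I do not anticipate any genuine obstacle: the entire content is the reduction via \cite{gour2022role} together with the already-established first-argument continuity bound, and the only point requiring a moment's care is observing that the Gibbs state is full rank, so that $\widetilde{m}_{\rho_\beta(H)}$ coincides with the true minimal eigenvalue and the kernel condition of \cref{cor:cor_uniform_continuity_relative_entropy_first_argument} is vacuously satisfied.
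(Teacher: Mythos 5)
Your proposal is correct and follows exactly the route the paper intends: the identity $\mathrm{Distill}_{\mathfrak F}(\rho,\rho_\beta(H)) = D(\rho\Vert\rho_\beta(H))$ from \cite{gour2022role} reduces the claim to \cref{cor:cor_uniform_continuity_relative_entropy_first_argument} with $\sigma = \rho_\beta(H)$, and your identification $\widetilde{m}_{\rho_\beta(H)} = e^{-\beta\lambda_{\mathrm{max}}}/\tr[e^{-\beta H}]$ together with the partition-function bound $\tr[e^{-\beta H}] \le d\,e^{-\beta\lambda_{\mathrm{min}}}$ yields both displayed inequalities. No gaps.
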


\subsection{Approximate Quantum Markov Chains}\label{sec:approxQMC}

In this section, we consider a tripartite Hilbert space $\mathcal{H}_{ABC} = \cH_A \otimes \cH_B \otimes \cH_C$ and  $\rho_{ABC} \in \cS_{+}(\cH_{ABC})$. We further consider the conditional mutual information of $\rho_{ABC}$ between $A$ and $C$ conditioned on $B$. The well-known property of strong subadditivity of the von Neumann entropy \cite{LiebRuskai-Subadditivity-1973} is equivalent to the non-negativity of the conditional mutual information, which is furthermore known \cite{Petz-MonotonicityRelativeEntropy-2003,HaydenJozsaPetzWinter-StrongSubadditivity-2004} to vanish if, and only if, 
\begin{equation}
    \rho_{ABC} = \rho_{AB}^{1/2} \rho_B^{-1/2} \rho_{BC}  \rho_B^{-1/2}  \rho_{AB}^{1/2} \, ,
\end{equation}
i.e., whenever $\rho_{ABC}$ is a quantum Markov chain. In particular,  if we denote $\mathcal{P}_{B \rightarrow AB}(\rho_{BC}) = \rho_{AB}^{1/2} \rho_B^{-1/2} \rho_{BC}  \rho_B^{-1/2}  \rho_{AB}^{1/2} $, we have
\begin{equation}
    I_{\mathcal{P}_{B \rightarrow AB}(\rho_{BC}) }(A:C |B) = 0 \, .
\end{equation}
Moreover, by the decomposition of the CMI of $\rho_{ABC}$ in terms of a difference of conditional entropies, as well as the data processing inequality, we have
\begin{equation}\label{eq:decomposition_CMI_conditional_entropies}
    I_\rho(A:C|B) = H_\rho(C|B) - H_\rho(C|AB) \leq H_{\mathcal{P}_{B \rightarrow AB}(\rho_{BC})} (C|AB) - H_\rho(C|AB) \, .
\end{equation}
Here we w.l.o.g. assumed that $d_A \le d_C$ using the symmetry of the CMI in $A$ and $C$. Therefore, we can apply our continuity bound for the CE from \cref{cor:continuity_bound_conditional_entropy} (which provides, in this case, a tighter result than \cref{cor:continuity_bound_CMI}), cf. also \cite{Winter-AlickiFannes-2016}, to obtain an upper bound on the CMI of $\rho_{ABC}$ in terms of how far it is from being recovered with the Petz recovery map, i.e., in terms of 
\begin{equation}
    \norm{\rho_{ABC} - \rho_{AB}^{1/2} \rho_B^{-1/2} \rho_{BC}  \rho_B^{-1/2}  \rho_{AB}^{1/2}}_1 \, .
\end{equation}
A similar direction was previously explored in \cite[Eq. (26)]{SutterRenner-ApproximateRecoverability-2018}. Note that, as a direct consequence of \cref{cor:continuity_bound_conditional_entropy}, we get the following bound for any state $\rho_{ABC}\in \cS(\cH_{ABC})$:
\begin{equation}
    I_\rho(A:C | B) \le 2 \varepsilon \log\min\{ d_A, d_C\} + (1 + \varepsilon) h \Big(\frac{\varepsilon}{1 + \varepsilon}\Big) \, ,
\end{equation}
with 
\begin{equation}
    \varepsilon:= \frac{1}{2}  \norm{\rho_{ABC} - \rho_{AB}^{1/2} \rho_B^{-1/2} \rho_{BC}  \rho_B^{-1/2}  \rho_{AB}^{1/2}}_1 \, .
\end{equation}
Moreover, we can use the following inequality
\begin{equation}
    (1+x) h \left(\frac{x}{1+x} \right) \leq \sqrt{2 x} \, ,
\end{equation}
for every $x \in [0,1]$, as well as the fact that, since $\varepsilon \in [0,1]$, then $\varepsilon \leq \sqrt{\varepsilon}$, to upper bound the CMI of $\rho_{ABC}$ by 
\begin{equation}\label{eq:bound_CMI}
     I_\rho(A:C | B) \leq   \left(\sqrt{2} \log\min\{d_A, d_C\} + 1  \right)  \norm{\rho_{ABC} - \rho_{AB}^{1/2} \rho_B^{-1/2} \rho_{BC}  \rho_B^{-1/2}  \rho_{AB}^{1/2}}_1^{1/2} \, .
\end{equation}
This bound should be compared to lower bounds for the conditional mutual information. On the one hand, Fawzi and Renner proved in \cite{FawziRenner-ConditionalMutualInformation-2015} the following lower bound for such a quantity in terms of the fidelity $F(\rho , \sigma) = \norm{\sqrt{\rho}\sqrt{\sigma}}_1^2$:
\begin{equation}
      I_\rho(A:C | B) \geq - \log F(\rho_{ABC}, \mathcal{R}_{B \rightarrow AB}(\rho_{BC}) ) \, ,
\end{equation}
where $\mathcal{R}_{B \rightarrow AB}$ is another recovery map, the so-called \textit{rotated Petz recovery map}, which was explicitly constructed in \cite{JungeRennerSutterWildeWinter-UniversalRecovery-2018}. Several results have been provided in this line in the past decade. Here we specifically focus on \cite{CarlenVershynina-Stability-DPI-RE-2017}, in which Carlen and Vershynina proved:
\begin{equation}\label{eq:carlen_vershynina_lower_bound_CMI}
    I_\rho(A:C | B) \geq \left( \frac{\pi}{8} \right)^4\norm{\rho_{B}^{-1}}_\infty^{-2} \norm{\rho_{ABC}^{-1}}_\infty^{-2}  \norm{\rho_{ABC} - \rho_{AB}^{1/2} \rho_B^{-1/2} \rho_{BC}  \rho_B^{-1/2}  \rho_{AB}^{1/2}}_1^4 \, ,
\end{equation}
Therefore, by combining \cref{eq:bound_CMI} with \cref{eq:carlen_vershynina_lower_bound_CMI} we obtain the following ``sandwich'' for the conditional mutual information of a tripartite density matrix $\rho_{ABC}$ in terms of its trace distance to its Petz recovery map:
\begin{equation}
    \begin{aligned}
        &  \left( \frac{\pi}{8} \right)^4 \norm{\rho_{B}^{-1}}_\infty^{-2} \norm{\rho_{ABC}^{-1}}_\infty^{-2}  \norm{\rho_{ABC} - \rho_{AB}^{1/2} \rho_B^{-1/2} \rho_{BC}  \rho_B^{-1/2}  \rho_{AB}^{1/2}}_1^4\\ 
        & \hspace{5cm}\leq I_\rho(A:C | B) \\
        & \hspace{5cm} \leq  2 \left(\log\min\{d_A, d_C\} + 1  \right)  \norm{\rho_{ABC} - \rho_{AB}^{1/2} \rho_B^{-1/2} \rho_{BC}  \rho_B^{-1/2}  \rho_{AB}^{1/2}}_1^{1/2}\, .
    \end{aligned}
\end{equation}
In particular, this implies that a state $\rho_{ABC} \in \cS(\cH_{A} \otimes \cH_B \otimes \cH_C)$ is an \textit{approximate quantum Markov chain} \cite{Sutter-ApproximateQMC-2018} (i.e. $ I_\rho(A:C | B) < \epsilon$) if, and only if, it is close to its reconstructed state under the Petz recovery map. This idea was used in \cite{Kato2019} to prove that a Gibbs state of a one-dimensional local Hamiltonian is an approximate quantum Markov chain, and subsequently, in \cite{hanson2020EEBmarkovian} to provide an estimate on the time it takes for a Markovian evolution of a density matrix to become an approximate quantum Markov chain. Moreover, a similar inequality has recently been employed in \cite{SvetlichnyyKennedy-DecayCMI-2022} to study the decay of the CMI for purely generated finitely correlated states.

\subsection{Difference between relative entropy and BS-entropy}\label{sec:difference_entropies}

It is well-known that the BS-entropy is an upper bound on the Umegaki relative entropy \cite{Hiai2017, matsumoto2010reverse, OhyaPetz-Entropy-1993}, i.e., that
\begin{equation}
    D(\rho||\sigma) \leq \widehat D(\rho||\sigma) \, ,
\end{equation}
and they coincide if and only if $\rho$ and $\sigma$ commute (see, e.g., \cite{hiai1991proper} and \cite[Proposition 4.7]{Hiai2017}). In this section, our aim is to quantify how large the difference between the two divergences can become. We start with two upper bounds on $\widehat D(\rho||\sigma)$ in terms of $D(\rho||\sigma)$.

\begin{proposition}\label{prop:additive_ineq_entropies}
Consider two positive definite states $\rho, \sigma \in \mathcal{S}_+(\mathcal{H})$. Then, the following inequality holds:
\begin{equation}\label{eq:additive_ineq_entropies}
\widehat{D} (\rho \| \sigma ) \leq D(\rho \| \sigma) + m^{-1} \norm{\rho - \sigma}_\infty \, ,
\end{equation}
where $m$ is the minimal eigenvalue of $\sigma$.
\end{proposition}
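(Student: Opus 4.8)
The plan is to bound the nonnegative gap $G := \widehat{D}(\rho\|\sigma) - D(\rho\|\sigma)$ directly, using the fact that both divergences admit single integral representations that are linear in the Hermitian difference $\Delta := \rho - \sigma$, and then applying a Hölder-type trace estimate to extract the factor $\norm{\rho-\sigma}_\infty$. The starting point is \cref{lem:alternative_representation_bs_entropy} together with the operator identity $\log(\rho^{1/2}\sigma^{-1}\rho^{1/2}) = -\log(\rho^{-1/2}\sigma\rho^{-1/2})$ (valid since $\rho,\sigma$ are full rank), giving $\widehat{D}(\rho\|\sigma) = -\tr[\rho\log(\rho^{-1/2}\sigma\rho^{-1/2})]$. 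Feeding this, and the ordinary splitting of $D$, into the integral formula $\log A - \log B = \int_0^\infty (A+t)^{-1}(A-B)(B+t)^{-1}\,dt$, and simplifying with $(\rho^{-1/2}\sigma\rho^{-1/2}+t)^{-1} = \rho^{1/2}(\sigma+t\rho)^{-1}\rho^{1/2}$ and cyclicity of the trace, I would establish
\begin{align}
    \widehat{D}(\rho\|\sigma) &= \int_0^\infty \frac{1}{1+t}\,\tr[\rho(\sigma+t\rho)^{-1}(\rho-\sigma)]\,dt \, , \\
    D(\rho\|\sigma) &= \int_0^\infty \tr[\rho(\rho+t)^{-1}(\rho-\sigma)(\sigma+t)^{-1}]\,dt \, .
\end{align}
Both formulas can be checked against the commuting case, where they collapse to the classical $\sum_i r_i(\log r_i - \log s_i)$ and hence coincide.

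Subtracting and using cyclicity to isolate the common factor $\Delta = \rho-\sigma$, the gap becomes $G = \int_0^\infty \tr[(\rho-\sigma)\,N(t)]\,dt$ with the explicit operator $N(t) = \tfrac{1}{1+t}\rho(\sigma+t\rho)^{-1} - (\sigma+t)^{-1}\rho(\rho+t)^{-1}$. Applying the trace duality bound $|\tr[(\rho-\sigma)N(t)]| \le \norm{\rho-\sigma}_\infty\,\norm{N(t)}_1$ then reduces the whole statement to the scalar estimate $\int_0^\infty \norm{N(t)}_1\,dt \le m^{-1}$, which would yield exactly $G \le m^{-1}\norm{\rho-\sigma}_\infty$ as claimed. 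The factor $m^{-1}$ is meant to enter through the resolvent bounds: since $\sigma \ge mI$ we have $\sigma + t\rho \ge \sigma \ge mI$ and $(\sigma+t\rho)^{-1} \le (\sigma+t)^{-1} \le \sigma^{-1} \le m^{-1}I$, which is the only place the minimal eigenvalue should be used.

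The main obstacle is precisely this last estimate. A termwise trace-norm bound on the two pieces of $N(t)$ is too lossy: each piece is individually nonzero even when $\rho$ and $\sigma$ commute (where $G=0$), so the cancellation encoded in the difference $N(t)$ must be retained rather than discarded by the triangle inequality. I would therefore rewrite $N(t)$ using the resolvent identity $(\rho+t)^{-1}-(\sigma+t)^{-1} = (\sigma+t)^{-1}(\sigma-\rho)(\rho+t)^{-1}$ and the substitutions $\rho(\sigma+t\rho)^{-1} = \tfrac1t\big(I - \sigma(\sigma+t\rho)^{-1}\big)$ and $\rho(\rho+t)^{-1} = I - t(\rho+t)^{-1}$ to expose the genuinely second-order (in $\Delta$) structure and cancel the spurious contributions, controlling the $t\to 0$ and $t\to\infty$ behaviour so that the integral converges and is bounded by the single resolvent estimate $\norm{(\sigma+t\rho)^{-1}}_\infty \le m^{-1}$ times a scalar weight that integrates to one. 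Throughout, the positivity $\widehat{D}\ge D$ (the known ordering of the two divergences) can be used to fix the sign and confirm that the Hölder step is not wasteful. If the direct trace-norm route proves recalcitrant, an alternative is to write $G = \int_0^1 (1-\theta)\big(\phi''(\theta)-\psi''(\theta)\big)\,d\theta$ along the segment $\rho_\theta = \sigma + \theta\Delta$, where one verifies $\phi(0)=\psi(0)=0$, $\phi'(0)=\psi'(0)=0$, and $\phi''=\psi''$ in the commuting case; the task then reduces to bounding the non-commutative defect $\phi''(\theta)-\psi''(\theta) = \tr[\sigma\rho_\theta^{-1}\Delta\sigma^{-1}\Delta] - \int_0^\infty \tr[(\rho_\theta+s)^{-1}\Delta(\rho_\theta+s)^{-1}\Delta]\,ds$, for which the same operator-monotonicity tools and the bound $\sigma^{-1}\le m^{-1}I$ are the natural inputs.
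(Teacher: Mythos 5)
Your two integral representations are correct, and the reduction of the gap to $G=\int_0^\infty\tr[(\rho-\sigma)N(t)]\,dt$ is a legitimate starting point. The proof nevertheless has a genuine gap at exactly the place you flag as "the main obstacle": the decisive estimate is never established, and worse, the reduction you propose leads to a false intermediate claim. Applying H\"older pointwise in $t$ reduces the proposition to $\int_0^\infty\norm{N(t)}_1\,dt\le m^{-1}$, but this inequality fails in general. Take $\rho=\sigma=\identity/d$, so $m=1/d$; then
\begin{equation}
N(t)=\Big(\tfrac{1}{(1+t)^2}-\tfrac{1/d}{(1/d+t)^2}\Big)\identity,\qquad \int_0^\infty\norm{N(t)}_1\,dt = 2d\,\frac{\sqrt d-1}{\sqrt d+1},
\end{equation}
which exceeds $m^{-1}=d$ for all $d>9$. (The final bound is of course trivially true here since $\rho-\sigma=0$; the point is that the scalar inequality you reduce to is not.) The cancellation you correctly identify as essential is already destroyed by the step $|\tr[(\rho-\sigma)N(t)]|\le\norm{\rho-\sigma}_\infty\norm{N(t)}_1$ applied at each fixed $t$, not only by a subsequent termwise triangle inequality on the two pieces of $N(t)$. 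The promised rewriting "to expose the second-order structure" and the alternative second-derivative route are programmatic: no formula for $\phi''$ is verified beyond the commuting case, and no bound on the non-commutative defect is actually derived. As it stands, the argument does not prove the proposition.

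For contrast, the paper's proof avoids integral representations entirely and is much shorter: it writes $\widehat D(\rho\|\sigma)-D(\rho\|\sigma)=-D\big(\rho\,\big\Vert\,\exp\{\log\sigma+\log(\rho^{1/2}\sigma^{-1}\rho^{1/2})\}\big)+\log\tr[\exp\{\cdots\}]$, uses non-negativity of the relative entropy and the Golden--Thompson inequality to get $\le\log\tr[\sigma\rho^{1/2}\sigma^{-1}\rho^{1/2}]$, then $\log(1+x)\le x$, the resolvent identity $\sigma^{-1}-\rho^{-1}=\rho^{-1}(\rho-\sigma)\sigma^{-1}$, and H\"older to land on $\norm{\sigma^{-1}}_\infty\norm{\rho-\sigma}_\infty$. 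If you want to pursue your route, you must either keep the trace intact and exploit the sign structure across $t$ (e.g.\ by integrating in $t$ before estimating), or switch to a genuinely second-order expansion in $\Delta$ with a rigorously derived kernel; neither is done here.
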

\begin{proof}
We can upper bound the difference between the entropies by
\begin{align}
    \widehat{D} (\rho \| \sigma ) - D(\rho \| \sigma) & = \tr[\rho \Big( \log (\rho^{1/2} \sigma^{-1} \rho^{1/2} ) - \log \rho + \log \sigma  \Big)] \\
    & = - D \left( \rho \Big\| \text{exp} \left\{ \log \sigma +  \log (\rho^{1/2} \sigma^{-1} \rho^{1/2} ) \right\} \right) \\
    & \leq \log \tr[ \text{exp} \left\{ \log \sigma +  \log (\rho^{1/2} \sigma^{-1} \rho^{1/2} ) \right\} ] \\
    & \leq \log \tr[ \sigma \rho^{1/2} \sigma^{-1} \rho^{1/2} ] \, ,
\end{align}
where we have used the non-negativity for the relative entropy of density matrices and Golden-Thompson inequality \cite{Golden-GTinequality-1965,Thompson-GTInequality-1965}. Next, we can write 
\begin{equation}
    \tr[ \sigma \rho^{1/2} \sigma^{-1} \rho^{1/2} ] = \tr[ \sigma \rho^{1/2} ( \sigma^{-1} -\rho^{-1} )\rho^{1/2} ] + 1 \, .
\end{equation}
Therefore, using $\log (x+1) \leq x$, we have 
\begin{equation}
     \widehat{D} (\rho \| \sigma ) - D(\rho \| \sigma) \leq \tr[ \sigma \rho^{1/2} ( \sigma^{-1} -\rho^{-1} )\rho^{1/2} ] \, .
\end{equation}
Now, we can use the following expression for invertible matrices $X$ and $Y$: 
\begin{equation}
    X^{-1} - Y^{-1} = Y^{-1} (Y-X) X^{-1} 
\end{equation}
Then,
\begin{align}
    \tr[ \sigma \rho^{1/2} ( \sigma^{-1} -\rho^{-1} )\rho^{1/2} ] & = \tr[ \sigma \rho^{-1/2} (  \rho - \sigma ) \sigma^{-1} \rho^{1/2} ] \\
    &\leq  \|\rho^{-1/2} (  \rho - \sigma ) \sigma^{-1} \rho^{1/2} \|_\infty\\
    & \leq \norm{\sigma^{-1}}_\infty \norm{\rho - \sigma}_\infty \, , 
\end{align}
by \cite[Proposition IX.1.1]{Bhatia1997} and Hölder's inequality. 
\end{proof}

The previous proposition provides a general upper bound for the distance between both entropies in terms of the spectral norm and the minimal eigenvalue of the second input. This is valid for any pair of states but does not yield any further information on specific pairs with better conditions. Alternatively, we can prove the following bound, from which it is obvious that $D(\rho||\sigma) = \widehat D(\rho||\sigma)$ if $\rho$ and $\sigma$ commute.  

\begin{proposition}\label{prop:additive_ineq_entropies2}
    Consider two positive definite states $\rho, \sigma \in \mathcal{S}_+(\mathcal{H})$. Then, the following inequality holds:
    \begin{equation}\label{eq:additive_ineq_entropies2}
        \widehat{D} (\rho \| \sigma ) \leq D(\rho \| \sigma) + f([\rho^{1/2}, \sigma^{-1/2}])  \, 
    \end{equation}
    where the last term is given by
    \begin{equation}
        f([\rho^{1/2}, \sigma^{-1/2}]) := \norm{\left[ \rho^{1/2} , \sigma^{-1/2} \right]}_\infty^2 + 2 \norm{\left[ \rho^{1/2} , \sigma^{-1/2} \right]}_\infty \, .
    \end{equation}
    In particular, whenever $\rho$ and $\sigma$ commute, $f$ vanishes.
\end{proposition}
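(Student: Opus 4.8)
The plan is to reuse the opening of \cref{prop:additive_ineq_entropies} verbatim: combining the non-negativity of the relative entropy with the Golden--Thompson inequality gives
\begin{equation}
    \widehat{D}(\rho \Vert \sigma) - D(\rho \Vert \sigma) \le \log \tr[\sigma \rho^{1/2} \sigma^{-1} \rho^{1/2}] \, .
\end{equation}
Writing $T := \tr[\sigma \rho^{1/2} \sigma^{-1} \rho^{1/2}]$ and invoking $\log(1+x) \le x$ for every $x > -1$ (here $T > 0$ since $\rho, \sigma \in \cS_+(\cH)$), the problem reduces to the purely algebraic estimate $T - 1 \le \norm{C}_\infty^2 + 2\norm{C}_\infty$, where $C := [\rho^{1/2}, \sigma^{-1/2}]$. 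Note $C$ is anti-Hermitian, so $C^\dagger = -C$, and $C = 0$ precisely when $\rho$ and $\sigma$ commute, which already accounts for the vanishing of $f$ in that case.

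The key step I would take is to expand $T$ around the commuting case. Using cyclicity of the trace (and, e.g.\ working in the eigenbasis of $\sigma$) one checks the identity $T = \norm{\sigma^{1/2} \rho^{1/2} \sigma^{-1/2}}_2^2$, where $\norm{\,\cdot\,}_2$ is the Hilbert--Schmidt norm. The elementary rewriting $\sigma^{1/2} \rho^{1/2} \sigma^{-1/2} = \rho^{1/2} + \sigma^{1/2} C$ then lets me expand the square; since $\rho^{1/2}$ is Hermitian and $\tr[\rho] = 1$, this produces
\begin{equation}
    T = 1 + 2\,\mathrm{Re}\,\tr[\rho^{1/2} \sigma^{1/2} C] + \tr[C^\dagger \sigma C] \, .
\end{equation}
This isolates a linear and a quadratic correction in $C$, both of which must now be controlled in operator norm.

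The main obstacle is precisely getting the \emph{operator} norm on the right-hand side: bounding the quadratic term by $\norm{\sigma^{1/2} C}_2^2 \le \norm{C}_2^2$ would yield the Hilbert--Schmidt norm and is too weak. Instead I would use cyclicity to write $\tr[C^\dagger \sigma C] = \tr[\sigma C C^\dagger]$ and exploit $C C^\dagger \le \norm{C}_\infty^2 \identity$ together with $\tr[\sigma] = 1$ to obtain $\tr[\sigma C C^\dagger] \le \norm{C}_\infty^2$. For the linear term, Hölder's inequality combined with the Cauchy--Schwarz bound $\norm{\rho^{1/2} \sigma^{1/2}}_1 \le \norm{\rho^{1/2}}_2 \norm{\sigma^{1/2}}_2 = 1$ gives $|2\,\mathrm{Re}\,\tr[\rho^{1/2} \sigma^{1/2} C]| \le 2\norm{C}_\infty$. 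Adding the two estimates yields $T - 1 \le \norm{C}_\infty^2 + 2\norm{C}_\infty = f([\rho^{1/2}, \sigma^{-1/2}])$, which together with the first paragraph closes the argument.
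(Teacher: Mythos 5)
Your argument is correct, and it shares the paper's skeleton — non-negativity of the relative entropy plus Golden--Thompson to get $\widehat{D}(\rho\Vert\sigma)-D(\rho\Vert\sigma)\le \log T$ with $T=\tr[\sigma\rho^{1/2}\sigma^{-1}\rho^{1/2}]$, then $\log(1+x)\le x$ to reduce everything to bounding $T-1$ — but it finishes differently. The paper rewrites $T-1=\tr[\eta-\rho]$ with $\eta=\sigma^{1/2}\rho^{1/2}\sigma^{-1}\rho^{1/2}\sigma^{1/2}$, passes to $\norm{\eta-\rho}_1$, and imports the bound $\norm{\eta-\rho}_1\le\norm{C}_\infty^2+2\norm{C}_\infty$ from Remark 2.2 of \cite{BluhmCapelPerezHernandez-WeakQFBSentropy-2021}. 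You instead note $T=\norm{\sigma^{1/2}\rho^{1/2}\sigma^{-1/2}}_2^2$ and expand around the commuting case via $\sigma^{1/2}\rho^{1/2}\sigma^{-1/2}=\rho^{1/2}+\sigma^{1/2}C$, which gives $T-1=2\,\mathrm{Re}\tr[\rho^{1/2}\sigma^{1/2}C]+\tr[\sigma CC^\dagger]$; the two terms are then controlled by Hölder with $\norm{\rho^{1/2}\sigma^{1/2}}_1\le 1$ and by $CC^\dagger\le\norm{C}_\infty^2\identity$ with $\tr[\sigma]=1$. All of these steps check out (in particular $T=\tr[AA^\dagger]$ for $A=\sigma^{1/2}\rho^{1/2}\sigma^{-1/2}$ is immediate from cyclicity, and your expansion is exactly a decomposition of $\tr[\eta-\rho]$ since $\eta=AA^\dagger$). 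What your route buys is self-containedness: it replaces the external trace-norm estimate by an elementary in-line computation, and it only ever bounds the trace of $\eta-\rho$ rather than its trace norm, which is all that is needed; the final constant $\norm{C}_\infty^2+2\norm{C}_\infty$ is identical.
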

\begin{proof}
    The proof proceeds in the same way as for \cref{prop:additive_ineq_entropies} until
    \begin{equation}
        \widehat{D} (\rho \| \sigma ) - D(\rho \| \sigma) \leq \tr[ \sigma \rho^{1/2} ( \sigma^{-1} -\rho^{-1} )\rho^{1/2} ] \, .
    \end{equation}
    Let us define now
    \begin{equation}
        \eta := \sigma^{1/2}  \rho^{1/2} \sigma^{-1} \rho^{1/2} \sigma^{1/2} \, .
    \end{equation}
    Then, 
    \begin{equation}
        \tr[ \sigma \rho^{1/2} ( \sigma^{-1} -\rho^{-1} )\rho^{1/2} ]  = \tr[\eta - \sigma] \, .
    \end{equation}
    Introducing $\rho$ gives
    \begin{equation}
        \tr[\eta - \sigma]  = \tr[\eta - \sigma + \rho - \rho] = \tr[\eta - \rho] + \tr[\rho - \sigma] = \tr[\eta - \rho] \le \norm{\eta - \rho}_1 \, .
    \end{equation}
    Moreover, as appears in \cite[Remark 2.2]{BluhmCapelPerezHernandez-WeakQFBSentropy-2021}, the right-hand side above can be estimated by 
    \begin{equation}
        \norm{\eta - \rho}_1 \leq \norm{\left[ \rho^{1/2} , \sigma^{-1/2} \right]}_\infty^2 + 2 \norm{\left[ \rho^{1/2} , \sigma^{-1/2} \right]}_\infty \, .
    \end{equation}
    This concludes the proof of the proposition. 
\end{proof}

Finally, we want to compare our previous bounds, proven using inequalities such as Golden-Thompson or Hölder, with those we could obtain by means of our continuity bounds, as the BS-entropy can, in particular, be regarded as a relative entropy. For that, we can also apply the continuity bound we derived in \cref{theo:theo_uniform_continuity_relative_entropy}.

\begin{corollary}
    Let $\rho \in \mathcal S(\mathcal H)$, $\sigma \in \mathcal S_+(\mathcal H)$ and $\tilde m$ such that $d_{\cH}^{-1} > 2\widetilde{m} > 0$ and the minimal eigenvalue of $\sigma$ is lower bounded by $2\widetilde{m}$. Let
    \begin{equation}
    \sigma^{-\frac{1}{2}} \rho \sigma^{-\frac{1}{2}} = \sum_{i=1}^k \lambda_i P_i
    \end{equation}
    be the spectral decomposition with eigenvalues $\lambda_i$ and projections $P_i$. Define density matrices
    \begin{equation}
    p = \sum_{i=1}^k\lambda_i \mathrm{tr}[\sigma P_i] \frac{P_i}{\mathrm{tr}[P_i]}, \qquad q = \sum_{i=1}^k \mathrm{tr}[\sigma P_i]\frac{P_i}{\mathrm{tr}[P_i]}.
    \end{equation}
    Then, for $\frac{1}{2}\norm{\rho - p} \le \varepsilon\le 1$ and $\frac{1}{2}\norm{\sigma - q}_1 \le \delta \le 1$, it holds that
    \begin{equation}
        |\widehat D(\rho \Vert \sigma) - D(\rho \Vert \sigma)| \le \Big(\varepsilon + \frac{\delta}{l_{\widetilde{m}}}\Big) \log\widetilde{m}^{-1} + (1 + \varepsilon)h\Big(\frac{\varepsilon}{1 + \varepsilon}\Big) + 2\frac{l_{\widetilde{m}} + \delta}{l_{\widetilde{m}}} f_{\widetilde{m}^{-1}, \widetilde{m}^{-1}}\Big(\frac{\delta}{l_{\widetilde{m}} + \delta} \Big) \, , \label{eq:BS-vs-Umegaki}
    \end{equation}
    with $l_{\widetilde{m}} = 1 - \widetilde{m}$. In particular, if $[\rho,\sigma] =0$, $\varepsilon$ and $\delta$ can be taken as $0$ such that the RHS of \cref{eq:BS-vs-Umegaki} is zero.
    
    Moreover, we can further simplify the previous bound to 
    \begin{equation}
        |\widehat D(\rho \Vert \sigma) - D(\rho \Vert \sigma)| \le (\sqrt{2} - \log \widetilde{m}) \sqrt{\varepsilon} + 3 \frac{\log\widetilde m^{-1}}{1 - \widetilde m} \delta  + 2 \log\left(1 + \frac{\delta}{1 - \widetilde m + \delta} \frac{1}{\widetilde m}\right)\, . \label{eq:BS-vs-Umegaki_2}
    \end{equation}
\end{corollary}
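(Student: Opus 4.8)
The plan is to recognize the BS-entropy $\widehat D(\rho\Vert\sigma)$ as an \emph{ordinary} Umegaki relative entropy between the two commuting states $p$ and $q$, and then to feed the pairs $(p,q)$ and $(\rho,\sigma)$ into the continuity bound of \cref{theo:theo_uniform_continuity_relative_entropy}. The construction of $p$ and $q$ in the statement is engineered precisely so that this identification works.

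First I would establish the identity $\widehat D(\rho\Vert\sigma)=D(p\Vert q)$. Writing $T:=\sigma^{-1/2}\rho\sigma^{-1/2}=\sum_i\lambda_i P_i$, \cref{lem:alternative_representation_bs_entropy} gives $\widehat D(\rho\Vert\sigma)=\tr[\sigma\,T\log T]=\sum_i\lambda_i\log(\lambda_i)\,\tr[\sigma P_i]$, with the convention $0\log 0=0$, which is harmless here since $\sigma$ is invertible. On the other hand $p$ and $q$ are simultaneously diagonal in the $P_i$, with eigenvalues $a_i:=\lambda_i\,\tr[\sigma P_i]/\tr[P_i]$ and $b_i:=\tr[\sigma P_i]/\tr[P_i]$ on the range of $P_i$; since $a_i/b_i=\lambda_i$, a direct computation $D(p\Vert q)=\sum_i a_i\log(a_i/b_i)\tr[P_i]=\sum_i\lambda_i\log(\lambda_i)\,\tr[\sigma P_i]$ reproduces the same expression, so the two quantities coincide. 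Along the way I would record that $p$ and $q$ are genuine states, namely $\tr p=\tr[\sigma T]=\tr\rho=1$ and $\tr q=\tr\sigma=1$.

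Second, I would verify that both pairs lie in the admissible set $\cS_0$ of \cref{theo:theo_uniform_continuity_relative_entropy}. For $(\rho,\sigma)$ this is immediate: $\sigma\in\cS_+(\cH)$ forces $\ker\sigma=\{0\}\subseteq\ker\rho$, and $\widetilde m_\sigma\ge 2\widetilde m$ by hypothesis. For $(p,q)$ the crucial point is that each eigenvalue of $q$ obeys $\tr[\sigma P_i]/\tr[P_i]\ge m_\sigma\ge 2\widetilde m>0$, so $q$ is positive definite (whence $\ker q=\{0\}\subseteq\ker p$, regardless of the rank of $p$) and $\widetilde m_q\ge 2\widetilde m$. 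Thus $(p,q),(\rho,\sigma)\in\cS_0$, and since $\tfrac12\norm{\rho-p}_1\le\varepsilon$ and $\tfrac12\norm{\sigma-q}_1\le\delta$, applying \cref{theo:theo_uniform_continuity_relative_entropy} with $(\rho_1,\sigma_1)=(p,q)$ and $(\rho_2,\sigma_2)=(\rho,\sigma)$ yields exactly the first bound \eqref{eq:BS-vs-Umegaki} for $|D(p\Vert q)-D(\rho\Vert\sigma)|=|\widehat D(\rho\Vert\sigma)-D(\rho\Vert\sigma)|$. The simplified estimate \eqref{eq:BS-vs-Umegaki_2} is then nothing but the second inequality of \cref{theo:theo_uniform_continuity_relative_entropy}, obtained from \cref{lemma:nice_bounds} together with $\varepsilon\le\sqrt\varepsilon$ on $[0,1]$ and $-\log\widetilde m=\log\widetilde m^{-1}$.

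Finally, for the commuting case $[\rho,\sigma]=0$, it is classical that $\widehat D(\rho\Vert\sigma)=D(\rho\Vert\sigma)$, so the left-hand side of \eqref{eq:BS-vs-Umegaki} vanishes; consistently, when the spectrum of $T$ is non-degenerate the construction returns $p=\rho$ and $q=\sigma$, so that $\varepsilon=\delta=0$ is admissible and the right-hand side is likewise zero. The only genuine work is the first step: carefully computing both $\widehat D(\rho\Vert\sigma)$ and $D(p\Vert q)$ in the $\{P_i\}$ eigenbasis and confirming they agree while tracking possibly vanishing $\lambda_i$ when $\rho$ is not full rank. Everything after that is a bookkeeping application of the already-proven continuity theorem, so I expect the identity $\widehat D(\rho\Vert\sigma)=D(p\Vert q)$, rather than any estimate, to be the conceptual crux.
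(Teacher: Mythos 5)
Your proposal is correct and follows essentially the same route as the paper: the identity $\widehat D(\rho\Vert\sigma)=D(p\Vert q)$ via Matsumoto's minimal reverse test, the observation that the eigenvalues of $q$ are bounded below by the minimal eigenvalue of $\sigma$ so that \cref{theo:theo_uniform_continuity_relative_entropy} applies, and \cref{lemma:nice_bounds} for the simplified bound. No substantive differences.
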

\begin{proof}
    Our argument is a slight variation of Matsumoto's minimal reverse test \cite{matsumoto2010reverse} (see also \cite{Hiai2017}). We can write the BS-entropy as the relative entropy of two commuting density matrices
    \begin{equation}
        \widehat D(\rho \Vert \sigma) = D(p \Vert q),
    \end{equation}
    since we can verify with $p_i = \lambda_i \mathrm{tr}[\sigma P_i]$, $q_i = \mathrm{tr}[\sigma P_i]$ that 
    \begin{align}
        D(p \Vert q) &= \sum_{i = 1}^k \tr[\frac{P_i}{\tr[P_i]} p_i\left(\log \frac{p_i}{\tr[P_i]} - \log \frac{q_i}{\tr[P_i]}\right)] \\
        &= \sum_{i = 1}^k p_i\left(\log p_i - \log q_i\right) \\
        &= \sum_{i = 1}^k \lambda_i \mathrm{tr}[\sigma P_i] \log \lambda_i \\
        &= \tr[\sigma \sigma^{-\frac{1}{2}} \rho \sigma^{-\frac{1}{2}} \log(\sigma^{-\frac{1}{2}} \rho \sigma^{-\frac{1}{2}})] \\
        &= \tr[ \rho \log(\rho^{\frac{1}{2}} \sigma^{-1} \rho^{\frac{1}{2}})].
    \end{align}
    Obviously, if $m$ is the minimal eigenvalue of $\sigma$, then $\frac{q_i}{\tr[P_i]} \geq m$ for all $i \in \{1, \ldots, k\}$. Thus, the assertion follows from \cref{theo:theo_uniform_continuity_relative_entropy}. Moreover, it is clear that if $[\rho, \sigma] = 0$ there is a unitary $U$ which diagonalizes $\rho$ and $\sigma$ simultaneously such that $\rho = p$ and $\sigma = q$.
    
    Finally, the last simplification from \cref{eq:BS-vs-Umegaki_2} is a direct consequence of \cref{theo:theo_uniform_continuity_relative_entropy} and \cref{lemma:nice_bounds}.
\end{proof}

\subsection{Weak quasi-factorization of the relative entropy}\label{sec:appl-qf-entropies}

Results of quasi-factorization for a divergence allow us to split such a divergence in a bipartite space in terms of the sum of two ``conditional'' divergences on subsystems and a multiplicative error term that is related to the correlations between both subsystems on the second input of the divergences. A weak version of such a result presents instead an additive error term.

More specifically, it was proven in \cite{capel2018quantum} that, given a bipartite space $\cH_{AB} = \cH_A \otimes \cH_B$  and $\rho_{AB}, \sigma_{AB} \in \cS (\cH_{AB})$, the following inequality holds:
\begin{equation}\label{eq:qf-relative-entropy}
    D(\rho_{AB} \| \sigma_{AB} ) \leq \frac{1}{1-2 \norm{h(\sigma_{AB})}_\infty} \left[ D_A (\rho_{AB} \| \sigma_{AB}) + D_B (\rho_{AB} \| \sigma_{AB})  \right] \, ,
\end{equation}
with 
\begin{equation}
    h(\sigma_{AB}) := \sigma_A^{-1/2} \otimes  \sigma_B^{-1/2}  \sigma_{AB} \sigma_A^{-1/2} \otimes  \sigma_B^{-1/2}  - \identity_{AB} \, ,
\end{equation}
and 
\begin{equation}
    D_X (\rho_{AB} \| \sigma_{AB}) := D (\rho_{AB} \| \sigma_{AB}) - D (\rho_{X^c} \| \sigma_{X^c})  \, , \quad \text{ for }X = A, B \, ,
\end{equation}
whenever $\norm{h(\sigma_{AB})}_\infty < 1/2$. Note that the term $\norm{h(\sigma_{AB})}_\infty $ provides a measure of how far $\sigma_{AB}$ is from being a tensor product between $A$ and $B$. This result, and subsequent extensions with additional conditions on $\sigma_{AB}$, are expected to find applications on various tasks in quantum information theory, and in particular, have proven to be essential for some recent proofs of positivity of \textit{modified logarithmic Sobolev inequalities} (MLSIs) for quantum Markov semigroups modelling thermal dissipative evolutions on quantum spin systems \cite{bardet2021entropy,bardet2021rapid, bardet2019modified,capel2020MLSI}. It is important to remark that \cref{eq:qf-relative-entropy} is equivalent to a generalization of the property of superadditivity of the relative entropy, as shown in  \cite{capel2017superadditivity}.

In \cite{BluhmCapelPerezHernandez-WeakQFBSentropy-2021}, some authors of the current manuscript tried to extend the previous result for the Umegaki relative entropy to the BS-entropy framework. However, we showed that the BS-entropy cannot satisfy a property of superadditivity, which makes it impossible to obtain a quasi-factorization for the BS-entropy in the spirit of  \cref{eq:qf-relative-entropy} without an additive error term. Instead, we proved a result of \textit{weak quasi-factorization}, from which we recovered \cref{eq:qf-relative-entropy}  if the marginals of $\rho_{AB}$ and $\sigma_{AB}$ commute. Here, we can prove another result along these lines as a consequence of our continuity bound for the relative entropy. Indeed, as a consequence of \cref{theo:theo_uniform_continuity_relative_entropy}, we obtain the following result of quasi-factorization for the relative entropy with an additive error term.

\begin{corollary}[Weak quasi-factorization for the relative entropy]\label{cor:quasi-factorization-relative-entropy}~\\
    Given $\rho_{AB}, \sigma_{AB} \in \cS(\cH_A \otimes \cH_B)$ such that $\text{ker}(\sigma_X) \subset \text{ker}(\rho_X) $ for $X=A,B,AB$, we have:
    \begin{equation}
         D(\rho_{AB} \| \sigma_{AB})  \leq  D_A(\rho_{AB} \| \sigma_{AB})  +  D_B(\rho_{AB} \| \sigma_{AB}) + \xi_{RE}(\rho_{AB} , \sigma_{AB})  \, ,
    \end{equation}
    with
    \begin{equation}
        \xi_{RE}(\rho_{AB} , \sigma_{AB}) := \left(\sqrt{2} - \log \widetilde{m}\right) \sqrt{\varepsilon} +  3 \frac{\log \widetilde m^{-1}}{l_{\widetilde m}}\delta + 2 \log\left(1 + \frac{\delta}{l_{\widetilde m} + \delta} \frac{1}{\widetilde m} \right) \, ,
    \end{equation}
    where  $\widetilde{m} = \frac{1}{2}\min \left\{ \norm{\sigma_A^{-1} \otimes \sigma_B^{-1}}_\infty^{-1} , \norm{\sigma_{AB}^{-1} }_\infty^{-1}  \right\}$, $\varepsilon = \varepsilon(\rho_{AB}) = \frac{1}{2}\norm{\rho_{AB} - \rho_A \otimes \rho_B}_1$ and $\delta = \delta(\sigma_{AB}) = \frac{1}{2}\norm{\sigma_{AB} - \sigma_A \otimes \sigma_B}_1$.
\end{corollary}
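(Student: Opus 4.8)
The plan is to observe that, once the conditional divergences $D_A$ and $D_B$ are unfolded, the claimed bound reduces to a single application of the general two-argument continuity estimate \cref{theo:theo_uniform_continuity_relative_entropy}, comparing $D(\rho_A \otimes \rho_B \Vert \sigma_A \otimes \sigma_B)$ with $D(\rho_{AB} \Vert \sigma_{AB})$. First I would rewrite the right-hand side: since $A^c = B$ and $B^c = A$,
\begin{equation}
    D_A(\rho_{AB} \Vert \sigma_{AB}) + D_B(\rho_{AB} \Vert \sigma_{AB}) = 2 D(\rho_{AB} \Vert \sigma_{AB}) - D(\rho_A \Vert \sigma_A) - D(\rho_B \Vert \sigma_B) \, ,
\end{equation}
so the assertion is equivalent to $D(\rho_A \Vert \sigma_A) + D(\rho_B \Vert \sigma_B) - D(\rho_{AB} \Vert \sigma_{AB}) \le \xi_{RE}(\rho_{AB}, \sigma_{AB})$. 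Using additivity of the relative entropy under tensor products, $D(\rho_A \Vert \sigma_A) + D(\rho_B \Vert \sigma_B) = D(\rho_A \otimes \rho_B \Vert \sigma_A \otimes \sigma_B)$, the left-hand side becomes precisely the difference of two relative entropies of the type handled by the theorem.

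Next I would apply \cref{theo:theo_uniform_continuity_relative_entropy} with $(\rho_1, \sigma_1) = (\rho_A \otimes \rho_B, \sigma_A \otimes \sigma_B)$ and $(\rho_2, \sigma_2) = (\rho_{AB}, \sigma_{AB})$, for which $\frac{1}{2}\norm{\rho_1 - \rho_2}_1 = \varepsilon$ and $\frac{1}{2}\norm{\sigma_1 - \sigma_2}_1 = \delta$ are exactly the quantities defined in the statement. Before invoking it, I must verify that both pairs lie in the admissible set $\cS_0$ for the single parameter $\widetilde m$ chosen in the corollary. The kernel inclusions hold: $\ker \sigma_{AB} \subseteq \ker \rho_{AB}$ is assumed, and $\ker(\sigma_A \otimes \sigma_B) \subseteq \ker(\rho_A \otimes \rho_B)$ follows from $\ker \sigma_A \subseteq \ker \rho_A$ and $\ker \sigma_B \subseteq \ker \rho_B$ together with the tensor-product structure of the kernel. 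For the eigenvalue requirement, the minimal non-zero eigenvalues satisfy $\widetilde m_{\sigma_{AB}} = \norm{\sigma_{AB}^{-1}}_\infty^{-1}$ and $\widetilde m_{\sigma_A \otimes \sigma_B} = \norm{\sigma_A^{-1} \otimes \sigma_B^{-1}}_\infty^{-1}$, so the choice $\widetilde m = \frac{1}{2}\min\{\norm{\sigma_A^{-1} \otimes \sigma_B^{-1}}_\infty^{-1}, \norm{\sigma_{AB}^{-1}}_\infty^{-1}\}$ guarantees $2\widetilde m \le \widetilde m_\sigma$ for both $\sigma = \sigma_{AB}$ and $\sigma = \sigma_A \otimes \sigma_B$, placing both pairs in $\cS_0$.

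Finally, I would invoke only the direction $D(\rho_1 \Vert \sigma_1) - D(\rho_2 \Vert \sigma_2) \le |D(\rho_1 \Vert \sigma_1) - D(\rho_2 \Vert \sigma_2)|$ of the theorem's estimate and note that the simplified second line of \cref{eq:eq_continuity_bound_relative_entropy}, after writing $-\log \widetilde m = \log \widetilde m^{-1}$, coincides with $\xi_{RE}(\rho_{AB}, \sigma_{AB})$; this yields the claim, and the commuting case $[\rho,\sigma]=0$ forces $\varepsilon = \delta = 0$ and hence a vanishing remainder. I expect the only real obstacle to be the bookkeeping for $\cS_0$: confirming the tensor-product kernel inclusion and that a single $\widetilde m$, taken as the above minimum, simultaneously meets the minimal-eigenvalue hypothesis for both pairs (and that $1 > 2\widetilde m$, so the theorem is applicable). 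All the analytic difficulty is absorbed into \cref{theo:theo_uniform_continuity_relative_entropy}.
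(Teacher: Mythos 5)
Your argument is correct and is essentially identical to the paper's own proof: rewrite $D - D_A - D_B = D(\rho_A\otimes\rho_B\|\sigma_A\otimes\sigma_B) - D(\rho_{AB}\|\sigma_{AB})$ via the definition of the conditional divergences and additivity, then apply \cref{theo:theo_uniform_continuity_relative_entropy} with the stated $\widetilde m$, $\varepsilon$, $\delta$ and simplify with \cref{lemma:nice_bounds}. The only flaw is your closing aside that $[\rho,\sigma]=0$ forces $\varepsilon=\delta=0$ — commutation does not make $\rho_{AB}$ a product state, so that remark is false, but it is not part of the claim and does not affect the proof.
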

\begin{proof}
    The difference between the relative and the two conditional entropies can be written as
    \begin{equation}
        \begin{aligned}
             D(\rho_{AB} \| \sigma_{AB}) & -  D_A(\rho_{AB} \| \sigma_{AB})  -  D_B(\rho_{AB} \| \sigma_{AB}) = - D(\rho_{AB} \| \sigma_{AB}) + D(\rho_{A} \otimes \rho_B  \| \sigma_{A} \otimes \sigma_B ) \, .
        \end{aligned}
    \end{equation}
    Therefore, we can apply \cref{theo:theo_uniform_continuity_relative_entropy} to obtain a continuity bound for the difference between the last two relative entropies, obtaining 
    \begin{equation}
        \begin{aligned}
            &  |D(\rho_{AB} \Vert \sigma_{AB}) - D(\rho_{A} \otimes \rho_B  \| \sigma_{A} \otimes \sigma_B)| \\
            & \hspace{2cm} \le \Big(\varepsilon + \frac{\delta}{l_{\widetilde{m}}}\Big) \log(\widetilde{m}^{-1}) + (1 + \varepsilon)h\Big(\frac{\varepsilon}{1 + \varepsilon}\Big) + 2\frac{l_{\widetilde{m}} + \delta}{l_{\widetilde{m}}} f_{\widetilde{m}^{-1}, \widetilde{m}^{-1}}\Big(\frac{\delta}{l_{\widetilde{m}} + \delta} \Big) \, , 
        \end{aligned}
    \end{equation}
    with
    \begin{equation}
        \varepsilon := \frac{1}{2} \norm{\rho_{AB} - \rho_{A} \otimes \rho_B}_1 \; , \quad  \delta := \frac{1}{2} \norm{\sigma_{AB} - \sigma_{A} \otimes \sigma_B}_1 \, , 
    \end{equation}
    and $l_{\widetilde{m}} = 1 - \widetilde{m}$, for $\widetilde{m} = \frac{1}{2}\text{min} \left\{ \norm{\sigma_A^{-1} \otimes \sigma_B^{-1}}_\infty^{-1} , \norm{\sigma_{AB}^{-1} }_\infty^{-1}  \right\} $. Moreover, we can apply the simplification of \cref{theo:theo_uniform_continuity_relative_entropy} using \cref{lemma:nice_bounds}. We then have
    \begin{equation}
        \begin{aligned}
            |D(\rho_{AB} \Vert \sigma_{AB}) - D(\rho_{A} \otimes \rho_B  \| \sigma_{A} \otimes \sigma_B)| \leq \left(\sqrt{2} - \log\widetilde{m}\right) \sqrt{\varepsilon} +  3 \frac{\log \widetilde m^{-1}}{l_{\widetilde m}}\delta + 2 \log\left(1 + \frac{\delta}{l_{\widetilde m} + \delta}\frac{1}{\widetilde m}\right) \, ,
        \end{aligned}
    \end{equation}
    concluding thus the proof.
\end{proof}

Note that, even though there is a caveat in this result in the form of an additive error term, which prevents it from being useful to prove the positivity of MLSIs, it presents the advantage with respect to \cref{eq:qf-relative-entropy} that there is no multiplicative error term in this case, which might be of more interest for some other contexts, such as for entropy accumulation \cite{metger2022entropyaccumulation} or in the line of the applications given by the Brascamp-Lieb dualities \cite{berta2019brascamplieb}. 

\subsection{Minimal distance to separable states} \label{sec:min-dist-sep-states}

In this section, we show how to reprove the continuity bounds for the relative entropy of entanglement in \cite{Winter-AlickiFannes-2016} from the ALAFF method and how this strategy generalizes if we quantify the minimal distance to the set of separable states in terms of the BS-entropy instead.

Let $\mathcal C \subset \mathcal S(\mathcal H)$ be a compact convex subset of the set of quantum states with at least one positive definite state. We can define the minimal distance to $\mathcal C$ in terms of the relative entropy as
\begin{equation}
    D_{\mathcal C}(\rho):= \inf_{\gamma \in \mathcal C} D(\rho \Vert \gamma).
\end{equation}
As explained in \cite{Winter-AlickiFannes-2016}, the fact that $\mathcal C$ contains a positive definite state guarantees that $D_{\mathcal C}(\rho) < \infty$ for all $\rho \in \mathcal S(\mathcal H)$. Moreover, the infimum is attained, as follows from the fact that the relative entropy is lower semi-continuous \cite{OhyaPetz-Entropy-1993} and Weierstrass’ theorem on extreme values of such functions \cite[Theorem 2.43]{aliprantisborder}. Examples of $\mathcal C$ include $\mathrm{SEP}_{AB}$, the set of separable states for systems $A$, $B$, and
\begin{equation}
    \{d_A^{-1} \mathds{1}_A \otimes \sigma_B: \sigma_B \in \mathcal S(\mathcal H_B)\},
\end{equation}
which yields $D_{\mathcal C}(\rho_{AB}) = - H_\rho(A|B) + \log{d_A}$. The quantity $D_{\mathrm{SEP}_{AB}}$ is known as the relative entropy of entanglement \cite{VedralEntanglement, VedralQuantifying}. It constitutes a tight upper bound on the distillable entanglement \cite{RainsBound, VedralQuantifying}. This is the quantity we focus on for now. 

\begin{lemma} \label{lem:rel-ent-entangle-convex}
    Let $\mathcal C \subset \mathcal S(\mathcal H)$ be a compact convex set containing at least one positive definite state. Then, $D_{\mathcal C}$ is convex on $\mathcal S(\mathcal H)$.
\end{lemma}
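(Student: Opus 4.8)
The plan is to prove convexity of $D_{\mathcal C}$ directly from the joint convexity of the Umegaki relative entropy. The claim is that for $\rho_1, \rho_2 \in \mathcal S(\mathcal H)$ and $p \in [0,1]$, writing $\rho = p \rho_1 + (1-p)\rho_2$, we have $D_{\mathcal C}(\rho) \le p\, D_{\mathcal C}(\rho_1) + (1-p)\, D_{\mathcal C}(\rho_2)$. The key structural fact is that the infimum defining $D_{\mathcal C}$ is actually attained (as noted in the excerpt, via lower semicontinuity of $D(\cdot\Vert\cdot)$ and Weierstrass' theorem), so I can work with minimizers rather than approximating sequences.

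**First I would** fix $\rho_1, \rho_2$ and choose $\gamma_1, \gamma_2 \in \mathcal C$ attaining the respective infima, i.e.\ $D_{\mathcal C}(\rho_i) = D(\rho_i \Vert \gamma_i)$ for $i = 1,2$. Since $\mathcal C$ is convex, the combination $\gamma := p\gamma_1 + (1-p)\gamma_2$ again lies in $\mathcal C$, so it is an admissible competitor in the infimum defining $D_{\mathcal C}(\rho)$. Then by definition of the infimum and the joint convexity of the relative entropy (established in \cite{Lindblad-ConvexityRE-1974}, used throughout this section),
\begin{equation}
    D_{\mathcal C}(\rho) \le D(\rho \Vert \gamma) = D\big(p\rho_1 + (1-p)\rho_2 \,\big\Vert\, p\gamma_1 + (1-p)\gamma_2\big) \le p\, D(\rho_1 \Vert \gamma_1) + (1-p)\, D(\rho_2 \Vert \gamma_2),
\end{equation}
and the right-hand side equals $p\, D_{\mathcal C}(\rho_1) + (1-p)\, D_{\mathcal C}(\rho_2)$ by our choice of optimal $\gamma_i$. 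This establishes the claimed inequality.

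**I expect the main subtlety** to be purely bookkeeping rather than a genuine obstacle: one should confirm that $D_{\mathcal C}(\rho_i)$ is finite so that the chain of inequalities is meaningful, which is exactly the content guaranteed by the assumption that $\mathcal C$ contains a positive definite state (ensuring $\ker \gamma_i \subseteq \ker \rho_i$ is automatic and $D(\rho_i \Vert \gamma_i) < \infty$). If one prefers to avoid invoking attainment of the infimum, the same argument runs with near-optimal competitors: for $\eta > 0$ pick $\gamma_i \in \mathcal C$ with $D(\rho_i \Vert \gamma_i) \le D_{\mathcal C}(\rho_i) + \eta$, carry out the identical estimate, and let $\eta \to 0^+$. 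Either route reduces convexity of $D_{\mathcal C}$ to the convexity of the admissible set together with joint convexity of the underlying divergence, so no hard analysis is required.
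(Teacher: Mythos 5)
Your proof is correct and is exactly the argument the paper intends: the paper's proof of this lemma consists of the single line ``this follows directly from the joint convexity of the relative entropy,'' and your explicit version (pick minimizers $\gamma_1,\gamma_2\in\mathcal C$, form $p\gamma_1+(1-p)\gamma_2\in\mathcal C$ by convexity of $\mathcal C$, apply joint convexity of $D$) is the standard way to unpack that line. The finiteness and attainment remarks are consistent with the discussion preceding the lemma in the paper, so nothing further is needed.
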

\begin{proof}
    This follows directly from the joint convexity of the relative entropy.
\end{proof}

In order to apply the ALAFF method, we need to prove almost concavity next.

\begin{lemma} \label{lem:rel-ent-entangle-al-concave}
    Let $\mathcal C \subset \mathcal S(\mathcal H)$ be a compact convex set containing at least one positive definite state.  Moreover, let $\rho_1, \rho_2 \in \mathcal S(\mathcal H)$ and $p \in [0,1]$. Then,
    \begin{equation}
        D_{\mathcal C}(p\rho_1+(1-p)\rho_2) \geq p D_{\mathcal C}(\rho_1) + (1-p) D_{\mathcal C}(\rho_2) - h(p).
    \end{equation}
\end{lemma}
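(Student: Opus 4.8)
The plan is to reduce the almost concavity of $D_{\mathcal C}$ to the almost concavity of the relative entropy itself (\cref{theo:theo_almost_concavity_relative_entropy}) by evaluating all three relative entropies against a \emph{single, common} second argument: a minimiser $\gamma^\ast$ for the convex combination $\rho = p\rho_1 + (1-p)\rho_2$. The essential point is that fixing the same $\sigma$-argument for both input pairs kills the problematic remainder term $f_{c_1,c_2}$ via point 1 of \cref{prop:prop_almost_concave_estimate_relative_entropy_well_behaved}.

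First I would dispose of the trivial cases $p \in \{0,1\}$, and for $p \in (0,1)$ invoke the compactness of $\mathcal C$ together with lower semi-continuity of $D(\rho \Vert \cdot)$ (already recorded above) so that the infimum defining $D_{\mathcal C}(\rho)$ is attained at some $\gamma^\ast \in \mathcal C$, giving $D_{\mathcal C}(\rho) = D(\rho \Vert \gamma^\ast) < \infty$. The key step is then to apply \cref{theo:theo_almost_concavity_relative_entropy} to the pairs $(\rho_1, \gamma^\ast)$ and $(\rho_2, \gamma^\ast)$, i.e.\ with $\sigma_1 = \sigma_2 = \gamma^\ast$, so that $p\sigma_1 + (1-p)\sigma_2 = \gamma^\ast$. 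By point 1 of \cref{prop:prop_almost_concave_estimate_relative_entropy_well_behaved} this choice forces $c_1 = c_2 = 1$ and hence $f_{c_1,c_2} = f_{1,1} \equiv 0$, collapsing the remainder to $h(p)\tfrac12\norm{\rho_1 - \rho_2}_1 \le h(p)$, where I use $\tfrac12\norm{\rho_1 - \rho_2}_1 \le 1$. This yields
\begin{equation}
    D(\rho \Vert \gamma^\ast) \ge p\, D(\rho_1 \Vert \gamma^\ast) + (1-p)\, D(\rho_2 \Vert \gamma^\ast) - h(p).
\end{equation}
To finish, since $\gamma^\ast \in \mathcal C$ we have $D(\rho_i \Vert \gamma^\ast) \ge D_{\mathcal C}(\rho_i)$ for $i = 1,2$; replacing the two terms on the right-hand side (which carry the nonnegative weights $p$ and $1-p$) by these smaller quantities and using $D_{\mathcal C}(\rho) = D(\rho \Vert \gamma^\ast)$ gives exactly the claimed bound.

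The one point requiring care — and the main, albeit mild, obstacle — is verifying that \cref{theo:theo_almost_concavity_relative_entropy} is actually applicable, i.e.\ that $(\rho_i, \gamma^\ast) \in \cS_{\ker}$ so that the relative entropies $D(\rho_i \Vert \gamma^\ast)$ are finite. This follows because $\gamma^\ast$ attains a finite value, so $\ker \gamma^\ast \subseteq \ker \rho$; and for $p \in (0,1)$ and positive semidefinite $\rho_1, \rho_2$ one has $\ker \rho = \ker(p\rho_1 + (1-p)\rho_2) = \ker \rho_1 \cap \ker \rho_2$ (a positive combination vanishes on a vector iff both summands do), whence $\ker \gamma^\ast \subseteq \ker \rho_i$ for each $i$, as required. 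Everything else is a direct substitution, so I expect no further difficulties.
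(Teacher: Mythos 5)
Your proposal is correct and follows essentially the same route as the paper: take the minimiser $\tau$ attaining $D_{\mathcal C}(p\rho_1+(1-p)\rho_2)$, apply \cref{theo:theo_almost_concavity_relative_entropy} with $\sigma_1=\sigma_2=\tau$ so that point 1 of \cref{prop:prop_almost_concave_estimate_relative_entropy_well_behaved} collapses the remainder to $h(p)$, and then bound $D(\rho_i\Vert\tau)\ge D_{\mathcal C}(\rho_i)$. Your additional verification of the kernel inclusion $\ker\tau\subseteq\ker\rho_i$ is a correct detail the paper leaves implicit.
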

\begin{proof}
    We can use the almost concavity of the relative entropy. Let $\tau$ the state that achieves the infimum in $D_{\mathcal C}(p\rho_1+(1-p)\rho_2)$. By \cref{theo:theo_almost_concavity_relative_entropy} and point 1 of \cref{prop:prop_almost_concave_estimate_relative_entropy_well_behaved}, we obtain that
    \begin{align}
        D_{\mathcal C}(p\rho_1+(1-p)\rho_2) &\geq p D(\rho_1 \Vert \tau) + (1-p) D(\rho_2 \Vert \tau) - h(p)\\
        &\geq p D_{\mathcal C}(\rho_1) + (1-p) D_{\mathcal C}(\rho_2) - h(p)\,,
    \end{align}
    which is the assertion.
\end{proof}

Finally, we need the following estimate:

\begin{lemma}\label{lem:rel-ent-entangle-diff-bound} 
    Let $\mathcal H = \mathcal H_A \otimes \mathcal H_B$. It holds that
    \begin{equation}
           \sup_{\substack{\rho, \sigma \in \mathcal S(\mathcal H)\\ \frac{1}{2}\|\rho- \sigma\|_1 = 1}} |D_{\mathrm{SEP}_{AB}}(\rho) - D_{\mathrm{SEP}_{AB}}(\sigma)| \leq \log \min\{d_A,d_B\}.
    \end{equation}
\end{lemma}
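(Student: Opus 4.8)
The plan is to derive the bound purely from the boundedness of $D_{\mathrm{SEP}_{AB}}$ on all of $\cS(\cH)$. Since the relative entropy of two density matrices is non-negative, we have $D_{\mathrm{SEP}_{AB}}(\rho) \ge 0$ for every state, so for any pair $\rho, \sigma$,
\begin{equation}
    |D_{\mathrm{SEP}_{AB}}(\rho) - D_{\mathrm{SEP}_{AB}}(\sigma)| \le \max\{D_{\mathrm{SEP}_{AB}}(\rho), D_{\mathrm{SEP}_{AB}}(\sigma)\} \, .
\end{equation}
Thus the orthogonality of supports encoded by $\frac{1}{2}\|\rho-\sigma\|_1 = 1$ plays no essential role in the estimate itself (it only fixes the relevant domain for the ALAFF constant), and it suffices to prove the uniform upper bound $D_{\mathrm{SEP}_{AB}}(\rho) \le \log\min\{d_A,d_B\}$ for every $\rho \in \cS(\cH)$.

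To establish this upper bound, I would assume without loss of generality that $d_A \le d_B$ and exhibit a single separable competitor $\gamma$ with $D(\rho\|\gamma)\le\log d_A$. Write $\rho = \sum_k q_k \dyad{\psi_k}$ as a decomposition into pure states (for instance its spectral decomposition), and for each $k$ take the Schmidt decomposition $\ket{\psi_k} = \sum_j \sqrt{\lambda_j^{(k)}}\, \ket{a_j^k}\ket{b_j^k}$, which has at most $\min\{d_A,d_B\}=d_A$ nonzero terms. Setting $\gamma_k := \sum_j \lambda_j^{(k)}\, \dyad{a_j^k}\otimes\dyad{b_j^k}$, the plan is to take
\begin{equation}
    \gamma := \sum_k q_k\, \gamma_k \, ,
\end{equation}
which is manifestly separable as a convex combination of product states. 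A short direct computation, using that $\ket{\psi_k}$ is supported inside the range of $\gamma_k$, gives $D(\dyad{\psi_k}\|\gamma_k) = -\sum_j \lambda_j^{(k)}\log\lambda_j^{(k)}$, and this Schmidt-coefficient entropy is bounded by $\log d_A$ since there are at most $d_A$ nonzero Schmidt coefficients.

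I would then invoke the joint convexity of the relative entropy to pass from the pure components to $\rho$:
\begin{equation}
    D_{\mathrm{SEP}_{AB}}(\rho) \le D(\rho\|\gamma) = D\Big(\sum_k q_k\dyad{\psi_k}\,\Big\|\,\sum_k q_k\gamma_k\Big) \le \sum_k q_k\, D(\dyad{\psi_k}\|\gamma_k) \le \log d_A \, ,
\end{equation}
which is the desired uniform bound. Combined with the first paragraph, this proves the lemma.

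The main obstacle is obtaining the sharp constant $\log\min\{d_A,d_B\}$ rather than a weaker one: the obvious separable competitors, such as the product of marginals $\rho_A\otimes\rho_B$ (which only yields $I_\rho(A:B)\le 2\log\min\{d_A,d_B\}$) or $d_A^{-1}\identity_A\otimes\rho_B$ (which yields $\log d_A - H_\rho(A|B)$ and hence at best $2\log d_A$), both overshoot by a factor of two. The decisive point is therefore to build $\gamma$ from the Schmidt decompositions of the pure parts of $\rho$, so that joint convexity reduces the whole estimate to the pure-state case, where the optimal separable approximation simply dephases in the Schmidt basis and the relative entropy equals the Schmidt-rank entropy, capped at $\log d_A$.
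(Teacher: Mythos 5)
Your argument is correct and follows essentially the same route as the paper's: reduce via positivity to bounding $\sup_\rho D_{\mathrm{SEP}_{AB}}(\rho)$, and then control that supremum using a separable competitor obtained by dephasing in the Schmidt basis. The only (cosmetic) difference is that the paper first invokes convexity of $D_{\mathrm{SEP}_{AB}}$ to reduce to pure states and then uses the \emph{uniform} mixture $\tau_\psi = d_A^{-1}\sum_i \dyad{i_A}\otimes\dyad{i_B}$ on the Schmidt support (giving exactly $\log d_A$), whereas you build one explicit Schmidt-weighted competitor $\gamma=\sum_k q_k\gamma_k$ for the whole state and apply joint convexity of $D$ directly, bounding each term by the Schmidt-coefficient entropy $\le \log d_A$.
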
 
\begin{proof}
    Without loss of generality, let $d_A \leq d_B$. For a pure state $\ket{\psi}$ with Schmidt decomposition $\sum_{i=1}^{d_A} \lambda_i \ket{i_A} \otimes \ket{i_B}$, let 
    \begin{equation}
        \tau_{\psi} = \frac{1}{d_A} \sum_{i = 1}^{d_A} \dyad{i_A} \otimes \dyad{i_B}. 
    \end{equation} 
    This state is manifestly separable. Then,
    \begin{align}
        \sup_{\substack{\rho, \sigma \in \mathcal S(\mathcal H)\\ \frac{1}{2}\|\rho- \sigma\|_1 = 1}} |D_{\mathrm{SEP}_{AB}}(\rho) - D_{\mathrm{SEP}_{AB}}(\sigma)| 
        & \leq \sup_{\dyad{\psi} \in \mathcal S(\mathcal H)} D(\dyad{\psi}\Vert \tau_{\psi}) \\
        &= \log{d_A} \, .
    \end{align}
    In the first inequality, we have used that $D_{\mathrm{SEP}_{AB}}$ is positive and convex.
\end{proof}

This allows us to prove via the ALAFF method a continuity bound for the relative entropy of entanglement:

\begin{theorem} \label{thm:rel-ent-entangle-cont-bound}
    For $\varepsilon \in [0,1]$ and $\mathcal H = \mathcal H_A \otimes \mathcal H_B$, it holds that for $\rho, \sigma \in \cS(\cH)$ with $\frac{1}{2} \norm{\rho - \sigma}_1 \le \varepsilon$
    \begin{equation}
             |D_{\mathrm{SEP}_{AB}}(\rho) - D_{\mathrm{SEP}_{AB}}(\sigma)| \leq \varepsilon \log \min\{d_A,d_B\} + (1+\varepsilon) h \left(\frac{\varepsilon}{1+\varepsilon} \right) \, .
    \end{equation}
\end{theorem}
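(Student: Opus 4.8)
The plan is to recognise this statement as a clean application of the ALAFF method (\cref{theo:theo_alaff_method}) to the function $f(\cdot) = D_{\mathrm{SEP}_{AB}}(\cdot)$ on the full state space $\cS_0 = \cS(\cH)$, following exactly the template of \cref{cor:continuity_bound_conditional_entropy}. First I would record that $\cS_0 = \cS(\cH)$ is $0$-perturbed $\Delta$-invariant: for any two distinct states $\rho,\sigma$ the normalised positive and negative parts $\varepsilon^{-1}[\rho-\sigma]_\pm$ (with $\varepsilon = \frac12\norm{\rho-\sigma}_1$) are again density matrices, so \cref{def:def_perturbed_delta_invariant_subset} is satisfied with $s=0$. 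I would also note that $f$ is real-valued and finite on all of $\cS(\cH)$ precisely because $\mathcal C = \mathrm{SEP}_{AB}$ contains a positive definite state, as discussed before \cref{lem:rel-ent-entangle-convex}.

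The substantive inputs are the three preliminary lemmas, which I would assemble into the ALAFF hypotheses of \cref{def:ALAFF-function}. The convexity of $D_{\mathrm{SEP}_{AB}}$ from \cref{lem:rel-ent-entangle-convex} yields the upper inequality in \cref{eq:eq_alaff_property} with $b_f = 0$, since $f(p\rho+(1-p)\sigma) - pf(\rho) - (1-p)f(\sigma) \le 0$. The almost concavity from \cref{lem:rel-ent-entangle-al-concave} yields the lower inequality with $a_f = h$. As $h$ is continuous, vanishes as $p\to 0^+$, and is non-decreasing on $[0,\tfrac12]$, the function $f$ is indeed ALAFF, and I set
\begin{equation}
    E_f = a_f + b_f = h \, .
\end{equation}

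Next I would control the single remaining constant. Since $s=0$, the relevant quantity is $C^\perp_f = \sup\{|f(\rho)-f(\sigma)| : \tr[\rho\sigma]=0\}$, which \cref{lem:rel-ent-entangle-diff-bound} bounds by $\log\min\{d_A,d_B\}$; in particular it is finite, so the uniform-continuity hypothesis of \cref{theo:theo_alaff_method} holds. Applying that theorem in the simplified $s=0$ form of \cref{rem:rem_s=0_alaff_method} then gives, for $\frac12\norm{\rho-\sigma}_1 \le \varepsilon \le 1$,
\begin{equation}
    |D_{\mathrm{SEP}_{AB}}(\rho) - D_{\mathrm{SEP}_{AB}}(\sigma)| \le C^\perp_f\,\varepsilon + (1+\varepsilon)\,E_f\Big(\tfrac{\varepsilon}{1+\varepsilon}\Big) \le \varepsilon \log\min\{d_A,d_B\} + (1+\varepsilon)h\Big(\tfrac{\varepsilon}{1+\varepsilon}\Big) \, ,
\end{equation}
which is the claimed bound.

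Because the three lemmas do all the analytic work, I do not expect a genuine obstacle in the theorem itself; the only thing to watch carefully is that each lemma is matched to the correct ingredient of the ALAFF machinery (convexity $\Rightarrow b_f=0$, almost concavity $\Rightarrow a_f=h$, orthogonal-support bound $\Rightarrow C^\perp_f$) and that the hypotheses of \cref{theo:theo_alaff_method} — $s$-perturbed $\Delta$-invariance, ALAFF regularity of the error functions, and finiteness of $C^s_f$ — are all explicitly verified before invoking it. The real difficulty of the argument has been pushed into \cref{lem:rel-ent-entangle-al-concave}, where the state $\tau$ achieving the infimum for the convex combination must be used simultaneously as a (sub-optimal) competitor for each of $\rho_1$ and $\rho_2$; that step is what lets the state-dependent remainder of \cref{theo:theo_almost_concavity_relative_entropy} collapse to the state-independent $h(p)$ via point~1 of \cref{prop:prop_almost_concave_estimate_relative_entropy_well_behaved}.
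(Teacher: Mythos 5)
Your proposal is correct and follows essentially the same route as the paper, which proves the theorem by invoking \cref{rem:rem_s=0_alaff_method} together with \cref{lem:rel-ent-entangle-convex} (giving $b_f=0$), \cref{lem:rel-ent-entangle-al-concave} (giving $a_f=h$), point 4 of \cref{prop:prop_almost_concave_estimate_relative_entropy_well_behaved} (so that $E_f^{\max}=E_f=h$), and \cref{lem:rel-ent-entangle-diff-bound} (bounding $C^\perp_f$ by $\log\min\{d_A,d_B\}$). Your assembly of the ingredients, including the verification that $\cS(\cH)$ is $0$-perturbed $\Delta$-invariant and the observation that the analytic weight lies in the almost-concavity lemma, matches the paper's argument.
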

\begin{proof}
    This follows from \cref{rem:rem_s=0_alaff_method}, using \cref{lem:rel-ent-entangle-convex}, \cref{lem:rel-ent-entangle-al-concave}, point 4 of \cref{prop:prop_almost_concave_estimate_relative_entropy_well_behaved}, and \cref{lem:rel-ent-entangle-diff-bound}.
\end{proof}

\cref{thm:rel-ent-entangle-cont-bound} recovers the bound \cite[Corollary 8]{Winter-AlickiFannes-2016}, proven with very similar methods, which improved over the earlier bound in \cite{DonaldHorodecki1999continuityREentanglement}. The interest of executing the proof here is that a similar strategy will give us bounds on a BS-entropy version of the relative entropy of entanglement, as we will show now. We define
\begin{equation}
    \widehat D_{\mathcal C}(\rho) = \inf_{\gamma \in \mathcal C} \widehat D(\rho \Vert \gamma)\, ,
\end{equation}
which measures how far $\rho$ is from $\mathcal C$ in terms of the BS-entropy. The infimum is attained as the BS-entropy is also lower semi-continuous \cite[Section 10]{Matsumoto2018}. Convexity follows again from the joint-convexity of the BS-entropy.

\begin{lemma} \label{lem:BS-ent-entangle-convex}
    Let $\mathcal C \subset \mathcal S(\mathcal H)$ be a compact convex set containing at least one positive definite state. Then, $\widehat D_{\mathcal C}$ is convex on $\mathcal S(\mathcal H)$.
\end{lemma}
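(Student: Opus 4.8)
The plan is to mirror verbatim the proof of \cref{lem:rel-ent-entangle-convex}, replacing the joint convexity of the Umegaki relative entropy by the joint convexity of the BS-entropy, which is available from \cite[Theorem 4.4]{matsumoto2010reverse} and \cite[Corollary 4.7]{HiaiMosonyi_2011}. Concretely, I fix $\rho_1, \rho_2 \in \mathcal S(\mathcal H)$ and $p \in [0,1]$, set $\rho = p\rho_1 + (1-p)\rho_2$, and aim to establish $\widehat D_{\mathcal C}(\rho) \le p\,\widehat D_{\mathcal C}(\rho_1) + (1-p)\,\widehat D_{\mathcal C}(\rho_2)$.

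First I would select minimizers $\gamma_1, \gamma_2 \in \mathcal C$ for $\rho_1$ and $\rho_2$, respectively; these exist because, as noted just above the statement, the BS-entropy is lower semi-continuous \cite[Section 10]{Matsumoto2018} and $\mathcal C$ is compact, so the defining infima are attained (alternatively, one may take near-minimizers and take infima at the end, which avoids appealing to attainment altogether). Since $\mathcal C$ is convex, the convex combination $\gamma := p\gamma_1 + (1-p)\gamma_2$ again lies in $\mathcal C$ and is therefore an admissible competitor in the infimum defining $\widehat D_{\mathcal C}(\rho)$. The key step is then the chain
\begin{align}
    \widehat D_{\mathcal C}(\rho) &\le \widehat D\big(p\rho_1 + (1-p)\rho_2 \,\Vert\, p\gamma_1 + (1-p)\gamma_2\big) \\
    &\le p\,\widehat D(\rho_1 \Vert \gamma_1) + (1-p)\,\widehat D(\rho_2 \Vert \gamma_2) = p\,\widehat D_{\mathcal C}(\rho_1) + (1-p)\,\widehat D_{\mathcal C}(\rho_2) \, ,
\end{align}
where the first inequality uses $\gamma \in \mathcal C$ as a competitor in the infimum, and the second is precisely the joint convexity of $\widehat D(\cdot\Vert\cdot)$.

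There is essentially no obstacle here: the whole content is packaged into the joint convexity of the BS-entropy, which is the one nontrivial input and is quoted from the literature. The only points requiring minor care are that the competitor $\gamma$ remains in $\mathcal C$ (guaranteed by convexity of $\mathcal C$) and that the infima are finite (guaranteed by $\mathcal C$ containing a positive-definite state, exactly as recalled in the analogous relative-entropy discussion). Thus the proof is a one-line consequence of joint convexity, in complete parallel with the Umegaki case.
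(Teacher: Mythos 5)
Your proof is correct and is exactly the argument the paper intends: the paper dispatches this lemma with the single line ``Convexity follows again from the joint-convexity of the BS-entropy,'' and your write-up simply makes explicit the standard competitor argument (convex combination of minimizers, then joint convexity) behind that remark. Nothing is missing; the attention you pay to attainment of the infima and finiteness via the positive-definite state in $\mathcal C$ matches the paper's surrounding discussion.
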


Almost concavity requires more work in this case.

\begin{lemma} \label{lem:BS-ent-entangle-al-concave}
    Let $\mathcal C \subset \mathcal S(\mathcal H)$ be a compact convex set containing the maximally mixed state. Moreover, let $\rho_1, \rho_2 \in \mathcal S(\mathcal H)$, $p \in [0,1)$, and $d \in \mathbb N$, $d \geq 2$ the dimension of $\mathcal H$. Then,
    \begin{equation}
        \widehat D_{\mathcal C}(p\rho_1+(1-p)\rho_2 ) \geq p \widehat D_{\mathcal C}(\rho_1) + (1-p) \widehat D_{\mathcal C}(\rho_2) - g_d(p).
    \end{equation}
    Here, $g_d(p) := \frac{d}{p^{1/d}}h(p) - \log(1 - p^{1/d})$ for $p \in (0,1)$ and $g_d(0) := 0$.
\end{lemma}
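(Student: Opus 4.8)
The plan is to follow the proof of \cref{lem:rel-ent-entangle-al-concave}, correcting for the fact that the almost-concavity remainder for the BS-entropy (\cref{theo:theo_almost_concavity_bs_entropy} together with point~1 of \cref{prop:prop_almost_concave_estimate_bs_entropy_well_behaved}) carries the factor $\hat c_0 = \norm{\sigma^{-1}}_\infty$, which is uncontrolled when the common second argument $\sigma$ is nearly singular. To tame it, I would replace the optimiser by a state pushed towards the maximally mixed state, paying a rescaling penalty in return.

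Let $\rho := p\rho_1 + (1-p)\rho_2$ and, for $p \in (0,1)$, let $\tau \in \mathcal C$ attain $\widehat D_{\mathcal C}(\rho) = \widehat D(\rho \Vert \tau)$ (the infimum is attained by lower semicontinuity of the BS-entropy and compactness of $\mathcal C$). For a parameter $\lambda \in (0,1)$ fixed later, set
\[
    \tilde\tau := (1-\lambda)\tau + \lambda\, \tfrac{1}{d}\identity ,
\]
which lies in $\mathcal C$ by convexity and the hypothesis $\tfrac{1}{d}\identity \in \mathcal C$, and which satisfies $\tilde\tau \geq \tfrac{\lambda}{d}\identity$, so that $\norm{\tilde\tau^{-1}}_\infty \leq d/\lambda$. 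Applying \cref{theo:theo_almost_concavity_bs_entropy} with $\sigma_1 = \sigma_2 = \tilde\tau$ makes $\hat c_1 = \hat c_2 = 1$ (hence $f_{\hat c_1,\hat c_2} = 0$) by point~1 of \cref{prop:prop_almost_concave_estimate_bs_entropy_well_behaved}, leaving only $\hat c_0 h(p) \leq \tfrac{d}{\lambda}h(p)$; bounding each summand below by the corresponding infimum (using $\tilde\tau \in \mathcal C$) gives
\[
    \widehat D(\rho \Vert \tilde\tau) \geq p\,\widehat D_{\mathcal C}(\rho_1) + (1-p)\,\widehat D_{\mathcal C}(\rho_2) - \tfrac{d}{\lambda}h(p).
\]

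It remains to return from $\tilde\tau$ to $\tau$. Since $\tilde\tau \geq (1-\lambda)\tau$, the monotonicity of the BS-entropy in its second argument and the scaling identity $\widehat D(\rho \Vert c\sigma) = \widehat D(\rho \Vert \sigma) - \log c$ (immediate from the definition) yield $\widehat D(\rho \Vert \tilde\tau) \leq \widehat D(\rho \Vert (1-\lambda)\tau) = \widehat D_{\mathcal C}(\rho) - \log(1-\lambda)$. Combining with the previous display shows that for every $\lambda \in (0,1)$
\[
    \widehat D_{\mathcal C}(\rho) \geq p\,\widehat D_{\mathcal C}(\rho_1) + (1-p)\,\widehat D_{\mathcal C}(\rho_2) - \Big(\tfrac{d}{\lambda}h(p) - \log(1-\lambda)\Big),
\]
and choosing $\lambda = p^{1/d} \in (0,1)$ turns the bracket into exactly $g_d(p)$. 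The case $p = 0$ is trivial since $g_d(0) = 0$.

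The main obstacle is precisely the divergence of $\hat c_0$ in a naive application of almost concavity; the detour through $\tilde\tau$ trades it for the balanced pair of penalties $d/\lambda$ (from the regularised minimal eigenvalue) and $-\log(1-\lambda)$ (from the rescaling), and the choice $\lambda = p^{1/d}$ is what keeps the remainder both state-independent and vanishing as $p \to 0^+$. The requirement that $\mathcal C$ contain the maximally mixed state (rather than an arbitrary positive definite state, as in \cref{lem:rel-ent-entangle-al-concave}) is used here to pin the constant to the dimension $d$. A remaining routine point, needed for the later use in the ALAFF method but not for the bound itself, is to check that $g_d$ satisfies the structural hypotheses of an ALAFF remainder (vanishing at $0$, monotone on $[0,\tfrac12]$).
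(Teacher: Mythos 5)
Your proposal is correct and follows essentially the same route as the paper's proof: both perturb the optimiser towards the maximally mixed state with weight $\lambda$ (the paper writes $s = 1-\lambda$), use point~1 of \cref{prop:prop_almost_concave_estimate_bs_entropy_well_behaved} so that only the $\hat c_0 h(p) \le \frac{d}{\lambda}h(p)$ term survives, trade the perturbation for a $-\log(1-\lambda)$ penalty via operator monotonicity of the logarithm, and choose $\lambda = p^{1/d}$. Your version is in fact marginally streamlined, since you perturb the minimiser $\hat\tau$ of the original problem directly rather than introducing the auxiliary minimiser $\tau_s$ of the perturbed problem as the paper does.
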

\begin{proof}
    In order to apply the almost concavity of the BS-entropy, we need to control the minimal eigenvalue of $\tau$, the best approximation of $\rho=p\rho_1 + (1 - p)\rho_2$ in $\mathcal C$. To this end, we will use a strategy inspired by \cite{DonaldHorodecki1999continuityREentanglement}. Let $\mathcal \tau_s$ be the state achieving the infimum in 
    \begin{equation}
        \inf_{\tau \in \mathcal C} \widehat D \Big(\rho \,  \Big\Vert \, s \tau + (1-s) \frac{\mathds{1}}{d} \Big)
    \end{equation}
    for some $s \in (0,1)$ which we will specify later. Clearly, 
    \begin{equation}
           \widehat D_{\mathcal C}(\rho) \leq  \widehat D \Big(\rho \,  \Big\Vert \, s \tau_s + (1-s) \frac{\mathds{1}}{d} \Big).
    \end{equation}
    Furthermore, with $\hat \tau$ a state such that  $\widehat D_{\mathcal C}(\rho) = \widehat D(\rho \Vert \hat \tau)$,
    \begin{align}
        \widehat D \Big(\rho \, \Big\Vert  \, s \tau_s + (1-s) \frac{\mathds{1}}{d} \Big) &\leq \widehat D \Big(\rho \, \Big\Vert \, s \hat \tau + (1-s) \frac{\mathds{1}}{d} \Big) \\
        & \leq \widehat D_{\mathcal C}(\rho) - \log{s}\, ,
    \end{align}
    as $s\hat \tau + (1-s) \frac{\mathds{1}}{d} \geq s \hat \tau$ and the logarithm is operator monotone. Since $\widehat D_{\mathcal C}(\rho) < \infty$ we have  $\ker \hat \tau \subseteq \ker \rho$, thus, we can restrict $\hat \tau$ to the support of $\rho$, where $\hat \tau$ is positive definite. Combining this bound with  \cref{theo:theo_almost_concavity_bs_entropy}, we infer
    \begin{align}
        \widehat D_{\mathcal C}(p\rho_1 + (1-p) \rho_2) &\geq   \widehat D\Big( p\rho_1 + (1-p) \rho_2 \Big\Vert s \tau_s + (1-s) \frac{\mathds{1}}{d}\Big) + \log{s}\\
        &\geq p \widehat D_{\mathcal C}(\rho_1) + (1-p) \widehat D_{\mathcal C}(\rho_2) - \frac{d}{1-s}h(p) + \log{s}.
    \end{align}
    Here, we have used point 1 of \cref{prop:prop_almost_concave_estimate_bs_entropy_well_behaved}. Finally, we have to choose $s$ such that $\frac{d}{1-s}h(p) - \log{s}$ goes to zero for $p \to 0^+$ and is non-decreasing on $p \in [0,1/2]$. It turns out that $s=1-p^{1/d}$ is a convenient choice, see \cref{lem:gd-cont} and \cref{lem:gd-mono}.
\end{proof}

\begin{remark}
    Note that we could have substituted $g_d$ in \cref{lem:BS-ent-entangle-al-concave} by a symmetrized version
    \begin{equation}
        \tilde g_d(p) := \begin{cases} g_d(p) & p \in [0,1/2] \\ g_d(1-p) & p \in [1/2,1] \end{cases}
    \end{equation}
    in order to obtain
        \begin{equation}
        \widehat D_{\mathcal C}(p\rho_1+(1-p)\rho_2 ) \geq p \widehat D_{\mathcal C}(\rho_1) + (1-p) \widehat D_{\mathcal C}(\rho_2) - \tilde g_d(p) 
    \end{equation}
    for all $p \in [0,1]$ and $\tilde g_d(0) = \tilde g_d(1) = 0$. For the ALAFF method with $s=0$, however, it is only relevant what happens on $[0,1/2]$.
\end{remark}

The final estimate we need in order to apply the ALAFF method is proven in a very similar way as \cref{lem:rel-ent-entangle-diff-bound}.

\begin{lemma}\label{lem:BS-ent-entangle-diff-bound} 
    Let $\mathcal H = \mathcal H_A \otimes \mathcal H_B$. It holds that
    \begin{equation}
        \sup_{\substack{\rho, \sigma \in \mathcal S(\mathcal H)\\ \frac{1}{2}\|\rho- \sigma\|_1 =1}} |\widehat D_{\mathrm{SEP}_{AB}}(\rho) - \widehat D_{\mathrm{SEP}_{AB}}(\sigma)| \leq \log \min\{d_A,d_B\}.
    \end{equation}
\end{lemma}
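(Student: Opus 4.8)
The plan is to mimic the proof of \cref{lem:rel-ent-entangle-diff-bound} almost verbatim, replacing the Umegaki relative entropy by the BS-entropy and exploiting that both divergences collapse to the same value on the particular pairs of states that arise. Without loss of generality I would assume $d_A \le d_B$. First I would use that $\widehat D_{\mathrm{SEP}_{AB}} \ge 0$, which follows since $\widehat D_{\mathrm{SEP}_{AB}}(\rho) = \inf_{\gamma}\widehat D(\rho\Vert\gamma)$ is an infimum of non-negative BS-entropies; consequently, for any $\rho,\sigma$ one has $|\widehat D_{\mathrm{SEP}_{AB}}(\rho) - \widehat D_{\mathrm{SEP}_{AB}}(\sigma)| \le \max\{\widehat D_{\mathrm{SEP}_{AB}}(\rho), \widehat D_{\mathrm{SEP}_{AB}}(\sigma)\} \le \sup_{\rho\in\cS(\cH)}\widehat D_{\mathrm{SEP}_{AB}}(\rho)$, which already discards the constraint $\frac{1}{2}\norm{\rho-\sigma}_1 = 1$. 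Next, invoking the convexity of $\widehat D_{\mathrm{SEP}_{AB}}$ from \cref{lem:BS-ent-entangle-convex} and writing an arbitrary $\rho$ in its spectral decomposition $\rho = \sum_k p_k \dyad{\psi_k}$, convexity gives $\widehat D_{\mathrm{SEP}_{AB}}(\rho) \le \sum_k p_k \widehat D_{\mathrm{SEP}_{AB}}(\dyad{\psi_k}) \le \sup_{\psi}\widehat D_{\mathrm{SEP}_{AB}}(\dyad{\psi})$, so the whole supremum is controlled by its value on pure states.

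For the pure-state estimate, I would take a pure state with Schmidt decomposition $\ket{\psi} = \sum_{i=1}^{d_A}\lambda_i \ket{i_A}\otimes\ket{i_B}$ and use exactly the same manifestly separable approximant $\tau_\psi = d_A^{-1}\sum_{i=1}^{d_A}\dyad{i_A}\otimes\dyad{i_B}$ as in \cref{lem:rel-ent-entangle-diff-bound}, so that $\widehat D_{\mathrm{SEP}_{AB}}(\dyad{\psi}) \le \widehat D(\dyad{\psi}\Vert\tau_\psi)$. The key computation is then to evaluate the right-hand side. Since $\dyad{\psi}^{1/2} = \dyad{\psi}$ and $\ket{\psi}$ lies in the support of $\tau_\psi$ (so the kernel condition $\ker\tau_\psi\subseteq\ker\dyad{\psi}$ holds), one gets $\dyad{\psi}\,\tau_\psi^{-1}\,\dyad{\psi} = \bra{\psi}\tau_\psi^{-1}\ket{\psi}\,\dyad{\psi}$, a rank-one operator, where $\tau_\psi^{-1}$ is the Moore--Penrose pseudoinverse. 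Because $\tau_\psi$ restricted to its support equals $d_A^{-1}$ times the projector, $\tau_\psi^{-1}$ acts as the scalar $d_A$ on $\ket{\psi}$, giving $\bra{\psi}\tau_\psi^{-1}\ket{\psi} = d_A\sum_i|\lambda_i|^2 = d_A$. Hence $\widehat D(\dyad{\psi}\Vert\tau_\psi) = \log\bra{\psi}\tau_\psi^{-1}\ket{\psi} = \log d_A = \log\min\{d_A,d_B\}$, independently of $\ket{\psi}$.

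Combining the two steps yields the claimed bound $\sup |\widehat D_{\mathrm{SEP}_{AB}}(\rho) - \widehat D_{\mathrm{SEP}_{AB}}(\sigma)| \le \log\min\{d_A,d_B\}$. I do not expect a genuine obstacle here: the only point requiring care is the rank-one evaluation of $\widehat D(\dyad{\psi}\Vert\tau_\psi)$, where the proportionality of $\tau_\psi$ to the identity on its support is precisely the feature that makes the BS-entropy collapse to the same value $\log d_A$ that the relative entropy attains in \cref{lem:rel-ent-entangle-diff-bound}.
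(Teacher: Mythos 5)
Your proof is correct and takes essentially the same route as the paper's: reduce via positivity and convexity of $\widehat D_{\mathrm{SEP}_{AB}}$ to bounding $\sup_{\psi}\widehat D(\dyad{\psi}\Vert\tau_\psi)$ with the same separable approximant $\tau_\psi$, which evaluates to $\log d_A$. The only difference is that you spell out the rank-one computation $\widehat D(\dyad{\psi}\Vert\tau_\psi)=\log\bra{\psi}\tau_\psi^{-1}\ket{\psi}=\log d_A$ and the support condition, which the paper states without detail.
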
 
\begin{proof}
    Without loss of generality, let $d_A \leq d_B$. For a pure state $\ket{\psi}$ with Schmidt decomposition $\sum_{i=1}^{d_A} \lambda_i \ket{i_A} \otimes \ket{i_B}$, let again
    \begin{equation}
        \tau_{\psi} = \frac{1}{d_A} \sum_{i = 1}^{d_A} \dyad{i_A} \otimes \dyad{i_B}\, , 
    \end{equation} 
    which is a separable state. Then,
    \begin{align}
        \sup_{\substack{\rho, \sigma \in \mathcal S(\mathcal H)\\ \frac{1}{2}\|\rho- \sigma\|_1 =1}} |\widehat D_{\mathrm{SEP}_{AB}}(\rho) - \widehat D_{\mathrm{SEP}_{AB}}(\sigma)| 
        & \leq \sup_{\dyad{\psi} \in \mathcal S(\mathcal H)} \widehat D(\dyad{\psi}\Vert \tau_{\psi}) \\
        &= \log{d_A} \, .
    \end{align}
    In the above inequality, we have used that $\widehat D_{\mathrm{SEP}_{AB}}$ is positive and convex. Note that $\ket{\psi}$ is in the support of $\tau_\psi$.
\end{proof}

\begin{theorem} \label{thm:BS-ent-entangle-cont-bound}
    For $\varepsilon \in [0,1]$, $\mathcal H = \mathcal H_A \otimes \mathcal H_B$, and $d_{AB} \in \mathbb N$, $d_{AB} \geq 2$, it holds that for $\rho, \sigma \in \cS(\cH)$ with $\frac{1}{2} \norm{\rho - \sigma}_1 \le \varepsilon$
    \begin{equation}
        |\widehat D_{\mathrm{SEP}_{AB}}(\rho) - \widehat D_{\mathrm{SEP}_{AB}}(\sigma)| \leq \varepsilon \log \min\{d_A,d_B\} + (1+\varepsilon) g_{d_{AB}} \left(\frac{\varepsilon}{1+\varepsilon} \right) \, .
    \end{equation}
     Here, $g_d(p) := \frac{d}{p^{1/d}}h(p) - \log(1 - p^{1/d})$ for $p \in (0,1)$ and $g_d(0) = 0$.
\end{theorem}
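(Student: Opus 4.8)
The plan is to recognise \cref{thm:BS-ent-entangle-cont-bound} as a direct application of the ALAFF method in its $s=0$ form (\cref{rem:rem_s=0_alaff_method}), mirroring exactly the proof of the relative-entropy analogue \cref{thm:rel-ent-entangle-cont-bound}. Concretely, I would set $f(\cdot) = \widehat D_{\mathrm{SEP}_{AB}}(\cdot)$ and take $\cS_0 = \cS(\cH)$, which is trivially $0$-perturbed $\Delta$-invariant since it is the full state space. It then remains only to assemble the three inputs the method requires: the two sides of the almost-affinity estimate in \cref{eq:eq_alaff_property}, which fix $a_f$ and $b_f$, together with the finiteness and value of the orthogonal constant $C_f^\perp$. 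Since $\mathrm{SEP}_{AB}$ is compact, convex, and contains a positive definite state (the maximally mixed state), $\widehat D_{\mathrm{SEP}_{AB}}$ is finite on all of $\cS(\cH)$, so $f$ is a well-defined real-valued function there.

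For the upper (convexity) side of \cref{eq:eq_alaff_property}, \cref{lem:BS-ent-entangle-convex} gives $f(p\rho+(1-p)\sigma) - pf(\rho)-(1-p)f(\sigma) \le 0$, so I may take $b_f = 0$. For the lower (almost-concavity) side, I would invoke \cref{lem:BS-ent-entangle-al-concave}; this is legitimate precisely because $\mathrm{SEP}_{AB}$ contains the maximally mixed state $\identity/d_{AB}$, which is the hypothesis under which that lemma was established. Hence $a_f = g_{d_{AB}}$ and $E_f = a_f + b_f = g_{d_{AB}}$. I would then record that $g_{d_{AB}}$ is a bona fide remainder in the sense of \cref{def:ALAFF-function}: it is continuous with $g_{d_{AB}}(0)=0$ and non-decreasing on $[0,\tfrac12]$, exactly the properties already verified while proving \cref{lem:BS-ent-entangle-al-concave} (via \cref{lem:gd-cont} and \cref{lem:gd-mono}). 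Because $g_{d_{AB}}\ge 0$ is non-decreasing on $[0,\tfrac12]$, the product $t\mapsto g_{d_{AB}}(t)/(1-t)$ is non-decreasing there too, so $E_f^{\max}$ collapses to $E_f$ at the relevant argument $p=\varepsilon/(1+\varepsilon)\in[0,\tfrac12]$, as promised by \cref{rem:rem_s=0_alaff_method}.

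Finally I would bound $C_f^\perp = \sup\{|f(\rho)-f(\sigma)| : \tfrac12\|\rho-\sigma\|_1 = 1\}$ by $\log\min\{d_A,d_B\}$ using \cref{lem:BS-ent-entangle-diff-bound}, and feed $C_f^\perp \le \log\min\{d_A,d_B\}$ together with $E_f = g_{d_{AB}}$ into the $s=0$ continuity bound of \cref{rem:rem_s=0_alaff_method}; this outputs precisely the claimed estimate. I do not expect a genuine obstacle at the level of this theorem, since all the analytic content has been front-loaded into the three preceding lemmas. The only points demanding care are bookkeeping: confirming that the hypothesis of \cref{lem:BS-ent-entangle-al-concave} (membership of $\identity/d_{AB}$ in $\mathrm{SEP}_{AB}$) and the regularity of $g_{d_{AB}}$ needed for the $E_f^{\max}\to E_f$ reduction both hold. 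The genuinely delicate step was already discharged upstream, namely the control of the minimal eigenvalue of the optimal separable approximant inside \cref{lem:BS-ent-entangle-al-concave}, which is what forces the dimension-dependent remainder $g_{d_{AB}}$ here in place of the binary entropy $h$ appearing in \cref{thm:rel-ent-entangle-cont-bound}.
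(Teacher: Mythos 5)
Your proposal is correct and follows essentially the same route as the paper: the paper's proof likewise feeds \cref{lem:BS-ent-entangle-convex}, \cref{lem:BS-ent-entangle-al-concave} (with \cref{lem:gd-cont} and \cref{lem:gd-mono}), and \cref{lem:BS-ent-entangle-diff-bound} into \cref{rem:rem_s=0_alaff_method}. The only cosmetic difference is that the paper justifies the $E_f^{\max}=E_f$ reduction by citing the monotonicity of $g_d(p)/(1-p)$ from \cref{lem:gd/(1-p)-mono}, whereas you derive it from the non-negativity and monotonicity of $g_{d_{AB}}$ on $[0,\tfrac12]$, which is equally valid for the relevant range $p=\varepsilon/(1+\varepsilon)\le\tfrac12$.
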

\begin{proof}
    As shown in \cref{lem:gd/(1-p)-mono}, it holds that $g_d(p)/(1-p)$ is non-decreasing on $[0,1]$ for all $d \in \mathbb N$, $d \geq 2$. Thus, the assertion follows from \cref{rem:rem_s=0_alaff_method} using \cref{lem:BS-ent-entangle-convex}, \cref{lem:BS-ent-entangle-al-concave} with \cref{lem:gd-cont} and  \cref{lem:gd-mono}, and \cref{lem:BS-ent-entangle-diff-bound}.
\end{proof}

To end this section, let us investigate the choice
\begin{equation}
    \mathcal C_0 := \{d_A^{-1} \mathds{1}_A \otimes \sigma_B: \sigma_B \in \mathcal S(\mathcal H_B)\}.
\end{equation}
From the discussion after \cref{eq:variational-cond-BS}, we know that 
\begin{equation}
    \widehat{H}_\rho(A|B) \le \underset{\sigma_B \in \cS(\cH_B)}{\sup} -\widehat{D}(\rho_{AB} \Vert \identity_A \otimes \sigma_B)=: \widehat{H}^{\mathrm{var}}_\rho(A|B)   \,,
\end{equation}
but equality does not hold in general. This is different from the Umegaki relative entropy, where the conditional entropy coincides with its variational expression. Nonetheless, we obtain a continuity bound for $\widehat{H}^{\mathrm{var}}_\rho(A|B)$ from the approach in this section.

\begin{corollary}\label{cor:variational_conditional_BS_entropy_continuous}
    Let $\mathcal H = \mathcal H_A \otimes \mathcal H_B$. For $\varepsilon \in [0,1]$ and $d_{AB} \in \mathbb N$, $d_{AB} \geq 2$, it holds that for $\rho, \sigma \in \cS(\cH)$ with $\frac{1}{2} \norm{\rho - \sigma}_1 \le \varepsilon$
    \begin{equation}
        |\widehat{H}^{\mathrm{var}}_\rho(A|B) - \widehat{H}^{\mathrm{var}}_\sigma(A|B)| \leq 2\varepsilon \log d_A + (1+\varepsilon) g_{d_{AB}} \left(\frac{\varepsilon}{1+\varepsilon} \right) \, .
    \end{equation}
     Here, $g_d(p) := \frac{d}{p^{1/d}}h(p) - \log(1 - p^{1/d})$ for $p \in (0,1)$ and $g_d(0) = 0$.
\end{corollary}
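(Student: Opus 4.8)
The plan is to recognize $\widehat{H}^{\mathrm{var}}_\rho(A|B)$ as a constant minus the BS-entropy distance $\widehat{D}_{\mathcal C_0}$ to the set $\mathcal C_0$, and then to run the ALAFF method exactly as in the proof of \cref{thm:BS-ent-entangle-cont-bound}. The starting point is the scaling identity $\widehat{D}(\rho \Vert c\sigma) = \widehat{D}(\rho \Vert \sigma) - \log c$ for $c > 0$, which follows at once from the definition of the BS-entropy. Applying it with $c = d_A^{-1}$ and second argument $\identity_A \otimes \sigma_B$ gives, for every $\sigma_B \in \cS(\cH_B)$,
\[
\widehat{D}\big(\rho_{AB} \,\big\Vert\, d_A^{-1}\identity_A \otimes \sigma_B\big) = \widehat{D}(\rho_{AB} \Vert \identity_A \otimes \sigma_B) + \log d_A.
\]
Taking the infimum over $\sigma_B$ on the left and recalling that $\widehat{H}^{\mathrm{var}}$ is defined as a supremum of $-\widehat{D}(\rho_{AB}\Vert\identity_A\otimes\sigma_B)$, I obtain the exact relation $\widehat{D}_{\mathcal C_0}(\rho) = \log d_A - \widehat{H}^{\mathrm{var}}_\rho(A|B)$. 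In particular $|\widehat{H}^{\mathrm{var}}_\rho(A|B) - \widehat{H}^{\mathrm{var}}_\sigma(A|B)| = |\widehat{D}_{\mathcal C_0}(\rho) - \widehat{D}_{\mathcal C_0}(\sigma)|$, so it suffices to produce a continuity bound for $\widehat{D}_{\mathcal C_0}$.

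Next I would check that $\mathcal C_0$ meets the hypotheses of this section's lemmas: it is the image of the compact convex set $\cS(\cH_B)$ under the affine map $\sigma_B \mapsto d_A^{-1}\identity_A \otimes \sigma_B$, hence compact and convex, and it contains the maximally mixed state $d_{AB}^{-1}\identity_{AB} = d_A^{-1}\identity_A \otimes d_B^{-1}\identity_B$, which is positive definite. Therefore \cref{lem:BS-ent-entangle-convex} yields convexity of $\widehat{D}_{\mathcal C_0}$ and \cref{lem:BS-ent-entangle-al-concave} yields almost concavity with remainder $g_{d_{AB}}$. Read through \cref{def:ALAFF-function}, this means $f = \widehat{D}_{\mathcal C_0}$ is ALAFF with $b_f = 0$ (from convexity) and $a_f = g_{d_{AB}}$ (from almost concavity), so that $E_f = g_{d_{AB}}$.

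The remaining input is the constant $C^\perp_f$. Here I would use that $\widehat{D}_{\mathcal C_0} \ge 0$, so that for orthogonally supported $\rho,\sigma$ one has $|\widehat{D}_{\mathcal C_0}(\rho) - \widehat{D}_{\mathcal C_0}(\sigma)| \le \sup_\tau \widehat{D}_{\mathcal C_0}(\tau)$. Choosing the feasible element $d_A^{-1}\identity_A \otimes \tau_B \in \mathcal C_0$ and invoking the scaling identity once more gives $\widehat{D}_{\mathcal C_0}(\tau) \le \widehat{D}(\tau \Vert d_A^{-1}\identity_A \otimes \tau_B) = \log d_A - \widehat{H}_\tau(A|B)$, and the lower bound $\widehat{H}_\tau(A|B) \ge -\log\min\{d_A,d_B\}$ from \cref{prop:prop_bound_BS_entropic_quantities} produces $C^\perp_f \le \log d_A + \log\min\{d_A,d_B\} \le 2\log d_A$. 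Finally, since $g_{d_{AB}}(p)/(1-p)$ is non-decreasing (so that $E_f^{\max}$ coincides with $E_f$ on the relevant range, just as used in \cref{thm:BS-ent-entangle-cont-bound}), applying the ALAFF method in the form of \cref{rem:rem_s=0_alaff_method} with $C^\perp_f \le 2\log d_A$ and $E_f = g_{d_{AB}}$ gives precisely the claimed estimate.

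The only genuinely delicate step is the bookkeeping in the first paragraph: keeping the $\log d_A$ shift and the supremum/infimum duality straight so that the continuity bound for $\widehat{D}_{\mathcal C_0}$ transfers verbatim to $\widehat{H}^{\mathrm{var}}$. Everything afterwards is a direct reuse of the machinery already assembled for the BS-entropy of entanglement, with $\mathcal C_0$ in place of $\mathrm{SEP}_{AB}$; one only needs the slightly different diameter estimate for $C^\perp_f$, which the nonnegativity of $\widehat{D}_{\mathcal C_0}$ together with \cref{prop:prop_bound_BS_entropic_quantities} supplies cleanly.
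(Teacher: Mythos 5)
Your proposal is correct and follows essentially the same route as the paper's proof: identify $|\widehat{H}^{\mathrm{var}}_\rho(A|B) - \widehat{H}^{\mathrm{var}}_\sigma(A|B)|$ with $|\widehat D_{\mathcal C_0}(\rho) - \widehat D_{\mathcal C_0}(\sigma)|$ (the paper simply says ``the normalization does not matter'' where you spell out the scaling identity), bound the diameter $C^\perp_f$ by $2\log d_A$ via nonnegativity of $\widehat D_{\mathcal C_0}$ and \cref{prop:prop_bound_BS_entropic_quantities}, and feed \cref{lem:BS-ent-entangle-convex}, \cref{lem:BS-ent-entangle-al-concave}, and \cref{lem:gd/(1-p)-mono} into \cref{rem:rem_s=0_alaff_method}. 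The extra explicitness in verifying that $\mathcal C_0$ is compact, convex, and contains the maximally mixed state is a welcome addition but does not change the argument.
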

\begin{proof}
    It holds that for $\rho, \sigma \in \cS(\cH)$ with $\frac{1}{2}\norm{\rho - \sigma}_1 \le \varepsilon$
    \begin{equation}
        |\widehat{H}^{\mathrm{var}}_\rho(A|B) - \widehat{H}^{\mathrm{var}}_\sigma(A|B)|  =  |\widehat D_{\mathcal C_0}(\rho) - \widehat D_{\mathcal C_0}(\sigma)|\,,
    \end{equation}
    since the normalization does not matter. Thus to apply ALAFF, we need to bound
    \begin{equation}
        \sup_{\substack{\rho, \sigma \in \mathcal S(\mathcal H)\\ \frac{1}{2}\|\rho- \sigma\|_1 =1}} |\widehat D_{\mathcal C_0}(\rho) - \widehat D_{\mathcal C_0}(\sigma)|\,.
    \end{equation}
    Using \cref{eq:bounds-BS-cond-ent} and the fact that $\widehat D_{\mathcal C_0}(\rho) \geq 0$ for all states $\rho$, we obtain
    \begin{align}
        \sup_{\substack{\rho, \sigma \in \mathcal S(\mathcal H)\\ \frac{1}{2}\|\rho- \sigma\|_1 =1}} |\widehat D_{\mathcal C_0}(\rho) - \widehat D_{\mathcal C_0}(\sigma)|
        &\leq  \sup_{\rho \in \mathcal S(\mathcal H)} -\widehat{H}^{\mathrm{var}}_\rho(A|B) + \log{d_A} \\
        &\leq 2\log{d_A} \,.
    \end{align}
    The assertion follows from combining the above with \cref{lem:BS-ent-entangle-convex}, \cref{lem:BS-ent-entangle-al-concave} with \cref{lem:gd-cont} and  \cref{lem:gd-mono}, and \cref{lem:gd/(1-p)-mono} to apply \cref{rem:rem_s=0_alaff_method}.
\end{proof}

\begin{remark}\label{rem:both_conditional_BS_entropies_are_different}
    Note that the findings of \cref{cor:variational_conditional_BS_entropy_continuous} and \cref{prop:discontinuity_conditional_BS_entropy} provide a formal proof that $\widehat{H}_\rho(A|B) $ and $\widehat{H}^{\mathrm{var}}_\rho(A|B)$ are different in general. Indeed, while we have just shown that the latter quantity is continuous on $\mathcal{S}(\mathcal{H})$ as a consequence of the results of this section, in \cref{prop:discontinuity_conditional_BS_entropy} we showed that the former quantity is in general discontinuous on $\mathcal{S}(\mathcal{H})$.
\end{remark}

\subsection{Rains information}\label{sec:Rains_info}

Inspired by the Rains bound from entanglement theory \cite{RainsSDP2001}, for any divergence $\mathbb{D}$, the \textit{generalized Rains bound} of a quantum state $\rho_{AB} \in \cS(\cH_A \otimes \cH_B)$ was defined in \cite{Tomamichel2017StrongConverses} by
\begin{equation}\label{eq:generalized_Rains_bound}
    \mathbb{R} (\rho_{AB}) := \underset{\sigma_{AB} \in \mathrm{PPT'}(A:B)}{\text{min}} \, \mathbb{D}(\rho_{AB} \| \sigma_{AB}) \, ,
\end{equation}
where the minimization is taken over the Rains set
\begin{equation}\label{eq:Rains_set}
    \mathrm{PPT'}(A:B) := \left\{ \sigma_{AB} \, : \, \sigma_{AB} \geq 0 , \, \norm{\sigma_{AB}^{T_B}}_1 \leq 1 \right\} \, .
\end{equation}
Where $\cdot^{T_B}$ denotes the partial transpose in the $B$-system. This definition can be easily extended to channels in the following way. For a quantum channel $T_{A' \rightarrow B}:\cS(\cH_{A} \otimes \cH_{A'}) \rightarrow \cS(\cH_{A} \otimes \cH_{B}) $, we define 
\begin{equation}
    \mathbb{R}( T ) := \underset{\rho_A \in \cS(\cH_A)}{\text{max}} \mathbb{R} (T_{A'\rightarrow B} (\phi_{AA'})) \, ,
\end{equation}
for $\phi_{AA'}$ a purification of $\rho_A$. In particular, for the Umegaki relative entropy, we introduce the \textit{Rains information} as
\begin{equation}
    R(T) :=  \underset{\rho_A \in \cS(\cH_A)}{\text{max}} \, \underset{\sigma_{AB} \in \mathrm{PPT'}(A:B)}{\text{min}}\, D(T_{A'\rightarrow B} (\phi_{AA'}) \| \sigma_{AB}) \, , 
\end{equation}
as well as  the BS-Rains information by
\begin{equation}
    \widehat{R}(T) :=  \underset{\rho_A \in \cS(\cH_A)}{\text{max}} \, \underset{\sigma_{AB} \in \mathrm{PPT'}(A:B)}{\text{min}}\, \widehat D(T_{A'\rightarrow B} (\phi_{AA'}) \| \sigma_{AB}) \, . 
\end{equation}
In the rest of the subsection, we will drop the subindex from the channels whenever it is clear in which systems they act. In \cite{FangFawzi-GeometricRenyiDivergences-2019}, it was proven that the latter two quantities constitute upper bounds to the quantum capacity of a quantum channel. Indeed, the following inequality holds for any channel $T$:
\begin{equation}
    Q(T) \leq R(T) \leq \widehat R(T) \, .
\end{equation}
Moreover, the BS-Rains information is a limit of Rains informations induced by $\alpha$-geometric Rényi divergences, which can be written as single-letter formulas and computed via a semidefinite program (SDP), as shown in \cite{FangFawzi-GeometricRenyiDivergences-2019}. The study of these quantities is therefore of great interest for application in the context of strong converses of quantum capacities of channels. 

Here, as a consequence of \cref{cor:cor_uniform_continuity_relative_entropy_first_argument} and \cref{cor:cor_uniform_continuity_BS_entropy_first_argument}, respectively, we can provide continuity results for both the Rains information and the BS-Rains information, respectively, following the lines of  \cref{thm:rel-ent-entangle-cont-bound}. Beforehand, we need to justify that both quantities are well-defined, i.e., that each of these quantities is attained at a certain $\rho_A \in \cS(\cH_A)$ and $\sigma_{AB}\in \mathrm{PPT'}(A:B)$, and thus the minimum and maximum in their definitions are properly written. For that, note that we are first taking an infimum on the second input over the compact set $\mathrm{PPT'}(A:B)$. Then, the infimum is attained and the expression obtained is a continuous function, as we will show below in \cref{eq:PPT'-cont}. Next, we perform an optimization problem on the first input over another compact set, namely $\cS(\cH_A)$. Thus, that supremum is also attained and both Rains informations are well defined. 

From now on, for simplicity and for similarity with the quantities introduced in the previous section, given $\rho_{AB} \in \cS(\cH_A \otimes \cH_B)$, let us define
\begin{equation}
    D_{\mathrm{PPT'}(A:B)}(\rho_{AB}) := \underset{\sigma_{AB} \in \mathrm{PPT'}(A:B)}{\text{min}} D ( \rho_{AB} \| \sigma_{AB}) \, .
\end{equation}
Then, it is clear that we can rewrite, for a quantum channel $T :\cS(\cH_{A} \otimes \cH_{A'}) \rightarrow \cS(\cH_{A} \otimes \cH_{B})$,
\begin{equation}
    R(T) := \underset{\rho_A \in \cS(\cH_A)}{\text{max}} D_{\mathrm{PPT'}(A:B)}(T(\phi_{AA'}) ) \, ,
\end{equation}
for $\phi_{AA'}$ a purification of $\rho_A$. The next step before applying the ALAFF method is bounding the difference between two Rains informations of two quantum channels. For that, we will use the $1 \rightarrow 1$ norm of the difference between channels. Let us recall that for $T :\cS(\cH_{A} \otimes \cH_{A'}) \rightarrow \cS(\cH_{A} \otimes \cH_{B})$ a quantum channel, its $1 \rightarrow 1$ norm is given by
\begin{equation}
    \norm{T}_{1 \rightarrow 1} := \underset{\eta : \norm{\eta}_1 \leq 1}{\text{max}} \norm{ T (\eta)}_1 \, .
\end{equation}
For $T_{A^\prime \to B}$, the $1 \rightarrow 1$ norm coincides with the diamond norm. Now, as a consequence of \cref{lem:rel-ent-entangle-diff-bound} and \cref{thm:rel-ent-entangle-cont-bound} from the previous section, we can derive the following continuity bound for the Rains information.

\begin{theorem}\label{theo:continuity_bound_rains_information}
    For $\varepsilon \in [0,1]$  and $ T^1_{A^\prime \to B}, T^2_{A^\prime \to B} :\cS(\cH_{A} \otimes \cH_{A'}) \rightarrow \cS(\cH_{A} \otimes \cH_{B}) $ two quantum channels with $\frac{1}{2}\|T^1_{A^\prime \to B}- T^2_{A^\prime \to B}\|_{1 \rightarrow 1} \leq \varepsilon$, we have:
    \begin{equation}\label{eq:continuity_bound_rains_information}
        |R(T_{A^\prime \to B}^1) - R(T_{A^\prime \to B}^2)| \leq  \varepsilon \log \min \{ d_A , d_B\} + (1+ \varepsilon) h \Big( \frac{\varepsilon}{1 + \varepsilon} \Big) \, . 
    \end{equation}
\end{theorem}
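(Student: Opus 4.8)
The plan is to mirror the strategy behind \cref{thm:rel-ent-entangle-cont-bound}, but with the Rains set $\mathrm{PPT'}(A:B)$ in place of $\mathrm{SEP}_{AB}$, and then to lift the resulting bound from states to channels. Concretely, I would first prove the intermediate continuity estimate that for all $\rho, \sigma \in \cS(\cH_A \otimes \cH_B)$ with $\frac{1}{2}\norm{\rho - \sigma}_1 \le \varepsilon$,
\begin{equation}
    |D_{\mathrm{PPT'}(A:B)}(\rho) - D_{\mathrm{PPT'}(A:B)}(\sigma)| \le \varepsilon \log\min\{d_A, d_B\} + (1 + \varepsilon) h\Big(\frac{\varepsilon}{1 + \varepsilon}\Big) \, ,
\end{equation}
and then apply it pointwise along the maximisation over purifications that defines $R$.

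For the intermediate estimate I would run the ALAFF method in the form of \cref{rem:rem_s=0_alaff_method} on $f = D_{\mathrm{PPT'}(A:B)}$, checking the same three ingredients used for $D_{\mathrm{SEP}_{AB}}$. First, $\mathrm{PPT'}(A:B)$ is convex, and it is compact since $\sigma \ge 0$ together with $\norm{\sigma^{T_B}}_1 \le 1$ forces $\tr[\sigma] = \tr[\sigma^{T_B}] \le \norm{\sigma^{T_B}}_1 \le 1$; it contains the positive definite maximally mixed state, so $D_{\mathrm{PPT'}(A:B)}$ is finite with the infimum attained. Convexity of $f$ follows from the joint convexity of the relative entropy exactly as in \cref{lem:rel-ent-entangle-convex}, giving $b_f = 0$. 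For almost concavity I would use that, for any fixed $\tau$ achieving the infimum at $\rho = p\rho_1 + (1-p)\rho_2$, the term $-\tr[\rho\log\tau]$ is affine in $\rho$ and hence cancels, so that $pD(\rho_1\Vert\tau) + (1-p)D(\rho_2\Vert\tau) - D(\rho\Vert\tau) = S(\rho) - pS(\rho_1) - (1-p)S(\rho_2) \le h(p)$ by \cref{eq:eq_almost_convexity_von_neumann_entropy}; since $\tau \in \mathrm{PPT'}(A:B)$ gives $D(\rho_i\Vert\tau) \ge D_{\mathrm{PPT'}(A:B)}(\rho_i)$, this yields $a_f = h$ just as in \cref{lem:rel-ent-entangle-al-concave}. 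Importantly, this cancellation is insensitive to whether $\tau$ is normalised, which is what lets the argument survive the fact that $\mathrm{PPT'}(A:B) \not\subseteq \cS(\cH)$. Finally, since $\mathrm{SEP}_{AB} \subseteq \mathrm{PPT'}(A:B)$, the same separable state $\tau_\psi$ from \cref{lem:rel-ent-entangle-diff-bound} is admissible, and together with the non-negativity of $D_{\mathrm{PPT'}(A:B)}$ (which holds because $\tr[\sigma]\le 1$ gives $D(\rho\Vert\sigma) = D(\rho\Vert\sigma/\tr[\sigma]) - \log\tr[\sigma] \ge 0$) and convexity one obtains $C_f^\perp \le \sup_{\ket{\psi}} D(\dyad{\psi} \Vert \tau_\psi) = \log\min\{d_A, d_B\}$. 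Plugging $a_f = h$, $b_f = 0$ and this constant into \cref{rem:rem_s=0_alaff_method}, and using point 4 of \cref{prop:prop_almost_concave_estimate_relative_entropy_well_behaved} so that $E_f^{\max} = E_f = h$, gives the displayed intermediate bound.

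To lift this to channels I would use the elementary inequality $|\max_x g_1(x) - \max_x g_2(x)| \le \max_x |g_1(x) - g_2(x)|$ with $g_i(\rho_A) = D_{\mathrm{PPT'}(A:B)}(T^i(\phi_{AA'}))$, reducing the bound on $|R(T^1) - R(T^2)|$ to a supremum over $\rho_A$ of differences of $D_{\mathrm{PPT'}(A:B)}$ at $T^1(\phi_{AA'})$ and $T^2(\phi_{AA'})$. For each purification $\phi_{AA'}$ of $\rho_A$ one has $\frac{1}{2}\norm{T^1(\phi_{AA'}) - T^2(\phi_{AA'})}_1 = \frac{1}{2}\norm{(T^1 - T^2)(\phi_{AA'})}_1 \le \frac{1}{2}\norm{T^1 - T^2}_{1\to 1} \le \varepsilon$, since $\norm{\phi_{AA'}}_1 = 1$, so the intermediate estimate applies uniformly in $\rho_A$ and yields \cref{eq:continuity_bound_rains_information}.

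The main obstacle is the bookkeeping around $\mathrm{PPT'}(A:B)$ not being a set of normalised states: one must confirm that both the non-negativity and the almost concavity of $D_{\mathrm{PPT'}(A:B)}$ persist for sub-normalised minimisers. As indicated, the affine cancellation of the $\log\tau$ term handles almost concavity for free, and the bound $\tr[\sigma]\le 1$ handles non-negativity, so this obstacle is ultimately mild; everything else is a faithful transcription of the $\mathrm{SEP}_{AB}$ argument.
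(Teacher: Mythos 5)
Your proposal is correct and follows essentially the same route as the paper: it establishes the state-level continuity bound for $D_{\mathrm{PPT'}(A:B)}$ via the ALAFF method (reusing the $\mathrm{SEP}_{AB}$ estimates and verifying that convexity, almost concavity, and non-negativity survive the sub-normalised elements of the Rains set), and then lifts it to channels through $\frac{1}{2}\|T^1(\phi_{AA'})-T^2(\phi_{AA'})\|_1 \le \frac{1}{2}\|T^1-T^2\|_{1\to 1}$ and the stability of suprema under uniform perturbation. Your explicit affine-cancellation argument for almost concavity and the explicit $\tr[\sigma]\le\|\sigma^{T_B}\|_1\le 1$ check for non-negativity are nice ways of making precise the paper's remark that the $\mathrm{SEP}_{AB}$ lemmas "still hold using the same arguments" for $\mathrm{PPT'}(A:B)$.
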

\begin{proof}
    Let us drop the subscripts from the channels for ease of notation. Firstly, note that $\mathrm{SEP}_{AB} \subset \mathrm{PPT'}(A:B)$. Therefore, 
    \begin{equation}
         R(T) = \underset{\rho_A \in \cS(\cH_A)}{\text{max}} D_{\mathrm{PPT'}(A:B)}(T(\phi_{AA'}) ) \leq \underset{\rho_A \in \cS(\cH_A)}{\text{max}} D_{\mathrm{SEP}_{AB}}(T(\phi_{AA'}) ) \, .
    \end{equation}
    Hence, in general
    \begin{equation}\label{eq:PPT'-cont}
       \begin{aligned}
            \max_{\substack{\rho_{AB} ,  \, \sigma_{AB} \in \mathcal S(\mathcal H_{AB}) \\ \frac{1}{2}\|\rho_{AB}- \sigma_{AB}\|_1 =1}} |D_{\mathrm{PPT'}(A:B)}(\rho_{AB}) - D_{\mathrm{PPT'}(A:B)}(\sigma_{AB})|&  \leq  \max_{\rho_{AB} \in \mathcal S(\mathcal H_{AB}) } D_{\mathrm{PPT'}(A:B)}(\rho_{AB}) \\
            & \leq  \max_{\rho_{AB} \in \mathcal S(\mathcal H_{AB}) } D_{\mathrm{SEP}_{AB}}(\rho_{AB}) \\[2mm]
            & \leq \log \min \{ d_A , \, d_B \} \, ,
       \end{aligned}
    \end{equation} 
    where in the last inequality we have used \cref{lem:rel-ent-entangle-diff-bound}. Following the lines of \cref{thm:rel-ent-entangle-cont-bound}, we have for $\rho_{AB}, \sigma_{AB} \in \cS(\cH_A \otimes \cH_B)$ with $\frac{1}{2}\|\rho_{AB}- \sigma_{AB}\|_1 \leq \varepsilon $ the following continuity bound:
    \begin{equation}
        \begin{aligned}
            |D_{\mathrm{PPT'}(A:B)}(\rho_{AB}) - D_{\mathrm{PPT'}(A:B)}(\sigma_{AB})|&  \leq \varepsilon \log \min \{ d_A , \, d_B \}  + (1+ \varepsilon) h \Big( \frac{\varepsilon}{1+ \varepsilon} \Big) \, . 
        \end{aligned}
    \end{equation}
    Note that since $\mathrm{PPT'}(A:B)$ does not only contain states, but also subnormalized states,  \cref{lem:rel-ent-entangle-convex} and \cref{lem:rel-ent-entangle-al-concave} are not directly applicable. One can however verify that the corresponding statements for $\mathrm{PPT'}(A:B)$ still hold using the same arguments. For simplicity, let us denote 
    \begin{equation}
        b(\varepsilon):= \varepsilon \log \min \{ d_A , \, d_B \}  + (1+ \varepsilon) h \Big( \frac{\varepsilon}{1+ \varepsilon} \Big)  \, .
    \end{equation}
    To estimate an upper bound on the difference that appears in \cref{eq:continuity_bound_rains_information}, first note that, given $ T^1, T^2 :\cS(\cH_{A} \otimes \cH_{A'}) \rightarrow \cS(\cH_{A} \otimes \cH_{B}) $  two quantum channels with $\frac{1}{2} \norm{T^1-T^2}_{1 \rightarrow 1} \leq \varepsilon$, and $\rho_A \in \cS(\cH_A)$ with $\phi_{AA'}$ a purification of it, we have
    \begin{equation}
        \frac{1}{2} \norm{T^1(\phi_{AA'} ) - T^2 (\phi_{AA'} ) }_1 \leq  \frac{1}{2} \norm{T^1-T^2}_{1 \rightarrow 1} \leq \varepsilon \, .
    \end{equation}
    Consider now $\rho^1 , \rho^2 \in \cS(\cH_A)$ with respective purifications $\phi^1_{AA'} , \phi^2_{AA'}$, the states in which the respective maxima of $R(T^1)$ and $R(T^2)$ are attained. Then, we clearly have, for $i,j =1,2$ and $i \neq j$,
    \begin{equation}
       | R(T^j) - D_{\mathrm{PPT'}(A:B)}(T^i(\phi^j_{AA'})) |  = |  D_{\mathrm{PPT'}(A:B)}(T^j(\phi^j_{AA'}))- D_{\mathrm{PPT'}(A:B)}(T^i(\phi^j_{AA'})) | \leq b(\varepsilon) \, ,
    \end{equation}
    and thus,
    \begin{equation}
       R(T^i) \geq D_{\mathrm{PPT'}(A:B)}(T^i(\phi^j_{AA'})) \geq R(T^j) - b(\varepsilon) \, .
    \end{equation}
    Therefore, we can conclude
    \begin{equation}
       | R(T^1 ) - R(T^2)| \leq b(\varepsilon) \, ,
    \end{equation}
    and consequently
    \begin{equation}
       |R(T^1) - R(T^2)| \leq  \varepsilon \log \min \{ d_A , d_B\} + (1+ \varepsilon) h \Big( \frac{\varepsilon}{1 + \varepsilon} \Big) \, . 
    \end{equation}
\end{proof}

In a similar way, we can also prove uniform continuity and provide explicit continuity bounds for the BS-Rains information. Analogously to what we have done above for the Rains information, we can define for $\rho_{AB} \in \cS(\cH_A \otimes \cH_B)$ the following quantity:
\begin{equation}
   \widehat D_{\mathrm{PPT'}(A:B)}(\rho_{AB}) := \underset{\sigma_{AB} \in \mathrm{PPT'}(A:B)}{\text{min}} \widehat D ( \rho_{AB} \| \sigma_{AB}) \, ,
\end{equation}
and thus, we can rewrite, for a quantum channel $T :\cS(\cH_{A} \otimes \cH_{A'}) \rightarrow \cS(\cH_{A} \otimes \cH_{B})$,
\begin{equation}
  \widehat  R(T) := \underset{\rho_A \in \cS(\cH_A)}{\text{max}} \widehat  D_{\mathrm{PPT'}(A:B)}(T(\phi_{AA'}) ) \, ,
\end{equation}
for $\phi_{AA'}$ a purification of $\rho_A$. We can finally use \cref{lem:BS-ent-entangle-diff-bound} and  \cref{thm:BS-ent-entangle-cont-bound} from the previous section, for the BS-entropy, to obtain a continuity bound for the BS-Rains information. However, the bound obtained, as well as the procedure employed to derive it, are a straightforward combination of the strategies of the continuity bound for the Rains information \cref{theo:continuity_bound_rains_information} and the continuity bound for the BS-entropy of entanglement from \cref{thm:BS-ent-entangle-cont-bound}. Therefore, we omit it, to avoid unnecessary repetitions. 

\begin{theorem}\label{theo:continuity_bound_BS_rains_information}
    For $\varepsilon \in [0,1]$  and $ T_{A' \to B}^1, T^2_{A' \to B} :\cS(\cH_{A} \otimes \cH_{A'}) \rightarrow \cS(\cH_{A} \otimes \cH_{B}) $ two quantum channels with $\frac{1}{2}\|T^1- T^2\|_{1 \rightarrow 1} \leq \varepsilon$, we have:
    \begin{equation}
        |\widehat  R(T^1) - \widehat  R(T^2)| \leq  \varepsilon \log \min \{ d_A , d_B\} + (1+ \varepsilon) g_{d_{AB}} \Big( \frac{\varepsilon}{1 + \varepsilon} \Big) \, ,
    \end{equation}
    where $g_d(t) := \frac{d}{t^{1/d}}h(t) - \log(1 - t^{1/d})$.
\end{theorem}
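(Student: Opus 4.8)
The plan is to run the same two-stage argument used for the Rains information in \cref{theo:continuity_bound_rains_information}, but with the Umegaki relative entropy replaced by the BS-entropy throughout, so that the binary-entropy remainder $h$ is replaced by the BS-remainder $g_{d_{AB}}$ produced by the almost concavity of the BS-entropy in \cref{thm:BS-ent-entangle-cont-bound}. Writing $\widehat D_{\mathrm{PPT'}(A:B)}(\rho_{AB}) = \min_{\sigma_{AB} \in \mathrm{PPT'}(A:B)} \widehat D(\rho_{AB} \Vert \sigma_{AB})$, so that $\widehat R(T) = \max_{\rho_A} \widehat D_{\mathrm{PPT'}(A:B)}(T(\phi_{AA'}))$, the first stage is to obtain a continuity bound for $\widehat D_{\mathrm{PPT'}(A:B)}$ regarded as a function of a single state on $\cS(\cH_{AB})$, and the second stage is to lift this to channels via purifications and the $1 \to 1$ norm.

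For the first stage I would apply the ALAFF method in the form of \cref{rem:rem_s=0_alaff_method} to $f(\cdot) = \widehat D_{\mathrm{PPT'}(A:B)}(\cdot)$ on $\cS_0 = \cS(\cH)$, which is $0$-perturbed $\Delta$-invariant. Three ingredients are needed: convexity, almost concavity with remainder $g_{d_{AB}}$, and a bound on $C^\perp_f$. Convexity is the analogue of \cref{lem:BS-ent-entangle-convex} and follows from joint convexity of the BS-entropy. Almost concavity is the analogue of \cref{lem:BS-ent-entangle-al-concave}; its proof goes through verbatim since the maximally mixed state $\identity/d_{AB}$ lies in $\mathrm{PPT'}(A:B)$ (its partial transpose has trace norm one) and $\mathrm{PPT'}(A:B)$ is convex, so the perturbation $s\tau + (1-s)\identity/d_{AB}$ used there stays in $\mathrm{PPT'}(A:B)$. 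This yields the remainder $g_{d_{AB}}$, which is continuous, vanishes at $0$, and by \cref{lem:gd-mono} and \cref{lem:gd/(1-p)-mono} satisfies the monotonicity required by ALAFF. For $C^\perp_f$ I would use $\mathrm{SEP}_{AB} \subseteq \mathrm{PPT'}(A:B)$, whence $\widehat D_{\mathrm{PPT'}(A:B)} \le \widehat D_{\mathrm{SEP}_{AB}}$, together with \cref{lem:BS-ent-entangle-diff-bound} and positivity of $\widehat D_{\mathrm{PPT'}(A:B)}$, to conclude $C^\perp_f \le \log\min\{d_A, d_B\}$. Feeding these into \cref{rem:rem_s=0_alaff_method} gives, for $\frac{1}{2}\norm{\rho_{AB} - \sigma_{AB}}_1 \le \varepsilon$,
\begin{equation}
    |\widehat D_{\mathrm{PPT'}(A:B)}(\rho_{AB}) - \widehat D_{\mathrm{PPT'}(A:B)}(\sigma_{AB})| \le \varepsilon \log\min\{d_A, d_B\} + (1+\varepsilon) g_{d_{AB}}\Big(\frac{\varepsilon}{1+\varepsilon}\Big) =: b(\varepsilon) \, .
\end{equation}

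For the second stage I would copy the max--min argument from \cref{theo:continuity_bound_rains_information}. Given channels with $\frac{1}{2}\norm{T^1 - T^2}_{1\to 1} \le \varepsilon$ and any $\rho_A$ with purification $\phi_{AA'}$, the estimate $\frac{1}{2}\norm{T^1(\phi_{AA'}) - T^2(\phi_{AA'})}_1 \le \frac{1}{2}\norm{T^1 - T^2}_{1\to 1} \le \varepsilon$ lets me apply the first-stage bound to the output states. Letting $\phi^j_{AA'}$ denote the purification attaining $\widehat R(T^j)$, I obtain $|\widehat R(T^j) - \widehat D_{\mathrm{PPT'}(A:B)}(T^i(\phi^j_{AA'}))| \le b(\varepsilon)$ and hence $\widehat R(T^i) \ge \widehat D_{\mathrm{PPT'}(A:B)}(T^i(\phi^j_{AA'})) \ge \widehat R(T^j) - b(\varepsilon)$ for $i \ne j$, which symmetrises to $|\widehat R(T^1) - \widehat R(T^2)| \le b(\varepsilon)$, the claimed bound. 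The main obstacle is the first stage: one must verify that the convexity and almost concavity lemmas, stated for compact convex sets of \emph{states}, remain valid for $\mathrm{PPT'}(A:B)$, which also contains subnormalised operators. As in the Rains case this is a matter of re-reading the proofs and observing that only convexity of the set and membership of $\identity/d_{AB}$ are used, both of which hold for $\mathrm{PPT'}(A:B)$; the compactness needed for the minima to be attained, and hence for $\widehat R$ to be well-defined, follows from lower semicontinuity of the BS-entropy.
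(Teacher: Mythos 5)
Your proposal is correct and follows exactly the route the paper intends: the paper omits this proof, stating that it is ``a straightforward combination of the strategies of'' \cref{theo:continuity_bound_rains_information} and \cref{thm:BS-ent-entangle-cont-bound}, and your two-stage argument (ALAFF on $\widehat D_{\mathrm{PPT'}(A:B)}$ with the $g_{d_{AB}}$ remainder from \cref{lem:BS-ent-entangle-al-concave}, then the max--min lifting to channels) is precisely that combination. You also correctly flag and resolve the one subtlety the paper itself raises in the Rains-information case, namely that $\mathrm{PPT'}(A:B)$ contains subnormalised operators but the convexity and almost-concavity arguments only use convexity of the set and membership of $\identity/d_{AB}$.
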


\section{Outlook}\label{sec:outlook}
In this paper, we have introduced a generalisation of the Alicki-Fannes-Winter method by Shirokov and applied it to derive results of uniform continuity and explicit continuity bounds for divergences. We gave this generalisation the name ALAFF (cf. \cref{theo:theo_alaff_method}) after the functions to which it applies (almost locally affine functions). The method allows deriving various continuity bounds for entropic quantities, by simply proofing (joint) convexity and almost (joint) concavity of the underlying divergence.

In particular, in the current paper, we have applied our ALAFF method to the specific cases of the Umegaki and the Belavkin-Staszewski relative entropies. For both of them, we have proven results of almost concavity (for the Umegaki case, our result is shown to be tight), and these, together with the well-known results of convexity for these quantities, have yielded a plethora of results of continuity bounds for both the Umegaki and BS-entropies, as well as for many other quantities derived from them. In particular, our results recover the previously known almost tight continuity bounds for the conditional entropy and the (conditional) mutual information. 

A natural question arises from the findings of this paper: Is our method applicable to any other family of divergences? We expect this to be the case, since, as shown in \cref{sec:main-results}, our method only requires almost concavity and convexity (already known for divergences) in order to work. Therefore, a result of almost concavity with a  ``well-behaved'' correction factor would be enough for the ALAFF method and is expected to exist, for families such as the $\alpha$-sandwiched Rényi divergences or the $\alpha$-geometric Rényi divergences, as they converge to the quantities studied in this paper. This possibility will be explored in a future manuscript.

Let us conclude this section, and our paper, with some analysis of the results obtained here. For both the Umegaki and the BS-entropies, we have presented results of almost concavity in order to provide some continuity bounds. However, while for the former (cf.\ \cref{theo:theo_almost_concavity_relative_entropy}) we have shown that the result is tight, for the latter (cf.\ \cref{theo:theo_almost_concavity_bs_entropy}) we are certain that there is room for improvement. Indeed, our almost concavity bound for the BS-entropy depends on the minimal eigenvalues of some of the states involved even in the simplified case of the BS-conditional entropy. In such a case, numerical simulations, as well as analytical proof, have shown us that there is a universal bound for the BS-conditional entropy of a state which is independent of the state involved. Therefore, we would expect an almost convexity result for the BS-conditional entropy being independent of the states involved, and this is clearly not the case at the moment.
Nevertheless, there is no doubt that the BS-entropy, and quantities derived from it, are ``pathological'' in some sense. First of all, we have shown that the BS-conditional entropy exhibits discontinuities in the presence of vanishing eigenvalues (cf.\ \cref{prop:discontinuity_conditional_BS_entropy}), as opposed to the conditional entropy, which behaves well in that setting. This motivates the idea that the minimal eigenvalue of the involved states should appear in the most general bounds of almost concavity and continuity. Additionally, we can compare some upper bounds of some entropic quantities derived from the Umegaki and the BS-entropy:
\begin{itemize}
    \item For the relative entropy, we have the following 3 bounds:
    \begin{equation}
       - H_\rho(A|B) \leq  \log d_A \; , \quad I_\rho (A:B) \leq 2 \log \min\{d_A, d_B\} \; , \quad D(\rho \| \sigma) \leq \log \widetilde m_\sigma^{-1} \, . 
    \end{equation}
     \item For the BS-entropy, we
    have the following 3 bounds (cf. \cref{prop:prop_bound_BS_entropic_quantities}):
    \begin{equation}
       - \widehat{H}_\rho(A|B) \leq  \log d_A \; , \quad \widehat I_\rho (A:B) \leq  \log d_A \widetilde{m}_{(\rho_A)}^{-1} \; , \quad \widehat D(\rho \| \sigma) \leq \log m_\sigma^{-1} \, . 
    \end{equation}
\end{itemize}
In the above $m_\cdot$, $\widetilde{m}_{\cdot}$ denote the minimal respectively minimal non-zero eigenvalue of the state in the index. It is remarkable that for the conditional and BS-conditional entropy and the mutual information, there appears no dependence on the minimal eigenvalue of the argument, whilst for the BS-mutual information this is the case.

Moreover, let us recall that, from the discussion in \cref{rem:both_conditional_BS_entropies_are_different}, we know that the conditional BS-entropy and its variational counterpart are different because the latter is continuous on $\mathcal{S}(\mathcal{H})$ and the former is not. One could wonder whether the same difference appears for the BS-mutual information. Analogously to the case of the (Umegaki) mutual information, we could define four possible versions of such a notion by optimizing over one marginal, both or none. Remarkably, we find that, when optimizing over both marginals, we have, assuming w.l.o.g. $d_A \le d_B$,
\begin{equation}
    \widehat{I}_\rho^{\text{var}}(A:B) := \underset{\sigma_A , \sigma_B}{\text{inf}} \widehat{D}(\rho_{AB} \| \sigma_A \otimes \sigma_B) \leq \log d_A - \widehat{H}_\rho^{\text{var}}(A|B) \le 2 \log \min\{d_A, d_B\} \, . 
\end{equation}
Comparing this bound to the one shown above for $\widehat I_\rho (A:B)$, which we prove to be tight in \cref{prop:prop_bound_BS_entropic_quantities}, we realize that the BS-mutual information and its variational counterpart (with optimization over both marginals) are also different in general. 

To conclude, the literature concerning continuity bounds for entropic quantities is much broader than the results collected here. For Rényi and Tsallis entropies, many results concerning their continuity can be derived from other techniques, such as majorization flows, and can be found in texts such as \cite{HansonDatta-MaxMinEntropyBounds-2018, HansonDatta-ContinuityBounds-2019, Hanson-ThesisEntropyBounds-2020}. Additionally, some of these results for the von Neumann entropy, Rényi and Tsallis entropies, as well as their classical counterparts, can be extended to energy-constrained systems in infinite dimensions, as shown in \cite{BeckerDattaRouze-ContinuityBounds-2021}, \cite{Winter-AlickiFannes-2016} (see also the recent \cite{Shirokov-ContinuityInfinite-2022}). We leave for future work the possibility of extending the results presented here to a similar framework.

\vspace{0.5cm}

\noindent \textbf{Acknowledgements:}
The authors are grateful to Li Gao for the interesting discussions, to Peter Brown for the example used in \cref{prop:discontinuity_conditional_BS_entropy}, and to Mark Wilde for the discussion on the difference between BS-mutual information and its variational counterpart. Moreover, the authors would like to thank {\'A}lvaro Alhambra for pointing them to \cite{gour2022role}. The authors also express their thanks to Ludovico Lami and Marco Tomamichel for spotting an error in Lemma 5.8 of an earlier version of the paper. A.B.~acknowledges financial support from the European Research Council (ERC Grant Agreement No. 81876) and VILLUM FONDEN via the QMATH Centre of Excellence (Grant No.10059). A.P.H acknowledges financial support from the Spanish Ministerio de Ciencia e Innovación (grant PID2020-
113523GB-I00) and Comunidad de Madrid (grant
QUITEMAD-CMS2018/TCS-4342). This work was partially funded by the Deutsche Forschungsgemeinschaft (DFG, German Research Foundation) – Project-ID 470903074 – TRR 352.

\bibliographystyle{abbrv}
\bibliography{lit}

\newpage
\appendix
\section{Numerical investigation of the variational definition of the BS-conditional entropy}\label{sec:sec_breakdown_numerical_formula}

\begin{figure}[ht!]
    \centering
    \includegraphics{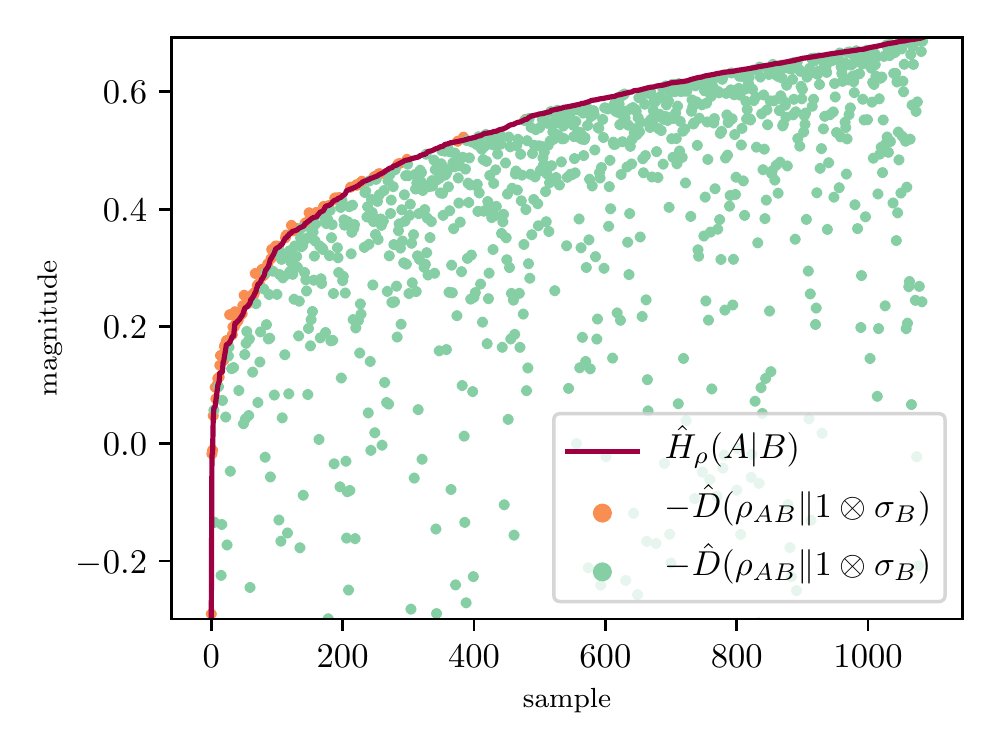}
    \caption{The red line is the BS-conditional entropy defined via the partial trace evaluated at $\rho_{AB}$. The dots are the BS-entropy between the state $\rho_{AB}$ and $\identity_A \otimes \sigma_B$ with $\sigma_B \in \cS(\cH_B)$ sampled at random. The orange dots are the cases when the $-\widehat{D}(\rho_{AB}\Vert \identity_A \otimes \sigma_B)$ exceeds $\widehat{H}(A|B)_\rho$. We sampled a total of 100.000 pairs of $\rho_{AB}$ and $\sigma_B$ and evaluated both $\widehat{H}(A|B)_\rho$ and $-\widehat{D}(\rho_{AB} \Vert \identity_A \otimes \sigma_B)$. Only a tenth of all samples were kept in addition to the ones that violated the bound. Those were then plotted in ascending order w.r.t the magnitude of their BS-conditional entropy. We further controlled the minimal eigenvalue and set $\cH_A \otimes \cH_B = \C^2\otimes\C^2$ to reduce the risk of numerical flaws.}
    \label{fig:breakdown_variational_formula_bsce}
\end{figure}

\section{Supplements to the proof of \texorpdfstring{\cref{theo:theo_almost_concavity_relative_entropy}}{Theorem 5.1}}\label{sec:sec_supplements_to_proof_almost_concavity_relative_entropy}

We will now show that the result of the inequality in \cref{eq:eq_int_ineq} is still true, even if $\rho_1, \rho_2, \sigma_1, \sigma_2$ are not full rank. We have that
\begin{equation}
    \ker \sigma \subseteq \ker \sigma_1 \subseteq \ker \rho_1. \label{eq:eq_kernel_inclusions_proof_almost_concavity_relative_entropy}
\end{equation}
If $\ker \sigma \subsetneq \ker \rho_1$ we set
\begin{equation}
    \widetilde{\Pi}_{\rho_1} := P_{\ker{\rho_{1}} \cap (\ker{\sigma})^{\perp}} \quad, \qquad \Pi_{\rho_1} := \norm{\widetilde{\Pi}_{\rho_1}}_1^{-1} \, \widetilde{\Pi}_{\rho_1}\, ,
\end{equation}
and if $\ker \sigma \subsetneq \ker \sigma_1$, 
\begin{equation}
    \widetilde{\Pi}_{\sigma_1} := P_{\ker{\sigma_{1}} \cap (\ker{\sigma})^{\perp}}\quad, \qquad \Pi_{\sigma_1} := \norm{\widetilde{\Pi}_{\sigma_1}}_1^{-1} \, \widetilde{\Pi}_{\sigma_1} \, ,
\end{equation}
normalised projections on the spaces in the index. Both of the latter are quantum states and fulfil
\begin{equation}
    \Pi_{\rho_1} \rho_1 = \rho_1 \Pi_{\rho_1} = 0, \qquad \Pi_{\sigma_1}\sigma_1 = \sigma_1 \Pi_{\sigma_1} = 0, \qquad \Pi_{\sigma_1} \rho_1 = \rho_1 \Pi_{\sigma_1}  = 0 \, . \label{eq:eq_orthongonality_proof_almost_concavity_relative_entropy}
\end{equation}
For $1 > \varepsilon > 0$ and $1 > \delta > 0$, let
\begin{equation}
    \begin{aligned}
        \rho_{1, \varepsilon} &= \begin{cases}
            \varepsilon \Pi_{\rho_1} + (1 - \varepsilon) \rho_1 & \text{if } \ker \sigma \subsetneq \ker \rho_1\\
            \rho_1 & \text{if } \ker \sigma = \ker \rho_1
        \end{cases},\\
        \sigma_{1, \delta} &= \begin{cases}
            \delta \Pi_{\sigma_1} + (1 - \delta) \sigma_1 & \text{if } \ker \sigma \subsetneq \ker \sigma_1\\
            \sigma_1 & \text{if } \ker \sigma = \ker \sigma_1
        \end{cases}.\\
    \end{aligned}
\end{equation}
We then have that $\ker \rho_{1, \varepsilon} = \ker \sigma_{1, \delta} = \ker \sigma$. This means, however, considering $\tr[\rho_{1, \varepsilon}(\log \sigma - \log \sigma_{1, \delta})]$ we can reduce to the subspace where they are all full rank. We then apply the Peierls-Bogoliubov inequality \cite{OhyaPetz-Entropy-1993} and the multivariant trace inequality by Sutter et al. \cite[Corollary 3.3]{SutterBertaTomamichel-Multivariate-2017}
\begin{equation}
    \begin{aligned}
        \tr[\rho_{1, \varepsilon}(\log\sigma - \log \sigma_{1, \delta})] &\le \log\tr\left[\exp\left(\log(\rho_{1, \varepsilon}) + \log(\sigma) - \log(\sigma_{1, \delta})\right)\right]\\
        &\le\log\int\limits_{-\infty}^\infty dt\,\beta_0(t)\,\tr\left[\rho_{1, \varepsilon} \sigma_{1, \delta}^{\frac{it - 1}{2}} \sigma \sigma_{1, \delta}^{\frac{-it -1}{2}}\right].\label{eq:eq_int_ineq_proof_almost_concavity_relative_entropy}
    \end{aligned}
\end{equation}
Both of the traces on the LHS and RHS of \cref{eq:eq_int_ineq_proof_almost_concavity_relative_entropy} can without change be extended to the full Hilbert space again. Next, we take limits on both sides of the inequality and in doing so recover the claim. We first note that the limit $\varepsilon \to 0$ requires no more argument as both sides are linear in $\varepsilon$. Hence, we get
\begin{equation}\label{eq:almost_concavity_RE_auxiliary_non-full-rank}
    \tr[\rho_1(\log\sigma - \log \sigma_{1, \delta})] \le \log\int\limits_{-\infty}^\infty dt\,\beta_0(t)\,\tr\left[\rho_1 \sigma_{1, \delta}^{\frac{it - 1}{2}} \sigma \sigma_{1, \delta}^{\frac{-it -1}{2}}\right].
\end{equation}
The limit $\delta \to 0$ on the other hand is, in the case of $\ker \sigma \subsetneq \ker \sigma_1$, a little more involved. Due to the orthogonality in \cref{eq:eq_orthongonality_proof_almost_concavity_relative_entropy} we cannot only split up the logarithm but also eliminate terms. More specifically, we have
\begin{equation}
    \log \sigma_{1, \delta} = \log(\delta \Pi_{\sigma_{1}} ) + \log((1-\delta)\sigma_{1}) \, ,
\end{equation}
where the logarithms in the RHS have to be understood as living in the support of the respective argument (and complemented with zeros in the rest). Hence, we obtain for the LHS of \cref{eq:almost_concavity_RE_auxiliary_non-full-rank}
\begin{equation}
    \begin{aligned}
        \tr[\rho_1(\log \sigma - \log \sigma_{1, \delta})] &= \tr[\rho_1(\log \sigma - \log (\delta \Pi_{\sigma_1} + (1 - \delta) \sigma_1))]\\
        &= \tr[\rho_1(\log \sigma - \log ((1 - \delta) \sigma_1)] + \tr[\rho_1 \log(\delta \Pi_{\sigma_1})]\\
        &= \tr[\rho_1(\log \sigma - \log ((1 - \delta) \sigma_1)] \\
        &= \tr[\rho_1(\log \sigma - \log\sigma_1] + \log(1 - \delta) \, .
    \end{aligned}
\end{equation}
Moreover, for the RHS of \cref{eq:almost_concavity_RE_auxiliary_non-full-rank} we use that
\begin{equation}
    \sigma_{1, \delta}^{z} = \delta^{z} \Pi_{\sigma_{1}}^{z} + (1-\delta)^{z} \sigma_{1}^{z} \, ,
\end{equation}
for any $z \in \mathbb{C}$,  where the last exponential has to be understood again in the support of the respective argument. Thus, we obtain
\begin{equation}
    \begin{aligned}
        \tr\left[\rho_1 \sigma_{1, \delta}^{\frac{it - 1}{2}} \sigma \sigma_{1, \delta}^{\frac{-it -1}{2}}\right] & = (1 - \delta)^{-1} \tr\left[\rho_1 \sigma_{1}^{\frac{it - 1}{2}} \sigma \sigma_{1}^{\frac{-it -1}{2}}\right] \\
        & + (1 - \delta)^{\frac{it-1}{2}} \delta^{\frac{-it-1}{2}}  \tr\left[\rho_1 \sigma_{1}^{\frac{it - 1}{2}} \sigma \Pi_{\sigma_1}^{\frac{-it -1}{2}}\right] \\
        & + \delta^{\frac{it-1}{2}} (1 - \delta)^{\frac{-it-1}{2}}  \tr\left[\rho_1 \Pi_{\sigma_1}^{\frac{it - 1}{2}} \sigma \sigma_{1}^{\frac{-it -1}{2}}\right] \\
        & + \delta^{- 1}   \tr\left[\rho_1 \Pi_{\sigma_1}^{\frac{it - 1}{2}} \sigma \Pi_{\sigma_1}^{\frac{-it -1}{2}}\right]\\
        &= (1 - \delta)^{-1} \tr\left[\rho_1 \sigma_{1}^{\frac{it - 1}{2}} \sigma \sigma_{1}^{\frac{-it -1}{2}}\right] \,.
    \end{aligned}
\end{equation}
Taking the limit $\delta \to 0$ now directly follows from the continuity of the logarithm. We thereby conclude 
\begin{equation}
     p\tr[\rho_1 (\log(\sigma) - \log(\sigma_1))] \le p \log\int\limits_{-\infty}^\infty dt\,\beta_0(t)\,\tr\left[\rho_1 \sigma_1^{\frac{it - 1}{2}} \sigma \sigma_1^{\frac{-it -1}{2}}\right] \, ,
\end{equation}
for $\sigma_1, \sigma_2, \rho_1$ not full rank.

\section{Proof of \texorpdfstring{\cref{prop:prop_almost_concave_estimate_relative_entropy_well_behaved}}{Proposition 5.2}}\label{sec:sec_proof_almost_concave_estimate_relative_entropy_well_behaved}

We first of all note that for all $\rho_1, \rho_2 \in \cS(\cH)$ we have $\frac{1}{2}\norm{\rho_1 - \rho_2}_1 \le 1$, hence as a direct consequence $f_{c_1, c_2} + \frac{1}{2}\norm{\rho_1 - \rho_2}_1 h \le f_{c_1, c_2} + h$. We therefore will drop the $\frac{1}{2}\norm{\rho_1 - \rho_2}$ in front of the $h$ here already. 
\begin{enumerate}
    \item If $\sigma_1 = \sigma_2 =: \sigma$, we find for $j=1,2$ that 
    \begin{equation}
        c_j =  \int\limits_{-\infty}^\infty dt \beta_0(t)  \tr[\rho_j \sigma^{\frac{it - 1}{2}} \sigma \sigma^{\frac{-it - 1}{2}}] =  \int\limits_{-\infty}^\infty dt \beta_0(t) \tr[\rho_j] = 1.
    \end{equation}
    The reduction of $f_{c_1, c_2} + h$ to $h$ then happens because $\log(p + (1 - p)) = \log(1) = 0$ gives $f_{c_1, c_2} = 0$.
    \item With $j, k = 1, 2$, $j \ne k$ and $\widetilde m \rho_j \le \sigma_j$, we find
    \begin{equation}
        \sigma_j^{\frac{it -1}{2}} \rho_j \sigma_j^{\frac{-it-1}{2}} \le \sigma_j^{\frac{it -1}{2}} \widetilde m^{-1} \sigma_j \sigma_j^{\frac{-it-1}{2}} \le \widetilde{m}^{-1} P_{\sigma_j} \le \widetilde{m}^{-1} \identity
    \end{equation}
    where $P_{\sigma_j}$ is the projection onto the support of $\sigma_j$. We therefore find
    \begin{equation}
        c_j \le \int\limits_{-\infty}^\infty dt \beta_0(t) \widetilde m^{-1} \tr[\sigma_k] = \widetilde m^{-1}  \, . 
    \end{equation}
    By the monotonicity of the logarithm, we obtain $f_{c_1, c_2} \le f_{\widetilde m^{-1}, \widetilde m^{-1}}$ and hence $f_{c_1, c_2} + h \le f_{\widetilde m^{-1}, \widetilde m^{-1}} + h$.
    \item For $j, k = 1, 2$, $j \ne k$ we have 
    \begin{equation}
        \begin{aligned}
            c_j &=  \int\limits_{-\infty}^\infty dt \beta_0(t)  \tr[\rho_{j, AB} (\identity_A \otimes \rho_{j, B})^{\frac{it - 1}{2}} \identity_A \otimes \rho_{k, B} (\identity_A \otimes \rho_{j, B})^{\frac{-it - 1}{2}}]\\
            &=  \int\limits_{-\infty}^\infty dt \beta_0(t)  \tr[\rho_{j, AB} \identity_A \otimes (\rho_{j, B}^{\frac{it - 1}{2}}\rho_{k, B}  \rho_{j, B}^{\frac{-it - 1}{2}})]\\
            &=  \int\limits_{-\infty}^\infty dt \beta_0(t)  \tr[\rho_{j, B}(\rho_{j, B}^{\frac{it - 1}{2}}\rho_{k, B}  \rho_{j, B}^{\frac{-it - 1}{2}})]\\
            &=  \int\limits_{-\infty}^\infty dt \beta_0(t)  \tr[\rho_{k, B}] = 1.
        \end{aligned}
    \end{equation}
    We used that the functional calculus has the property that $f(A \otimes B) = f(A) \otimes f(B)$ for $A$, $B$ self-adjoint, as can easily be verified by direct computation, and that the trace is cyclic. This gives us $f_{c_1, c_2} = f_{1, 1} = 0$ which concludes the claim.
    \item The derivative of $p \mapsto \frac{1}{1 - p} h(p)$ at $p \in (0, 1)$ is $-\frac{\log(p)}{(1 - p)^2} \ge 0$, which proves the second assertion.
    For $p \mapsto \frac{1}{1 - p}f_{m_1, m_2}(p) = \frac{p}{1 - p} \log(p + m_1 (1 - p)) + \log(1 - p + m_2 p)$ we use similar reasoning. First we use that $m_2 \ge 1$ hence $\log(1 - p + m_2 p) = \log(1 + (m_2 - 1)p)$ is monotone in $p$, i.e. in particular non-decreasing. Second we note that $p \mapsto \frac{p}{1 - p} \log(p + m_1 (1 - p))$ is monotone in $p$, because forming the derivative at $p \in (0, 1)$, we get 
    \begin{equation}
        \begin{aligned}
              \frac{1}{(1 - p)^2}&\Big(\frac{p}{p + (1 - p)m_1} + \log(p + m_1(1 - p)) - p\Big)\\
              &\ge \frac{1}{(1 - p)^2}\Big(\frac{p}{p + (1 - p)m_1} + \frac{p + (1 - p)m_1 - 1}{p + m_1(1 - p)} - p\Big)\\
              &= \frac{1}{(1 - p)^2}\Big(\frac{m_1(1 - p) + 2p - 1}{p + (1 - p)m_1} - p\Big)\\
              &= \frac{1}{(1 - p)^2}\Big(\frac{m_1(1 - p) + 2p - 1 - p(p + (1 - p)m_1)}{p + (1 - p)m_1}\Big)\\
              &= \frac{1}{(1 - p)^2}\Big(\frac{(m_1 - 1)(p - 1)^2}{p + (1 - p)m_1}\Big)\\
              &\ge 0 \, .
        \end{aligned}
    \end{equation}
    We used that for $x \ge 1$, $\log(x) \ge \frac{x - 1}{x}$ (this can be seen by taking the derivative and realizing that both sides coincide for $x=1$) and $m_1 \ge 1$. This concludes the claim.
\end{enumerate}

\section{Proof of \texorpdfstring{\cref{lem:delta_invariant_subset}}{Proposition 5.13}}\label{sec:sec_proof_delta_invariant_subset}

We first show that for $s \ge \widetilde{m}$, $\cS_0$ is $s$-perturbed $\Delta$-invariant. For that purpose let $\sigma_1, \sigma_2 \in \cS_0$, then we find
\begin{equation}
    \Delta^\pm(\sigma_1, \sigma_2, \rho) = s \rho + (1 - s)[\sigma_1 - \sigma_2]_\pm \ge \widetilde{m} \rho,
\end{equation}
which immediately gives the kernel inclusion as well as the condition to be lower bounded by $\widetilde{m} \rho$. Therefore, $\Delta^\pm(\sigma_1, \sigma_2, \tau) \in \cS_0$ which makes $\cS_0$ an $s$-perturbed $\Delta$-invariant set. We show the other direction by contrapositive. Let $s < \widetilde{m}$. Since $\widetilde{m} < 1$ and $\rank \rho \ge 2$ we find an $\varepsilon > 0$ and two orthonormal $\ket{0}, \ket{1} \in \supp \rho$, such that $\widetilde{m} \rho < \rho - \frac{\varepsilon}{2}\dyad{i}$ for $i = 0, 1$. We then have that 
\begin{equation}
    \begin{aligned}
        \sigma_1 &= \rho + \frac{\varepsilon}{2}\dyad{0} - \frac{\varepsilon}{2} \dyad{1}\\
        \sigma_2 &= \rho - \frac{\varepsilon}{2}\dyad{0} + \frac{\varepsilon}{2} \dyad{1}
    \end{aligned}
\end{equation}
manifestly are contained in $\cS_0$. Furthermore, $\frac{1}{2}\norm{\sigma_1 - \sigma_2}_1 = \varepsilon$ and
\begin{equation}
    \begin{aligned}
        \varepsilon^{-1}[\sigma_1- \sigma_2]_+ &= \dyad{0}\\
        \varepsilon^{-1}[\sigma_1 - \sigma_2]_- &= \dyad{1}
    \end{aligned}.
\end{equation}
We will now show that there exists no $\tau \in \cS(\cH)$ such that $\Delta^\pm(\sigma_1, \sigma_2, \tau) \in \cS_0$ again, meaning $\cS_0$ is not $s$-perturbed $\Delta$-invariant. Assume there is an operator $\tau \ge 0$ such that $\Delta^\pm(\sigma_1, \sigma_2, \tau)\in \cS_0$ we then would have 
\begin{equation}
    \begin{aligned}
        \dyad{0}^\perp \Delta^+(\sigma_1, \sigma_2, \tau) \dyad{0}^\perp &= \dyad{0}^\perp s \tau \dyad{0}^\perp \ge \widetilde{m}\dyad{0}^\perp\rho \dyad{0}^\perp\\
        \dyad{1}^\perp \Delta^-(\sigma_1, \sigma_2, \tau) \dyad{1}^\perp &= \dyad{1}^\perp s \tau \dyad{1}^\perp \ge \widetilde{m}\dyad{1}^\perp\rho \dyad{1}^\perp
    \end{aligned}
    \label{eq:eq_delta_state_projections}
\end{equation}
where $\dyad{i}^\perp := P_{\rho} - \dyad{i}$ for $i =0, 1$. Here $P_\rho$ is the projection on the support of $\rho$. We further used $\Delta^\pm(\sigma_1, \sigma_2, \tau) \ge \widetilde{m}\rho$ as $\Delta^\pm(\sigma_1, \sigma_2, \tau)$ are in $\cS_0$ by assumption. To fulfil \cref{eq:eq_delta_state_projections} we clearly need to choose $s > 0$ and since $s < \widetilde{m}$ we directly obtain the conditions
\begin{equation}
    \dyad{0}^\perp \tau \dyad{0}^\perp \gvertneqq  \dyad{0}^\perp \rho \dyad{0}^\perp \quad\text{and}\quad \dyad{1}^\perp \tau \dyad{1}^\perp \gvertneqq \dyad{1}^\perp\rho \dyad{1}^\perp \,.
\end{equation}
This gives us, 
\begin{equation}
    \begin{aligned}
        \tr[\tau] &\ge \tr[\dyad{0}^\perp \tau \dyad{0}^\perp + \dyad{0} \tau \dyad{0}] = \tr[\dyad{0}^\perp \tau \dyad{0}^\perp + \dyad{0} \dyad{1}^\perp \tau \dyad{1}^\perp \dyad{0}]\\
        &> \tr[\dyad{0}^\perp \rho \dyad{0}^\perp + \dyad{0} \dyad{1}^\perp \rho \dyad{1}^\perp \dyad{0}] = \tr[\dyad{0}^\perp \rho \dyad{0}^\perp + \dyad{0}  \rho  \dyad{0}]\\
        &= \tr[P_\rho \rho] = \tr[\rho] =  1,
    \end{aligned}
\end{equation}
where we used that $\ket{0}$ and $\ket{1}$ are orthogonal, hence $\dyad{0} \dyad{1}^\perp = \dyad{1}^\perp \dyad{0} = \dyad{0}$ and $\dyad{0}^2 = \dyad{0}, (\dyad{0}^\perp)^2 = \dyad{0}^\perp$. We thus conclude $\tau \not\in \cS(\cH)$ proving the claim.

\section{Proof of \texorpdfstring{\cref{prop:prop_almost_concave_estimate_bs_entropy_well_behaved}}{Proposition 6.5}}\label{sec:sec_proof_almost_concave_estimate_BS_entropy_well_behaved}

\begin{enumerate}
    \item If $\sigma_1 = \sigma_2 = \sigma$, then for $j = 1, 2$
    \begin{equation}
        \begin{aligned}
            \hat{c}_j &= \int\limits_{-\infty}^\infty dt \beta_0(t) \tr[\rho_j(\rho_j^{1/2} \sigma^{-1} \rho_j^{1/2})^{\frac{it + 1}{2}} \rho_j^{-1/2}\sigma\rho_j^{-1/2}(\rho_j^{1/2} \sigma^{-1} \rho_j^{1/2})^{\frac{-it + 1}{2}}]\\
            &= \int\limits_{-\infty}^\infty dt \beta_0(t) \tr[\rho_j] = \int\limits_{-\infty}^\infty dt \beta_0(t) = 1
        \end{aligned}
    \end{equation}
    which gives us immediately $f_{\hat{c}_1, \hat{c}_2} + \hat{c}_0h = \hat{c}_0 h$.
    \item For $j, k = 1, 2$ with $j \ne k$ we first have $\sigma_k \le m^{-1} \sigma_j$ giving us
    \begin{equation}
        \begin{aligned}
            \hat{c}_j &\le \int\limits_{-\infty}^\infty dt \beta_0(t) \tr[\rho_j(\rho_j^{1/2} \sigma_j^{-1} \rho_j^{1/2})^{\frac{it + 1}{2}} \rho_j^{-1/2} m^{-1}\sigma_j\rho_j^{-1/2}(\rho_j^{1/2} \sigma_j^{-1} \rho_j^{1/2})^{\frac{-it + 1}{2}}]\\
            &= m^{-1} \int\limits_{-\infty}^\infty dt \beta_0(t) \tr[\rho_j] = m^{-1} \, .
        \end{aligned}
    \end{equation}
    Since $\hat{c}_0 \le m^{-1}$ and because the logarithm is monotone this immediately gives $f_{\hat{c}_1, \hat{c}_2} + \hat{c}_0 h \le f_{m^{-1}, m^{-1}} + m^{-1} h$.
    \item The proof is along the same lines as the one for 2., however with $\sigma_j = d_A^{-1} \identity_A \otimes \rho_{j,B}$. We just have to show that the minimal eigenvalue of $\sigma_j$ is bounded from below by $m$. We use that $T_A: \tau \mapsto d_A^{-1}\identity_A \otimes \tau_B$ is a conditional expectation and that $d_A^{-1} \identity_A \otimes \tau_B$ is full rank if $\tau$ was full rank \cite[Theorem 4.13]{Carlen-TraceInequalities-2009}. This means, however, 
    \begin{equation}
        (d_A^{-1} \identity_A \otimes \rho_{j, B})^{-1}= T_A(\rho_j)^{-1} \le T_A(\rho_j^{-1}) \, ,
    \end{equation}
    where we used \cite[Theorem 4.16]{Carlen-TraceInequalities-2009}. This gives us
    \begin{equation}\label{eq:eq_minimal_eigenvalue_partial_trace}
        \norm{(d_A^{-1} \identity_A \otimes \rho_{j, B})^{-1}}_\infty \le \norm{T_A(\rho^{-1})}_\infty \le \norm{\rho^{-1}}_\infty \le  m^{-1} \, .
    \end{equation} 
    Hence, we have that $\norm{(d_A^{-1} \otimes \rho_{j, B})^{-1}}_\infty^{-1}$ the minimal eigenvalue of $d_A^{-1} \otimes \rho_{j, B}$ is bounded from below by $m$. From here on the proof is analogous to the one in 2. We obtain $f_{\hat{c}_1, \hat{c}_2} + \hat{c}_0 h \le f_{m^{-1}, m^{-1}} + \hat{c}_0 h$ and again use \cref{eq:eq_minimal_eigenvalue_partial_trace} to get $f_{m^{-1}, m^{-1}} + \hat{c}_0 h \le f_{m^{-1}, m^{-1}} + m^{-1} h$.
    \item The proof is completely analogous to the one in 4. of \cref{sec:sec_proof_almost_concave_estimate_relative_entropy_well_behaved}.
\end{enumerate}

\section{Proof of \texorpdfstring{\cref{prop:prop_bound_BS_entropic_quantities}}{Proposition 6.6}}\label{sec:sec_proof_bound_BS_entropic_quantities}

\begin{enumerate}
    \item We begin with the BS-conditional information. The upper bound on $\widehat{H}_\cdot(A|B)$ can be obtained by
    \begin{equation}
        \widehat{H}_\rho(A|B) = -\widehat{D}(\rho_{AB} \Vert d_A^{-1} \identity_A \otimes \rho_B) + \log d_A \le \log d_A \, .
    \end{equation}
    where we used the non-negativity of $\widehat{D}(\cdot\Vert \cdot)$ on quantum states. The bound is attained if one inserts the maximally mixed state, i.e., $\rho_{AB} = d_{AB}^{-1} \identity_{AB}$. For the lower bound we use that $-\widehat{D}(\cdot \Vert \cdot)$ is jointly concave and $\tr_A[\cdot]$ linear which means without loss of generality one can assume $\rho$ to be pure, i.e., a rank one projection. Then
    \begin{equation}
        \begin{aligned}
             \widehat{H}_{\dyad{\psi}}(A|B) &= -\widehat{D}(\dyad{\psi} \Vert \identity_A \otimes P_B) = -\tr[\dyad{\psi} \log \dyad{\psi}^{1/2} (\identity_A \otimes P_B^{-1}) \dyad{\psi}^{1/2}]\\
             &= -\log \tr[\dyad{\psi} (\identity_A \otimes P_B^{-1})] = -\log\tr[P_B P_B^{-1}] \, ,
        \end{aligned}
    \end{equation}
    with $P_B = \tr_A[\dyad{\psi}]$. Employing the Schmidt decomposition to $\dyad{\psi}$ we find that 
    \begin{equation}
        P_B = \sum\limits_{i = 1}^d \lambda_i^2 P_i
    \end{equation}
    with $P_i$ orthogonal rank one projections on $\mathcal{H}_{B}$, $\lambda_i^2 > 0$ and $\sum\limits_{i = 1}^d \lambda_i^2 = 1$. Further $d \le \min\{d_A, d_B\}$ the Schmidt rank. This gives us that
    \begin{equation}
        \tr[P_B P_B^{-1}] = \sum\limits_{i = 1}^d \lambda_i^2 \lambda_i^{-2} = d \le \min\{d_A, d_B\}.
    \end{equation}
    Through monotonicity of the logarithm, we obtain the lower bound, i.e.,
    \begin{equation}
        \widehat{H}_\rho(A|B) \ge -\log\min\{ d_A, d_B\}.
    \end{equation}
    This bound is attained for $\rho$ a pure state with full Schmidt rank, which can directly be seen from the above calculations.
    \item We now tackle the BS-mutual information. The lower bound, i.e. $\widehat{I}_\rho(A:B) \geq 0$, is a direct consequence of the data processing inequality \cite{HiaiMosonyi_2011}. Applying $\tr_A[\,\cdot\,]$, we find
    \begin{equation}
        \widehat{I}_\rho(A:B) = \widehat{D}(\rho_{AB}\Vert \rho_A \otimes \rho_B) \ge \widehat{D}(\rho_B \Vert \rho_B) = 0.
    \end{equation}
    To proof the upper bound, we w.l.o.g assume that $\norm{\rho_A^{-1}}_\infty \le \norm{\rho_B^{-1}}_\infty$. We then use that $\rho_A \otimes \rho_B \ge \norm{\rho_A^{-1}}_\infty^{-1} P_{\rho_A} \otimes \rho_B$, where $P_{\rho_A}$ is the projection to the support of $\rho_A$. This gives us 
    \begin{equation}
        \begin{aligned}
            \widehat{I}_\rho(A:B) &= \widehat{D}(\rho_{AB} \Vert \rho_A \otimes \rho_B) \le \widehat{D}(\rho_{AB} \Vert P_{\rho_A}  \otimes \rho_B) + \log \norm{\rho_A^{-1}}_\infty\\
            &=\widehat{D}(\rho_{AB} \Vert \identity_A \otimes \rho_B) + \log\norm{\rho_A^{-1}}_\infty = - \widehat{H}_\rho(A|B) + \log\norm{\rho_A^{-1}}_\infty\\
            &\le \log\min\{d_A, d_B\} + \log \norm{\rho_A^{-1}}_\infty\\
            &\le\log\min\{d_A, d_B\} + \log\min\{ \norm{\rho_A^{-1}}_\infty, \norm{\rho_B^{-1}}_\infty\}
        \end{aligned}
    \end{equation}
    In the second equality we used that $(\ker \rho_A) \otimes \cH_{B} \subseteq \ker \rho_{AB}$, so extending $P_{\rho_A}$ to $\identity_A$ has no effect. With the next example, we will see that the bound is tight and scales with $\log\max\{\norm{\rho_A^{-1}}_\infty, \norm{\rho_B^{-1}}_\infty\}$ in some cases. For that purpose let $d_A \in \N$, $d_A \ge 2$ and a bipartite space $\cH_A\otimes \cH_B$ with $\cH_A$ having dimension $d_A$ and $\cH_B$ dimension $d_B = d_A + 1$. Furthermore, let $\varepsilon \in (0,1)$. We then consider sets of orthonormal vectors $\{\ket{i_A}\}_{i = 1}^{d_A} \subset \cH_A$, $\{\ket{i_B}\}_{i = 1}^{d_A} \subset \cH_B$ and define
    \begin{equation}
        \ket{\psi} := \sum\limits_{i = 1}^{d_A - 1} \sqrt{\frac{\varepsilon}{d_A - 1}} \ket{i_A} \otimes \ket{i_B} + \sqrt{1 - \varepsilon} \ket{(d_{A})_A} \otimes \ket{(d_A)_B} = \sum\limits_{i = 1}^{d_A} \sqrt{\lambda_i} \ket{i_A} \otimes \ket{i_B} \, .
    \end{equation}
    with the $\lambda_i$ defined accordingly. We find that
    \begin{equation}
        \begin{aligned}
            \rho_A := \tr_B[\dyad{\psi}] &= \sum\limits_{i = 1}^{d_A} \lambda_i \dyad{i_A}\,,\\
            \rho_B := \tr_A[\dyad{\psi}]&= \sum\limits_{i = 1}^{d_A}\lambda_i \dyad{i_B}\,,
        \end{aligned}
    \end{equation}
    and the Moore-Penrose pseudoinverse (in the case of $\rho_A$ it is an inverse)
    \begin{equation}
        \begin{aligned}
            \rho_A^{-1} &= \sum\limits_{i = 1}^{d_A} \lambda_i^{-1} \dyad{i_A}\,,\\
            \rho_B^{-1} &= \sum\limits_{i = 1}^{d_A}\lambda_i^{-1} \dyad{i_B}\,.
        \end{aligned}
    \end{equation}
    We find
    \begin{equation}
        \begin{aligned}
            \tr[\dyad{\psi} \rho_A^{-1}\otimes \rho_B^{-1}] &= \sum\limits_{i, j, k, l} \frac{\sqrt{\lambda_i}\sqrt{\lambda_j}}{\lambda_k \lambda_l} \braket{i_A | k_A} \braket{k_A | j_A} \braket{i_B | l_B} \braket{l_B | j_B}\\
            &= \sum\limits_{i, j, k, l} \frac{\sqrt{\lambda_i}\sqrt{\lambda_j}}{\lambda_k \lambda_l}  \delta_{ik} \delta_{kj} \delta_{il} \delta_{lj}\\
            &= \sum\limits_{i} \frac{1}{\lambda_i} = \frac{(d_A - 1)^2}{\varepsilon} + \frac{1}{1 - \varepsilon} \, ,
        \end{aligned}
    \end{equation}
    with which, as $\dyad{\psi}$ is a rank one projection
    \begin{equation}
        \begin{aligned}
            \widehat{I}_{\dyad{\psi}}(A:B) &= \tr[\dyad{\psi}\log(\dyad{\psi}^{1/2} \rho_A^{-1} \otimes \rho_B^{-1} \dyad{\psi}^{1/2})]\\
            &= \log \tr[\dyad{\psi} \rho_A^{-1} \otimes \rho_B^{-1}]\\
            &= \log(\frac{(d_A - 1)^2}{\varepsilon} + \frac{1}{1 - \varepsilon}) \ge \log(\frac{(d_A - 1)^2}{\varepsilon}) \, . \label{eq:eq_exact_BS_mutual_information_proof}
        \end{aligned}
    \end{equation}
    We directly obtain $\norm{\rho_A^{-1}}_\infty = \norm{\rho_B^{-1}}_\infty = \frac{d_A - 1}{\varepsilon}$ and by construction $d_A < d_B$, hence the bound in \cref{eq:eq_bound_BS_mutual_information} gives
    \begin{equation}
        \widehat{I}_{\dyad{\psi}}(A:B) \le \log(\frac{d_A(d_A - 1)}{\varepsilon})\, . \label{eq:eq_upper_bound_BS_mutual_information_proof}
    \end{equation}
    We first note that for $\varepsilon = 1 - \frac{1}{d_A}$ we get equality in \cref{eq:eq_upper_bound_BS_mutual_information_proof}. What is, however, more interesting is the fact that 
    \begin{equation}
        \log(\frac{(d_A - 1)^2}{\varepsilon}) \le \widehat{I}_{\dyad{\psi}}(A:B) \le \log(\frac{d_A(d_A - 1)}{\varepsilon}) \, ,
    \end{equation}
    with 
    \begin{equation}
        \Big|\log(\frac{d_A(d_A - 1)}{\varepsilon}) - \log(\frac{(d_A - 1)^2}{\varepsilon})\Big| = \log(\frac{d_A}{d_A - 1}) \, .
    \end{equation}
    I.e., the error of the bound is of order $\log(\frac{d_A}{d_A - 1})$ independent of the $\varepsilon$. This means, that the scaling behaviour of the bound, in terms of the minimal non-zero eigenvalue of $\rho_A$ and $\rho_B$ respectively is the best one can do. 
    \item The lower bound of the BS-CMI is again a consequence of the data processing inequality. The upper bound is a direct consequence of the bounds obtained for the BS-conditional information due to the definition of the conditional mutual information in \cref{eq:eq_BS_conditional_mutual_information}
    \begin{equation}
        \begin{aligned}
             \widehat{I}_\rho(A:B|C) &= \widehat{H}_\rho(A|C) - \widehat{H}_\rho(A|BC)\\ 
             &\le \log d_A + \log\min\{d_A, d_{BC}\}\\
             &= \log\min\{d_A^2, d_{ABC}\} \, .
        \end{aligned}
    \end{equation}
    We expect that the tightness of such a bound can be proven in a similar way to the one for the BS-mutual information.
\end{enumerate}

\section{Behavior of \texorpdfstring{$g_d$}{g\_d}}

In this section, we study the function $g_d(p) := \frac{d}{p^{1/d}}h(p) - \log(1 - p^{1/d})$ for $p \in (0,1)$ and a fixed $d \in \mathbb{N}$, $d \ge 2$.  This function appears in some of the continuity bounds in \cref{sec:min-dist-sep-states}.

\begin{lemma} \label{lem:gd-cont}
    Let $d \in \mathbb N$, $d \geq 2$. Then, $\lim_{p \to 0^+} g_d(p) = 0$. In particular, $g_d$ is continuous on $p \in [0,1)$.
\end{lemma}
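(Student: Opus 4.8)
The plan is to split $g_d$ into its two summands and show that each converges to $0$ as $p \to 0^+$. Since $g_d(0)$ is defined to equal $0$, establishing $\lim_{p\to 0^+} g_d(p) = 0$ is precisely continuity at the left endpoint, while continuity on the open interval $(0,1)$ is automatic: there $p \mapsto p^{1/d}$ is continuous and strictly positive, $1 - p^{1/d} \in (0,1)$, and $h$ is continuous, so both $\frac{d}{p^{1/d}}h(p)$ and $-\log(1-p^{1/d})$ are compositions of continuous functions. Thus the whole statement reduces to computing the one-sided limit at $p = 0$.

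For the second summand, $-\log(1 - p^{1/d})$, the argument is immediate: as $p \to 0^+$ we have $p^{1/d} \to 0$, hence $1 - p^{1/d} \to 1$, and by continuity of the logarithm $-\log(1-p^{1/d}) \to 0$.

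For the first summand, with $h(p) = -p\log p - (1-p)\log(1-p)$, I would rewrite
\[
\frac{d}{p^{1/d}}h(p) = -d\, p^{1-1/d}\log p \;-\; d(1-p)\,p^{1-1/d}\,\frac{\log(1-p)}{p}.
\]
The first piece is handled by the standard limit $\lim_{p\to 0^+} p^{a}\log p = 0$ for any $a > 0$, applied with $a = 1 - 1/d$. For the second piece I would use $\frac{\log(1-p)}{p} \to -1$ together with $p^{1-1/d} \to 0$, so that the whole term vanishes as well.

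The single point that must be checked carefully — and the only place where the hypothesis $d \geq 2$ enters — is the positivity of the exponent $a = 1 - 1/d$. For $d \geq 2$ one has $1 - 1/d \geq 1/2 > 0$, which is exactly what forces $p^{1-1/d}\log p \to 0$ and $p^{1-1/d} \to 0$, letting the algebraic decay dominate the logarithmic blow-up of $1/p^{1/d}$ against the $-p\log p$ behaviour of $h$ near $0$. This is the main obstacle: for $d = 1$ the exponent would vanish and $p^{0}\log p = \log p \to -\infty$, so the claim would genuinely fail. Combining the two vanishing summands yields $\lim_{p \to 0^+} g_d(p) = 0$, and hence $g_d$ is continuous on $[0,1)$.
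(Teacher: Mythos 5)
Your proof is correct, and it takes a somewhat different route from the paper's. The paper evaluates $\lim_{p\to 0^+} \frac{d}{p^{1/d}}h(p)$ by writing it as a $0/0$ quotient and applying L'Hospital's rule twice, landing on $\frac{d^3}{d-1}\bigl(\frac{p^{2-1/d}}{1-p}+p^{1-1/d}\bigr)\to 0$; you instead decompose $h$ into its two summands and reduce everything to the standard limits $p^{a}\log p \to 0$ for $a>0$ and $\frac{\log(1-p)}{p}\to -1$. Your version is more elementary and makes the role of the hypothesis $d\geq 2$ fully explicit (the exponent $1-1/d$ must be strictly positive, and the argument indeed degenerates at $d=1$); the paper's L'Hospital computation arrives at the same exponents $2-1/d$ and $1-1/d$ but leaves that observation implicit. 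You also spell out why continuity on the open interval $(0,1)$ is automatic, which the paper dismisses with one sentence. Both arguments are complete; yours is arguably the cleaner one to read.
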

\begin{proof}
    Since $\lim_{p \to 0^+} \log(1 - p^{1/d}) = 0$, we can focus on $\frac{d}{p^{1/d}}h(p)$. The assertion follows from applying L'Hospital's rule twice. Indeed,
    \begin{align}
        \lim_{p \to 0^+} \frac{d}{p^{1/d}}h(p) &= \lim_{p \to 0^+} \frac{d (\log(1-p) - \log(p))}{p^{1/d-1} /d} \\
        &= \lim_{p \to 0^+} \frac{d (-(1-p)^{-1} - p^{-1})}{(1-d)p^{1/d-2} /d^2} \\
        &= \lim_{p \to 0^+} \frac{d^3}{d-1}  \left(\frac{p^{2-1/d}}{1-p} + p^{1-1/d} \right) \\
        &=0.
    \end{align}
    Continuity, therefore, follows from the definition of the function.
\end{proof}

\begin{lemma}\label{lem:gd-mono}
    Let $d \in \mathbb N$, $d \geq 2$. Then, the function $g_d$ is non-decreasing on $[0,1/2]$.
\end{lemma}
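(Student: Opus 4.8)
The plan is to show that $g_d'(p)\ge 0$ for every $p\in(0,1/2)$; combined with the left-endpoint behaviour $g_d(0)=0$ established in \cref{lem:gd-cont} and the smoothness of $g_d$ on $(0,1)$, this gives that $g_d$ is non-decreasing on all of $[0,1/2]$. So the whole statement reduces to a sign analysis of the derivative.

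First I would differentiate. Writing $g_d(p)=d\,p^{-1/d}h(p)-\log(1-p^{1/d})$ and using $h'(p)=\log\tfrac{1-p}{p}$, one obtains after factoring out the positive prefactor $p^{-1/d-1}$
\begin{equation}
    g_d'(p) = p^{-1/d-1}\Big[-h(p) + d\,p\,\log\tfrac{1-p}{p} + \frac{p^{2/d}}{d(1-p^{1/d})}\Big].
\end{equation}
Since $p^{-1/d-1}>0$ on $(0,1)$, the sign of $g_d'(p)$ equals the sign of the bracket, which I would denote $R(p)$. A short rearrangement of the first two terms, using $-h(p)=p\log p+(1-p)\log(1-p)$, gives the cleaner form
\begin{equation}
    R(p)=(1+(d-1)p)\log(1-p)-(d-1)p\log p+\frac{p^{2/d}}{d(1-p^{1/d})},
\end{equation}
so that the entire lemma reduces to proving the single inequality $R(p)\ge 0$ on $(0,1/2]$.

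To handle $R$, observe that it contains exactly one negative term, $(1+(d-1)p)\log(1-p)$, while the other two, $-(d-1)p\log p$ and the blow-up term $\tfrac{p^{2/d}}{d(1-p^{1/d})}$, are strictly positive. I would substitute $u=p^{1/d}\in(0,2^{-1/d})$ to clear the fractional powers, and use the factorization $1-p=(1-u)(1+u+\dots+u^{d-1})$ to rewrite the blow-up term as $\tfrac{u^2(1+u+\dots+u^{d-1})}{d(1-p)}$, putting all three terms on a common footing. The negative term is then controlled by a sufficiently sharp logarithmic estimate: crude bounds like $-\log(1-p)\le p/(1-p)$ are too lossy (they fail already at $d=2$, $p=\tfrac12$), whereas $-\log(1-p)\le p+p^2$ holds on $[0,1/2]$ and, together with the lower bound on the blow-up term, dominates the negative contribution. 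The endpoint behaviour is consistent with positivity, since as $p\to 0^+$ either $(d-1)p|\log p|$ (for $d=2$) or the blow-up term (for $d\ge 3$) dominates, so $R(0^+)\ge 0$.

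The main obstacle is precisely this final balancing. The inequality $R(p)\ge 0$ is close to tight near $p=\tfrac12$ — at $d=2$, $p=\tfrac12$ the margin is small — so the logarithmic estimates must be chosen carefully in order to close for \emph{every} integer $d\ge 2$ at once. The natural device is a uniform-in-$d$ argument: after the substitution $u=p^{1/d}$ and replacing the logarithms by their rational bounds, one clears denominators to reduce $R(p)\ge 0$ to a polynomial inequality in $u$, which can either be shown to reduce to the worst case $d=2$ via monotonicity in $d$, or be verified directly by checking the sign of the resulting polynomial on $(0,2^{-1/d})$. This is where the bulk of the technical work lies, the differentiation and the reduction to $R\ge 0$ being routine.
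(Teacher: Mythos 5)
Your reduction is sound as far as it goes: the derivative computation is correct, and your bracket $R(p)$ is exactly $p$ times the quantity the paper denotes $g_d'(p)$, so the lemma does reduce to showing $R(p)\ge 0$ on $(0,1/2]$. The problem is that you never actually prove this inequality. You correctly identify it as ``where the bulk of the technical work lies'' and then offer a menu of strategies --- bound the logarithms by rational functions, clear denominators, and \emph{either} reduce to the worst case $d=2$ by monotonicity in $d$ \emph{or} check the sign of the resulting polynomial directly --- but none of these is carried out, and the positivity of $R$ is precisely the content of the lemma. Two specific claims in the sketch are also shaky. First, after substituting $u=p^{1/d}$ and replacing $-\log(1-p)$ by $p+p^2$, the inequality is still not polynomial: the term $-(d-1)p\log p$ survives, and it cannot be discarded (at $d=2$, $p=\tfrac12$ the remaining terms give roughly $-1.125+0.854<0$), nor can it be replaced by the crude bound $-\log p\ge 1-p$ (which also fails there by a hair); so a genuinely sharp logarithmic estimate is needed on that side too, and you have not exhibited one that closes uniformly in $d$. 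Second, the monotonicity of the bracket in $d$ is asserted as an option rather than proved, and it is itself a nontrivial computation.

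For comparison, the paper's proof takes exactly that second route and completes it: it differentiates the bracket with respect to $d$, reduces the sign of $\partial_d g_d'(p)$, via the substitution $p=e^{dt}$, to the elementary inequality $-t\bigl(1+\tfrac{1}{1-e^t}\bigr)\ge 1$ for $t<0$, concludes $g_d'(p)\ge g_2'(p)$ for all $d\ge 2$, and then verifies $g_2'(p)\ge 0$ on $(0,1/2)$ directly. To turn your proposal into a proof you would need to supply one of these closing arguments in full; as written, the key inequality is only conjectured to follow from estimates you have not verified.
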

\begin{proof}
    We can differentiate $g_d(p)$ on $(0,1/2)$. This yields
    \begin{align}
        \frac{\partial}{\partial p} g_d(p) &= \frac{1}{p^{1/d}} \left(\frac{p^{2/d-1}}{d(1-p^{1/d})}+ (d-1+p^{-1})\log(1-p) - (d-1)\log(p) \right)\\
        &=: \frac{1}{p^{1/d}} g'_d(p)\, . \label{eq:eq_g_d_derivative}
    \end{align}
    We will now show monotonicity in $d$ of $g'_d(p)$ for all $p \in (0, 1/2)$. This will allow us to show the non-negativity of \cref{eq:eq_g_d_derivative} on $(0, 1/2)$ only for $d = 2$ and conclude it for all $d \ge 2$. We have 
    \begin{equation}
        \frac{\partial}{\partial d} g'_d(p) = \frac{p^{2/d - 1}\left(d(p^{1/d} - 1) + (p^{1/d} - 2)\log(p)\right)}{d^3(p^{1/d} - 1)^2} + \log(1 - p) - \log(p) \, .
    \end{equation}
    The above is non-negative for $p \in (0, 1/2)$, if 
    \begin{equation}
        (2 - p^{1/d})\log\frac{1}{p} \ge d(1 - p^{1/d}) \quad \Leftrightarrow \quad \big(1 + \frac{1}{1 - p^{1/d}}\big)\log\frac{1}{p} \ge d
    \end{equation}
    One obtains the last inequality by substitution of $p = e^{d t}$ with $t \in (-\infty, \frac{-\log(2)}{d})$ giving us
    \begin{equation}
        -dt \big(1 + \frac{1}{1 - e^t}\big) \ge d \quad \Leftrightarrow \quad -t\big(1 + \frac{1}{1 - e^t}\big) \ge 1
    \end{equation}
    which is true for $t \in (-\infty , 0)$ hence in particular on $(-\infty, -\frac{\log(2)}{d})$. We thereby have that for $d \ge 2$ $p \in (0, 1/2)$ $g'_d(p) \ge g'_2(p)$. It is straightforward to see that $g'_2(p) \ge 0$ on $p \in (0, 1/2)$. This finally lets us conclude the claim that $g_d(p)$ is non-decreasing on $p \in [0, 1/2]$ as $g_d(p)$ is continuous on $[0,1/2]$ by \cref{lem:gd-cont}.
\end{proof}

\begin{lemma} \label{lem:gd/(1-p)-mono}
    Let $d \in \mathbb N$, $d \geq 2$. Then, the function $p \mapsto g_d(p)/(1-p)$ is non-decreasing on $[0,1)$.
\end{lemma}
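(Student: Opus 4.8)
The plan is to split $G_d(p) := g_d(p)/(1-p)$ into two pieces that can be handled independently. Writing $h$ for the binary entropy as in \cref{theo:theo_almost_concavity_relative_entropy}, I would set
\begin{equation}
    G_d(p) = \underbrace{\frac{d}{p^{1/d}}\cdot\frac{h(p)}{1-p}}_{=:\,A(p)} \;+\; \underbrace{\frac{-\log(1-p^{1/d})}{1-p}}_{=:\,B(p)} \, ,
\end{equation}
and prove that each of $A$ and $B$ is non-decreasing on $(0,1)$; continuity down to $p=0$ is already supplied by \cref{lem:gd-cont}. The term $B$ is immediate: both $-\log(1-p^{1/d})$ and $(1-p)^{-1}$ are non-negative and non-decreasing on $(0,1)$, and a product of non-negative non-decreasing functions is non-decreasing.

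The substance lies in $A$. First I would take the logarithmic derivative,
\begin{equation}
    \frac{A'(p)}{A(p)} = -\frac{1}{dp} + \frac{\log\frac{1-p}{p}}{h(p)} + \frac{1}{1-p} \, ,
\end{equation}
using $h'(p)=\log\frac{1-p}{p}$. Since the only $d$-dependence sits in the term $-\frac{1}{dp}$, which is non-decreasing in $d$, it suffices to verify $A'\ge 0$ in the worst case $d=2$. As $A>0$, this is equivalent to $A'/A\ge 0$; clearing the positive factors $h(p)$ and $2p(1-p)$ should collapse the inequality to the elementary statement
\begin{equation}\label{eq:phi-aux}
    \phi(p) := -p(p+1)\log p + (1-p)^2\log(1-p) \ge 0 \, , \qquad p \in (0,1) \, .
\end{equation}

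To establish \eqref{eq:phi-aux} I would argue by concavity. A direct computation gives $\phi''(p) = 2\log\frac{1-p}{p} - \frac{1}{p}$, and $\phi''<0$ on $(0,1)$ is equivalent to $2p\log\frac{1-p}{p}<1$; this holds because $2p\log\frac{1-p}{p}\le -2p\log p\le 2/e<1$ on $(0,\tfrac12]$ (the middle bound being the maximum of $-p\log p$), while the left-hand side is negative on $[\tfrac12,1)$. Hence $\phi$ is strictly concave, and since $\phi(0^+)=\phi(1^-)=0$, a concave function with vanishing limits at both endpoints is non-negative throughout the interval. This yields $A'\ge 0$ for $d=2$, hence for all $d\ge 2$, so $A$, and therefore $G_d=A+B$, is non-decreasing.

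The main obstacle is essentially bookkeeping rather than conceptual: one must check that clearing denominators in the $d=2$ reduction simplifies exactly to the two-term expression $\phi$ (the coefficients of $\log p$ and $\log(1-p)$ collapsing precisely to $-p(p+1)$ and $(1-p)^2$), and confirm that the decomposition $G_d=A+B$ genuinely separates the monotone behaviour so that no cross terms have to be controlled. Once these are verified, the argument closes cleanly.
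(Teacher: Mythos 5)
Your argument is correct, and I checked the key computations: the logarithmic derivative of $A$, the identity $\phi''(p)=2\log\frac{1-p}{p}-\frac1p$, the bound $2p\log\frac{1-p}{p}\le 2/e<1$ on $(0,\tfrac12]$, and the algebra showing that $2p(1-p)h(p)\cdot\frac{A'(p)}{A(p)}\big|_{d=2}$ collapses exactly to $\phi(p)=-p(p+1)\log p+(1-p)^2\log(1-p)$ (the coefficients of $\log p$ and $\log(1-p)$ do come out to $-p(1+p)$ and $(1-p)^2$). However, your route is genuinely different from the paper's. The paper splits the \emph{domain}: on $[0,\tfrac12]$ it simply multiplies the two non-negative non-decreasing functions $g_d$ (via \cref{lem:gd-mono}) and $(1-p)^{-1}$, and on $[\tfrac12,1)$ it differentiates the quotient directly and lower-bounds the derivative by grouping terms into three separately non-negative expressions, each exploiting $p\ge\tfrac12$. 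You instead split the \emph{function} into the two summands $A$ and $B$ and prove each is monotone on all of $(0,1)$, reducing the nontrivial summand via a logarithmic derivative and monotonicity in $d$ to the single elementary inequality $\phi\ge0$, settled by concavity plus vanishing boundary limits. Your approach buys independence from \cref{lem:gd-mono} (you only need \cref{lem:gd-cont}, and in fact only $g_d(0)=0\le g_d(p)$, to cover the endpoint), a slightly stronger conclusion (each summand of $g_d(p)/(1-p)$ is separately non-decreasing), and a cleaner reduction to one closed-form inequality; the paper's approach avoids the bookkeeping of clearing denominators but at the price of a case split and reliance on the preceding lemma.
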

\begin{proof}
    The argument follows similar lines as the one in \cref{lem:gd-mono}. We first note that $p \mapsto \frac{1}{1 - p}$ is non decreasing on $[0, 1/2)$ and $p \mapsto g_d(p)$ is as well, as proven in \cref{lem:gd-mono}. Hence $p \mapsto \frac{1}{1 - p}g_d(p)$ is non decreasing on $[0, 1/2]$. What now remains to show is that it is non-decreasing on $[1/2, 1)$. 
    We can differentiate the function on the interval $[1/2,1)$ and obtain
    \begin{align}
        \frac{\partial }{\partial x} \frac{g_d(x)}{1-x}\Big|_{x = p} &= \frac{1}{1 - p}\left(\frac{d p^{-1/d}h(p) - \log(1 - p^{1/d})}{1 - p} \right. \\
        &\hspace{1cm}\left.+ \frac{p^{1/d - 1}}{d(1 - p^{1/d})} - p^{-1/d} p^{-1} h(p) + dp^{-1/d}(\log(1 - p) - \log(p))\right).\\
        &\ge \frac{1}{1 - p}\left(p^{-1/d}h(p)\left(\frac{1}{1 - p} - \frac{1}{p}\right) + (d - 1) p^{-1/d} \left(\frac{h(p)}{1 - p} + \log(1 - p)\right)\right.\\
        &\hspace{1cm} \left. + p^{-1/d} \log(1 - p) - \frac{\log(1 - p^{1/d})}{1 - p}\right)\\
        &\ge 0 \, .
    \end{align}
    The last inequality holds since $p \ge \frac{1}{2}$ and $d \ge 2$ hence 
    \begin{align}
        \frac{1}{1 - p} - \frac{1}{p}    &\ge 0 \, , \\
        \frac{h(p)}{1 - p} + \log(1 - p) &\ge 0 \, ,\\
        p^{-1/d} \log(1 - p) - \frac{\log(1 - p^{1/d})}{1 - p} & \ge 0 \, .
    \end{align}
    To see the last inequality, one can verify that $ p^{-1/d} \leq 1/(1-p)$ in this regime and that $\log(1 - p) \geq \log(1 - p^{1/d})$. Thus $p \mapsto \frac{g_d(p)}{1 - p}$ is non-decreasing on $[1/2,1)$, which concludes the proof.
\end{proof}

\end{document}